\documentclass[10pt]{article}
\usepackage{amsfonts,amsmath,amssymb,amsthm}
\usepackage[pdftex]{graphicx}
\usepackage[
    left=1in,
    right=1in,
    top=1in,
    bottom=1in,
    marginparwidth=0.75in,
    marginparsep=0.25in
]{geometry}

\usepackage{natbib} 
\usepackage{setspace}
\usepackage{titletoc}
\usepackage{makecell}
\usepackage{slantsc}
\usepackage{mathabx}
\usepackage{mdframed}
\usepackage{ulem}
\usepackage{yfonts}
\usepackage{enumerate}
\usepackage[toc,title,titletoc,header]{appendix}
\usepackage[colorlinks,citecolor=blue,hypertexnames=false]{hyperref}
\usepackage{etoolbox}
\usepackage{bbm}
\usepackage{array,longtable}
\usepackage[font=small]{caption}
\usepackage{tikz,pgfplots}
\usetikzlibrary{arrows.meta}
\newcolumntype{C}{>{\begin{math}}c<{\end{math}}} % automatic math mode, centered
\allowdisplaybreaks

\usepackage{float}
\usepackage{multirow}
\usepackage{booktabs}
\usepackage{adjustbox}
\usepackage{nicematrix}
\usepackage{comment}
\usepackage{subcaption}

\usepackage{float}
\usepackage{tikz}
\usepackage{pgfplots}
\pgfplotsset{compat=1.18}

\def\qed{\rule{2mm}{2mm}}
\def\indep{\perp \!\!\! \perp}

\newtheorem{theorem}{Theorem}[section]
\newtheorem{lem}{Lemma}[section]

\newtheorem{prop}{Proposition}[section]
\theoremstyle{definition}
\newtheorem{exmp}{Example}[section]
\newtheorem{rem}{Remark}[section]
\newtheorem{assu}{Assumption}[section]

\newtheorem{algorithm}{Algorithm}[section]

\surroundwithmdframed[
    linewidth=0.6pt,
    linecolor=black!55,
    backgroundcolor=black!0,
    roundcorner=2pt,
    innertopmargin=0pt,
    innerbottommargin=8pt,
    innerleftmargin=10pt,
    innerrightmargin=10pt
]{algorithm}

\usepackage[bb=dsserif]{mathalpha}
\usepackage{bm}

\AtEndEnvironment{rem}{~\qed}
\AtEndEnvironment{exmp}{~\qed}

\DeclareMathOperator{\var}{Var}

\DeclareMathOperator*{\argmax}{argmax}

\newcommand{\bbG}{\mathbb{G}}
\newcommand{\bbE}{\mathbb{E}}
\newcommand{\bbR}{\mathbb{R}}
\newcommand{\bbX}{\mathbb{X}}
\newcommand{\calG}{\mathcal{G}}
\newcommand{\calF}{\mathcal{F}}
\newcommand{\calH}{\mathcal{H}}
\newcommand{\calN}{\mathcal{N}}

\newcommand{\calS}{\mathcal{S}}

\newcommand{\Leb}{\text{Leb}}
\newcommand{\He}{\text{He}}
\newcommand{\FWER}{\text{FWER}}

\pgfplotsset{compat=1.18}

\begin{document}

\title{Uncertainty in Compound Decisions\thanks{I thank Isaiah Andrews and Anna Mikusheva for their generous guidance and support. This paper also benefited from conversations with Alberto Abadie, Nikhil Agarwal, Victor Chernozhukov, Nathaniel Hendren, Chenxi Jiang, Andrew Komo, Haruki Kono, Soonwoo Kwon, Ricky Li, Jos\'e Luis Montiel Olea, Sarah Moon, Ashesh Rambachan, Azeem Shaikh, Yucheng Sheng, Steven Shi, Ot\'avio Tecchio, Santiago Torres, and Vod Vilfort. I additionally thank participants in the MIT econometrics lunch, third-year lunch, and second-year paper workshop for helpful comments.}}

\author{Andreas Petrou-Zeniou
\thanks{Department of Economics, MIT.
\url{apetrouz@mit.edu}}
}

\maketitle

\begin{spacing}{1.0}
\begin{abstract}
{\fontsize{10}{12}\selectfont
Researchers and policymakers frequently make a large number of decisions simultaneously over many units using noisy estimates of unit-level effects. Researchers often assume that unit-level effects are drawn from a common effect distribution, allowing for optimal decision making. In applications, however, this effect distribution is unknown to the econometrician, introducing uncertainty in the ultimate decision making problem. This paper considers the problem of quantifying this uncertainty. For instance, researchers may seek to understand how an oracle decision-maker with knowledge of the effect distribution would compare different feasible decisions, or understand which decisions an oracle would plausibly make. Analyzing these questions from the perspective of an oracle decision-maker involves comparing the conditional means of unit-level effects given estimates thereof, suggesting a sufficient inferential objective of uniform coverage of these conditional means. Based on an equivalence between uniform coverage of conditional means and multiple testing, I develop methods that satisfy this coverage. Finally, I consider an application to the problem of identifying high-economic-mobility neighborhoods. My methods sharpen inference relative to existing methods and identify units that can be confidently included in oracle selections.}
\end{abstract}
\end{spacing}

\onehalfspacing

\section{Introduction}

Many decision problems in economics are compound decisions (\citealt{Robbins1956}), in that they involve comparing or classifying a large number of units or populations, such as schools or neighborhoods, according to some noisily-observed feature of each unit, such as value-added. In studies of teacher or school value-added, for example, researchers seek to differentiate high-performing schools from low-performing schools (\citealt{MorgansteinWasserstein2014}, \citealt{Angristetal2025}). Relatedly, in healthcare, policymakers may seek to encourage patients to select high value-add nursing homes (\citealt{EFM2025}). In studies of employer discrimination, policymakers may seek to select a set of firms to audit for civil rights violations (\citealt{KlineWalters2021}, \citealt{Klineetal2022}, \citealt{Klineetal2024}). In problems of place-based policy-targeting, researchers may seek to distinguish between high and low economic mobility neighborhoods (\citealt{Chettyetal2014}, \citealt{Chettyetal2018aer}, \citealt{Chettyetal2026}, \citealt{ChettyandHendren2018}). To make such decisions, researchers often assume that these features are drawn independently from a common distribution, providing optimal decision rules. In many applications, this distribution of unit-level features or effects is unknown to the econometrician. As a result, the optimal decision described above is infeasible, and can be thought of as the decision made by an oracle with knowledge of the unknown distribution of effects. This article considers a range of inferential exercises that arise given uncertainty around an oracle decision-maker's actions and beliefs. 

To fix ideas, consider a recent experiment due to \citealt{Bergmanetal2023}, who seek to identify high economic-mobility neighborhoods in Seattle and thus nudge low-income housing voucher recipients to move to these areas. While true economic mobility is not observed, policymakers have access to noisy estimates thereof (see e.g. \citealt{Chettyetal2018aer}, \citealt{ChettyandHendren2018}). In using these estimates to differentiate between low and high-mobility tracts, the decision-maker faces a key dilemma between selecting tracts with high and noisy estimates of mobility and tracts with lower mobility and more precise estimates. To navigate this dilemma, the oracle decision maker obtains conditional mean economic mobility at each neighborhood given these estimates and the distribution of mobility across neighborhoods. This distribution is unknown to the econometrician, and different choices of this distribution can lead to dramatically different selections of high-mobility neighborhoods. Given this uncertainty, this paper considers questions such as: can policymakers be confident that neighborhoods with noisily estimated, high mobility should be included in an oracle decision-makers selection of tracts? Given a realized selection, can we be confident that decision-making could or could not be improved by including more, or fewer, tracts with low and precisely estimated mobility? How do different ways of navigating the planner's dilemma compare according to an oracle decision-maker?

To answer the above questions, I first observe that for many decision problems considered in applied work in economics, the oracle decision-maker's actions depend on the distribution of effects only through the conditional means of unit-level effects given the realized, noisy estimates thereof. Consequently, this paper advances a notion of coverage of conditional means which holds simultaneously over potential realizations of the data. To understand why this notion of coverage suffices to address the above exercises, notice that these exercises frequently involve comparing conditional mean effects across many units. For example, analyzing the dilemma described above involves comparing many high-noise tracts with high estimated mobility to many lower-noise tracts with lower estimated mobility. Consequently, coverage of conditional means of unit-level effects given individual, potential realizations of the data, which I refer to as marginal inference, does not suffice for our objectives; coverage of individual conditional means does not guarantee simultaneous validity of these comparisons. \citealt{IgnatiadisandWager2021} develop methods satisfying this desired marginal coverage, which I extend to allow for heteroskedasticity and which I demonstrate via simulation can dramatically undercover in the exercises above. 

This paper's main contribution is to develop methods that uniformly cover conditional means over potential realizations of the data in order to guard against such false discoveries. Although such uniform confidence bands can be constructed from a confidence set for the effect distribution, they should focus power on its decision-relevant features. In service of this objective, I make two observations. First, I show that the problem of constructing such uniform confidence bands is equivalent to a multiple testing problem in which the econometrician seeks to control the familywise error rate for a particular class of null hypotheses. To demonstrate the utility of this first observation, I next observe that the null hypotheses defining this multiple testing problem can be written as zero restrictions on a class of linear functionals of the effect distribution. 

Combining these observations has three important consequences. First, the observations identify the decision-relevant features of the effect distribution; namely the linear functionals of the effect distribution corresponding to the null hypotheses considered. Second, the linearity of these null hypotheses on the effect distribution allows me to draw on a large literature on minimax estimation and inference for linear functionals in nonparametric settings (see e.g. \citealt{Donoho1994}, \citealt{IgnatiadisandWager2021}, \citealt{ArmstrongKolesar2018}) that adapt to the estimation difficulty of each linear functional of interest. Finally, due to my multiple testing equivalence, I build on canonical multiple-testing methods (see e.g. \citealt{RomanoWolf2005}) to obtain a step-down approach that accounts for the bias of the minimax test statistics used in the inferential procedure. I provide a theoretical analysis of this step-down procedure, and demonstrate in theorem \ref{thm:bootstrap_validity} the uniform asymptotic consistency in level of the procedure using multiplier bootstrap critical values. The key technical difficulty in obtaining this result is that when allowing for studentization, the class of minimax test statistics in this paper can grow increasingly complex in large sample sizes (see e.g. \citealt{CCK_2014}, \citealt{CCK_2016}, \citealt{CCK_ac_2014}). 

My methods build on recent work due to \citealt{IgnatiadisandWager2021}, who study inference for the conditional means considered in this paper.\footnote{Work due to \citealt{Guetal2025} also considers the problem of inference on the conditional means considered in this paper. They review the methods due to \citealt{IgnatiadisandWager2021}, in addition to providing a method based on \citealt{FSST2021}.} \citealt{IgnatiadisandWager2021} provide two approaches to inference, one which satisfies uniform coverage as considered in this article, and the other which satisfies only marginal coverage. The former approach uses a particular confidence set for the effect distribution to construct uniform confidence bands, which can fail to focus power on decision-relevant features of the effect distribution. The latter, marginal, approach is based on a similar test inversion to that considered in this paper. However, the inversion in \citealt{IgnatiadisandWager2021} is based on tests that do not control the familywise error rate, and therefore does not provide the uniform guarantees desired in this paper. Nevertheless, \citealt{IgnatiadisandWager2021} derive test statistics for their null hypotheses, and I extend their arguments to allow for the precision of the estimated unit-level effects to vary across units in deriving my test statistics. Finally, the multiple testing equivalence established in this paper builds on a similar equivalence between set inference and multiple testing derived in \citealt{RomanoShaikh2010}. In fact, I establish a further equivalence between uniform confidence band construction to set inference as considered in \citealt{ChernozhukovHongTamer2007} and \citealt{RomanoShaikh2010}, which I discuss in remark \ref{rem:set-inference-connection}. 

To establish the practical relevance of my methods, I consider an application to the policy-targeting task of \citealt{Bergmanetal2023}. I consider two main exercises: which neighborhoods can policymakers be confident that an oracle decision-maker would select? How would an oracle decision-maker compare different realized selections? In my application, my methods deliver substantially tighter confidence sets than existing methods due to \citealt{IgnatiadisandWager2021}, shifting economic conclusions and suggesting the step-down procedure developed in this paper is a worthwhile methodological investment for applied work.

In my first application, I consider the problem of differentiating between low and high-mobility census tracts as in \citealt{Bergmanetal2023}. For each tract, I take as data local economic mobility estimates due to \citealt{Chettyetal2018aer}, which measure the conditional mean earnings percentile in adulthood of low income children growing up in each given census tract. Given these mobility estimates, my findings suggest that many Seattle neighborhoods can be confidently included and excluded (9\% and 44\%, respectively) from an oracle selection of high-mobility tracts. In a second exercise, I show that the gaps in conditional mean mobility between feasibly-classified low and high-mobility tracts are on the order of five percentile points in adulthood earnings, at a lower confidence bound. Interpreted causally, my analyses suggest that researchers can be confident that an oracle decision-maker would expect economically large gains in the adulthood earnings of children moving to most of the high-mobility tracts selected by \citealt{Bergmanetal2023}. 

My analyses are complementary to recent work studying the informativeness of noisy mobility estimates for the decision-making objectives of \citealt{Bergmanetal2023}. Existing exercises (\citealt{andrewsetal2023}, \citealt{Mogstadetal2023}, \citealt{petrouzeniou_and_shaikh_2025}) effectively compare to studying a fully-informed oracle with knowledge of true economic mobility at each tract, rather than the distribution of mobility across tracts. Building on the analysis of \citealt{AndrewsandChen2025}, the exercises considered in this paper can be viewed as providing evaluations and certifications for feasible decisions according to an oracle decision-maker with knowledge of the effect distribution, whereas existing work provides evaluations and certifications for the fully-informed oracle. The researcher's choice of inferential exercise therefore depends on the notion of an oracle decision-maker of interest to the policymaker. Given recent work providing methods to approximate the distributional oracle decision-maker in the \citealt{Bergmanetal2023} policy-targeting exercise (see e.g. \citealt{chen2024}, \citealt{barnhard2026}), the exercises in this paper complement existing inferential exercises studying the fully-informed oracle. In particular, existing work (\citealt{Mogstadetal2023}, \citealt{petrouzeniou_and_shaikh_2025}) highlights that economic mobility estimates are too noisy to draw informative conclusions on the actions of a fully-informed oracle decision-maker. In contrast, my results suggest that in policy-targeting tasks using mobility estimates, researchers can instead draw far more informative conclusions when considering an oracle decision-maker with knowledge of the distribution of mobility across geographies.

The remainder of the paper proceeds as follows. Section \ref{sec:setup} introduces the compound decision problem via the selection problem of \citealt{Bergmanetal2023}, my inferential objectives, and a set of further examples. Section \ref{sec:equivalence_results} discusses a key simplification of my inferential objectives by drawing a connection to set inference and multiple testing. Section \ref{sec:methods} leverages this connection to develop a step-down approach to uniform confidence band construction. Section \ref{sec:economic-mobility-application} applies my methods to the setting of \citealt{Bergmanetal2023}. Section \ref{sec:conclude} concludes. The appendix contains proofs, details on applications, and an additional application to the setting of \citealt{Klineetal2022}.

\section{Setup, Notation, and Motivating Examples}
\label{sec:setup}

This section introduces the compound decision problems considered in this paper. I will use as a running example the selection problem of \citealt{Bergmanetal2023} described in the previous section, though I conclude with a range of further examples to highlight the range of compound decision problems considered in this paper. In general, I consider units $i=1,\ldots,n$, each associated with a unit-level feature or effect $\theta_i$. These unit-level effects are unobserved to the econometrician, but may be observed with noise. I denote by $\hat\theta_i$ this estimate, and I denote by $\sigma_i$ the precision with which $\hat\theta_i$ is estimated. Both $\hat\theta_i$ and $\sigma_i$ are known. Formally, $\hat\theta_i|\theta_i,\sigma_i$ is drawn from a Gaussian centered at $\theta_i$ with variance $\sigma_i^2$. In the context of the targeting exercise of \citealt{Bergmanetal2023}, these $\theta_i$ denote economic mobility at each tract -- the conditional earnings percentile of low-income children growing up in tract $i$ -- and the corresponding $\hat\theta_i$ denote estimates thereof.

To relate unit-level effects across units, I assume that $\theta_i$ are drawn iid across units from a common distribution of unit-level effects $G$, which can be thought of as the distribution of economic mobility across census tracts. Consequently, the data are sampled according to the following model:
\begin{equation}
\label{eq:EB_DGP}
    \hat\theta_i|\theta_i,\sigma_i
    \overset{ind}{\sim}\mathcal{N}(\theta_i,\sigma_i^2),~
    (\theta_i,\sigma_i)\overset{iid}{\sim}G
\end{equation}
In what follows, I will assume that $G$ lies in a convex class $\calG$. I will also assume that the convex class $\calG$ is dominated by the Lebesgue measure,\footnote{More generally, the methods developed in this paper allow $\calG$ to be dominated by some known measure $\nu$.} such that the density of each $G$ with respect to the Lebesgue measure is given by $g_G$. The normality of effect estimates $\hat\theta_i$ can be motivated by the central limit theorem, since the $\hat\theta_i$ are often distributed as sample means.\footnote{When the data are exchangeable, the existence of a common distribution generating latent unit-level effects can be justified by de Finetti's theorem. The data can be treated as exchangeable when restricting attention to permutation equivariant decision rules. See \citealt{KoenkerGuForthcoming} for further discussion.} I additionally allow for dependence between unit-level effects and the precision with which these effects are estimated.\footnote{For instance, low-income families may be less likely to reside in high-mobility areas, meaning that on average, estimates of mobility in high-mobility areas are noisier than in low-mobility areas. See \citealt{chen2024} for further discussion. The dependence between $\theta$ and $\sigma$ is restricted by $\calG$, and conditions must be imposed for bootstrap validity to hold.}

Policymakers often seek to make unit-level decisions for each $i$. For example, a policymaker may seek to label a subset of the $n$ tracts as high-mobility. To this end, define a decision $\delta_i$ lying in $\{0,1\}$, where $\delta_i = 1$ denotes selecting tract $i$. The policymaker can only select one-third of tracts, so I assume that the vector of classifications $\delta_i$ satisfies $\sum_i \delta_i = \lfloor n/3\rfloor$. As in \citealt{Bergmanetal2023}, the policymaker may seek to maximize a welfare function consisting of total economic mobility among the selected tracts:
\begin{equation}
\label{eq:util_max_welfare_defn}
    W(\boldsymbol{\delta},\boldsymbol{\theta}) := \sum_i \delta_i\theta_i~.
\end{equation}
This welfare function depends on tract-level mobility, which are unknown to the econometrician. Consequently, the decision-maker computes the average welfare by taking expectations over tract-level mobility. The decision-maker's goal is therefore to choose a map from data $(\boldsymbol{\hat\theta},\boldsymbol{\sigma})$ to tract-level selections in $\{0,1\}^n$ that maximizes the total mobility among selected tracts, on average:
\begin{equation}
\label{eq:bayesian_decision_problem}
    \boldsymbol{\delta}^*(\cdot) = \argmax_{\boldsymbol{\delta}} \bbE_G W(\boldsymbol{\delta},\boldsymbol{\theta})~.
\end{equation}
To solve \eqref{eq:bayesian_decision_problem}, it suffices to maximize conditional expected welfare given the realized data.\footnote{This is a consequence of the law of iterated expectations.} Namely, for the welfare defined in \eqref{eq:util_max_welfare_defn}:
\begin{equation}
\label{eq:optimal_decision_selection}
    \boldsymbol{\delta}^*(\boldsymbol{\hat\theta},\boldsymbol{\sigma}) = \argmax_{\substack{\boldsymbol{\delta}\in\{0,1\}^n\\ \sum_i\delta_i = \lfloor n/3\rfloor}}\bbE_G \left(
W(\boldsymbol{\delta},\boldsymbol{\theta}) |\boldsymbol{\hat\theta},\boldsymbol{\sigma}
    \right) =\argmax_{\substack{\boldsymbol{\delta}\in\{0,1\}^n\\ \sum_i\delta_i = \lfloor n/3\rfloor}}\sum_i\delta_i\bbE_G \left( \theta_i|\hat\theta_i,\sigma_i\right) ~.
\end{equation}
Towards interpretation, the above provides that to analyze the decision problem \eqref{eq:bayesian_decision_problem}, it suffices to consider the total conditional mean economic mobility across selected tracts. Moreover, because the total conditional mean economic mobility across selected tracts is additively separable in the conditional mean mobility at each tract, it suffices to consider the conditional mean economic mobility at each individual tract. Therefore, the problem \eqref{eq:optimal_decision_selection} admits a simple solution: it is optimal to select the top-third of tracts according to their conditional mean economic mobility.

The optimal selection of census tracts described in \eqref{eq:optimal_decision_selection} depends on the distribution of latent effects $G$, which is unknown to the econometrician. In this sense, the decision $\boldsymbol{\delta}^*$ described in \eqref{eq:optimal_decision_selection} is that of an oracle decision-maker with knowledge of the effect distribution $G$. Given uncertainty surrounding $G$, how should researchers evaluate uncertainty surrounding the downstream, oracle decision-maker's objectives?

\subsection*{Consequences of Effect Distribution Uncertainty}

To illustrate the ambiguity relating to decision-making that arises due to uncertainty in $G$, I notice that the optimal selection of tracts can vary substantially with the true latent distribution of effects. To demonstrate this, recall the planner's dilemma: should a policymaker select noisy tracts with high estimated mobility, or less noisy tracts with lower estimated mobility? More precisely, consider two tracts $h$ and $\ell$, such that estimated mobility at tract $h$ exceeds that at tract $\ell$, namely $\hat\theta_h>\hat\theta_\ell$, but estimation precision at tract $h$ is lower than at $\ell$, namely $\sigma_h > \sigma_\ell$. Whether the oracle decision-maker selects tract $h$ over tract $\ell$ depends on whether the conditional mean mobility $\bbE_G(\theta_h|\hat\theta_h,\sigma_h)$ at tract $h$ exceeds that at tract $\ell$ given by $\bbE_G(\theta_\ell|\hat\theta_\ell,\sigma_\ell)$. The oracle decision-maker may naturally penalize the estimate $\hat\theta_h$ more than $\hat\theta_\ell$, since it is more noisily estimated and thus less informative of true mobility $\theta_h$. In this regard, the oracle decision-maker's selection depends on the magnitude of the gap $\hat\theta_h-\hat\theta_\ell$. Moreover, mobility may be positively correlated with estimation precision in a phenomenon \citealt{chen2024} refers to as precision dependence: fewer low-income children grow up in high-mobility tracts, and therefore economic mobility is more noisily-estimated in high-mobiltiy neighborhoods. This introduces the opposite effect to the above: if noisily estimated tracts tend to be higher-mobility, the oracle decision-maker may prefer to select tract $h$ over tract $\ell$. 

While the oracle decision can vary widely with the distribution of effects generating the data, the joint distribution of the observed data $(\hat\theta_i, \sigma_i)$ contains information on the joint distribution $G$ of unit-level effects and estimation precision $(\theta_i,\sigma_i)$. For example, suppose that $\sigma_i = \sigma$ for all units $i$. In this case, the variance of estimated mobility $\hat\theta_i$ above the variance of the estimation error $\sigma_i$ provides information on the variance, or dispersion, of the distribution of mobility. Similarly, allowing $\sigma_i$ to vary across units, the correlation between estimated mobility $\hat\theta_i$ and estimation precision $\sigma_i$ is informative of the correlation between true mobility and estimation precision. Consequently, some optimal selections of tracts are more plausible than others given the observed data.

To provide examples of two plausible effect distributions, consider figure \ref{fig:two_consistent_priors}. Given an effect distribution $G$, let $F_G$ denote the joint distribution of the observed data $(\hat\theta_i,\sigma_i)$ assuming the data are generated by $G$. While the true joint distribution of the observed data is unknown, its empirical counterpart is known, from which a confidence set for the true $F_G$ can be constructed. I use the economic mobility estimates due to \citealt{Chettyetal2018aer} to obtain this empirical joint distribution and corresponding confidence set for $F_G$. In figure \ref{fig:two_consistent_priors}, I plot the densities of two effect distributions, $G_1$ and $G_2$, for which $F_{G_i}$ lies in this confidence set. In this sense, the mobility distributions $G_1$ and $G_2$ are both consistent with the observed data. Visual inspection of figure \ref{fig:two_consistent_priors} suggests that different patterns of precision dependence (i.e. correlation between mobility and estimation precision) can be consistent with the data. $G_1$, plotted in the left panel, displays positive dependence, consistent with the intuition described above. On the other hand, $G_2$ displays negative dependence. 

\begin{figure}[H]

    \centering

    \begin{minipage}[t]{0.48\textwidth}
        \centering
        \includegraphics[
            width=\linewidth,
            height=4.7cm,
            keepaspectratio
        ]{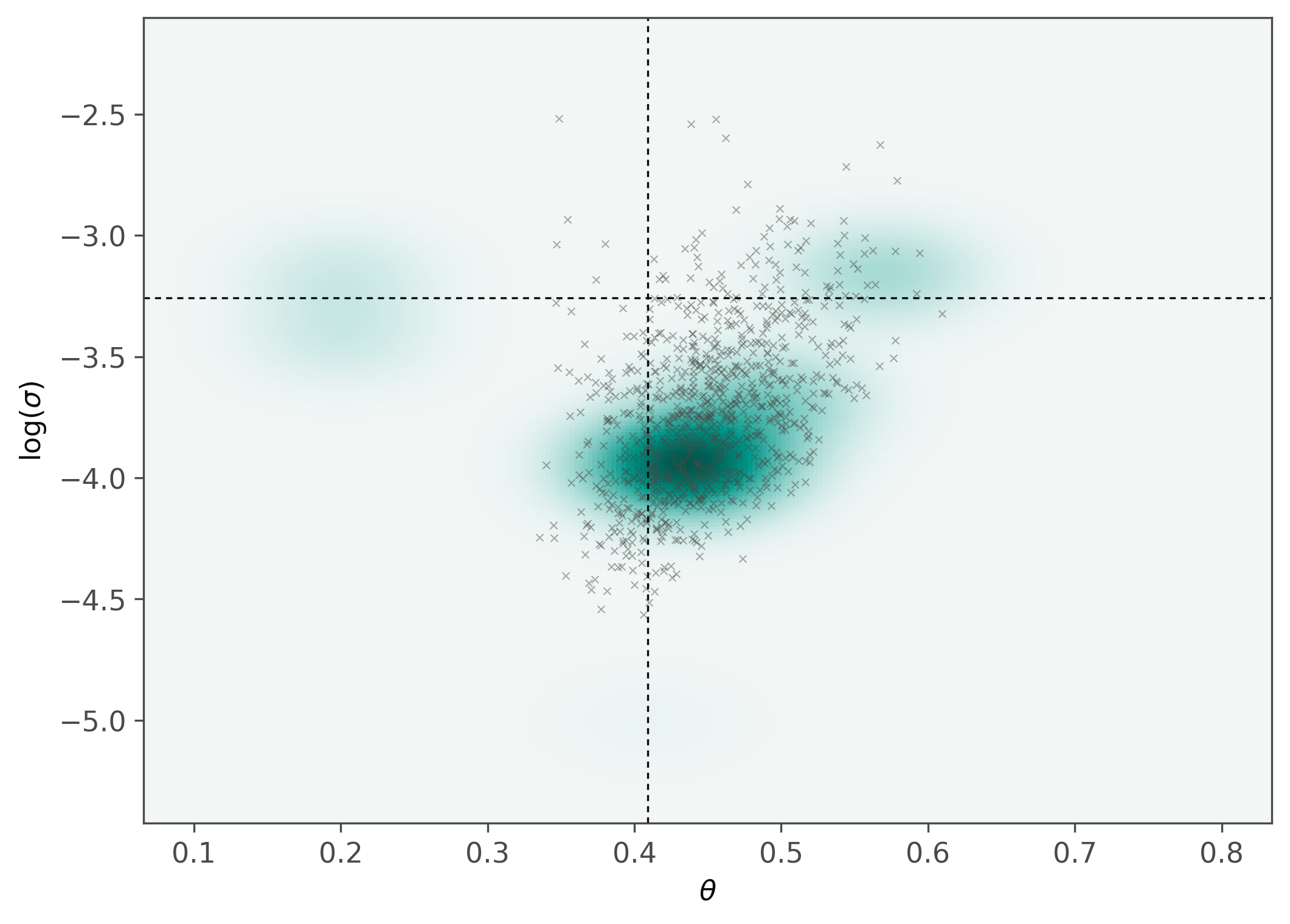}
    \end{minipage}
    \hfill
    \begin{minipage}[t]{0.48\textwidth}
        \centering
        \includegraphics[
            width=\linewidth,
            height=4.7cm,
            keepaspectratio
        ]{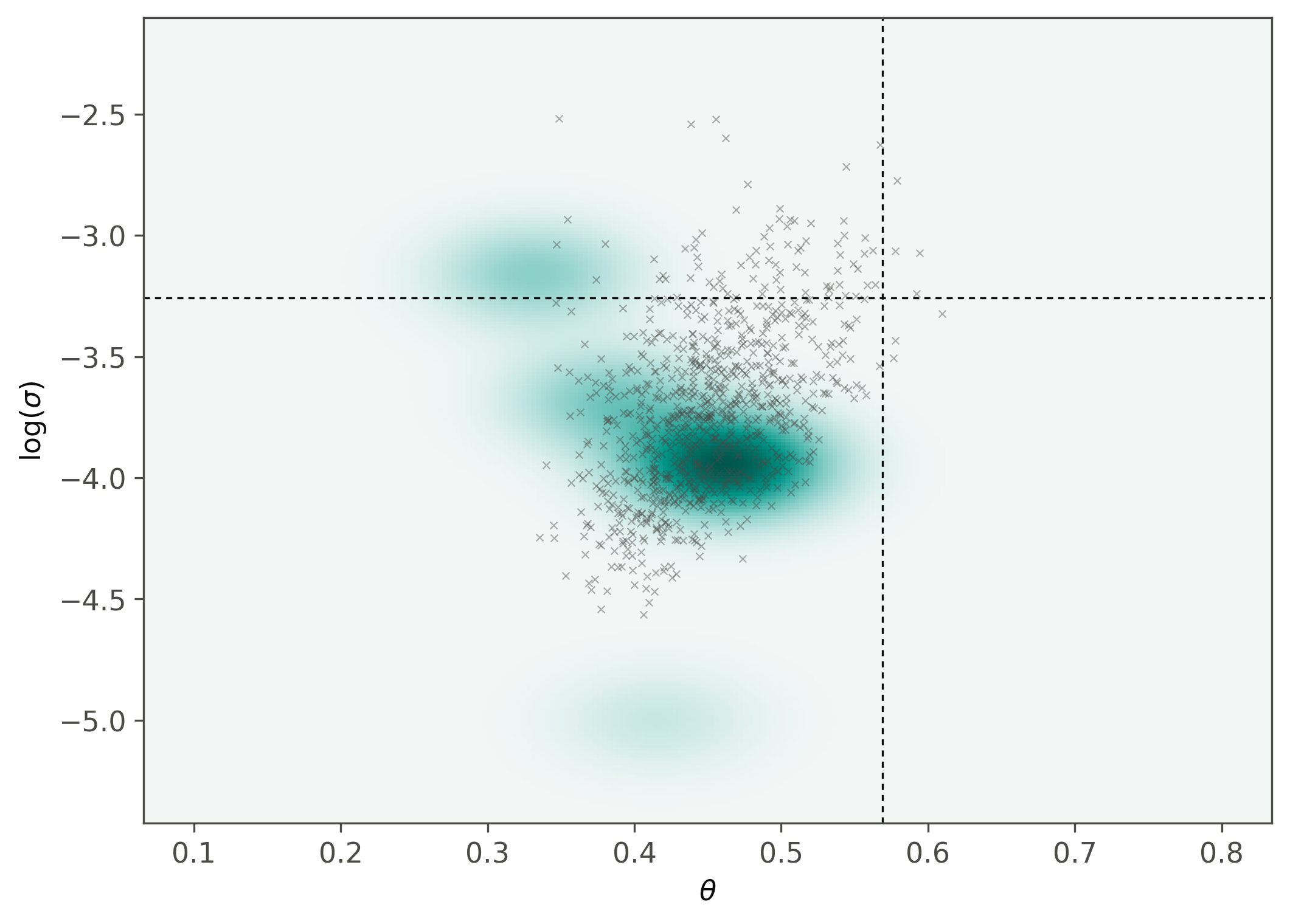}
    \end{minipage}
\caption{Two distributions $G_1$ and $G_2$ consistent with the empirical distribution of the observed data. The left panel plots the density of $G_1$, and the right panel plots the density of $G_2$.}
\label{fig:two_consistent_priors}
\end{figure}

These differing dependence patterns in the distribution of economic mobility can yield different optimal decisions. To see this, I plot the resulting optimal selection of tracts treating both $G_1$ and $G_2$ as the true effect distributions in figure \ref{fig:two_consistent_decisions}. Visual inspection of figure \ref{fig:two_consistent_decisions} validates the intuition relating precision dependence to downstream decisions. When the effect distribution displays positive dependence between mobility and estimation precision, the resulting oracle decision-maker is more likely to select high-noise, high-mobility tracts, and vice versa. Comparing panels 1 and 2 of figure \ref{fig:two_consistent_decisions} suggests that different mobility distributions consistent with the mobility estimates of \citealt{Chettyetal2018aer} can yield drastically different decisions. Ambiguity around the distribution generating economic mobility across neighborhoods leads to ambiguity about the downstream decision.

\begin{figure}[H]
    \centering

    \begin{minipage}[t]{0.48\textwidth}
        \centering
        \includegraphics[
            width=\linewidth,
            height=4.7cm,
            keepaspectratio
        ]{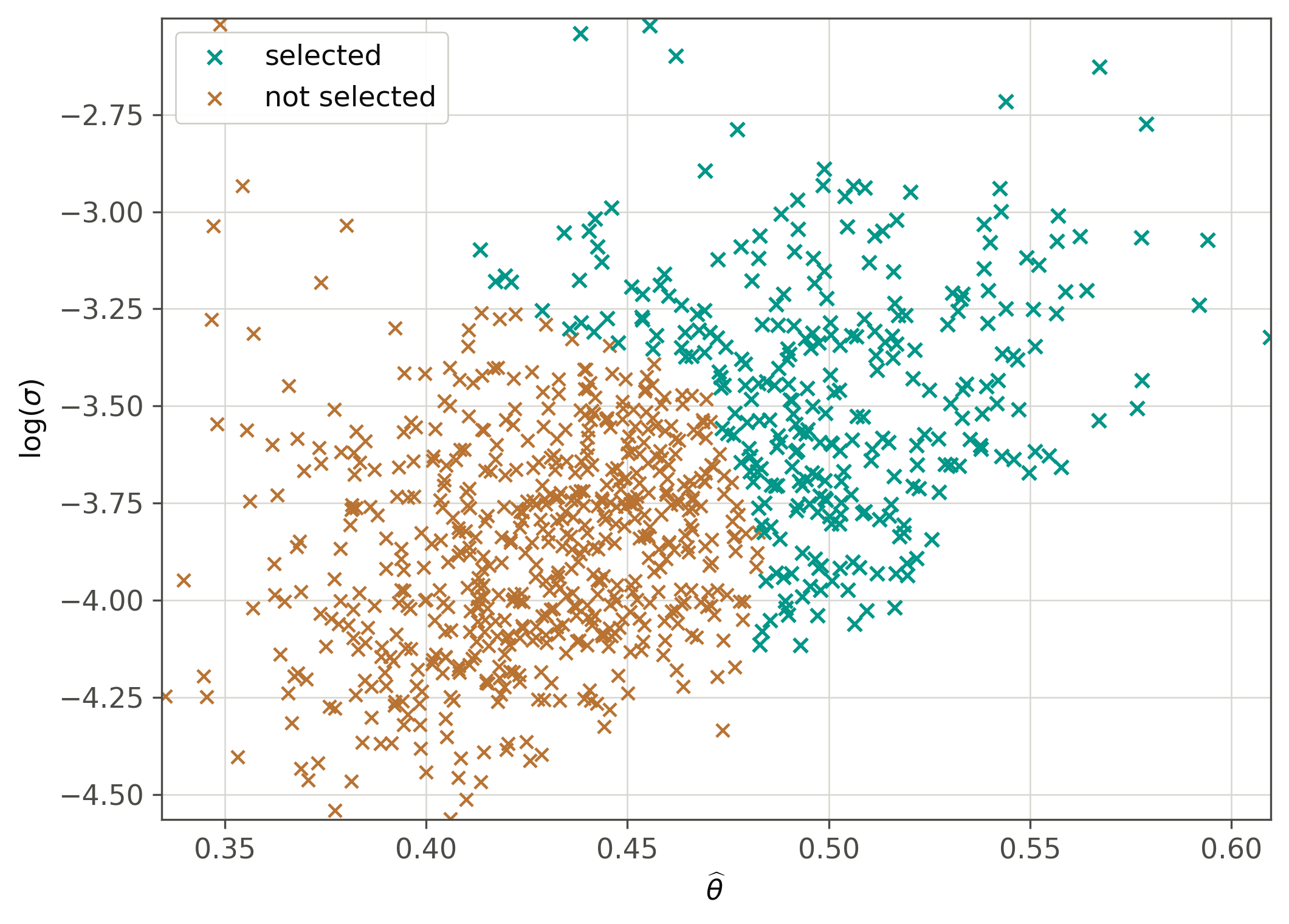}
    \end{minipage}
    \hfill
    \begin{minipage}[t]{0.48\textwidth}
        \centering
        \includegraphics[
            width=\linewidth,
            height=4.7cm,
            keepaspectratio
        ]{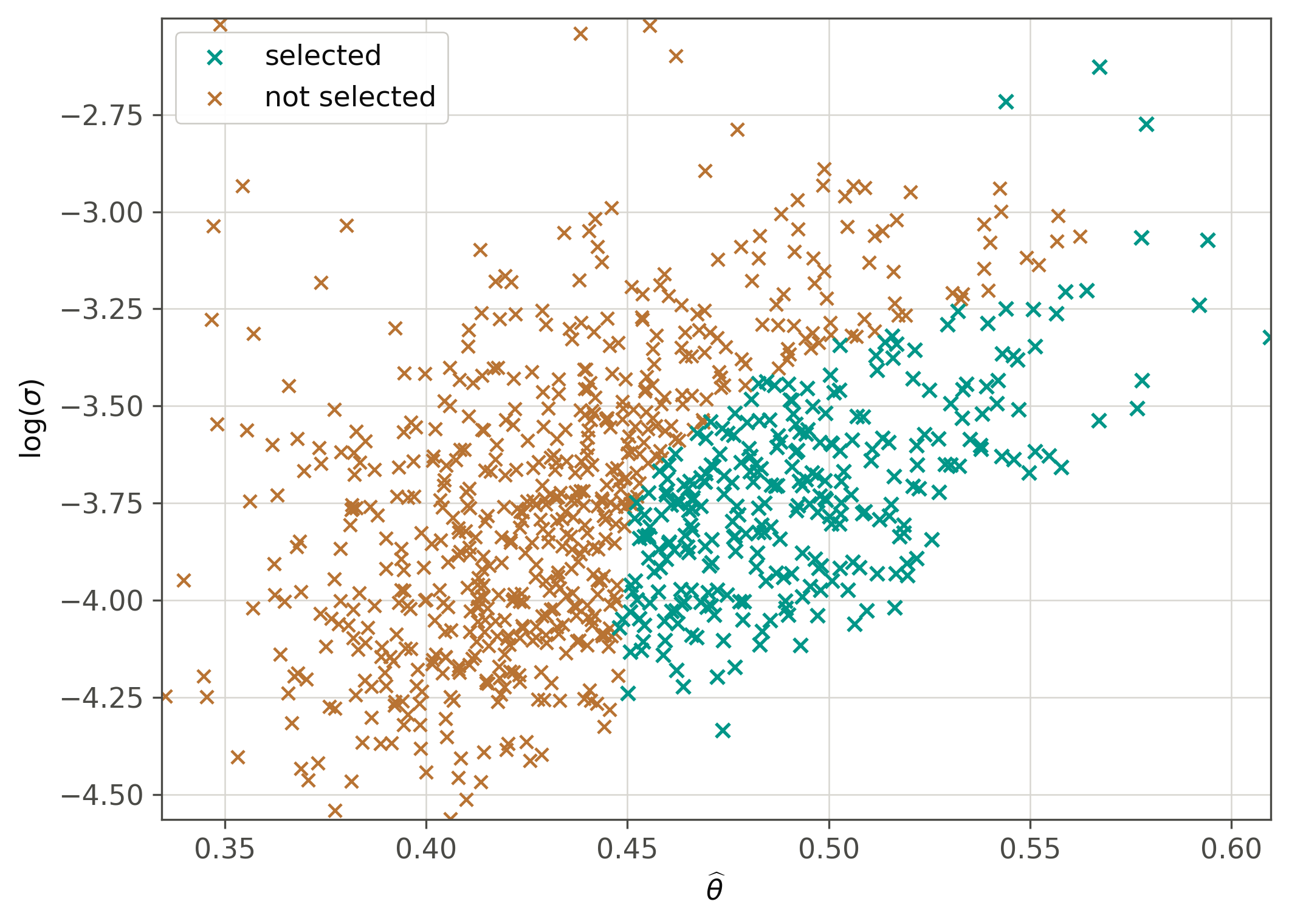}
    \end{minipage}
    \caption{Optimal decisions given $G_1$ and $G_2$. The left panel displays the optimal decision given $G_1$, and the right panel displays the optimal decision given $G_2$. Seattle census tracts are plotted according to their $(\hat\theta_i, \log\sigma_i)$. Selected tracts are colored in teal, and are otherwise colored in copper.}
    \label{fig:two_consistent_decisions}
\end{figure}

\subsection*{Appropriate Uncertainty Quantification}

Uncertainty around $G$ translates into uncertainty around the oracle decision-makers beliefs and actions. For example, can researchers be confident that an oracle decision-maker would select a given tract with high-estimated mobility over other tracts? Relatedly, which tracts can researchers be confident that an oracle decision-maker would select? Can researchers be confident that an oracle decision-maker prefers high-mobility low-precision tracts to low-mobility, high-precision tracts, and if so, by how much? For any two feasible decisions, can we be confident that an oracle decision-maker would prefer one over the other?

To address such questions, this paper proposes a notion of uniform coverage of conditional means. Let $\mu_G(t,s) := \bbE_G(\theta_i|\hat\theta_i = t, \sigma_i=s)$. Formally, we seek $\widehat{CS}(\cdot,\cdot)$ which satisfies:
\begin{equation}
\label{eq:uniform_coverage}
    \liminf_{n\to\infty}\inf_{G\in\mathcal{G}} P_{G}\left(
\mu_G(t,s) \in \widehat{CS}(t,s) ~ \text{for all}~ t,s
    \right)\geq 1-\alpha~.
    \end{equation}
To see why confidence sets satisfying \eqref{eq:uniform_coverage} suffice to address the inferential questions above, notice that such exercises involve pairwise comparisons of conditional mean mobility across many tracts. For example, determining the set of tracts in the oracle optimal selection involves determining those tracts for which conditional mean economic mobility is relatively large compared to other tracts. Evaluating the oracle decision-maker's dilemma between selecting noisy tracts with high estimated mobility and lower-mobility, lower-noise tracts involves comparing pairs of many such tracts. Comparing decisions involves comparing the conditional mean mobility at tracts selected by one decision but not the other. I therefore propose \eqref{eq:uniform_coverage} as a method that guarantees simultaneous validity of these comparisons.

It is instructive to compare confidence sets $\widehat{CS}(\cdot,\cdot)$ satisfying \eqref{eq:uniform_coverage} to those satisfying marginal coverage:
\begin{equation}
    \label{eq:marginal_coverage}
    \liminf_{n\to\infty}\inf_{G\in\mathcal{G}} P_{G}\left(
\mu_G(t,s) \in \widehat{CS}(t,s)
    \right)\geq 1-\alpha~ \text{for all}~ t,s~.
\end{equation}
While \eqref{eq:marginal_coverage} guarantees coverage of individual conditional means, it can fail to cover the many comparisons of conditional means that allow researchers to evaluate decisions. In simulations calibrated to the economic mobility estimates of \citealt{Chettyetal2018aer}, I compare the coverage properties of an approach providing marginal coverage,\footnote{Which extends methods developed in \citealt{IgnatiadisandWager2021} to allow for dependence between the $\theta_i$ and $\sigma_i$.} with the methods providing simultaneous coverage developed in this paper. Figure \ref{fig:coverage_sims} provides the results from this simulation. The left panel compares simultaneous coverage (the expression on the left-hand side of \eqref{eq:uniform_coverage}) for both approaches. In the right panel, I use confidence sets for conditional mean mobility to obtain a confidence set for a top-third selection according to conditional means over a grid of representative $\hat\theta_i, \sigma_i$ pairs. This latter confidence set reports the tracts which researchers can confidently include or exclude in an oracle optimal selection. The coverage results for this latter exercise are displayed in the right panel of figure \ref{fig:coverage_sims}.

\begin{figure}[H]
    \centering

    \begin{minipage}[t]{0.48\textwidth}
        \centering
        \includegraphics[
            width=\linewidth,
            height=4.7cm,
            keepaspectratio
        ]{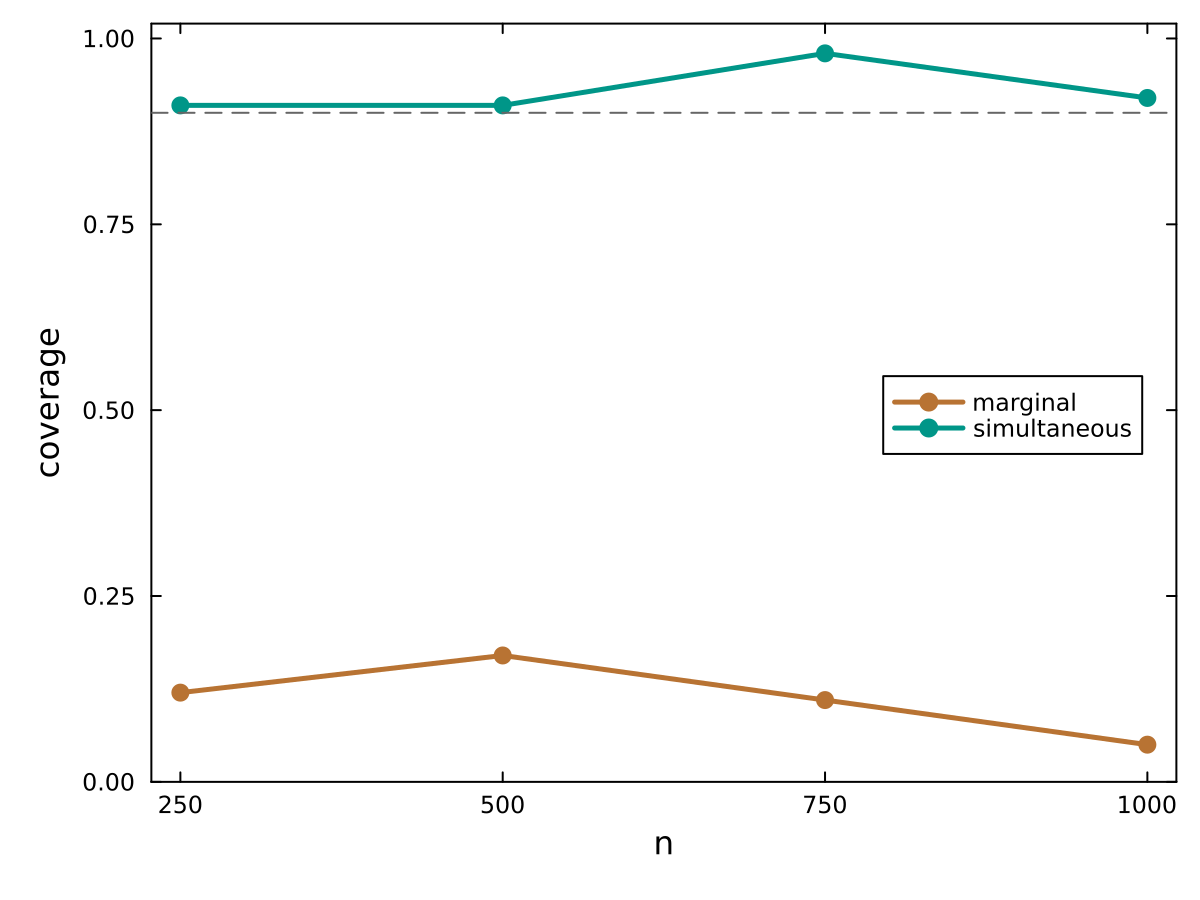}
    \end{minipage}
    \hfill
    \begin{minipage}[t]{0.48\textwidth}
        \centering
        \includegraphics[
            width=\linewidth,
            height=4.7cm,
            keepaspectratio
        ]{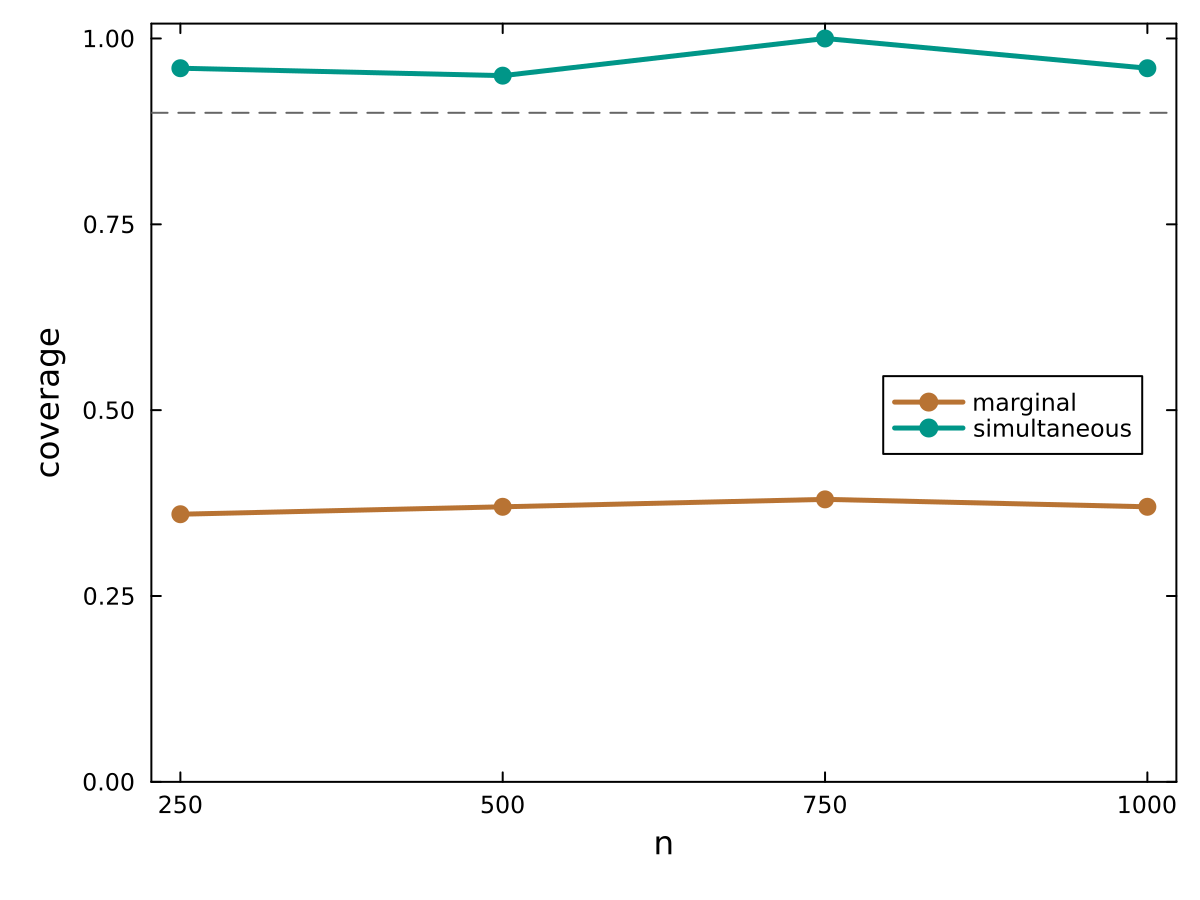}
    \end{minipage}
    \caption{Simulation study evidence for coverage for methods satisfying \eqref{eq:uniform_coverage}, plotted in teal, and \eqref{eq:marginal_coverage}, plotted in copper. The left panel plots simultaneous coverage of the conditional means $\mu_G(\hat\theta_i,\sigma_i)$, while the right panel plots coverage of the subset of tracts lying in the top-third according to their $\mu_G(\hat\theta_i,\sigma_i)$, i.e. the oracle optimal selection. }
    \label{fig:coverage_sims}
\end{figure}

Crucially, visual inspection of figure \ref{fig:coverage_sims} suggests that, not only do marginal confidence sets satisfying \eqref{eq:marginal_coverage} dramatically undercover when targeting simultaneous coverage of conditional means, but moreover dramatically undercover when used to make statements about the oracle decision-maker's problem. These are the exact statements of interest in this article. Consequently, this paper considers the problem of constructing confidence sets satisfying \eqref{eq:uniform_coverage}.

% \subsubsection*{Comparison to Inference on Latent Effects}

% \textbf{Par:} 

\subsection*{Further Examples}
\label{subsec:further_exmp}

To demonstrate the broad applicability of the setting described in this section, I list a few decision problems distinct from the selection problem of \citealt{Bergmanetal2023} considered thus far that arise in applied work in economics. In each of the examples described below, the oracle optimal decision depends on $G$ only through the conditional means of effects, or functions of the effects, given the observed data $\hat\theta_i$ and $\sigma_i$. Consequently, the same uncertainty quantification exercises considered above can be considered in each of the applications below using the method satisfying \eqref{eq:uniform_coverage} developed in this paper.

\begin{exmp}
\label{ex:selection_exmp_1}
    \textbf{(Selection)} The selection problem described above is not limited to the economic mobility application of \citealt{Bergmanetal2023}. For instance, \citealt{Angristetal2025} consider the problem of selecting the highest value-added school, given a choice set. \citealt{EFM2025} suggest that nudging patients to move from low to high value-added nursing homes within their market could improve patients' outcomes in underserved, rural areas. 

    In each of these problems, the welfare in \eqref{eq:util_max_welfare_defn} characterizes the decision-maker's objective of maximizing total value-added at selected neighborhoods, schools, or nursing homes. To capture different constraints on the decision-makers selection (i.e. selection of the best school, or the highest value-add nursing home in a given market), one can impose different constraints on the decision rule $\boldsymbol{\delta}$.
\end{exmp}

\begin{exmp}
    \label{ex:util_max_1}
    \textbf{(Utility Maximization)} A related problem to the above is that of utility, or welfare-maximization. For example, \citealt{Klineetal2022} are interested in selecting discriminatory employers $i$ with black-white callback differentials $\theta_i$, subject to firm-specific audit costs $K_i$. A positive callback differential $\theta_i$ corresponds to discrimination against black applicants, and vice-versa. Restricting the unit-level decision $\delta_i\in\{0,1\}$, I obtain the following welfare:\footnote{See section 11 in \citealt{Klineetal2022}, or \citealt{Chenetal2025} for further discussion of utility maximization in this setting.}
    \begin{equation*}
    \label{eq:util_max_loss}
        W^{(util)} (\boldsymbol{\delta}, \boldsymbol{\theta}) = \sum_{i=1}^n \delta_i(\theta_i-K_i)~.
    \end{equation*}
    Similar welfare functions, albeit with different action spaces and constraints, arise naturally in the methodological literature on empirical welfare analysis as discussed in \citealt{Moon2025}, who studies welfare maximization in the context of the Marginal Values of Public Funds (MVPF) of \citealt{HendrenandSprung_Keyser2020}. For a welfare more tailored to discrimination applications, we consider a welfare where $\delta_i \in \{0,1\}^2$:
    \begin{equation*}
    \label{eq:util_max_loss_ts}
        W^{(util2)} (\boldsymbol{\delta}, \boldsymbol{\theta}) = \sum_{i=1}^n \delta_{i,1}(\theta_i-K_i) + \sum_{i=1}^n \delta_{i,2}(-\theta_i-K_i)~.
    \end{equation*}
    This two-sided welfare corresponds to auditing for discrimination against either black or white applicants. Under the welfare above, the Bayes-optimal decision consists of the thresholding rules $\delta_{i,1}^{OB}(G) := \mathbb{1}(\bbE_G(\theta_i|\hat\theta_i,\sigma_i) \geq K_i)$ and $\delta_{i,2}^{OB}(G) := \mathbb{1}(\bbE_G(-\theta_i|\hat\theta_i,\sigma_i) \geq K_i)$.
\end{exmp}

\begin{exmp}
\label{ex:high_dim_test_1}
    \textbf{(High-Dimensional Testing)} Local false discovery rate control can be interpreted as a compound decision (see \citealt{SunandCai2007}, \citealt{xie2011optimal}). The discrimination correspondence study of \citealt{Klineetal2022}, considers the problem of detecting discriminatory firms. Restricting $\delta_i\in\{0,1\}$, one can interpret $\delta_i=1$ as labeling firm $i$ as discriminatory against black applicants. The following welfare for $\lambda\in(0,1)$:
    \begin{equation*}
    \label{eq:test_loss}
        W^{(test)}(\boldsymbol{\delta},\boldsymbol{\theta}) = -\left\{\sum_{i=1}^n \underbrace{\lambda \delta_i\mathbb{1}(\theta_i\le 0)}_{\text{False Discovery}} + \underbrace{(1-\lambda)(1-\delta_i)\mathbb{1}(\theta_i>0)}_{\text{False Non-Discovery}}\right\}~,
`    \end{equation*}
    corresponds to trading off false discoveries against false non-discoveries. In addition to the application of \citealt{Klineetal2022}, \citealt{GuKoenker2023} consider a similar welfare function to the above, derive the Bayes-optimal decision rule, and apply their methods to the problem of ranking U.S. dialysis centers. Under the welfare above, the Bayes-optimal decision follows a thresholding rule $\delta_i := \mathbb{1}(P_G(\theta_i\geq 0|\hat\theta_i,\sigma_i) \geq \lambda)$. Here, the posterior probability of discrimination against black applicants characterizes the Bayes-optimal decision. The methods in this paper extend to address cases when the conditional means of interest are $\bbE_G( b(\theta_i)|\hat\theta_i,\sigma_i)$ for arbitrary $b$, meaning that decisions based on conditional probabilities can be addressed using my methods.
\end{exmp}

\section{Equivalence to Multiple Testing}
\label{sec:equivalence_results}

Our goal is to obtain confidence sets satisfying \eqref{eq:uniform_coverage}. Before introducing the key observations that underlie the approach to inference developed in this paper, I first note that any confidence set for the effect distribution $G$ suffices to obtain confidence sets satisfying \eqref{eq:uniform_coverage}. However, such confidence sets for $G$ also provide simultaneous confidence bounds for any feature of $G$, not only the conditional means of interest to the oracle decision-maker of study in this paper. Consequently, naive constructions of such confidence sets for $G$ may be conservative for the problem \eqref{eq:uniform_coverage}. The desideratum motivating the approach to inference developed in this paper, informally stated, is to obtain confidence bands satisfying \eqref{eq:uniform_coverage} while focusing power on the conditional means of interest to the oracle decision-maker.

In service of this objective, I provide two key observations that allow a tractable approach to inference satisfying \eqref{eq:uniform_coverage}. I first establish an equivalence between \eqref{eq:uniform_coverage} and multiple testing, where the econometrician seeks to control the familywise error rate for a particular class of null hypotheses. I then obtain a characterization of these null hypotheses in terms of linear functionals of the effect distribution $G$, and conclude by remarking on the utility of these two observations.

To establish the first equivalence, I first define for each $t$ in $\bbR$ corresponding to a realized unit-level effect estimate, for each $s > 0$ corresponding to realized unit-level estimate precision, and for each $c$ in $\bbR$ corresponding to a potential value of conditional mean economic mobility, the following hypothesis:
\begin{equation}
\label{eq:null_hypotheses}
    H_{c,t,s}:\mu_G(t,s)  = c~. 
\end{equation}
This hypothesis corresponds to the statement that the conditional mean effect $\theta_i$ at a tract satisfying $\hat\theta_i = t$ and $\sigma_i=s$ equals $c$. Consider the problem of testing the family of null hypotheses $H_{c,t,s}$ while controlling the familywise error rate:
\begin{equation*}
    \FWER_G := P_G\left(
\left\{
\text{Reject at least one $(c,t,s)$ such that $H_{c,t,s}$ holds.}
\right\}
    \right)~.
\end{equation*}
In particular, the following proposition establishes equivalence between the problem of obtaining a testing procedure satisfying asymptotic control of the familywise error rate:
\begin{equation}
\label{eq:fwer_control}
    \limsup_{n\to\infty}\sup_{G\in\calG}\FWER_G \le \alpha~,
\end{equation}
and of obtaining a confidence set satisfying \eqref{eq:uniform_coverage}. 

\begin{prop}
    \label{prop:MHT_equivalence}
    Given any confidence band satisfying \eqref{eq:uniform_coverage}, a testing procedure satisfying \eqref{eq:fwer_control} can be constructed. Similarly, given any testing procedure satisfying \eqref{eq:fwer_control}, a confidence band satisfying \eqref{eq:uniform_coverage} can be constructed. In either case, the coverage probability of the confidence band equals one minus the familywise error rate of the corresponding test:
    \begin{equation*}
        P_{G}\left(
\mu_G(t,s) \in \widehat{CS}(t,s) ~ \text{for all}~ t,s
    \right) = 1-\FWER_G~.
    \end{equation*}
\end{prop}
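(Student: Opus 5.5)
The argument is the usual duality between confidence sets and tests, carried out pointwise in $(t,s)$ and then taken simultaneously over all $(t,s)$.

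\textbf{From bands to tests.} Given a confidence band $\widehat{CS}(\cdot,\cdot)$, define the procedure that rejects $H_{c,t,s}$ exactly when $c\notin\widehat{CS}(t,s)$. Under $G$, the true nulls are the triples with $c=\mu_G(t,s)$, one for each $(t,s)$. A false rejection is therefore the event $\mu_G(t,s)\notin\widehat{CS}(t,s)$ for some $(t,s)$. This is precisely the complement of the simultaneous coverage event, so
\[
\FWER_G = 1-P_G\left(\mu_G(t,s)\in\widehat{CS}(t,s)~\text{for all}~t,s\right)
\]
holds for every $n$ and every $G$. Taking $\limsup_n\sup_{G\in\calG}$ of the left side equals $1-\liminf_n\inf_{G\in\calG}$ of the coverage probability, because $\sup$ and $\limsup$ of $1-x$ become $1-\inf$ and $1-\liminf$. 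Hence \eqref{eq:uniform_coverage} yields \eqref{eq:fwer_control}.

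\textbf{From tests to bands.} Given a testing procedure, let $R\subseteq\bbR\times\bbR\times(0,\infty)$ be its (random) rejection set. Define
\[
\widehat{CS}(t,s):=\{c:(c,t,s)\notin R\}.
\]
By the same identification of the true nulls, $\mu_G(t,s)\in\widehat{CS}(t,s)$ for all $(t,s)$ holds if and only if no true null is rejected. This gives the same identity, and the same limit manipulation shows that \eqref{eq:fwer_control} implies \eqref{eq:uniform_coverage}. The two constructions are mutually inverse, so in either direction the displayed equality holds exactly, not just asymptotically.

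\textbf{Technical points.} The only place needing care is ensuring that the events involved are measurable, so that $P_G$ of a union over uncountably many $(c,t,s)$ is defined. I would handle this by interpreting probabilities as outer probabilities, or by assuming the procedures are such that these events are measurable. The equality is an identity of events, so it is preserved under either convention.

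It also helps to note that $\mu_G(t,s)$ is well defined for every $(t,s)$, not just for $G$-almost every $(t,s)$. The Gaussian likelihood is positive everywhere, and $\mu_G$ is computed as the ratio
\[
\mu_G(t,s)=\frac{\int\theta\,\varphi\left((t-\theta)/s\right)\,dG(\theta\mid s)}{\int\varphi\left((t-\theta)/s\right)\,dG(\theta\mid s)},
\]
where $\varphi$ is the standard normal density. This makes the set of true nulls exactly $\{(\mu_G(t,s),t,s)\}$, as used above.
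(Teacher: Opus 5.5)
Your proof is correct and is essentially the paper's argument. The paper defines $\phi_{c,t,s}:=\mathbb{1}(c\notin\widehat{CS}(t,s))$, reads the same relation backwards to build a band from tests, and observes that $\{\exists (t,s):\mu_G(t,s)\notin\widehat{CS}(t,s)\}$ is the event of rejecting some true null $H_{\mu_G(t,s),t,s}$, just as you do with the rejection set $R$.

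Your explicit passage from the finite-$n$ identity to the $\limsup\sup$ and $\liminf\inf$ statements, and your measurability caveat, are details the paper leaves implicit. One claim is overstated: $\mu_G(t,s)$ is not automatically defined for every $s>0$. Positivity of the Gaussian kernel makes it defined for every $t$, but $s$ must lie in the support of $\sigma$, for example $m_G(s)>0$ on $\mathcal{S}$ with a continuous version of the density, as in the paper's smoothness assumption. The paper also assumes this tacitly.
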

\begin{proof}
    To see this, for any confidence band $\widehat{CS}(\cdot,\cdot)$, define $\phi_{c,t,s}$ as follows:
    \begin{equation}
    \label{eq:construct_phi_from_cs}
        \phi_{c,t,s} := \mathbb{1}(c\not\in \widehat{CS}(t,s))~.
    \end{equation}
    Similarly, \eqref{eq:construct_phi_from_cs} can be used to construct $\widehat{CS}(\cdot,\cdot)$ from $\phi_{c,t,s}$. The following equivalence of events:
    \begin{align*}
        \{\exists (t,s):\mu_G(t,s) \not\in \widehat{CS}(t,s)\} &= \{
        \exists (t,s):\phi_{\mu_G(t,s),t,s} = 1
        \}\\
        & = \{
        \exists (c,t,s): \text{$H_{c,t,s}$ is true, $\phi_{c,t,s}$ rejects}
        \}
    \end{align*}
    yields the proposition.
\end{proof}
Proposition \ref{prop:MHT_equivalence} is based on a construction where, for a family of tests of the hypotheses \eqref{eq:null_hypotheses}, the surviving nulls correspond to inclusions in a uniform confidence band, and vice versa. Consequently, the event that the uniform confidence band fails to cover conditional mean effects at some potential realization of the data corresponds to a false discovery in the corresponding multiple testing problem. Figure \ref{fig:MHT_equivalence} visualizes this intuition:

\begin{figure}[H]
\centering
\begingroup

\definecolor{fdCopper}{RGB}{182,116,61}
\def\fdT{0.64}

\begin{tikzpicture}[
    x=1cm,
    y=1cm,
    line cap=round,
    line join=round,
    >={Stealth[length=4pt,width=3.3pt]},
    declare function={
        fdmean(\x)=0.19+0.55/(1+exp(-4.2*(\x-0.48)));
        fdwlo(\x)=0.080+0.030*(2*\x-1)^2;
        fdwup(\x)=0.045+0.018*(2*\x-1)^2;
        fdmove(\x)=0.15*exp(-((\x-0.64)/0.29)^2);
        fdlo(\x)=fdmean(\x)-fdwlo(\x)-fdmove(\x);
        fdup(\x)=fdmean(\x)+fdwup(\x)-fdmove(\x);
    }
]

\begin{scope}[
    shift={(1.05,0.55)},
    x=7.90cm,
    y=4.65cm
]

% Confidence band.
\path[fill=black!12]
    plot[domain=0:1,samples=180,variable=\x]
        (\x,{fdlo(\x)})
    --
    plot[domain=1:0,samples=180,variable=\x]
        (\x,{fdup(\x)})
    -- cycle;

\draw[black!35,dashed,line width=.55pt]
    plot[domain=0:1,samples=180,variable=\x]
        (\x,{fdlo(\x)});

\draw[black!35,dashed,line width=.55pt]
    plot[domain=0:1,samples=180,variable=\x]
        (\x,{fdup(\x)});

% Axes.
\draw[->,black!80,line width=.65pt]
    (0,0)--(1.06,0)
    node[
        anchor=north west,
        font=\fontsize{9}{11}\selectfont
    ] {$t$};

\draw[->,black!80,line width=.65pt]
    (0,0)--(0,.92)
    node[
        anchor=south,
        font=\fontsize{9}{11}\selectfont
    ] {$c$};

% Guide, tick, and label at t_0.
\draw[black!25,densely dotted,line width=.55pt]
    (\fdT,0)--(\fdT,{fdmean(\fdT)});

\draw[black!65,line width=.55pt]
    (\fdT,-.010)--(\fdT,.010);

\node[
    anchor=north,
    yshift=-2pt,
    font=\fontsize{9}{11}\selectfont
]
    at (\fdT,0) {$t_0$};

% True conditional mean, holding s_0 fixed.
\draw[teal,line width=1.25pt]
    plot[domain=0:1,samples=180,variable=\x]
        (\x,{fdmean(\x)});

\node[
    anchor=west,
    text=black,
    xshift=4pt,
    font=\fontsize{10}{12}\selectfont\mdseries
]
    at (1,{fdmean(1)})
    {$\mu_G(t,s_0)$};

% False discovery at (t_0,s_0).
\node[
    anchor=south,
    align=center,
    text=black,
    inner sep=2pt,
    font=\fontsize{10}{12}\selectfont\mdseries
]
    (fdfalse) at (\fdT,.79)
    {\textcolor{fdCopper}{False discovery}\\[-.3ex]
     Reject $H_{\mu_G(t_0,s_0),\,t_0,\,s_0}$};

\draw[
    ->,
    line width=.85pt,
    shorten <=2pt,
    shorten >=1.5pt
]
    (fdfalse.south)--(\fdT,{fdmean(\fdT)});

\end{scope}

\end{tikzpicture}

\endgroup
\caption{A false rejection of
$H_{\mu_G(t_0,s_0),t_0,s_0}$ corresponds to failure of the confidence
band to cover the true conditional mean at $(t_0,s_0)$.}
\label{fig:MHT_equivalence}
\end{figure}

\subsection*{Simplifying the Null Hypotheses}

The above proposition \ref{prop:MHT_equivalence} establishes equivalence between the problem of interest in this paper, \eqref{eq:uniform_coverage}, and the classical problem of multiple hypothesis testing with familywise error rate control (see e.g. chapter 9 of \citealt{lehmann_romano_2022}). While this is a useful characterization, it is unclear how the representation afforded by proposition \ref{prop:MHT_equivalence} helps obtain an inferential procedure satisfying both \eqref{eq:uniform_coverage} and the inferential desideratum. In what follows, I obtain a characterization of the null hypotheses $H_{c,t,s}$ defining this multiple testing problem in terms of linear functionals of $G$. This characterization identifies the decision-relevant features of $G$, and thus helps address the inferential desideratum introduced above.

Extending arguments due to \citealt{IgnatiadisandWager2021} provides a convenient characterization of the null hypotheses $H_{c,t,s}$ as restrictions on linear functionals of the effect distribution $G$. To this end, notice that conditional mean effects can be written:
\begin{equation}
\label{eq:fractional_characterization}
    \mu_G(t,s) = \frac{\int \theta\varphi\left(
    \frac{t-\theta}{s}
    \right)g_G(s,\theta)d\theta}{\int \varphi\left(
    \frac{t-\theta}{s}
    \right)g_G(s,\theta)d\theta}~.
\end{equation}
To understand the consequences of \eqref{eq:fractional_characterization}, notice that the numerator and denominator of the right hand side of the above are both linear functionals of the effect distribution $G$. Consequently, the hypotheses $H_{c,t,s}$ can be rewritten:
\begin{equation*}
    H_{c,t,s}:L_{(c,t,s)}(G):=\int \left(\theta-c\right)\varphi\left(
    \frac{t-\theta}{s}
    \right)g_G(s,\theta)d\theta = 0~.
\end{equation*}
Observe that $L_{(c,t,s)}(G)$ is a linear functional of $G$. To simplify notation, let $\gamma$ denote the triple $(c,t,s)$, and rewrite the above family of hypotheses as $H_\gamma:L_\gamma(G) = 0$. Let $\Gamma$ denote $\mathbb{R}^2\times\bbR_+$, which contains each $\gamma$, and let $\Gamma(G)$ denote the collection of true nulls $\{\gamma\in\Gamma:H_\gamma\text{ holds}\}$. 

Consequently, the problem \eqref{eq:uniform_coverage} is not only equivalent to a multiple testing problem, but to a multiple testing problem where the null hypotheses entail restrictions on linear functionals of the effect distribution $G$. First, notice that this characterization isolates the features of the effect distribution $G$ to which the econometrician should focus power: the linear functionals $L_\gamma(G)$. Moreover, this characterization identifies the exact features of $G$ that an inferential procedure must cover: the $L_\gamma(G)$ such that $L_\gamma(G) = 0$. That is, to obtain a testing procedure satisfying \eqref{eq:fwer_control}, it is necessary and sufficient to cover the $L_\gamma(G)$ for which $H_\gamma$ holds. To build on this intuition, section \ref{sec:methods} develops a step-down approach to inference that seeks to protect exactly these null linear functionals, while focusing power on the collection of functionals $(L_\gamma(G))_{\gamma\in \Gamma}$.

\begin{rem}
    \label{rem:set-inference-connection}
    This paper is not the first to remark on equivalence between problems of uniform coverage as in \eqref{eq:uniform_coverage} and multiple hypothesis testing. Work due to \citealt{RomanoShaikh2010} establishes a similar equivalence between inference for the partially identified set defined by criterion functions (see e.g. \citealt{ChernozhukovHongTamer2007}). Our proposition \ref{prop:MHT_equivalence} is closely related to these previous equivalence results. In particular, \eqref{eq:uniform_coverage} can be alternatively viewed as a special case of the problem of inference for the partially identified set: coverage of the conditional mean function as in \eqref{eq:uniform_coverage} is equivalent to coverage of the zero-level set of $L_\gamma(G)$ as functions of $\gamma$, which coincides with the set of true nulls $\Gamma(G)$. Motivated by their equivalence results, \citealt{RomanoShaikh2010} obtain a step-down approach to inference for the identified set, providing further motivation for the step-down methods developed in section \ref{sec:methods}. 
\end{rem}

\section{Methods}
\label{sec:methods}

This section develops methods satisfying \eqref{eq:uniform_coverage}, applying the equivalence results developed in section \ref{sec:equivalence_results}. In particular, the results of section~\ref{sec:equivalence_results} suggest that it is necessary and sufficient to construct a familywise-error-rate-controlling family of tests of the hypotheses $\left\{H_\gamma:L_\gamma(G)=0\right\}_{\gamma\in\Gamma}$. My methods draw on a classical literature on multiple testing \citep{RomanoWolf2005, RomanoShaikh2010}.  The present setting differs from the classical settings in the key respect that estimators of the $L_\gamma(G)$ need not be unbiased.\footnote{\citealt{Pensky2015} demonstrates that in the homoskedastic version of the model considered in \eqref{eq:EB_DGP}, the minimax rate of estimation for many linear functionals of $G$ is not $n^{-1/2}$. Unbiased estimation therefore entails estimators with variance that grows with $n$. It therefore may be desirable to introduce bias in estimation to allow for reduced variance in estimation.}

To address this departure from the classical setting, I develop a bias-aware, step-down procedure for inference that takes arbitrary estimators of the $L_\gamma(G)$ and allows for biased estimation in subsection \ref{sec:step_down_details}. To obtain test statistics, I build on work due to \citealt{Donoho1994} and \citealt{IgnatiadisandWager2021} to derive minimax estimators which, for each functional $L_\gamma(G)$, minimize worst-case bias subject to a constraint on variance. I provide a high-level description of these test statistics in remark \ref{rem:choice-of-t-stat}, and provide an in-depth description and derivation of these test statistics in appendix \ref{app:minimax_details}. Given a bias-aware step-down procedure, I turn to asymptotics. Proposition \ref{prop:step_down_validity_est_cv} shows that this step-down procedure is uniformly asymptotically valid under certain validity and monotonicity conditions on critical values, stated in assumption \ref{assu:cv_validity_monotonicity}. Theorem \ref{thm:bootstrap_validity} verifies these conditions for multiplier bootstrap critical values. Verifying these conditions is complicated by the fact that the minimax test statistics need not be Donsker, or even VC-type, after studentization.

\subsection{Step-down Procedure for Inference}
\label{sec:step_down_details}

By proposition \ref{prop:MHT_equivalence} and the subsequent characterization of the null hypotheses $H_\gamma$, it suffices to obtain a collection of tests $\{\phi_\gamma\}_{\gamma\in\Gamma}$ satisfying \eqref{eq:fwer_control}. To this end, first assume that, for every $\gamma\in\Gamma$, there exists an estimator $\widehat L_{n,\gamma}$ of the functional $L_\gamma(G)$ that can be written as a sample mean:
\begin{equation*}
    \widehat L_{n,\gamma}
        :=
        \frac{1}{n}
        \sum_{i=1}^n
        Q_\gamma(\widehat\theta_i,\sigma_i)~,
\end{equation*}
where $Q_\gamma$ is some function from $\bbR\times\bbR_+$, the space of potential realizations of $(\hat\theta_i,\sigma_i)$ to the real numbers $\bbR$. Before describing my approach to inference, I define the bias of $\hat{L}_{n,\gamma}$ and variance of $Q_\gamma$ as follows:
\begin{equation*}
B_\gamma(G) := \bbE_G\hat{L}_{n,\gamma} - L_\gamma(G)~,\quad \sigma_G^2(Q_\gamma) = \var_G(Q_\gamma) = \bbE_G\left(Q_\gamma(\hat\theta_i,\sigma_i)^2\right) - \bbE_G\left(Q_\gamma(\hat\theta_i,\sigma_i)\right)^2~.
\end{equation*}
The bias of $\hat L_{n,\gamma}$ is typically unknown, and I therefore denote by $\bar{B}_\gamma := \sup_{G\in\calG}|B_\gamma(G)|$ the worst-case bias over the class $\calG$. Finally, for any $\Gamma' \subseteq\Gamma$, let $J(x,\Gamma',G)$ denote the distribution of the supremum of the estimation error of the $\hat{L}_{n,\gamma}$ over $\Gamma'$:
\begin{equation*}
    \label{eq:oracle-cdf}
        J_n(x,\Gamma',G)
        :=
        P_G\!\left(
            \sup_{\gamma\in\Gamma'}
            \left|
                \hat T_{n,\gamma}
            \right|
            \leq x
        \right)~, \quad \hat T_{n,\gamma} :=\frac{
                    \sqrt n\left(
                        \widehat L_{n,\gamma}
                        -
                        \mathbb E_G\widehat L_{n,\gamma}
                    \right)
                }{
                    \sigma_G(Q_\gamma)
                }~.
    \end{equation*}
Finally, I define the corresponding $1-\alpha$ critical value by $c(1-\alpha,\Gamma',G)$, defined as: 
\begin{equation*}
    c(1-\alpha,\Gamma',G):=
        \inf\left\{
            x\geq0:
            J(x,\Gamma',G)\geq1-\alpha
        \right\}~.
\end{equation*}

Ideally, given the equivalence outlined in section \ref{sec:equivalence_results}, hypotheses $H_\gamma$ should be rejected according to critical values $c(1-\alpha,\Gamma(G),G)$. However, because these critical values depend on the true nulls $\Gamma(G)$, which is the exact object of inferential interest, such approaches are not feasible. To obtain a feasible approach to inference that approximates this ideal benchmark, I build on and extend work due to \citealt{RomanoWolf2005} and \citealt{RomanoShaikh2010} to obtain a step-down, bias-aware procedure that iteratively rejects hypotheses $H_\gamma$. Crucially, this procedure highlights that it is not necessary to control for false discoveries among already-rejected hypotheses, targeting coverage of the functionals $L_\gamma(G)$ for which $H_\gamma$ holds.
\begin{algorithm}[Step-Down Procedure (Known Critical Values)]
\label{alg:step-down-procedure-known-cv}

\medskip

\noindent\textbf{Initialization.}
Set $\Gamma_0 := \Gamma$.

\medskip

\noindent\textbf{Iteration.}
For $k=1,2,\ldots$, set
\[
    \Gamma_k
    :=
    \left\{
        \gamma\in\Gamma_{k-1} :
        \left|\widehat L_{n,\gamma}\right|
        \leq
        \overline B_\gamma
        +
        c(1-\alpha,\Gamma_{k-1},G)
        \frac{\sigma_G(Q_\gamma)}{\sqrt n}
    \right\}~.
\]
Stop at the first $k$ such that $\Gamma_k=\Gamma_{k-1}$,
and return $\Gamma_k$.

\end{algorithm}
Where $\Gamma_{\hat k}^{sd}$ denotes the outcome of this step-down procedure, define $\widetilde{CS}^{sd}(\cdot,\cdot)$ as follows:
\begin{equation}
    \label{eq:step_down_definition}
    \widetilde{CS}^{sd}(t,s) := \{c : (c,t,s) \in \Gamma_{\hat{k}}\}~.
\end{equation}
\begin{prop}
    \label{prop:step_down_finite_sample}
    The step-down uniform confidence set \eqref{eq:step_down_definition} satisfies:
    \begin{equation*}
        P_G\left(
\mu_G(t,s) \in \widetilde{CS}^{sd}(t,s)\text{ for all }t,s
        \right)\ge 1-\alpha~,
    \end{equation*}
    for all $G$ in $\calG$.
\end{prop}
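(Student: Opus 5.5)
By Proposition \ref{prop:MHT_equivalence}, it suffices to show that the probability that some true null $\gamma\in\Gamma(G)$ is removed by Algorithm \ref{alg:step-down-procedure-known-cv} is at most $\alpha$. Indeed, $\mu_G(t,s)\in\widetilde{CS}^{sd}(t,s)$ for all $(t,s)$ exactly when $\Gamma(G)\subseteq\Gamma_{\hat k}$, because $(\mu_G(t,s),t,s)\in\Gamma(G)$ for every $(t,s)$. I would therefore work on the event
\begin{equation*}
E := \left\{ \sup_{\gamma\in\Gamma(G)} \left|\hat T_{n,\gamma}\right| \le c(1-\alpha,\Gamma(G),G)\right\}.
\end{equation*}
By the definition of $c(\cdot)$ as a quantile of $J_n(\cdot,\Gamma(G),G)$, and by right-continuity of the distribution function, $P_G(E)\ge 1-\alpha$.

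On $E$, I would show by induction on $k$ that $\Gamma(G)\subseteq\Gamma_k$ for every $k$. The base case is immediate since $\Gamma_0=\Gamma$. For the inductive step, suppose $\Gamma(G)\subseteq\Gamma_{k-1}$. The critical value is monotone in its set argument: a supremum over a larger set is larger, so $J_n(x,\Gamma',G)\le J_n(x,\Gamma(G),G)$ whenever $\Gamma(G)\subseteq \Gamma'$, and hence $c(1-\alpha,\Gamma_{k-1},G)\ge c(1-\alpha,\Gamma(G),G)$. Now take $\gamma\in\Gamma(G)$. Then $L_\gamma(G)=0$, so $\bbE_G\hat L_{n,\gamma}=B_\gamma(G)$, and
\begin{equation*}
|\hat L_{n,\gamma}| \le |B_\gamma(G)| + \frac{\sigma_G(Q_\gamma)}{\sqrt n}|\hat T_{n,\gamma}| \le \overline B_\gamma + c(1-\alpha,\Gamma_{k-1},G)\frac{\sigma_G(Q_\gamma)}{\sqrt n}.
\end{equation*}
The second inequality uses $E$ together with the monotonicity just established. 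Hence $\gamma\in\Gamma_k$, which completes the induction. It follows that $\Gamma(G)\subseteq\Gamma_{\hat k}$ on $E$, so the coverage probability is at least $P_G(E)\ge1-\alpha$, uniformly in $G\in\calG$.

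The main obstacles are minor technicalities. The first is the stopping time: if the procedure never stabilizes, I would interpret $\Gamma_{\hat k}$ as $\bigcap_k\Gamma_k$, which still contains $\Gamma(G)$ on $E$. The second is degenerate cases, such as $\sigma_G(Q_\gamma)=0$ or $\Gamma(G)$ empty. These are handled by convention: if $\sigma_G(Q_\gamma)=0$, then $\hat L_{n,\gamma}=\bbE_G\hat L_{n,\gamma}$ almost surely, so the acceptance inequality holds trivially. The third is measurability of the supremum over uncountable $\Gamma'$, which I would take as implicit in the definition of $J_n$. The substantive content of the argument is the monotonicity of the critical values combined with the bias bound.
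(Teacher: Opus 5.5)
Your proof is correct and is essentially the paper's argument. The paper proves the asymptotic version (Proposition \ref{prop:step_down_validity_est_cv}) by taking the first step $\hat\jmath$ at which a true null is removed, noting $\Gamma(G)\subseteq\Gamma_{\hat\jmath-1}$, and using monotonicity of critical values together with $\overline B_\gamma\ge|B_\gamma(G)|$ to force the supremum statistic over $\Gamma(G)$ above its own critical value; it then says this finite-sample case follows mutatis mutandis, and your induction on the good event $E$ is the direct (non-contrapositive) form of that argument, with the monotonicity and $1-\alpha$ validity of the oracle critical values checked from their definitions rather than assumed.
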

The proof of proposition \ref{prop:step_down_finite_sample} is included in appendix \ref{app:proofs_of_main_results}. At this point, I comment on four features of the procedure. First, the set of true nulls $\Gamma(G)$ encodes shape restrictions on the conditional mean function that are implied by the hierarchical sampling model \eqref{eq:EB_DGP}. The step-down procedure treats $\Gamma(G)$ only as the unknown set of true null hypotheses and does not otherwise exploit this structure. Procedures that exploit the shape restrictions implied by the \eqref{eq:EB_DGP} may therefore be more informative, although constructing such procedures is non-obvious and an interesting direction for future research. Second, the step-down construction eliminates the need to know $\Gamma(G)$ when determining the collection of hypotheses over which critical values are computed. The oracle critical values nevertheless continue to depend on the unknown distribution $G$, because the distribution of $\widehat L_{n,\gamma}$ depends on the latent distribution generating the data. Subsection~\ref{sec:asymptotic_validity} demonstrates how these distributions and critical values can be approximated using a multiplier bootstrap. Third, the construction above uses simultaneous, two-sided, bias-aware confidence sets for the functionals $L_\gamma(G)$. For ease of exposition, the maximum-bias and sampling-error components enter the confidence bounds additively. As discussed by \citet{ArmstrongKolesar2018}, this construction implicitly permits the bias to be chosen adversarially for each realization of the sampling error. One can address this source of conservatism by developing simultaneous analogues of the folded-normal critical values of \citet{ArmstrongKolesar2018}. Finally, the procedure takes the estimators $\widehat L_{n,\gamma}$, or equivalently the functions $Q_\gamma$, as given. As discussed previously, the choice of estimator poses a crucial tradeoff between bias and variance. In Appendix \ref{app:minimax_details}, I extend the methods of \citet{Donoho1994}, \citet{ArmstrongKolesar2018}, and \citet{IgnatiadisandWager2021} to construct minimax estimators that navigate this tradeoff in a principled manner. These are the estimators used in my bootstrap-validity analysis and ultimately implemented in the applications. Remark \ref{rem:choice-of-t-stat} details this choice of estimator.

\begin{rem}[Choice of Test Statistic]
\label{rem:choice-of-t-stat}
So far, the step-down methods provided allow for an arbitrary choice of test-statistic. I build on work due to \citet{Donoho1994}, \citet{ArmstrongKolesar2018}, and \citet{IgnatiadisandWager2021} to provide a principled choice of test-statistic. In Appendix \ref{app:minimax_details}, I derive a class of minimax estimators that allow for (i) rich forms of dependence between $\theta_i$ and $\sigma_i$; and (ii) a method for trading off worst-case bias and variance. Let $\widehat{\var}(Q_\gamma)$ denote an estimator of the variance of $Q_\gamma$. In appendix \ref{app:minimax_details}, I show how to obtain $Q_\gamma$ which solve, for each $\gamma$ in $\Gamma$, the following optimization problem:
    \begin{equation}
    \label{eq:minimax_program_intuition}
    \min_{Q_\gamma}\sup_{G\in\calG} B_\gamma(G)^2~\text{s.t.}~\frac{1}{n}\widehat{\var}(Q) \leq V_n~,
\end{equation}
    where $V_n$ denotes an upper constraint on the variance of $\hat{L}_{n,\gamma}$. Solutions to the program in \eqref{eq:minimax_program_intuition} minimize maximum bias over $\calG$, subject to a variance constraint, and are in this sense minimax. Alongside characterizing $Q_\gamma$ in \ref{app:minimax_details}, I discuss how to choose $V_n$ to trade off bias and variance for the purposes of the step-down procedure considered in this paper. Finally, similar minimax estimators are derived in \citealt{IgnatiadisandWager2021}, and are used to obtain marginal coverage of individual conditional means. I  emphasize that the affine minimax estimators developed and used in this paper differ from those of \citet{IgnatiadisandWager2021}, in that my affine minimax estimators allow for heteroskedasticity, and in particular dependence between the $\theta_i$ and $\sigma_i$. Finally, I emphasize that choosing minimax test-statistics for each functional $L_\gamma(G)$ allows researchers to focus power on these decision-relevant features of $G$.
\end{rem}

\subsection{Asymptotic Validity}
\label{sec:asymptotic_validity}

The above approach for constructing confidence sets satisfying \eqref{eq:uniform_coverage} assumes that the critical values $c(1-\alpha,\Gamma',G)$ are known. In practice, these critical values must be estimated. This section formalizes this estimation problem and provides a multiplier bootstrap algorithm to do so. Similarly, the $\sigma_G(Q_\gamma)$ are unknown, and are assumed to be estimated by $\hat\sigma_n(Q_\gamma)$. I first establish general sufficient conditions for estimated critical values to deliver an asymptotically valid approach to inference.

\begin{assu}
\label{assu:cv_validity_monotonicity}
    $\hat{c}_n$ satisfy validity and monotonicity, in the following sense:
    
    \noindent (i) $\hat{c}_n$ are valid in the sense that:
    \begin{equation*}
\label{eq:cv_validity}
    \limsup_{n\to\infty}\sup_{G\in\calG} P_G\left(
\sup_{\substack{\gamma\in \Gamma(G)}}
\left|
\frac{
                    \sqrt n\left(
                        \widehat L_{n,\gamma}
                        -
                        \mathbb E_G\widehat L_{n,\gamma}
                    \right)
                }{
                    \hat\sigma_n(Q_\gamma)
                }
\right| >\hat{c}_n(1-\alpha,\Gamma(G)) 
    \right)\leq \alpha~.
\end{equation*}

    \noindent (ii) $\hat{c}_n$ are monotone, in the sense that whenever $\Gamma_1\subseteq\Gamma_2\subseteq\Gamma$:
\begin{equation*}
\label{eq:cv_monotonicity_1}
    \hat{c}_n(1-\alpha,\Gamma_2)\geq \hat{c}_n(1-\alpha,\Gamma_1)~.
    \end{equation*}
\end{assu}

Towards interpretation, part (i) of the above assumption suggests that the critical values $\hat{c}_n$ need not be valid for every possible subset of $\Gamma$, but rather only for $\Gamma(G)$. In order to correct for the fact that critical values containing $\Gamma(G)$ may not be valid, part (ii) of the above assumption imposes monotonicity of these critical values in the sense of set inclusion. Proposition \ref{prop:step_down_validity_est_cv} verifies that these assumptions are sufficient for the uniform validity of this paper's methods.

Under this assumption, proposition \ref{prop:step_down_validity_est_cv}, stated below, ensures the asymptotic validity of the following algorithm:
\begin{algorithm}[Step-Down Procedure]
\label{alg:step-down-procedure}

\medskip

\noindent\textbf{Initialization.}
Set $\Gamma_0 := \Gamma$.

\medskip

\noindent\textbf{Iteration.}
For $k=1,2,\ldots$, set
\[
    \Gamma_k
    :=
    \left\{
        \gamma\in\Gamma_{k-1} :
        \left|\widehat L_{n,\gamma}\right|
        \leq
        \overline B_\gamma
        +
        \hat{c}_n(1-\alpha,\Gamma_{k-1})
        \frac{\hat\sigma_n(Q_\gamma)}{\sqrt n}
    \right\}~.
\]
Stop at the first $k$ such that $\Gamma_k=\Gamma_{k-1}$,
and return $\Gamma_k$.

\end{algorithm}
Where $\Gamma_{\hat k}^{sd}$ denotes the outcome of this step-down procedure, define $\widehat{CS}^{sd}(\cdot,\cdot)$ as follows:
\begin{equation}
    \label{eq:step_down_definition_asymp}
    \widehat{CS}^{sd}(t,s) := \{c : (c,t,s) \in \Gamma_{\hat{k}}^{sd}\}~.
\end{equation}

\begin{prop}
\label{prop:step_down_validity_est_cv}
    Suppose that assumption \ref{assu:cv_validity_monotonicity} holds. Then $\widehat{CS}^{sd}(\cdot,\cdot)$ satisfies \eqref{eq:uniform_coverage}.
\end{prop}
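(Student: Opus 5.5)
By Proposition \ref{prop:MHT_equivalence}, coverage of $\widehat{CS}^{sd}$ fails exactly when the step-down procedure of Algorithm \ref{alg:step-down-procedure} removes some true null $\gamma\in\Gamma(G)$ at some iteration. So it suffices to show that
\[
\limsup_{n\to\infty}\sup_{G\in\calG}P_G\bigl(\Gamma(G)\not\subseteq\Gamma_{\hat k}^{sd}\bigr)\le\alpha.
\]
The plan is the classical Romano--Wolf argument, with the bias term absorbed by $\overline B_\gamma$. Define the event
\[
E_n(G):=\left\{\sup_{\gamma\in\Gamma(G)}\left|\frac{\sqrt n(\widehat L_{n,\gamma}-\bbE_G\widehat L_{n,\gamma})}{\hat\sigma_n(Q_\gamma)}\right|\le\hat c_n(1-\alpha,\Gamma(G))\right\}.
\]
Assumption \ref{assu:cv_validity_monotonicity}(i) gives $\limsup_n\sup_G P_G(E_n(G)^c)\le\alpha$. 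The claim then reduces to the deterministic statement that on $E_n(G)$ no true null is ever removed.

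That claim follows by induction on $k$. The base case $\Gamma(G)\subseteq\Gamma_0=\Gamma$ is trivial. Suppose $\Gamma(G)\subseteq\Gamma_{k-1}$ and take $\gamma\in\Gamma(G)$. Then $L_\gamma(G)=0$, so $\bbE_G\widehat L_{n,\gamma}=B_\gamma(G)$ and $|\bbE_G\widehat L_{n,\gamma}|\le\overline B_\gamma$. On $E_n(G)$ the triangle inequality gives
\[
|\widehat L_{n,\gamma}|\le|B_\gamma(G)|+|\widehat L_{n,\gamma}-\bbE_G\widehat L_{n,\gamma}|\le\overline B_\gamma+\hat c_n(1-\alpha,\Gamma(G))\frac{\hat\sigma_n(Q_\gamma)}{\sqrt n}.
\]
Monotonicity, Assumption \ref{assu:cv_validity_monotonicity}(ii), applied to $\Gamma(G)\subseteq\Gamma_{k-1}$, gives $\hat c_n(1-\alpha,\Gamma(G))\le\hat c_n(1-\alpha,\Gamma_{k-1})$. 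Hence $\gamma$ satisfies the retention condition at step $k$, so $\gamma\in\Gamma_k$ and $\Gamma(G)\subseteq\Gamma_k$.

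Three points need care.

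\textbf{Termination.} If the procedure stops at a finite $\hat k$, the induction applies up to $\hat k$. If it never stops, I will instead define the output as $\bigcap_k\Gamma_k$, which still contains $\Gamma(G)$ on $E_n(G)$.

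\textbf{Measurability.} The events $\{\Gamma(G)\not\subseteq\Gamma_{\hat k}^{sd}\}$ involve uncountably many $\gamma$. I would handle this with outer probability, or by noting the inclusion $\{\Gamma(G)\not\subseteq\Gamma_{\hat k}^{sd}\}\subseteq E_n(G)^c$, which is all the argument needs.

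\textbf{Degenerate variance.} The case $\hat\sigma_n(Q_\gamma)=0$ should be excluded or handled by convention.

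None of these is a real obstacle. The substantive difficulty is hidden in verifying Assumption \ref{assu:cv_validity_monotonicity}, which is deferred to Theorem \ref{thm:bootstrap_validity}. Taking $\limsup_n\sup_G$ of the inclusion above then yields \eqref{eq:uniform_coverage}.
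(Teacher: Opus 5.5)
Your proposal is correct and is essentially the paper's argument: the paper takes the first index $\hat\jmath$ at which some $\gamma\in\Gamma(G)$ is dropped, notes $\Gamma(G)\subseteq\Gamma_{\hat\jmath-1}$, and uses monotonicity together with $\overline B_\gamma\ge|B_\gamma(G)|$ to land in the complement of your event $E_n(G)$; this is exactly the contrapositive of your induction. Your added remarks on non-termination (taking $\bigcap_k\Gamma_k$), measurability, and $\hat\sigma_n(Q_\gamma)=0$ are harmless refinements that the paper leaves implicit.
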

The proof of proposition \ref{prop:step_down_validity_est_cv} is included in appendix \ref{app:proofs_of_main_results}.

To obtain critical values satisfying \eqref{assu:cv_validity_monotonicity}, I proceed via the multiplier bootstrap. Let $\xi_i$ be i.i.d. standard normal random variables for $i=1,\ldots,n$, and denote by $\xi$ the collection $(\xi_i)_{i=1}^n$. For some realization of $\xi$, I define $\hat{L}_{n,\gamma}(\xi)$ as follows:
\begin{equation*}
    \hat{L}_{n,\gamma}(\xi) := \frac{1}{n}\sum_{i=1}^n \xi_i \left(Q_\gamma(\hat\theta_i,\sigma_i) -\hat{L}_{n,\gamma}\right)~.
\end{equation*}
Let $J_{n}^{MB}(x,\Gamma)$ denote the distribution of the multiplier bootstrap statistic, holding $\hat\theta_i$ and $\sigma_i$ fixed, and integrating over the law of $\xi$:
\begin{equation*}
    J_{n}^{MB}(x,\Gamma) = P_\xi\left(
\sup_{\gamma\in\Gamma}
\left|
\frac{\sqrt{n}\hat{L}_{n,\gamma}(\xi)}{\hat\sigma_n(Q_\gamma)}
\right|\le x
    \right)~.
\end{equation*}
I define the multiplier bootstrap critical values $\hat{c}_n(1-\alpha,\Gamma)$ as follows:
\begin{equation*}
    \hat{c}_n(1-\alpha,\Gamma) := \inf\left\{
x\ge 0: J_{n}^{MB}(x,\Gamma)\geq 1-\alpha
    \right\}~.
\end{equation*}
In appendix \ref{app:bootstrap_theory}, I demonstrate that these multiplier bootstrap critical values satisfy assumption \ref{assu:cv_validity_monotonicity}. The following theorem is proven therein.
\begin{theorem}
\label{thm:bootstrap_validity}
    Let $ n \to\infty$, and let assumptions \ref{assu:smoothness}, \ref{assu:modulus_binding}, and \ref{assu:pilot_independence} hold. Then the multiplier bootstrap critical values $\hat{c}_n(1-\alpha,\Gamma)$ satisfy assumption \ref{assu:cv_validity_monotonicity}, and  $\widehat{CS}^{sd}(\cdot,\cdot)$ using the bootstrap critical values satisfies \eqref{eq:uniform_coverage}.
\end{theorem}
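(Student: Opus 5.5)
The plan has two parts, since the second claim follows from the first. Once the multiplier bootstrap critical values are shown to satisfy assumption \ref{assu:cv_validity_monotonicity}, proposition \ref{prop:step_down_validity_est_cv} immediately gives that $\widehat{CS}^{sd}(\cdot,\cdot)$ satisfies \eqref{eq:uniform_coverage}.

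\textbf{Monotonicity.} Part (ii) is deterministic. For every realization of the data and of $\xi$, $\Gamma_1\subseteq\Gamma_2$ implies
\[
\sup_{\gamma\in\Gamma_1}\left|\sqrt n\hat L_{n,\gamma}(\xi)/\hat\sigma_n(Q_\gamma)\right|\le\sup_{\gamma\in\Gamma_2}\left|\sqrt n\hat L_{n,\gamma}(\xi)/\hat\sigma_n(Q_\gamma)\right|.
\]
Hence $J_n^{MB}(x,\Gamma_2)\le J_n^{MB}(x,\Gamma_1)$ for every $x$, and taking the infimum defining the quantile gives $\hat c_n(1-\alpha,\Gamma_2)\ge\hat c_n(1-\alpha,\Gamma_1)$.

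\textbf{Validity.} All the work is in part (i). I would follow the Gaussian approximation and multiplier bootstrap program of \citealt{CCK_2014}, \citealt{CCK_2016} and \citealt{CCK_ac_2014}, applied to the studentized class
\[
\mathcal F_n:=\{(Q_\gamma-\bbE_G Q_\gamma)/\sigma_G(Q_\gamma):\gamma\in\Gamma(G)\},
\]
uniformly in $G\in\calG$. The steps, in order:

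\begin{enumerate}
\item Use the explicit characterization of the minimax $Q_\gamma$ from appendix \ref{app:minimax_details}, together with assumption \ref{assu:smoothness} (smoothness of $\calG$) and assumption \ref{assu:modulus_binding} (the variance constraint binds, so $\sigma_G(Q_\gamma)$ is bounded below at a controlled rate). From these, derive an envelope bound $\|F_n\|_\infty\lesssim b_n$ and an entropy bound
\[
\sup_Q N(\mathcal F_n,\|\cdot\|_{Q,2},\epsilon\|F_n\|)\le (A_n/\epsilon)^{v_n},
\]
where $A_n$ and $v_n$ may grow polynomially in $n$. The idea is to discretize $\gamma=(c,t,s)$ over a compact effective range and use Lipschitz continuity of $\gamma\mapsto Q_\gamma/\sigma_G(Q_\gamma)$.
\item Use assumption \ref{assu:pilot_independence} (independence of any pilot or first-step estimates entering $Q_\gamma$, e.g.\ by sample splitting) to treat the $Q_\gamma$ as fixed functions conditional on the pilot. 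The summands are then i.i.d.
\item Apply the CCK Gaussian approximation for suprema of empirical processes to couple $\sup_{\mathcal F_n}|\bbG_n f|$ with the supremum of a Gaussian process $Z_n$. Anti-concentration of $\sup|Z_n|$ then converts the coupling error into Kolmogorov distance. The rate conditions require $b_n^2 v_n^{c}\log^{c}(A_n n)/n\to0$.
\item Show that the multiplier bootstrap conditional law approximates the same Gaussian supremum. This reduces to a bound on $\sup_{\mathcal F_n}|(\mathbb P_n-P)f^2|$ and on the estimated centering.
\item Handle studentization by showing
\[
\sup_{\gamma\in\Gamma}\left|\hat\sigma_n(Q_\gamma)/\sigma_G(Q_\gamma)-1\right|=o_P\!\left(1/\sqrt{\log n}\right)
\]
uniformly in $G$. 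This is again an empirical-process bound on the class of squares. Anti-concentration lets a multiplicative perturbation of this size change the supremum's distribution by $o(1)$.
\end{enumerate}

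Combining steps 3 to 5 gives
\[
\sup_G\left|P_G\!\left(\sup_{\Gamma(G)}|\cdot|>\hat c_n(1-\alpha,\Gamma(G))\right)-\alpha\right|\to0,
\]
which is stronger than part (i) requires. Since the bootstrap quantile is computed at the fixed set $\Gamma(G)$, no uniformity over the random $\Gamma_{k-1}$ is needed; monotonicity from part (ii) handles that through proposition \ref{prop:step_down_validity_est_cv}.

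\textbf{Main obstacle.} I expect step 1, together with the studentization in step 5, to be the hardest. After division by $\sigma_G(Q_\gamma)$, the minimax $Q_\gamma$ can have envelopes and Lipschitz constants that blow up with $n$, because the optimal bandwidth shrinks and the denominator $\varphi$-weights decay in the tails of $t$. So the class is not Donsker and may not be VC-type with fixed constants. My first attempt would be to restrict to a growing compact range for $(t,s)$ and use the smoothness of $\calG$ to bound the relevant constants. I would also bound the modulus of continuity of $\gamma\mapsto Q_\gamma$ through the dual characterization of the minimax problem, obtaining polynomial-in-$n$ entropy. The CCK theorems tolerate this, since they only need $\log N$ growth to be slow relative to $n/b_n^2$. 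Uniformity in $G$ follows because every constant depends on $G$ only through quantities bounded over $\calG$.
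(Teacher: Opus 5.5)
Your monotonicity argument and the reduction of coverage to Proposition~\ref{prop:step_down_validity_est_cv} match the paper. So does the overall skeleton: condition on the pilot, Gaussian coupling, multiplier coupling, anti-concentration.

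\textbf{Step 5 is a genuine gap.} In this paper $\hat\sigma_n(Q_\gamma)$ is not an empirical standard deviation. It is
$\hat\sigma_n(Q_\gamma)^2=P_H(Q_\gamma-P_HQ_\gamma)^2=n\,\omega_{n,\gamma}'(\delta_{n,\gamma})^2$, computed from the pilot $H=\hat G_n$, which by Assumption~\ref{assu:pilot_independence} is independent of the sample.
\begin{itemize}
\item Consequently $\hat\sigma_n/\sigma_G\to1$ says that the pilot is close to $G$, and none of the three hypotheses gives that. It is exactly Assumption~\ref{assu:variance_uniform_estimability}, which the theorem deliberately omits. Because the classes here are not VC-type, that assumption is needed at rate $o(n^{-\vartheta/2})$, not $o(1/\sqrt{\log n})$.
\item The step is also unnecessary. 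Part (i) of Assumption~\ref{assu:cv_validity_monotonicity} and the bootstrap statistic are studentized by the \emph{same} $\hat\sigma_n$, which is fixed given the pilot.
\item You should therefore couple the class $\psi_{n,\gamma}=(Q_\gamma-P_HQ_\gamma)/\hat\sigma_n(Q_\gamma)=\sqrt n\,\Delta_{n,\gamma}/(\delta_{n,\gamma}f_H)$ directly, not your $\sigma_G$-normalized $\mathcal F_n$.
\item The binding constraint gives $P_H\psi^2=1$. The density-ratio bounds $c^2/C^2\le f_G/f_H\le C^2/c^2$ from Assumption~\ref{assu:smoothness} then give $c^2/C^2\le\var_G(\psi_{n,\gamma})\le C^2/c^2$ (Lemma~\ref{lem:variance_bounds}).
\item Both the sample and multiplier suprema approximate the same Gaussian process with covariance $P_Gfg$, whose variances are bounded above and below. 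That is all anti-concentration needs; unit variances are never required.
\end{itemize}

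\textbf{Step 1 would also fail as planned.} $Q_\gamma$ depends on $\gamma$ through maximizers $G^\pm_{n,\gamma}$ of the modulus problem and through the MSE-chosen $\delta_{n,\gamma}$. Nothing in the assumptions makes these Lipschitz, or even continuous, in $\gamma$, and $\Gamma(G)$ is unbounded in $t$.

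The paper never uses the index. It embeds every $\psi_{n,\gamma}$ in the enlarged class $\Psi_n(H)$ of all feasible $\sqrt n(f_{G^+}-f_{G^-})/(uf_H)$. It bounds their $C^B(D_\tau)$ norms in $(t,s)$ after factoring out $\varphi_s(t)$ (Lemma~\ref{lem:key_smoothness}). Kolmogorov--Tikhomirov then gives $\log N\lesssim n^{\vartheta/2}\varepsilon^{-\vartheta}$, which is H\"older-type rather than VC-type entropy. This has two consequences:
\begin{itemize}
\item $\bbE_G Z^{tr}_{n,G}\lesssim n^{\vartheta/4}$, so all coupling errors must be $o(n^{-\vartheta/4})$.
\item \citealt{CCK_2016} cannot be applied directly. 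The paper passes to a finite $n^{-\rho}$-net, controls Gaussian and multiplier equicontinuity via Dudley and Borell, couples on the net, and finishes with conditional Strassen (Lemma~\ref{lem:multiplier_coupling}).
\end{itemize}

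The envelope has the same problem. The only global envelope is of order $\sqrt n$, under which your rate condition $b_n^2(\cdots)/n\to0$ fails. A usable envelope requires two steps:
\begin{itemize}
\item Truncate the \emph{data} at $|\hat\theta|\le D\sqrt{\log n}$.
\item Prove $\sup|\psi|\lesssim n^{1/(4B+2)+\eta}$ on $D_{\tau_n}$. The paper does this with interpolation in $s$ and a Hermite/Mehler bound on Laplace transforms in $t$ (Lemmas~\ref{lem:laplace_fact}--\ref{lem:main_envelope}).
\end{itemize}
Your proposal names this obstacle but gives no mechanism for overcoming it.
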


The proof of theorem \ref{thm:bootstrap_validity} is contained in appendix \ref{app:bootstrap_theory}.

The assumptions in theorem \ref{thm:bootstrap_validity} are largely standard. Assumption \ref{assu:smoothness} establishes bounds on the support of $\sigma$ and the H\"older smoothness of the marginal distribution of $\sigma$, and of the conditional distribution of $\theta$ given $\sigma$, as a function of $\sigma$. Intuitively, assumption \ref{assu:smoothness} ensures that the conditional law of the effects $\theta$ given precision $\sigma$ cannot vary too much with small changes in $\sigma$. Assumption \ref{assu:smoothness} is similar to assumptions 3 and 4 of \citealt{chen2024}. Assumption \ref{assu:modulus_binding} asserts that the variance constraint defining the minimax problem \eqref{eq:minimax_program_intuition} binds, and additionally imposes the same condition as assumption A in theorem 2 in \citealt{IgnatiadisandWager2021}. Assumption \ref{assu:pilot_independence} is identical to assumption E in theorem 2 in \citealt{IgnatiadisandWager2021}. Finally, while not needed for validity, assumption \ref{assu:variance_uniform_estimability} can be used to show that estimation error in the $\hat\sigma_n$ is negligible, as shown in theorem \ref{thm:cv_assumptions_verified_var_est}. Assumption \ref{assu:variance_uniform_estimability} is a standard consistency condition on the variance estimators $\hat\sigma_n(Q_\gamma)$, which follows condition H4 of \citealt{CCK_2016}. 

The proof of theorem \ref{thm:bootstrap_validity} is contained in appendix \ref{app:bootstrap_theory}. The key challenge in proving theorem \ref{thm:bootstrap_validity} is that the collection of studentized statistics $\hat T_{n,\gamma}$ need not be Donsker, or even VC-type, since variances are not bounded away from zero.\footnote{Similar challenges with studentization arise in \citealt{chetverikov2012}.} The proof proceeds in three steps. I first verify the conditions needed to obtain a Gaussian coupling for the statistic $\hat T_{n,\gamma}$ via theorem 2.1 of \citealt{CCK_2014}. I then obtain an analogous Gaussian coupling for the corresponding multiplier bootstrap statistic, extending arguments due to \citealt{CCK_2016} to allow for the fact that the studentized test-statistics are not VC-type. Finally, application of anti-concentration results due to \citealt{CCK_2014} and \citealt{CCK_ac_2014} yields the necessary convergence of distribution functions for the bootstrap critical values to satisfy the conditions in assumption \ref{assu:cv_validity_monotonicity}. 

\section{Application: Economic Mobility Revisited}
\label{sec:economic-mobility-application}

Recent work asks studies the geography of opportunity in the United States \citep{Chettyetal2014,ChettyandHendren2018,Chettyetal2018aer}. Most recently, \citealt{Chettyetal2018aer} derive and report estimates of economic mobility in US commuting zones and neighborhoods, and advocate the use of these estimates for downstream policy targeting tasks. For example, in his testimony before the congressional budget committee \citep{Chetty2014MobilityTestimony}, Chetty stated:
\begin{quote}
    Perhaps the most cost-effective way to improve mobility may be to publicize local statistics on economic mobility and other related outcomes. Simply drawing attention to the areas that need improvement can motivate local policy makers to take action.
\end{quote}
Subsequent work has applied such local statistics to downstream policy tasks. In this section, I revisit a particular example of such an exercise, namely the policy targeting exercise of \citealt{Bergmanetal2023} considered throughout this paper. In particular, \citealt{Bergmanetal2023} seek to draw the attention of low-income housing voucher recipients to high-mobility neighborhoods. As described in section \ref{sec:setup}, they select a subset of high-mobility neighborhoods in Seattle and inform treated voucher recipients, as part of the Creating Moves to Opportunity program, that these neighborhoods are high mobility.

A recent debate highlights that the economic mobility estimates may not be sufficiently informative for informative decision making in the spirit of \citet{Bergmanetal2023}. For example, \citet{Mogstadetal2023} find that the economic mobility estimates of \citet{Chettyetal2018aer} may contain more noise than signal, and that one cannot confidently conclude that the tracts selected in the spirit of \citet{Bergmanetal2023} lie in the true top third according to true economic mobility. Similarly, \citet{petrouzeniou_and_shaikh_2025} find similar results even when relaxing the notion of type-I-error control to allow researchers to focus power on tracts of interest. Motivated by these findings, my applications study the effectiveness of tract-level mobility estimates according to an oracle decision-maker with knowledge of the distribution of economic mobility across tracts.

In order to shed light on the informativeness of these economic mobility estimates for decision-making, I consider two exercises. In the first exercise, I ask which tracts researchers can confidently conclude an oracle decision maker would select as high mobility. While the results of this exercise can be suggestive of differences in mobility across tracts, according to an oracle decision-maker, the magnitudes of these differences may be small. To rule out this possibility, I ask in my second exercise how large of an improvement in economic mobility researchers can confidently conclude an oracle decision maker would associate with moving from a low- to a high-mobility neighborhood. Large gaps in conditional mean mobility between low and high estimated mobility tracts suggest that an oracle decision-maker believes that tracts with low estimated mobility are, on average, substantially lower-opportunity than those with high estimated mobility. The remainder of this section provides results from these two exercises.

\subsection*{Inference on the oracle selection of tracts}
\label{subsec:mobility-oracle-selection}

Consider the first exercise: obtaining a confidence set for the oracle decision. Recall that the optimal, oracle selection of tracts selects the top-third of census tracts according to their conditional mean mobility $\mu_G(\hat\theta_i,\sigma_i)$. To construct a confidence set for this decision using the confidence bands derived in this paper, let $\widehat{CS}_i^{sd}$ be a confidence set for conditional mean mobility at tract $i$. In particular, given $\widehat{CS}^{sd}(\cdot,\cdot)$, take $\widehat{CS}_i^{sd} := \widehat{CS}^{sd}(\hat\theta_i,\sigma_i)$. Confidence bounds for pairwise differences in conditional means can be constructed by comparing the infimum of $\widehat{CS}_i$ to the supremum of $\widehat{CS}_j$, for $i\ne j$, and vice versa. Consequently, one can construct simultaneous confidence sets for the ranks of tracts according to conditional mean mobility.\footnote{See \citealt{Mogstadetal2023} and \citealt{Kleinetal2020} for further discussion of such constructions.} In particular, I obtain confidence sets $\widehat{CS}_{\mathrm{in}}^{sd}$ and $\widehat{CS}_{\mathrm{exc}}^{sd}$, which denote the set of tracts confidently included and excluded in the oracle optimal selection, respectively. With asymptotic probability of at least $1-\alpha$, these confidence sets satisfy:
\begin{equation*}
    \widehat{CS}_{\mathrm{in}}^{sd} \subseteq \left\{
i: \delta_i^* = 1
    \right\} \subseteq \left( \widehat{CS}_{\mathrm{exc}}^{sd}\right)^c ~.
\end{equation*}

Figure~\ref{fig:confidence_sets_inner_outer_pts} plots the inner and outer confidence sets for the top-third Bayes-optimal selection using both the step-down method developed in this paper and the $F$-localization method of \citet{IgnatiadisandWager2021} derived formally in appendix \ref{app:Floc}. Tracts are plotted according to their $\hat\theta_i$ and $\sigma_i$. Tracts that researchers can confidently include in the oracle decision are highlighted in teal, whereas tracts that researchers can confidently exclude from the oracle decision are highlighted in copper. Tracts in neither set, corresponding to those tracts for which a researcher is uncertain about the oracle classification, are highlighted in gray.

\begin{figure}[H]
    
    \begin{subfigure}{0.45\textwidth}
        \centering
        \includegraphics[width=\linewidth]{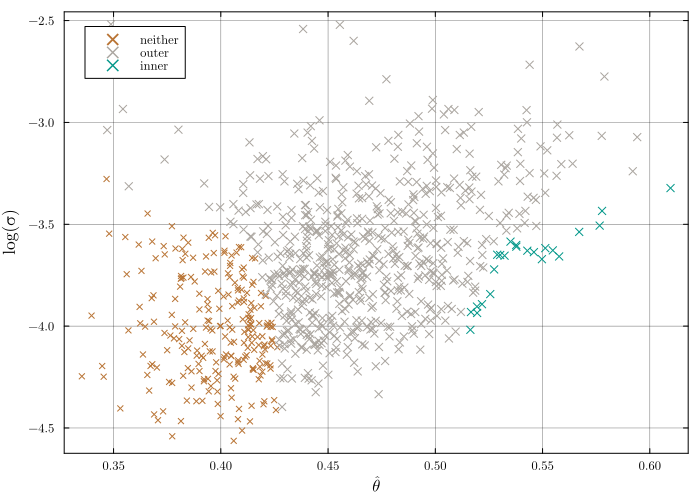}
        \caption{$F$-localization}
    \end{subfigure}
    \hfill
    \begin{subfigure}{0.45\textwidth}
        \centering
        \includegraphics[width=\linewidth]{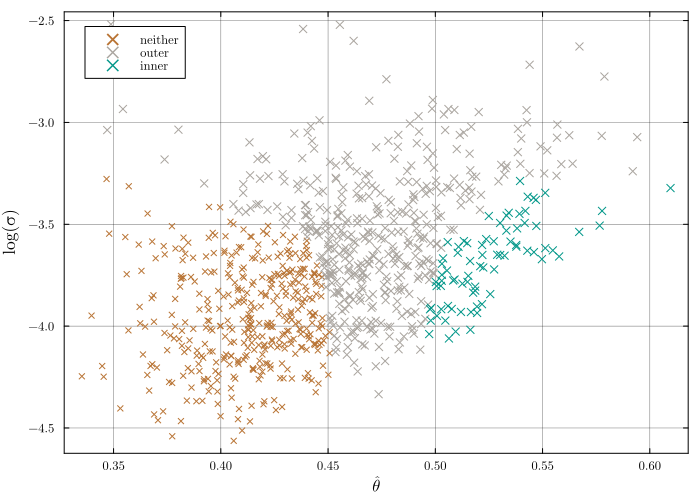}
        \caption{Step-Down}
    \end{subfigure}
    \caption{Inner and outer confidence sets for top-third Bayes-optimal selection, in $(\hat\theta,\log(\sigma))$ space.}
    \label{fig:confidence_sets_inner_outer_pts}
    \end{figure}

Visual inspection of Figure~\ref{fig:confidence_sets_inner_outer_pts} yields two conclusions. First, the step-down procedure is substantially more informative than $F$-localizations. Using the step-down procedure, we can confidently conclude that 80 tracts are selected and 399 tracts are not selected. Relative to the 25 confident inclusions and 241 confident exclusions under $F$-localizations, this represents a $220\%$ increase in confident inclusions and a $66\%$ increase in confident exclusions. Consequently, I conclude that the set of oracle decisions that are consistent with the observed data is relatively narrow. Second, my methods allow researchers to place greater confidence on the classifications of low-noise tracts. This not only matches the intuition around confidence sets for the latent $\theta_i$ themselves, but also provides the key intuition that the data are more informative about the oracle decision for tracts with a higher signal-to-noise ratio. In particular, there is little researchers can conclude about the conditional mean of $\theta_i$ when the $\sigma_i$ are sufficiently large as to ensure that information about the conditional law of $\theta_i$ given $\sigma_i$ is limited. 

In appendix \ref{app:KDE_results_mob}, I compare my step-down methods to alternative implementations of the $F$-localizations of \citealt{IgnatiadisandWager2021}, and show that the performance gains afforded by my methods are robust to choices in implementation of alternative methods. In appendix \ref{app:additional_figs_mob}, I provide additional results from this exercise. I plot in appendix figure \ref{fig:sea_maps} a map of the inner and outer confidence sets for the Bayes-optimal selection using both my methods and using $F$-localizations. My methods are most-informative in densely-populated, urban Seattle tracts. Appendix figure \ref{fig:sea_maps_urban} provides these inner and outer confidence set maps for urban Seattle tracts. Appendix figures \ref{fig:rank_LCBs} and \ref{fig:rank_UCBs} plot lower and upper confidence sets, respectively, for the ranks of Seattle census tracts on a map. Appendix figures \ref{fig:urban_rank_LCBs} and \ref{fig:urban_rank_UCBs} repeat this exercise for urban Seattle tracts.

\subsection*{Welfare implications of a feasible decision rule}
\label{subsec:mobility-welfare}

While the previous exercise allows us to quantify uncertainty around the oracle decision, it does not allow researchers to draw conclusions about the welfare of feasible decision rules. For instance, while the tracts in $\widehat{CS}_{\mathrm{in}}^{sd}$ and $\widehat{CS}_{\mathrm{exc}}^{sd}$ may differ in terms of conditional mean mobility, the gaps in mobility between these confidently included and excluded tracts may be small. In my second exercise, I consider the problem of comparing low and high-mobility tracts as selected by a feasible decision rule that selects tracts according to estimated conditional means. Given this feasible rule, I obtain confidence bounds on the gains in conditional mean economic mobility associated with moving from an arbitrary low-mobility tract to an arbitrary high-mobility tract. For each pair consisting of a feasibly-labeled low-mobility tract $i$ and a feasibly-labeled high-mobility tract $j$, define $\Delta_{ij}(G)
        :=
        \mu_G(\widehat\theta_j,\sigma_j)
        -
        \mu_G(\widehat\theta_i,\sigma_i)$. In the following exercise, I use the step-down confidence bands developed in this paper to obtain simultaneous lower confidence bounds on $\Delta_{ij}(G)$ across all such pairs.

         Figure~\ref{fig:LCB_welfare_gains_diff_selections} presents a histogram of these lower confidence bounds across all pairs of low- and high-mobility tracts. As above, I present results using both the step-down method developed in this paper (plotted in teal) and the $F$-localization method (plotted in copper) of \citet{IgnatiadisandWager2021}. Values lying to the right of zero correspond to pairs for which we can confidently conclude that the conditional mean gain in economic mobility is positive. The larger the area of the histogram to the right of zero, the larger the proportion of tract pairs for which researchers can conclude that there is a positive mobility gain. The vertical dotted lines represent the average lower confidence bound across pairs of low- and high-mobility tracts using each method. The farther to the right these dotted lines, the larger the magnitudes of gains in conditional mean economic mobility associated with moving from low to high-mobility tracts,

\begin{figure}[H]

\centering
        \includegraphics[width=0.5\linewidth]{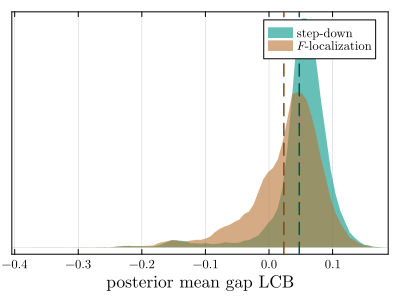}
    \caption{Lower confidence bounds on conditional mean gains in mobility associated with moving from low to high-mobility tracts, as selected by an EB decision rule.}
    \label{fig:LCB_welfare_gains_diff_selections}
    \end{figure}

    A visual inspection of these histograms yields two conclusions. First, for the vast majority of pairs of tracts labeled low and high mobility by the feasible decision maker, an oracle decision-maker would believe that moving from the low- to the high-mobility tract generates a positive gain in conditional mean mobility. The average lower confidence bound corresponds to a five percentile-point improvement in economic mobility. The step-down method yields positive lower confidence bounds for $91.9\%$ of these tract pairs. Second, the methods developed in this paper are suggestive of substantially larger gains across pairs of tracts than $F$-localizations. $F$-localizations analogously conclude that only $73.7\%$ of tract pairs lead to a positive gain in conditional mean economic mobility.

\subsection*{Implications for the classification of high-opportunity neighborhoods}
\label{subsec:mobility-implications}

Given the uncertainty around the distribution of economic mobility across Seattle tracts, one may naturally ask whether an oracle decision-maker would believe that the CMTO program actually directs families to high-mobility neighborhoods. Assuming that the correlational estimates of tract-level mobility used in my exercises are causal, would an oracle DM ascribe positive conditional mean welfare gains to the CMTO program?

First, as noted in \citet{Mogstadetal2023}, the answer to this question depends on which assumptions the researcher is willing to make on movement patterns between low- and high-mobility neighborhoods. In this vein, notice that in the exercises undertaken in this paper, we cannot reject the possibility that a small fraction of high-noise tracts selected as high mobility by a feasible decision rule are among the lowest-mobility tracts in Seattle according to oracle conditional mean mobility. Consequently, under an extreme thought experiment wherein all treated individuals move to these tracts, we cannot reject the possibility that the oracle decision-maker would conclude that the CMTO program is welfare decreasing. However, the set of tracts for which this is true is a small fraction of the feasibly-selected tracts. In particular, for the majority of feasibly-selected tracts, my exercises suggest that researchers can be confident that the oracle decision-maker would place these tracts among the top half, and in many cases among the top third, according to conditional mean economic mobility.

However, these extreme cases are instructive in one sense that is informative for decision-making. In particular, there is little one can say about the oracle conditional mean mobility at tracts where economic mobility is measured with high noise. In these low signal-to-noise regimes, one can learn little about the distribution of economic mobility relevant for the conditional mean, and as a result one can say little about the optimal decision. As a result, under the rich forms of heteroskedasticity considered in this paper, it is unclear how researchers should navigate the planner's dilemma. In particular, researchers cannot be confident in feasible solutions to the problem of trading off selection of high-mobility, high-noise tracts with selection of lower-mobility, lower-noise tracts.

Given the above conclusions on the informativeness of the economic mobility estimates for optimal decision-making, a natural related question asks how well an oracle decision maker would believe feasible rules perform according to their objectives. In particular, while an oracle decision maker can conclude that some tracts are higher mobility than others according to their conditional means, the magnitudes of the conditional mean gains between low- and high-mobility tracts could nevertheless be small. In particular, this would be the case when the distribution of mobility is fairly concentrated, meaning that tracts are qualitatively indistinguishable according to true economic mobility. In the setting of the \citet{Bergmanetal2023} experiment, however, the results above show that this is not the case. Indeed, for the vast majority of pairs of low- and high-mobility tracts, the gains in economic mobility expected by an oracle decision maker are economically large.

The exercises considered in this application suggest two results. First, researchers can be confident that an oracle decision-maker would rank the majority of feasibly selected tracts in the top half, and many in the top third, according to conditional mean mobility. As a result, the movement patterns needed to generate losses in conditional mean economic mobility due to the CMTO experiment are necessarily fairly extreme. Moreover, in comparing low- to high-mobility tracts as selected by a feasible decision maker who approximates the oracle, my exercises suggest that an oracle decision maker would ascribe qualitatively large gains in economic mobility associated with moving from low- to high-mobility neighborhoods. This also suggests wide dispersion in economic mobility across neighborhoods. That said, I cannot reject the possibility that an oracle decision maker makes mistakes according to true mobility at each neighborhood, in line with results from existing work.

\subsection*{Comparison to Analyses of Latent Effects}

Given the above discussion, it is worth comparing the exercises considered above to existing exercises conducting inference on the latent effects $\theta_i$ themselves. While the exercises considered in this paper consider the problem of an oracle decision maker with knowledge of the distribution of mobility across tracts, existing exercises effectively compare to analyzing the problem of a fully-informed oracle decision maker with knowledge of true mobility at each tract. In this sense, existing exercises admit similar economic interpretation to those considered in my application to the \citealt{Bergmanetal2023} selection exercise.

In particular, observe that my first exercise -- obtaining inner and outer confidence sets for the oracle optimal selection -- is analogous to exercises considered in \citealt{Mogstadetal2023}, who consider the problem of obtaining confidence sets for the top-third selection of tracts according to the true $\theta_i$. This latter selection would indeed be the optimal selection for an oracle decision-maker with knowledge of the $\theta_i$. Similarly, my second exercise -- obtaining confidence bounds on the gains in conditional mean mobility between low and high-mobility tracts -- is analogous to exercises considered in \citealt{petrouzeniou_and_shaikh_2025}, who consider the problem of obtaining confidence bounds on the gains in true mobility between low and high-mobility tracts. Again, these latter quantities correspond exactly to the gains in welfare ascribed to moving from low to high-mobility tracts by a fully-informed oracle. Consequently, the exercises considered in this paper admit similar economic interpretations to existing work.

Fundamentally, interest across the exercises considered in this paper and in existing papers is in evaluating and characterizing mistakes relative to an oracle decision maker, as discussed in \citealt{AndrewsandChen2025}. For instance, my first exercise, and consequently the exercises of \citealt{Mogstadetal2023}, allow researchers to evaluate which selections would not be classified as mistakes by an oracle decision maker. My second exercise, and thus the exercises of \citealt{petrouzeniou_and_shaikh_2025}, allow researchers to evaluate the magnitudes of these mistakes in terms of welfare. Therefore, the exercises considered in this paper, and those considered in previous work, differ in the sense of the oracle decision maker defining these mistakes.

The choice of oracle considered in this paper is natural given existing work. \citealt{chen2024} explicitly considers the problem of approximating an oracle decision maker with knowledge of the effect distribution $G$. In the original decision problem of \citealt{Bergmanetal2023}, selections were made using shrunk estimates of economic mobility, thus estimating conditional mean mobility and implicitly approximating the same oracle decision maker with knowledge of the effect distribution $G$. Given the theoretical and applied interest, therefore, in evaluating and approximating the decisions of an oracle decision maker with knowledge of the distribution of mobility across tracts, I view the exercises considered in this paper as complementary to existing exercises considering the fully-informed oracle.

Results due to \citealt{Mogstadetal2023} and \citealt{petrouzeniou_and_shaikh_2025} suggest that evaluating the fully-informed oracle decision maker is challenging in the setting of \citealt{Bergmanetal2023}. Economic mobility estimates are too numerous and too noisy to draw informative conclusions on the problem of a fully-informed oracle decision maker. My exercises complement this existing work by considering an alternative oracle decision-maker, and yield qualitatively different conclusions. I obtain positive results, and can draw informative conclusions on the problem of the oracle decision considered in this paper. Of course, my analyses do not rule out the possibility that an oracle with knowledge of $G$ makes mistakes relative to a fully-informed oracle. In fact, existing work suggests that this possibility cannot be rejected. Nevertheless, the framework developed in this paper, alongside the methodological advances developed herein, allow for new, economically substantive insights on place-based policy targeting using local statistics on mobility.

\section{Conclusion}
\label{sec:conclude}

This paper develops a framework and methods for evaluating uncertainty in a broad range of compound decision problems. I show that such compound decisions are ubiquitous in applied work in economics, and define a notion of an oracle decision maker in these problems. I argue that this oracle decision maker's problem is subject to substantial uncertainty in finite-samples. I provide tools to evaluate this uncertainty. Applying my tools to a canonical compound decision problem in economics -- the problem of differentiating low and high-mobility areas -- shows that local mobility statistics may be more informative for place-based policy targeting than previously understood.

\begin{appendix}

\bibliography{refs}

\clearpage
\startcontents[app]

\section*{Appendix Contents}

\printcontents[app]{}{1}{}

\section{Proofs of Results from Main Text}
\label{app:proofs_of_main_results}

\begin{proof}
        \textbf{Proof of proposition \ref{prop:step_down_finite_sample}:} The argument follows mutatis mutandis from the argument used to prove proposition \ref{prop:step_down_validity_est_cv}, and is thus omitted.
    \end{proof}
\begin{proof}
        \textbf{Proof of proposition \ref{prop:step_down_validity_est_cv}:} First, note that it suffices to show:
        \begin{equation}
            \label{eq:equivalent_coverage_infeasible_alg}
    \limsup_{n\to\infty} \sup_{G\in\calG} P_G\left(
\{\exists \gamma \in \Gamma(G):\gamma\not\in \Gamma_{\hat{k}}\}
    \right) =  \limsup_{n\to\infty} \sup_{G\in\calG} P_G\left(
\Gamma(G) \not\subseteq \Gamma_{\hat k}
    \right) \leq \alpha~.
        \end{equation}
        The event in \eqref{eq:equivalent_coverage_infeasible_alg} corresponds to a false discovery under familywise error rate control.
        Let $\hat\jmath$ denote the first index such that there exists $\gamma$ in $\Gamma(G)$ such that $\gamma$ does not lie in $\Gamma_{\hat\jmath}$. Notice that $\hat\jmath \ge 1$, and that $\Gamma(G)$ is contained in $\Gamma_{\hat\jmath -1}$ by construction. Furthermore, notice that $\Gamma(G) \not\subseteq \Gamma_{\hat{k}}$ if and only if $\hat\jmath \leq \hat k$. It follows that, whenever $\Gamma(G) \not\subseteq \Gamma_{\hat{k}}$, there exist $\gamma \in \Gamma(G)$ such that:
        \begin{equation}
        \label{eq:false_discovery_equiv_event}
            |\hat{L}_{n,\gamma} |> \bar B_\gamma +  \hat{c}_n(1-\alpha,\Gamma_{\hat\jmath - 1}) \frac{\hat\sigma_n(Q_\gamma)}{\sqrt{n}}~.
        \end{equation}
        Notice, however, that part (ii) of assumption \ref{assu:cv_validity_monotonicity} implies:
        \begin{equation*}
            |\hat{L}_{n,\gamma} |> \bar B_\gamma +  \hat{c}_n(1-\alpha,\Gamma(G)) \frac{\hat\sigma_n(Q_\gamma)}{\sqrt{n}}~.
        \end{equation*}
        As a result, 
        \begin{equation*}
            \sup_{\gamma \in \Gamma(G)}
        \left|
        \frac{
                    \sqrt n\left(
                        \widehat L_{n,\gamma}
                        -
                        \mathbb E_G\widehat L_{n,\gamma}
                    \right)
                }{
                    \hat\sigma_n(Q_\gamma)
                }
        \right|> \hat{c}_n(1-\alpha,\Gamma(G))
        \end{equation*}
        is implied by the event \eqref{eq:false_discovery_equiv_event}, since $\bar{B}_\gamma \ge |B_\gamma(G)|$. Applying part (i) of assumption \ref{assu:cv_validity_monotonicity} yields:
        \begin{align*}
            &\limsup_{n\to\infty} \sup_{G\in\calG} P_G\left(
\Gamma(G) \not\subseteq \Gamma_{\hat{k}}
    \right)\\
    & \le \limsup_{n\to\infty}\sup_{G\in\calG} P_G\left(
\sup_{\gamma\in\Gamma(G)}
\left|
\frac{
                    \sqrt n\left(
                        \widehat L_{n,\gamma}
                        -
                        \mathbb E_G\widehat L_{n,\gamma}
                    \right)
                }{
                    \hat\sigma_n(Q_\gamma)
                }
\right|> \hat{c}_n(1-\alpha,\Gamma(G))
    \right)\le \alpha~,
        \end{align*}
        thus proving the proposition.    
    \end{proof}

\section{\texorpdfstring{$F$}{F}-localizations}
\label{app:Floc}

$F$-localizations constitute an alternative approach to inference advanced in the literature by \citealt{IgnatiadisandWager2021}. I generalize these $F$-localizations to the heteroskedastic Gaussian sequence model \eqref{eq:EB_DGP}. The construction of $F$-localization confidence sets proceeds in three steps. First, I construct a confidence set for the marginal, joint cumulative distribution function of the $(\hat\theta_i,\sigma_i)$, which is a linear functional of the latent distribution $G$. Using this confidence set for the joint CDF of the $(\hat\theta_i,\sigma_i)$, one can write obtain a confidence set for the latent distribution $G$. Finally, to obtain a confidence set for a conditional mean estimand $M_b(t,s,G)$ involves computing the smallest and largest values of the conditional mean over the confidence set for the latent distribution $G$. 

I obtain the joint cumulative distribution function:
\begin{equation*}
    F_G(t,s) = \int_{u\le t}\int_{v\le s} f_G(u,v)dvdu~,\qquad f_G(t,s) := \int \frac{1}{s}\varphi\left(
    \frac{t-\theta}{s}
    \right)g_G(s,\theta)d\theta~.
\end{equation*}
Crucially, $F(t,s,G)$ is linear in the distribution $G$ for any $t\in\mathbb{R}$ and any $s\ge 0$. The $F$-localization approach to inference proceeds as follows. First, one obtains a level $1-\alpha$ confidence set for the marginal joint CDF $F(\cdot,\cdot)$, which I denote by $\mathcal{F}_n(1-\alpha)$. From this confidence set, one can obtain a confidence set for $G$ as follows:
\begin{equation*}
    \calG_n(1-\alpha) := \left\{
G\in\calG:F_G(\cdot,\cdot)\in \mathcal{F}_n(1-\alpha)
    \right\}~.
\end{equation*}
One can project this confidence set to the conditional means of interest to obtain:
\begin{align*}
        \underline{\mu}(t,s,1-\alpha) &:= \inf_{G\in\calG_n(1-\alpha)} \mu_G(t,s)\\ \overline{\mu}(t,s,1-\alpha) &:= \sup_{G\in\calG_n(1-\alpha)} \mu_G(t,s)
    \end{align*}
Defining $\widehat{CS}^{floc}(t,s):= [\underline{\mu}(t,s,1-\alpha), \overline{\mu}(t,s,1-\alpha)]$, I obtain the following proposition:
\begin{prop}
\label{prop:F_loc_validity}
    Let $\mathcal{F}_n(1-\alpha)$ satisfy:
    \begin{equation}
    \label{eq:suff_cond_for_floc_validity}
        \liminf_{n\to\infty}\inf_{G\in\calG} P_G\left(
F_G(\cdot,\cdot) \in \mathcal{F}_n(1-\alpha)
        \right)\ge 1-\alpha~.
    \end{equation}
    Then $\widehat{CS}^{floc}(t,s)$ satisfies \eqref{eq:uniform_coverage} and:
    \begin{equation}
    \label{eq:floc_coverage_of_G}
    \liminf_{n\to\infty}\inf_{G\in\calG} P_G(G\in\calG_n(1-\alpha))\geq 1-\alpha~.
    \end{equation}
\end{prop}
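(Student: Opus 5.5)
The argument is an event inclusion followed by the hypothesis \eqref{eq:suff_cond_for_floc_validity}. No asymptotics are needed beyond that hypothesis: every step holds for fixed $n$ and uniformly in $G\in\calG$.

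First establish \eqref{eq:floc_coverage_of_G}. Fix $G\in\calG$. By the definition of $\calG_n(1-\alpha)$, the event $\{G\in\calG_n(1-\alpha)\}$ is exactly the event $\{F_G(\cdot,\cdot)\in\mathcal{F}_n(1-\alpha)\}$, because $G\in\calG$ holds by hypothesis. So $P_G(G\in\calG_n(1-\alpha)) = P_G(F_G\in\mathcal{F}_n(1-\alpha))$ for every $n$ and every $G$. Taking $\inf_{G\in\calG}$ and then $\liminf_n$ and applying \eqref{eq:suff_cond_for_floc_validity} gives \eqref{eq:floc_coverage_of_G}.

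Next derive \eqref{eq:uniform_coverage}. On the event $\{G\in\calG_n(1-\alpha)\}$, for every $(t,s)$ at once we have
\[
\underline{\mu}(t,s,1-\alpha)=\inf_{G'\in\calG_n(1-\alpha)}\mu_{G'}(t,s)\le \mu_G(t,s)\le \sup_{G'\in\calG_n(1-\alpha)}\mu_{G'}(t,s)=\overline{\mu}(t,s,1-\alpha),
\]
since the true $G$ is one of the candidates in the optimization. This is simultaneous in $(t,s)$ because a single random set $\calG_n(1-\alpha)$ is used for every $(t,s)$. Hence
\[
\{G\in\calG_n(1-\alpha)\}\subseteq\{\mu_G(t,s)\in\widehat{CS}^{floc}(t,s)\ \text{for all } t,s\},
\]
and monotonicity of probability followed by $\inf_G$ and $\liminf_n$ gives \eqref{eq:uniform_coverage} from \eqref{eq:floc_coverage_of_G}.

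The only point needing care is that $\mu_G(t,s)$ is well defined, meaning the denominator in \eqref{eq:fractional_characterization} is positive at every $(t,s)$ in the relevant support. I will take this as implicit in the definition of the band, since the band is only defined where $\mu_G(t,s)$ is. Measurability of the coverage event is handled by reading the probabilities as inner probabilities if needed. The inclusion argument itself avoids any issue: it only uses the event $\{F_G\in\mathcal{F}_n(1-\alpha)\}$, whose probability is controlled by assumption. No real obstacle is expected.
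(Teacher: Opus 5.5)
Your argument is correct and is the projection argument implicit in the paper's construction of $\calG_n(1-\alpha)$ and $\widehat{CS}^{floc}$; the paper states this proposition without a written proof. For $G\in\calG$ the events $\{G\in\calG_n(1-\alpha)\}$ and $\{F_G\in\mathcal{F}_n(1-\alpha)\}$ coincide, and since one random set $\calG_n(1-\alpha)$ serves every $(t,s)$, that event is contained in the simultaneous coverage event; your caveat that $\mu_G(t,s)$ must be well defined wherever the band is evaluated is fair, and the paper also leaves it implicit.
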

Applying the multivariate DKW inequality due to \citealt{Naaman2021} provides one construction of $\mathcal{F}_n(1-\alpha)$ satisfying \eqref{eq:suff_cond_for_floc_validity}. Allowing $\hat{F}_n$ to be the empirical joint CDF of the $\hat\theta_i,\sigma_i$, I define:
\begin{equation*}
    \mathcal{F}_n^{DKW}(1-\alpha) := \left\{
F: \sup_{t\in\mathbb{R}, s\geq 0} |F(t,s) - \hat{F}_n(t,s)| \le \sqrt\frac{
\log(2(n+1)/\alpha)
}{2n}
    \right\}~.
\end{equation*}
This construction $\mathcal{F}_n^{DKW}(1-\alpha)$ is particularly convenient, in that $\calG_n^{DKW}(1-\alpha)$ is defined by a set of linear constraints on $G$. As a result, $\widehat{CS}^{floc}(t,s)$ can be computed via linear fractional programming, as discussed at depth in \cite{IgnatiadisandWager2021}.
\begin{rem}
\label{rem:choice_of_floc}
    By imposing restrictions on $\calG$, further choices of $\mathcal{F}_n(1-\alpha)$ may exist. For example, \citealt{IgnatiadisandWager2021} provide $\mathcal{F}_n(1-\alpha)$ based on a kernel density estimator of the density $f(\cdot,\cdot,G)$, for the homoskedastic gaussian noise model. \citealt{IgnatiadisandWager2021} show that the marginal density of the $\hat\theta$ can be easily estimated using a particular choice of kernel. In particular, the smoothness of mixtures of Gaussians allows for a choice of a shrinking bandwidth such that bias vanishes asymptotically. In the heteroskedastic case, the arguments of \citealt{IgnatiadisandWager2021} do not easily generalize, apart from under a few strong, sufficient conditions.\footnote{In particular, the usual arguments bounding bias via exponential tail bounds on the Fourier transform of $f(\cdot,\cdot,G)$ do not easily generalize. Imposing precision independence, $\theta_i \indep \sigma_i$, one can recover extensions of the results due to \citealt{IgnatiadisandWager2021} by imposing that the marginal distribution of $\sigma$ have support bounded away from zero and be supersmooth of order $2$, in the sense of \citealt{Fan1991}.} Nevertheless, I apply this KDE $F$-localization in appendix \ref{app:KDE_results_mob}, and show that these $F$-localizations are conservative relative to my methods.
\end{rem}

Finally, I notice that $F$-localization confidence sets are fully simultaneous, in that they satisfy simultaneous coverage of all functionals of $G$ via \eqref{eq:floc_coverage_of_G}. However, in the decision-making problems of interest in this paper, it is desirable to focus power on collections of functionals of interest, such as $\mu_G(\cdot,\cdot)$, motivating the methods developed in this paper.

\section{Details on Affine Minimax Estimation Under Heteroskedasticity}
    \label{app:minimax_details}

    In this appendix, I derive a class of affine minimax estimators that allow for (i) rich forms of dependence between $\theta_i$ and $\sigma_i$, and (ii) optimally trade off bias and variance. The remainder of this section is devoted to obtaining these affine minimax estimators, which will constitute my choice of test-statistic for the procedure derived above. The construction builds on recent formalizations of a heuristic due to Charles Stein (see e.g. \citealt{Donoho1994}, \citealt{ArmstrongKolesar2018}, \citealt{IgnatiadisandWager2021}), which takes a hardest one-dimensional subproblem, and solves for a minimax estimator within the subproblem. 

Formally, I restrict attention to estimators $\hat{L}_{n,\gamma}$ of $L_\gamma(G)$ that can be written as sample means, for some function $Q_\gamma:\mathbb{R}^2\to\mathbb{R}$:
\begin{equation}
\label{eq:estimator_Q_sum_form}
    \hat{L}_{n,\gamma} := \frac{1}{n}\sum_{i=1}^n Q_\gamma(\hat\theta_i,\sigma_i)~.
\end{equation}
Our goal is to choose some $Q_\gamma$ that optimally trades off bias and variance. In service of this objective, I define bias and variance as follows:
\begin{gather*}
    B_\gamma(G) := \int Q_\gamma(t,s)f_G(t,s)d(t,s) - L_\gamma(G)~;\\
    \widehat{\var}(Q_\gamma) := \int Q_\gamma(t,s)^2 \hat{f}_n(t,s)d(t,s) - \left(
\int Q_\gamma(t,s) \hat{f}_n(t,s)d(t,s)
    \right)^2~,
\end{gather*}
where $\hat{f}_n$ is the maximum likelihood estimator of $f_G(\cdot,\cdot)$ over $\calG$:\footnote{As noted by \citealt{chen2024}, obtaining this nonparametric maximum likelihood estimate is in general difficult, since researchers typically observe one observation of $\hat\theta_i$ per observed $\sigma_i$. To correct for this \citealt{chen2024} provides a flexible class of joint distributions over $(\theta, \sigma)$ for which nonparametric maximum likelihood can be applied, under a particular transformation. Similarly, in my setting, the additional structure needed to compute the nonparametric maximum likelihood is imposed by $\calG$. I discuss choices of $\calG$ in my application.}
\begin{equation*}
    \hat{f}_n := f_{\hat{G}_n}(\cdot,\cdot), \text{ s.t. }\hat{G}_n := \argmax_{G'\in\calG} \sum_{i=1}^n \log(f_{G'}(\hat\theta_i, \sigma_i))~.
\end{equation*}
For any $\delta$, I consider $Q_\gamma$ which solve the following class of programs indexed by $\gamma$:
\begin{equation}
\label{eq:minimax_program}
    \min_{Q_\gamma}\max_{G\in\calG} B_\gamma(G)^2~\text{s.t.}~\frac{1}{n}\widehat{\var}(Q_\gamma) \leq V_n~,
\end{equation}
where $V_n$ is a tolerance parameter that can be chosen to minimize, for example, the mean squared error of $\hat{L}_{n,\gamma}$. In the spirit of arguments due to \citealt{Donoho1994}, \citealt{ArmstrongKolesar2018}, and \citealt{IgnatiadisandWager2021}, I demonstrate formally in appendix \ref{app:minimax_details} that it suffices to solve the following modulus problem for $\delta > 0$:
\begin{equation}
\label{eq:modulus_problem}
    \omega_{n,\gamma}(\delta) := \sup_{G_1,G_{-1}\in\calG}\left\{
L_\gamma(G_1)-L_\gamma(G_{-1}) : \int \frac{\left(f_{G_1}(t,s) - f_{G_{-1}}(t,s)\right)^2}{\hat{f}_n(t,s)}d(t,s) \leq \frac{\delta^2}{n}
    \right\}~.
\end{equation}
In particular, let solutions $G_1^\delta$ and $G_{-1}^\delta$ to the above problem, and let $G_0^\delta = \frac{1}{2}(G_1^\delta + G_{-1}^\delta)$. For $V_n = \omega_{n,\gamma}'(\delta)^2$, the following choice of $Q_\gamma(\cdot,\cdot)$ solves \eqref{eq:minimax_program}:
\begin{equation*}
    \frac{n\omega_{n,\gamma}'(\delta)}{\delta}\left\{
\frac{f_{G_1^\delta}(\cdot,\cdot) - f_{G_{-1}^\delta}(\cdot,\cdot)}{\hat{f}_n(\cdot,\cdot)}-\int\frac{\left(f_{G_1^\delta}(t,s) - f_{G_{-1}^\delta}(t,s) \right)f_{G_0^\delta}(t,s)}{\hat{f}_n(t,s)} d(t,s)
    \right\} + L_\gamma(G_0^\delta)~.
\end{equation*}
Importantly given the step-down procedure described above, lemmas \ref{lem:max_bias_attained} and \ref{lem:final_minimaxity} provide a closed form for the worst-case bias of the affine minimax $\hat{L}_{n,\gamma}$ constructed as above:
\begin{equation*}
    \sup_{G\in\calG} |B_\gamma(G)| = \frac{1}{2}(\omega_{n,\gamma}(\delta) - \delta\omega_{n,\gamma}'(\delta))~.
\end{equation*}
Finally, note that in what follows, we will define and estimate the variances of $Q_\gamma$ as follows:
\begin{gather*}  
    \var_G(Q_\gamma) = \int Q_\gamma(t,s)^2 f_G(t,s)d(t,s) - \left(
\int Q_\gamma(t,s) f_G(t,s)d(t,s)
    \right)^2~;\\
    \sigma_G(Q_\gamma) = \sqrt{\var_G(Q_\gamma)}~;\\
    \widehat{\var}(Q_\gamma) = n\omega_{n,\gamma}'(\delta)^2~;\\
    \hat\sigma_n(Q_\gamma) = \sqrt{\widehat{\var}(Q_\gamma)}~.
\end{gather*}
\begin{rem}
    \citealt{IgnatiadisandWager2021} use these affine minimax estimators to obtain marginal confidence sets for individual conditional mean estimands, satisfying \eqref{eq:marginal_coverage}. As discussed in section \ref{sec:setup}, this notion of coverage is insufficient to quantify uncertainty around Bayes-optimal decisions themselves. However, because the criterion functions $L_\gamma(G)$ are linear in $G$, the same affine minimax estimators derived in \citealt{IgnatiadisandWager2021} are natural choices of test statistics for the problem \eqref{eq:uniform_coverage}. Finally, I emphasize that my analysis of affine minimax estimation allows for rich forms of heteroskedasticity as in \eqref{eq:EB_DGP}, in contrast to the work of \citealt{IgnatiadisandWager2021}. 
\end{rem}
\begin{rem}
\label{rem:delta_selection}
    The parameter $\delta$ is the key object via which researchers can navigate the bias-variance tradeoff. While for marginal confidence sets, researchers can easily choose $\delta$ to minimize confidence interval length, this exercise is more challenging when targeting simultaneous coverage. Because critical values depend on the distribution of collections of $\hat{L}_{n,\gamma}$ over subsets of $\Gamma$, which I will denote by $\Tilde{\Gamma}$, $\delta$ should be chosen simultaneously for each $\gamma$ over $\Tilde{\Gamma}$, presenting a challenging optimization problem. Moreover, since our step-down procedure obtains critical values for different choices of $\Tilde{\Gamma}$ over each iteration in the procedure, allowing for reoptimization of $\delta$ at each iteration can violate the monotonicity of critical values that underlies the validity result in proposition \ref{prop:step_down_finite_sample}. As a result, for some $\Delta\subset \mathbb{R}_+$ bounded away from zero, I advocate choosing $\delta$ that minimizes mean squared error, given by $\left(\frac{1}{2}(\omega_{n,\gamma}(\delta)-\delta\omega_{n,\gamma}'(\delta))\right)^2 + \omega_{n,\gamma}'(\delta)^2$, in line with the choice of $\delta$ in \citealt{IgnatiadisandWager2021}. 
\end{rem}

    The remainder of this appendix concerns proving the minimaxity of estimators $Q_\gamma(\cdot,\cdot)$ defined as follows:
\begin{equation}
\label{eq:opt_Q}
    \frac{n\omega_{n,\gamma}'(\delta)}{\delta}\left\{
\frac{f_{G_1^\delta}(\cdot,\cdot) - f_{G_{-1}^\delta}(\cdot,\cdot)}{\hat{f}_n(\cdot,\cdot)}-\int\frac{\left(f_{G_1^\delta}(t,s) - f_{G_{-1}^\delta}(t,s) \right)f_{G_0^\delta}(t,s)}{\hat{f}_n(t,s)} d(t,s)
    \right\} + L_\gamma(G_0^\delta)~,
\end{equation}
    given solutions $G_1^\delta$ and $G_{-1}^\delta$ to the modulus problem \eqref{eq:modulus_problem}. In particular, I demonstrate that such $Q_\gamma(\cdot,\cdot)$ solve \eqref{eq:minimax_program}. 

    \subsection{Supplemental results and proofs}

    \begin{prop}
    \label{prop:minimaxity_main_result}
        Let $\calG$ be convex, and let $\sup_{G\in\calG} |L_\gamma(G)| < \infty$. Moreover, let $\hat{L}_{n,\gamma}$ be given as above. $\hat{L}_{n,\gamma}$ solves \eqref{eq:minimax_program} subject to $V_n = \omega_{n,\gamma}'(\delta)^2$.
    \end{prop}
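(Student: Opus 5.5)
We follow the Donoho (1994) and Armstrong--Kolesár (2018) modulus-of-continuity argument, adapted to the weighted $L^2(1/\hat f_n)$ geometry that defines the variance constraint in \eqref{eq:minimax_program}.

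Throughout, write $\langle h_1,h_2\rangle := \int h_1h_2/\hat f_n\,d(t,s)$ and $\|h\|^2=\langle h,h\rangle$. Treat $\widehat{\var}(Q)$ as the squared $L^2(\hat f_n)$-norm of $Q$ after centering. Set $\Delta^\delta := f_{G_1^\delta}-f_{G_{-1}^\delta}$, so that $Q_\gamma$ in \eqref{eq:opt_Q} equals $(n\omega'/\delta)\,\Delta^\delta/\hat f_n$ plus a constant.

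\textbf{Step 1 (variance).} Check directly that the constraint binds. The centered part of $Q_\gamma$ is $(n\omega'/\delta)\,\Delta^\delta/\hat f_n$. Since $\int \Delta^\delta = 0$, its $\hat f_n$-mean vanishes, and so $\widehat{\var}(Q_\gamma) = (n\omega'/\delta)^2\|\Delta^\delta\|^2$. Complementary slackness at the optimum gives $\|\Delta^\delta\|^2=\delta^2/n$ (we assume the constraint binds, which is part of Assumption \ref{assu:modulus_binding}). Hence $\frac1n\widehat{\var}(Q_\gamma)=\omega'(\delta)^2=V_n$.

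\textbf{Step 2 (worst-case bias).} The bias $B_\gamma(G)=\int Q_\gamma f_G - L_\gamma(G)$ is affine in $G$, and $\calG$ is convex. From concavity of $\omega$ (Lemma \ref{lem:max_bias_attained} presumably formalizes this) we obtain first-order conditions for the modulus problem. For every $G\in\calG$,
\[
L_\gamma(G)-L_\gamma(G_{-1}^\delta)\le \frac{n\omega'(\delta)}{\delta}\langle \Delta^\delta, f_G-f_{G_{-1}^\delta}\rangle,
\qquad
L_\gamma(G_1^\delta)-L_\gamma(G)\le \frac{n\omega'(\delta)}{\delta}\langle \Delta^\delta, f_{G_1^\delta}-f_G\rangle.
\]
These come from perturbing $G_{\pm1}^\delta$ toward $G$ and using the supporting-hyperplane (superdifferential) characterization of the concave function $\omega$ at $\delta$. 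The constant term was chosen so that $B_\gamma(G_1^\delta)=-B_\gamma(G_{-1}^\delta)=-\tfrac12(\omega-\delta\omega')$. The two inequalities then yield $|B_\gamma(G)|\le\tfrac12(\omega(\delta)-\delta\omega'(\delta))$ for all $G$, with equality at $G_{\pm1}^\delta$. This is the closed form quoted above (Lemma \ref{lem:final_minimaxity}). The finiteness assumption $\sup|L_\gamma|<\infty$ keeps $\omega$ finite and concave, so that $\omega'$ exists at least as a one-sided derivative.

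\textbf{Step 3 (lower bound).} Take any competitor $\tilde Q$ with $\frac1n\widehat{\var}(\tilde Q)\le\omega'(\delta)^2$. By Cauchy--Schwarz,
\[
\Big|\int\tilde Q\,\Delta^\delta\Big| =\Big|\int(\tilde Q-\bar{\tilde Q})\,\Delta^\delta\Big| \le \sqrt{\widehat{\var}(\tilde Q)}\,\|\Delta^\delta\| \le \sqrt n\,\omega'\cdot\delta/\sqrt n=\delta\omega'.
\]
Because $L_\gamma(G_1^\delta)-L_\gamma(G_{-1}^\delta)=\omega(\delta)$, this gives $B(G_{-1}^\delta)-B(G_1^\delta) \ge \omega(\delta)-\delta\omega'(\delta)$. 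Therefore $\max_G B(G)^2\ge \tfrac14(\omega-\delta\omega')^2$, which matches Step 2 and proves minimaxity.

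The main obstacle is Step 2. We need to justify the first-order conditions rigorously over an infinite-dimensional convex $\calG$ with the data-dependent weight $1/\hat f_n$, and to identify the Lagrange multiplier with $n\omega'(\delta)/\delta$. We would try a Lagrangian duality argument: write $\omega$ as the value of a concave program and invoke a convex-analysis envelope theorem, using that a Slater point exists because the constraint holds strictly at $G_1=G_{-1}$.
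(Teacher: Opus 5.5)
Your three steps follow the paper's route exactly. Step 1 and Step 3 are Lemma \ref{lem:final_minimaxity}, Step 2 is Lemma \ref{lem:max_bias_attained}, and Steps 1 and 3 are correct as written. Step 2, however, fails as stated: both displayed first-order inequalities point the wrong way. (Here $\omega=\omega_{n,\gamma}(\delta)$ and $\omega'=\omega_{n,\gamma}'(\delta)$, as in your notation.)

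To see this, set $G=G_1^\delta$ in your first inequality. It gives $\omega\le\frac{n\omega'}{\delta}\|\Delta^\delta\|^2=\delta\omega'$, which fails whenever $\omega>\delta\omega'$. More to the point, since
\[
B_\gamma(G)-B_\gamma(G_{-1}^\delta)=\frac{n\omega'}{\delta}\langle\Delta^\delta,f_G-f_{G_{-1}^\delta}\rangle-\bigl[L_\gamma(G)-L_\gamma(G_{-1}^\delta)\bigr],
\]
your first inequality says $B_\gamma(G)\ge B_\gamma(G_{-1}^\delta)$ and your second says $B_\gamma(G)\le B_\gamma(G_1^\delta)$. Together they would force $\tfrac12(\omega-\delta\omega')\le B_\gamma(G)\le-\tfrac12(\omega-\delta\omega')$, not the bound you claim. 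What is true, and what you need, is
\[
\frac{n\omega'}{\delta}\langle\Delta^\delta,f_G-f_{G_{-1}^\delta}\rangle\le L_\gamma(G)-L_\gamma(G_{-1}^\delta),\qquad \frac{n\omega'}{\delta}\langle\Delta^\delta,f_{G_1^\delta}-f_G\rangle\le L_\gamma(G_1^\delta)-L_\gamma(G).
\]
This is exactly $B_\gamma(G_1^\delta)\le B_\gamma(G)\le B_\gamma(G_{-1}^\delta)$.

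The obstacle you anticipate does not require infinite-dimensional Lagrangian duality or a Slater point, because the paper's argument is one-dimensional. Fix $G\in\calG$ and set:
\begin{itemize}
\item $G^\lambda:=(1-\lambda)G_{-1}^\delta+\lambda G$, which lies in $\calG$ by convexity;
\item $\Delta(\lambda):=\sqrt n\,\|f_{G_1^\delta}-f_{G^\lambda}\|$;
\item $J(\lambda):=L_\gamma(G_1^\delta)-L_\gamma(G^\lambda)-\omega'\Delta(\lambda)$.
\end{itemize}
Since $(G_1^\delta,G^\lambda)$ is feasible for the modulus at radius $\Delta(\lambda)$, we get $J(\lambda)\le\omega_{n,\gamma}(\Delta(\lambda))-\omega'\Delta(\lambda)\le\omega-\delta\omega'=J(0)$. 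The middle step is the supporting-line property of the concave modulus, and any supergradient may serve as $\omega'$. As $\Delta(0)=\delta>0$, we have $\Delta'(0)=-\frac{n}{\delta}\langle\Delta^\delta,f_G-f_{G_{-1}^\delta}\rangle$, so $0\ge J'(0)=B_\gamma(G)-B_\gamma(G_{-1}^\delta)$. The multiplier $n\omega'/\delta$ thus comes from the chain rule, not a duality theorem. Perturbing $G_1^\delta$ toward $G$ gives the other side, and doing this computation explicitly is also the easiest way to get the signs right.

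Finally, you invoke Assumption \ref{assu:modulus_binding}, which is not a hypothesis of the proposition. The paper also uses binding tacitly, in $B_\gamma(G_{\pm1}^\delta)=\pm\tfrac12(\omega-\delta\omega')$ and in $\Delta(0)=\delta$. Binding is in fact automatic when $\omega'>0$. If an optimal pair had $\sqrt n\|\Delta^\delta\|=\delta'<\delta$, then $\omega_{n,\gamma}$ would be constant on $[\delta',\delta]$. Being concave and non-decreasing, it would then be constant on $[\delta',\infty)$, forcing $\omega'=0$. When $\omega'=0$, every place where binding enters is multiplied by $\omega'$, so it is not needed.
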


    In service of proving proposition \ref{prop:minimaxity_main_result}, I provide the following lemmas. My proofs largely follow well-known arguments, and closely follow the structure of the proof of proposition S15 of \citealt{IgnatiadisandWager2021}.

    \begin{lem}
        \label{lem:modulus_properties}

        Let $\calG$ be convex, let $\inf_{t\in\mathbb{R},s\ge 0} \hat{f}_n(t,s) > 0$,\footnote{In particular, we consider $\hat{f}_n$ on a truncated domain, rather than on $\bbR\times\bbR_+$} and let $\sup_{G\in\calG} |L_\gamma(G)| < \infty$. Then $\omega_{n,\gamma}(\cdot)$ is non-decreasing, bounded and non-negative, concave, and satisfies $\omega_{n,\gamma}(\delta) \ge \delta\omega_{n,\gamma}'(\delta)$ and $\delta \in \argmax_{\Tilde{\delta}} \omega_{n,\gamma}(\Tilde{\delta}) - \omega_{n,\gamma}'(\delta)\Tilde{\delta}$.
    \end{lem}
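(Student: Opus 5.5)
We establish the properties of $\omega_{n,\gamma}$ in order: non-negativity, monotonicity, boundedness, concavity, and then the two derivative statements. Each follows from the structure of the modulus problem \eqref{eq:modulus_problem}. The key structural fact is that the constraint set
\[
\calS(\delta):=\left\{(G_1,G_{-1})\in\calG^2:\int \frac{(f_{G_1}-f_{G_{-1}})^2}{\hat f_n}\,d(t,s)\le \frac{\delta^2}{n}\right\}
\]
is nested in $\delta$, symmetric under swapping $G_1$ and $G_{-1}$, and contains the diagonal $\{(G,G)\}$. \textbf{Non-negativity} follows by taking $G_1=G_{-1}$, which gives $\omega_{n,\gamma}(\delta)\ge 0$. \textbf{Monotonicity} is immediate since $\calS(\delta)\subseteq\calS(\delta')$ for $\delta\le\delta'$. \textbf{Boundedness} follows from $L_\gamma(G_1)-L_\gamma(G_{-1})\le 2\sup_{G\in\calG}|L_\gamma(G)|<\infty$. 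The hypothesis $\inf\hat f_n>0$ is used to make the constraint functional finite and well defined on the truncated domain, so that $\omega_{n,\gamma}$ is a finite real-valued function on $[0,\infty)$.

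\textbf{Concavity.} Fix $\delta_a,\delta_b\ge 0$ and $\lambda\in[0,1]$. Take $\epsilon$-optimal pairs $(G_1^a,G_{-1}^a)\in\calS(\delta_a)$ and $(G_1^b,G_{-1}^b)\in\calS(\delta_b)$, and form the convex combinations $G_{\pm1}=\lambda G_{\pm1}^a+(1-\lambda)G_{\pm1}^b$. These lie in $\calG$ by convexity. The map $G\mapsto f_G$ is linear, so $f_{G_1}-f_{G_{-1}}$ is the corresponding convex combination of the two differences. Write $\|h\|:=(\int h^2/\hat f_n)^{1/2}$ for the weighted $L^2$ norm, which is a genuine norm since $\hat f_n>0$. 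Minkowski's inequality then gives
\[
\|f_{G_1}-f_{G_{-1}}\|\le \frac{\lambda\delta_a+(1-\lambda)\delta_b}{\sqrt n}.
\]
Linearity of $L_\gamma$ yields
\[
\omega_{n,\gamma}(\lambda\delta_a+(1-\lambda)\delta_b)\ge \lambda\omega_{n,\gamma}(\delta_a)+(1-\lambda)\omega_{n,\gamma}(\delta_b)-\epsilon,
\]
and letting $\epsilon\downarrow 0$ gives concavity.

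\textbf{Derivative statements.} Interpret $\omega_{n,\gamma}'(\delta)$ as a superderivative, for instance the right derivative, which exists for a concave function on the interior. The supergradient inequality $\omega_{n,\gamma}(\tilde\delta)\le\omega_{n,\gamma}(\delta)+\omega_{n,\gamma}'(\delta)(\tilde\delta-\delta)$ for all $\tilde\delta\ge0$ rearranges directly to $\delta\in\argmax_{\tilde\delta}\{\omega_{n,\gamma}(\tilde\delta)-\omega_{n,\gamma}'(\delta)\tilde\delta\}$. Evaluating the same inequality at $\tilde\delta=0$, and using $\omega_{n,\gamma}(0)\ge0$ from non-negativity, gives $\omega_{n,\gamma}(\delta)\ge\delta\omega_{n,\gamma}'(\delta)$.

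\textbf{Main obstacle.} The delicate point is the derivative claims, specifically which notion of $\omega'$ is meant and whether it is finite. Right derivatives of a concave, non-decreasing, bounded function are finite and nonnegative for $\delta>0$, and they are supergradients. So I would state the result for the right derivative, or for any element of the superdifferential, and restrict to $\delta>0$. If attainment of the supremum is needed later, it is not needed here. The $\epsilon$-optimal argument above avoids requiring that maximizers exist.
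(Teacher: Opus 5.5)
Your proposal is correct and takes essentially the same standard route as Proposition S14 of \citet{IgnatiadisandWager2021}, to which the paper defers its proof: nested constraint sets and the diagonal pair give monotonicity and non-negativity, convex combinations of near-optimal pairs with Minkowski's inequality give concavity, and the supergradient inequality together with $\omega_{n,\gamma}(0)\ge 0$ gives the two derivative claims. Your reading of $\omega_{n,\gamma}'(\delta)$ as a supergradient (for instance the right derivative) and your restriction of the argmax to $\tilde\delta\ge 0$ are the precision the statement implicitly requires.
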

    The proof of this lemma is identical to the proof of proposition S14 in \citealt{IgnatiadisandWager2021}, and is thus omitted.

    \begin{lem}
        \label{lem:max_bias_attained}
        Let the conditions of lemma \ref{lem:modulus_properties} hold, and let $\hat{L}_{n,\gamma}$ be given by \eqref{eq:estimator_Q_sum_form}, with $Q_\gamma$ given by \eqref{eq:opt_Q}. $\hat{L}_{n,\gamma}$ satisfies:
        \begin{align*}
    \sup_{G\in\calG} B_\gamma(G) &= B_\gamma(G_{-1}^\delta)\\
& = -B_\gamma(G_1^\delta) = -\inf_{G\in\calG} B_\gamma(G)~.
\end{align*}
In addition, $\sup_{G\in\calG} B_\gamma(G) = \frac{1}{2}(\omega_{n,\gamma}(\delta) - \delta\omega_{n,\gamma}'(\delta))$
    \end{lem}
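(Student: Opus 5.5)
The plan is the classical Donoho modulus argument, adapted to the $\hat f_n$-weighted $L^2$ geometry of \eqref{eq:modulus_problem}. Two facts drive it: the bias of an estimator of the form \eqref{eq:estimator_Q_sum_form} is affine in $G$, and lemma \ref{lem:modulus_properties} says the modulus is concave with $\delta\in\argmax_{\tilde\delta}\{\omega_{n,\gamma}(\tilde\delta)-\omega_{n,\gamma}'(\delta)\tilde\delta\}$.

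\textbf{Step 1 (bias in closed form).} Write $\Delta:=f_{G_1^\delta}-f_{G_{-1}^\delta}$, $f_j:=f_{G_j^\delta}$ for $j\in\{-1,0,1\}$, $a:=n\omega_{n,\gamma}'(\delta)/\delta$, and $\langle u,v\rangle:=\int uv/\hat f_n$. Because $\int f_G=1$ for every $G$, substituting \eqref{eq:opt_Q} gives
\[
B_\gamma(G)=a\langle \Delta, f_G-f_0\rangle+L_\gamma(G_0^\delta)-L_\gamma(G).
\]
This is affine in $G$, since $f_G$ and $L_\gamma$ are both linear in $G$. All inner products are finite because $\inf\hat f_n>0$ on the truncated domain.

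\textbf{Step 2 (extremality of $G_{\pm1}^\delta$).} Fix $G\in\calG$ and $t\in(0,1]$, and set $G^t:=G_{-1}^\delta+t(G-G_{-1}^\delta)\in\calG$, which lies in $\calG$ by convexity. The pair $(G_1^\delta,G^t)$ is feasible for the modulus problem at level $\delta_t$, where $\delta_t^2:=n\|f_1-f_{G^t}\|^2$. Hence
\[
L_\gamma(G_1^\delta)-L_\gamma(G^t)\le\omega_{n,\gamma}(\delta_t)\le \omega_{n,\gamma}(\delta)+\omega_{n,\gamma}'(\delta)(\delta_t-\delta),
\]
where the second inequality uses concavity, i.e.\ that $\omega_{n,\gamma}'(\delta)$ is a supergradient.

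Next use $L_\gamma(G_1^\delta)-L_\gamma(G_{-1}^\delta)=\omega_{n,\gamma}(\delta)$ together with the binding constraint $\delta_0=\delta$. Dividing by $t$ and letting $t\downarrow0$, with $\tfrac{d}{dt}\delta_t|_{t=0}=-n\langle\Delta,f_G-f_{-1}\rangle/\delta$, yields
\[
a\langle\Delta,f_G-f_{-1}\rangle-\big(L_\gamma(G)-L_\gamma(G_{-1}^\delta)\big)\le0.
\]
By Step 1 this is exactly $B_\gamma(G)\le B_\gamma(G_{-1}^\delta)$. The symmetric perturbation of $G_1^\delta$ toward $G$ gives $B_\gamma(G)\ge B_\gamma(G_1^\delta)$. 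So the supremum and infimum of the bias are attained at $G_{-1}^\delta$ and $G_1^\delta$ respectively.

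\textbf{Step 3 (evaluate the extremes).} Since $f_{\pm1}-f_0=\pm\Delta/2$, and using $\|\Delta\|^2=\delta^2/n$ and $L_\gamma(G_0^\delta)-L_\gamma(G_{\pm1}^\delta)=\mp\omega_{n,\gamma}(\delta)/2$, Step 1 gives
\[
B_\gamma(G_{\pm1}^\delta)=\pm\tfrac{a\delta^2}{2n}\mp\tfrac{\omega_{n,\gamma}(\delta)}{2}=\mp\tfrac12\big(\omega_{n,\gamma}(\delta)-\delta\omega_{n,\gamma}'(\delta)\big).
\]
This gives $B_\gamma(G_{-1}^\delta)=-B_\gamma(G_1^\delta)$ and the stated closed form for $\sup_{G\in\calG}B_\gamma(G)$. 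It is nonnegative by lemma \ref{lem:modulus_properties}, since $\omega_{n,\gamma}(\delta)\ge\delta\omega_{n,\gamma}'(\delta)$.

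\textbf{Main obstacle.} The delicate point is Step 2's reliance on the constraint binding at $(G_1^\delta,G_{-1}^\delta)$ and on a well-defined derivative. I would handle this as follows:
\begin{itemize}
\item Interpret $\omega_{n,\gamma}'(\delta)$ as a supergradient throughout; concavity from lemma \ref{lem:modulus_properties} makes the displayed inequality valid regardless of differentiability.
\item Argue that if $\omega_{n,\gamma}'(\delta)>0$ the constraint must bind, since otherwise a small move toward a better pair would increase $L_\gamma(G_1)-L_\gamma(G_{-1})$ while staying feasible. This is also what assumption \ref{assu:modulus_binding} is designed to guarantee.
\item If $\omega_{n,\gamma}'(\delta)=0$, then $Q_\gamma$ is the constant $L_\gamma(G_0^\delta)$, and Steps 2 and 3 reduce directly to the maximality of $\omega_{n,\gamma}(\delta)$ over $\calG$.
\end{itemize}
Attainment of the modulus by some pair is taken as given, as in the statement.
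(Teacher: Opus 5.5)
Your proposal is correct and follows essentially the same route as the paper. Both arguments evaluate $B_\gamma(G_{\pm1}^\delta)$ by direct substitution using the binding constraint, and both prove extremality by moving $G_{-1}^\delta$ toward an arbitrary $G\in\calG$ and taking a one-sided derivative at $0$; the paper simply packages your supergradient inequality into the auxiliary function $J(\lambda)$, which is maximized at $\lambda=0$ and satisfies $J'(0)=B_\gamma(G)-B_\gamma(G_{-1}^\delta)$.

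Your explicit treatment of the infimum, of the case $\omega_{n,\gamma}'(\delta)=0$, and of the binding constraint (which the paper simply assumes) is a useful addition. One correction on binding when $\omega_{n,\gamma}'(\delta)>0$: the ``better pair'' argument is not the right reason, since slackness does not guarantee a better pair exists. The correct justification is that slackness at the optimum would make the concave, non-decreasing $\omega_{n,\gamma}$ constant on a neighborhood of $\delta$, so $0$ would be its only supergradient there.
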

\begin{proof}
First, note that:
    \begin{align*}
    & B_\gamma(G_1^\delta) \\
    & = \int Q_\gamma(t,s)f_{G_1^\delta}(t,s) d(t,s) - L_\gamma(G_1^\delta)\\
    & = -\frac{1}{2}(L_\gamma(G_1^\delta)-L_\gamma(G_{-1}^\delta))\\
    & + \frac{n\omega_{n,\gamma}'(\delta)}{\delta}\left\{\int
\frac{f_{G_1^\delta}(t,s) - f_{G_{-1}^\delta}(t,s)}{\hat{f}_n(t,s)} f_{G_1^\delta}(t,s)d(t,s) - \int \frac{f_{G_1^\delta}(t,s) - f_{G_{-1}^\delta}(t,s)}{\hat{f}_n(t,s)} f_{G_0^\delta}(t,s)d(t,s)
    \right\}\\
    & = -\frac{1}{2}\omega_{n,\gamma}(\delta) + \frac{\omega_{n,\gamma}'(\delta)}{2\delta}\delta^2 = -\frac{1}{2}(\omega_{n,\gamma}(\delta)-\delta\omega_{n,\gamma}'(\delta))~.
    \end{align*}
    The first equality holds by the definition of $B_\gamma(G_1^\delta)$. The second equality holds by plugging in the expression for $Q_\gamma(\cdot,\cdot)$. The third equality holds by noticing that $\omega_{n,\gamma}(\delta) = L_\gamma(G_1^\delta)-L_\gamma(G_{-1}^\delta)$, and by applying the linearity of $f_G(t,s)$ in $G$ to rewrite the term in the fourth line of the above in terms of the constraint in \eqref{eq:modulus_problem}. Similar reasoning yields $B_\gamma(G_{-1}^\delta) = \frac{1}{2}(\omega_{n,\gamma}(\delta)-\delta\omega_{n,\gamma}'(\delta))$. Now, I define the following for arbitrary $G\in\calG$:
    \begin{align*}
        \Delta(\lambda) &= \left(
n\int \frac{\left(f_{G_1^\delta}(t,s)- ((1-\lambda)f_{G_{-1}^\delta}(t,s) + \lambda f_G(t,s))\right)^2}{\hat{f}_n(t,s)}d(t,s)
        \right)^{1/2}\\
        J(\lambda) &=L_\gamma(G_1^\delta)-L_\gamma((1-\lambda)G_{-1}^\delta+\lambda G) - \omega_{n,\gamma}'(\delta)\Delta(\lambda)
    \end{align*}
    For any $\lambda\in [0,1]$:
    \begin{align*}
        J(\lambda) &\leq \omega_{n,\gamma}(\Delta(\lambda))-\omega_{n,\gamma}'(\delta)\Delta(\lambda)\\
        & \leq \sup_{\Tilde{\delta}\geq 0}\left\{
        \omega_{n,\gamma}(\Tilde{\delta})-\omega_{n,\gamma}'(\delta)\Tilde{\delta}
        \right\}\\
        & \leq \omega_{n,\gamma}(\delta)-\omega_{n,\gamma}'(\delta)\delta\\
        & = J(0)
    \end{align*}
    The first inequality holds by noticing that $L_\gamma(G_1^\delta)-L_\gamma((1-\lambda)G_{-1}^\delta+\lambda G) \le \omega_{n,\gamma}(\Delta(\lambda))$. The third inequality holds by the concavity of $\omega_{n,\gamma}$, established by lemma \ref{lem:modulus_properties}. Applying the chain rule and dominated convergence theorem gives:\footnote{The dominating function, in this case, would be the maximum of $(f_{G_1^\delta}(t,s)-f_G(t,s))^2/\hat{f}_n(t,s)$ and $(f_{G_1^\delta}(t,s) - f_{G_{-1}^\delta}(t,s))^2/ \hat{f}_n(t,s) $. The maximum of two integrable functions is itself integrable.}
    \begin{equation*}
        J'(0) = L_\gamma(G_{-1}^\delta) - L_\gamma(G) +\frac{n\omega_{n,\gamma}'(\delta)}{\delta}\int\left(
        \frac{f_{G_1^\delta}(t,s) - f_{G_{-1}^\delta}(t,s)}{\hat{f}_n(t,s)}(f_G(t,s)-f_{G_{-1}^\delta}(t,s))
        \right)d(t,s)
    \end{equation*}
    Note that, since for any $\lambda\in [0,1]$, $J(\lambda) \le J(0)$, it holds that $J'(0)\le 0$. Now, note that:
    \begin{align*}
        & B_\gamma(G) - B_\gamma(G_{-1}^\delta)\\
        & = \left(\int Q_\gamma(t,s)f_G(t,s)d(t,s) - L_\gamma(G)\right) - \left(\int Q_\gamma(t,s)f_{G_{-1}^\delta}(t,s)d(t,s) - L_\gamma(G_{-1}^\delta)\right)\\
        & = L_\gamma(G_{-1}^\delta)-L_\gamma(G) + \frac{n\omega_{n,\gamma}'(\delta)}{\delta} \int \left(
\frac{f_{G_1^\delta}(t,s)-f_{G_{-1}^\delta}(t,s)}{\hat{f}_n(t,s)}(f_G(t,s)-f_{G_{-1}^\delta}(t,s))d(t,s)
        \right)\\
        & = J'(0) \leq 0
    \end{align*}
   Thus, the claim holds. 
\end{proof}
\begin{lem}
\label{lem:final_minimaxity}
    Let the conditions of lemma \ref{lem:modulus_properties} hold. Recall that $V_n = \widehat{\var}(Q_\gamma)/n$. Allowing $\widetilde{Q}_\gamma$ to be such that, for $\widetilde{L}_{n,\gamma}:=\frac{1}{n}\sum_{i=1}^n\widetilde{Q}_\gamma(\hat\theta_i,\sigma_i)$ and $\widetilde{B}_\gamma(G):=\bbE_G\widetilde{L}_{n,\gamma}-L_\gamma(G)$, $\widehat{\var}(\widetilde{Q}_\gamma)/n \leq V_n$, it holds that:
    \begin{equation*}
    \sup_{G\in\calG} \widetilde{B}_\gamma(G)^2  \geq \sup_{G\in\calG} B_\gamma(G)^2~.
\end{equation*}
In addition, $\widehat{\var}(Q_\gamma)/n = \omega_{n,\gamma}'(\delta)^2$. 
\end{lem}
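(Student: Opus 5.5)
The plan is the standard two-point lower bound from the modulus argument, with the weighted norm $\int(\cdot)^2/\hat f_n$ replacing the Gaussian one. Fix a competitor $\widetilde Q_\gamma$ with $\widehat{\var}(\widetilde Q_\gamma)/n\le V_n=\omega_{n,\gamma}'(\delta)^2$. Evaluate its bias at the two modulus solutions $G_1^\delta$ and $G_{-1}^\delta$ of \eqref{eq:modulus_problem}. By linearity of $f_G$ in $G$, and since $\omega_{n,\gamma}(\delta)=L_\gamma(G_1^\delta)-L_\gamma(G_{-1}^\delta)$,
\begin{equation*}
\widetilde B_\gamma(G_1^\delta)-\widetilde B_\gamma(G_{-1}^\delta)=\int \widetilde Q_\gamma(t,s)\,D(t,s)\,d(t,s)-\omega_{n,\gamma}(\delta),
\end{equation*}
where $D:=f_{G_1^\delta}-f_{G_{-1}^\delta}$.

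The key step is to bound $\int\widetilde Q_\gamma D$ by Cauchy--Schwarz in $L^2(\hat f_n)$. Both $f_{G_{\pm1}^\delta}$ are densities on the (truncated) domain, so $\int D=0$. Hence for $m:=\int\widetilde Q_\gamma\hat f_n$,
\begin{equation*}
\left|\int \widetilde Q_\gamma D\right|=\left|\int(\widetilde Q_\gamma-m)\hat f_n^{1/2}\cdot\frac{D}{\hat f_n^{1/2}}\right|\le \widehat{\var}(\widetilde Q_\gamma)^{1/2}\left(\int\frac{D^2}{\hat f_n}\right)^{1/2}\le \sqrt{nV_n}\,\frac{\delta}{\sqrt n}=\delta\,\omega_{n,\gamma}'(\delta).
\end{equation*}
This gives $\widetilde B_\gamma(G_{-1}^\delta)-\widetilde B_\gamma(G_1^\delta)\ge \omega_{n,\gamma}(\delta)-\delta\omega_{n,\gamma}'(\delta)$, and the right-hand side is nonnegative by lemma \ref{lem:modulus_properties}. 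Consequently
\begin{equation*}
\max\{|\widetilde B_\gamma(G_1^\delta)|,|\widetilde B_\gamma(G_{-1}^\delta)|\}\ge \tfrac12\bigl(\omega_{n,\gamma}(\delta)-\delta\omega_{n,\gamma}'(\delta)\bigr).
\end{equation*}
By lemma \ref{lem:max_bias_attained}, this quantity equals $\sup_{G\in\calG}|B_\gamma(G)|$. Squaring yields the first claim.

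For the second claim, substitute \eqref{eq:opt_Q} directly. The additive constants, namely $L_\gamma(G_0^\delta)$ and the centering integral, do not affect the variance. This leaves
\begin{equation*}
\widehat{\var}(Q_\gamma)=\frac{n^2\omega_{n,\gamma}'(\delta)^2}{\delta^2}\left[\int\frac{D^2}{\hat f_n}-\Bigl(\int D\Bigr)^2\right]=\frac{n^2\omega_{n,\gamma}'(\delta)^2}{\delta^2}\int\frac{D^2}{\hat f_n}.
\end{equation*}
This equals $n\,\omega_{n,\gamma}'(\delta)^2$ exactly when the constraint in \eqref{eq:modulus_problem} binds at the optimum, i.e.\ $\int D^2/\hat f_n=\delta^2/n$.

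I expect this binding to be the main obstacle. Since $\omega_{n,\gamma}$ is nondecreasing and concave, a slack constraint would mean $\omega_{n,\gamma}$ is locally flat with $\omega'_{n,\gamma}(\delta)=0$, and the identity would hold only trivially. Otherwise I would invoke the binding condition of assumption \ref{assu:modulus_binding}. I would also check that the truncated-domain convention makes $\int D=0$ legitimate, for instance by normalizing $f_G$ on the truncated domain; otherwise the centering by $m$ leaves an extra term $m\int D$.
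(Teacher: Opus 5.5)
Your proof is correct and follows the paper's argument essentially step for step. Both use the two-point comparison at $G_{\pm1}^\delta$, Cauchy--Schwarz in $L^2(\hat f_n)$ after removing the $\hat f_n$-mean of $\widetilde Q_\gamma$ (the paper's split $\widetilde Q_\gamma=\widetilde Q_{\gamma,0}+\widetilde Q_{\gamma,1}$ does the same thing), lemma \ref{lem:max_bias_attained} to identify $\omega_{n,\gamma}(\delta)-\delta\omega_{n,\gamma}'(\delta)$ with $2\sup_{G\in\calG}|B_\gamma(G)|$, and direct substitution of \eqref{eq:opt_Q} for the variance identity. Your handling of that last step is more careful than the paper's, which uses binding of the modulus constraint tacitly. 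Your own dichotomy in fact closes it without assumption \ref{assu:modulus_binding}: either the constraint binds, or it is slack, in which case $\omega_{n,\gamma}'(\delta)=0$ and both sides of $\widehat{\var}(Q_\gamma)=n\,\omega_{n,\gamma}'(\delta)^2$ vanish.
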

\begin{proof}
    First, I will write $V_n$ in terms of $\omega_{n,\gamma}'(\delta)$. Note that, when computing $\widehat{\var}$, it is without loss of generality to consider $Q_\gamma(\cdot,\cdot)$ neglecting constants:
    \begin{equation*}
        \frac{n\omega_{n,\gamma}'(\delta)}{\delta}\left\{
        \frac{f_{G_1^\delta}(\cdot,\cdot) - f_{G_{-1}^\delta}(\cdot,\cdot)}{\hat{f}_n(\cdot,\cdot)}
        \right\}
    \end{equation*}  
    Applying linearity of integration, it is clear that for $Q_\gamma$ of this form, $\int Q_\gamma(t,s)\hat{f}_n(t,s)d(t,s)= 0$. Consequently, applying the definition of the modulus constraint in \eqref{eq:modulus_problem}, it follows that $\widehat{\var}(Q_\gamma) = n\omega_{n,\gamma}'(\delta)^2$, giving $V_n = \omega_{n,\gamma}'(\delta)^2$. 

    Any other $\widetilde{Q}_\gamma(\cdot)$ can be rewritten as $\widetilde{Q}_\gamma(\cdot) = \widetilde{Q}_{\gamma,0} + \widetilde{Q}_{\gamma,1}(\cdot)$, for $\widetilde{Q}_{\gamma,0}$ a constant and for $\widetilde{Q}_{\gamma,1}$ such that $\int \widetilde{Q}_{\gamma,1}(t,s)\hat{f}_n(t,s)d(t,s) = 0$. Thus:
    \begin{equation}
    \label{eq:var_Q_tilde_bound}
        \widehat{\var}(\widetilde{Q}_\gamma) = \int \widetilde{Q}_{\gamma,1}(t,s)^2\hat{f}_n(t,s)d(t,s) \leq nV_n = n\omega_{n,\gamma}'(\delta)^2
    \end{equation}
    Finally, I repeat the calculations of \citealt{IgnatiadisandWager2021}:
    \begin{align*}
        & \widetilde{B}_\gamma(G_{-1}^\delta) - \widetilde{B}_\gamma(G_1^\delta)\\
        & = \left(
        \int \widetilde{Q}_{\gamma,1}(t,s)f_{G_{-1}^\delta}(t,s)d(t,s) - L_\gamma(G_{-1}^\delta)
        \right) - \left(
        \int \widetilde{Q}_{\gamma,1}(t,s)f_{G_1^\delta}(t,s)d(t,s) - L_\gamma(G_{1}^\delta)
        \right)\\
        & = L_\gamma(G_1^\delta) - L_\gamma(G_{-1}^\delta) + \int \widetilde{Q}_{\gamma,1}(t,s)(f_{G_{-1}^\delta}(t,s)- f_{G_1^\delta}(t,s))d(t,s)\\
        & = \omega_{n,\gamma}(\delta) + \int \widetilde{Q}_{\gamma,1}(t,s)\hat{f}_n(t,s)^{1/2} \frac{f_{G_{-1}^\delta}(t,s)- f_{G_1^\delta}(t,s)}{\hat{f}_n(t,s)^{1/2}}d(t,s)\\
        & \overset{c.s.}{\geq} \omega_{n,\gamma}(\delta) - \left(\int \widetilde{Q}_{\gamma,1}(t,s)^2\hat{f}_n(t,s)d(ts)\right)^{1/2}\left(\int \left( \frac{f_{G_{-1}^\delta}(t,s)- f_{G_1^\delta}(t,s)}{\hat{f}_n(t,s)}\right)^2\hat{f}_n(t,s)d(ts)\right)^{1/2}\\
        & \geq \omega_{n,\gamma}(\delta) - (n\omega_{n,\gamma}'(\delta)^2)^{1/2}(\delta^2/n)^{1/2}\\
        & = \omega_{n,\gamma}(\delta) - \delta\omega_{n,\gamma}'(\delta)\\
        & = B_\gamma(G_{-1}^\delta) - B_\gamma(G_{1}^\delta)\\
        & = 2\sup_{G\in\calG}\left\{
        |B_\gamma(G)|
        \right\}
    \end{align*}
    The second inequality is a consequence of the bound in \eqref{eq:var_Q_tilde_bound}, and of the constraint in \eqref{eq:modulus_problem}. The final two equalities hold by applying lemma \ref{lem:max_bias_attained}. Finally, I obtain:
    \begin{align*}
        \sup_{G\in\calG}\left\{
\left|
\widetilde{B}_\gamma(G)
\right|
        \right\} &\geq |\widetilde{B}_\gamma(G_{-1})| \vee |\widetilde{B}_\gamma(G_{1})|\\
        & \geq \left(|
\widetilde{B}_\gamma(G_{-1})| + |\widetilde{B}_\gamma(G_{1})|
        \right)/2\\
        & \geq  \left|\widetilde{B}_\gamma(G_{-1}) - \widetilde{B}_\gamma(G_1)\right|/2\\
        & \geq \sup_{G\in\calG}\left\{
        |B_\gamma(G)|
        \right\}~,
    \end{align*}
    thus concluding the proof.
\end{proof}

\section{Bootstrap Proofs}
\label{app:bootstrap_theory}
This appendix contains the proof of theorem \ref{thm:bootstrap_validity}. The proof proceeds in three steps. First, I establish the H\"older smoothness properties of the regularized affine minimax estimators used in the inferential procedure, under a set of sufficient conditions. I use these H\"older smoothness properties to derive entropy and envelope bounds on the class of affine minimax estimators. Second, I use these bounds to obtain a Gaussian coupling to the empirical and multiplier supremum statistics defining the critical values used in the step-down procedure. To do so, I apply results due to \citealt{CCK_2014} and \citealt{CCK_2016}. Finally, I use these couplings to obtain the sufficient critical value validity of assumption \ref{assu:cv_validity_monotonicity}.

Before proceeding with the proof, I will introduce some notation that will be used throughout the remainder of this appendix. Let $\Gamma :=\mathbb{R}^2\times\mathbb{R}_+$. Let $Z_i := (\hat\theta_i,\sigma_i)$. Let $\xi_i$ denote the standard Gaussian multipliers, drawn independently from the $Z_i$. The econometrician observes a sample $(Z_1,\ldots,Z_n)$. The $Z_i$ are drawn iid from the law $P_G$. I denote by $P_{n,G}$ the joint law of the sample $(Z_1,\ldots,Z_n)$, and by $P_n$ the empirical law of the sample. Relatedly, I denote by $P_\xi$ and $P_{n,\xi}$ the marginal and joint laws of the Gaussian multipliers, respectively. For a given function $v$ taking arguments in $\mathbb{R}^2$, let $P_nv := \frac1 n \sum_{i=1}^n v(Z_i)$ and let $P_Gv := \mathbb{E}_G v(Z_i)$. Throughout the remainder of this section, we denote by $H$ the pilot estimator of $G$ given by $\hat{G}_n$. Again, for a given function $v$ taking arguments in $\mathbb{R}^2$, we can take the true and pilot-estimated variances of $v$:
\begin{equation*}
    \sigma_G(v)^2 := P_G( v - P_Gv)^2~,\quad \hat\sigma_n(v)^2 = P_H( v - P_Hv)^2~.
\end{equation*}
For any $\tilde\Gamma \subseteq\Gamma$, define:
\begin{equation*}
    T_n^\xi(\tilde{\Gamma}) = \sup_{\gamma\in\tilde\Gamma}\left|
\frac{1}{\sqrt{n}} \sum_{i=1}^n \xi_i\frac{Q_\gamma(Z_i)-P_nQ_\gamma}{\hat\sigma_n(Q_\gamma)}
    \right|~.
\end{equation*}
Similarly, define:
\begin{equation*}
    T_n(\tilde\Gamma) = \sup_{\gamma\in\tilde\Gamma}\left|
\frac{1}{\sqrt{n}} \sum_{i=1}^n \frac{Q_\gamma(Z_i)-P_GQ_\gamma}{\hat\sigma_n(Q_\gamma)}
    \right|~.
\end{equation*}
We denote by $\hat{c}_n$ the corresponding critical values:
\begin{equation*}
    \hat{c}_n(1-\alpha,\tilde\Gamma) = \inf\left\{
x\ge 0:P_{n,\xi}\left(
T_n^\xi(\tilde\Gamma)\le x
\right)\ge 1-\alpha
\right\}~.
\end{equation*}
Towards proving theorem \ref{thm:bootstrap_validity}, I demonstrate in the following section that the above $\hat{c}_n$ satisfy assumption \ref{assu:cv_validity_monotonicity}. In particular:
\begin{equation*}
    \limsup_{n\to\infty}\sup_{G\in\calG}P_{n,G}\left(
\sup_{\gamma\in\Gamma(G)} \frac{\sqrt{n}(\hat{L}_n(\gamma)-P_GQ_\Gamma)}{\sigma_G(Q_\gamma)}>\hat{c}_n(1-\alpha,\tilde\Gamma)
    \right)\le \alpha~.
\end{equation*}

In what follows, I will assume that $\calG$ is a convex class of distributions. I assume that each $G$ in $\calG$ admits a density with respect to a common dominating measure;\footnote{For ease of exposition, I take this dominating measure to be the Lebesgue measure throughout.} I denote by $g_G(\theta,s)$ this density. I will further assume that the support of $\theta$ and $\sigma$ are contained in subsets $\Theta$ and $\calS$ of $\bbR$, respectively. Let $\varphi_s$ denote the density of a gaussian $\mathcal{N}(0,s^2)$. The joint density of $\hat\theta$ and $\sigma$ can be given as:
\begin{equation*}
    f_G(t,s) =\int_\Theta \varphi_s(t-\theta)g_G(\theta,s)d\theta~,\quad (t,s)\in\bbR\times\calS~.
\end{equation*}
For each $\gamma$ in $\Gamma$, the affine minimax estimator $\hat{L}_n(\gamma)$ can be written as $P_nQ_\gamma$. Recall that each affine minimax estimator is based on the following modulus:
\begin{equation*}
    \omega_{n,\gamma}(\delta)=\sup_{G^+,G^-\in\calG}\left\{ 
    L(\gamma,G^+) - L(\gamma,G^-): \int_{\calS}\int_{\bbR} \frac{(f_{G^+}(t,s)-f_{G^-}(t,s))^2}{f_H(t,s)}dtds \le \frac{\delta^2}{n}
    \right\}~.
\end{equation*}
For each $\gamma$, recall that $\delta$ is chosen to trade off bias and variance appropriately, and let $G_{n,\gamma}^+$ and $G_{n,\gamma}^-$ denote the solutions to this modulus problem. Denote the choice of $\delta$ above by $\delta_{n,\gamma}$. Further recall that $Q_\gamma$ is given as follows, for a constant $d_{n,\gamma}$:
\begin{equation*}
    Q_\gamma(t,s) = \frac{n\omega_{n,\gamma}'(\delta_{n,\gamma})}{\delta_{n,\gamma}} \frac{f_{G_{n,\gamma}^+}(t,s) - f_{G_{n,\gamma}^-}(t,s)}{f_H(t,s)} + d_{n,\gamma}~.
\end{equation*} 
Finally, it will be useful to define $\Delta_{n,\gamma}$ as follows:
\begin{equation*}
    \Delta_{n,\gamma}(t,s) := f_{G_{n,\gamma}^+}(t,s) - f_{G_{n,\gamma}^-}(t,s)~.
\end{equation*}

\subsection{Statement of Assumptions}

I now state the assumptions needed for theorem \ref{thm:bootstrap_validity} to hold. I state four assumptions, which are as follows. Assumption \ref{assu:smoothness} imposes sufficient conditions on the regularity of densities $g_G$ for $G$ in $\calG$, which will be useful to control the entropy of the studentized test statistics of interest. Assumption \ref{assu:modulus_binding} places bounds on $\delta_{n,\gamma}$ which prevents the standard errors of the test statistics to approach zero too quickly, negating the regularity afforded by the first assumption \ref{assu:smoothness}. Assumptions \ref{assu:pilot_independence} and \ref{assu:variance_uniform_estimability} impose conditions on the pilot estimate $\hat{G}_n$. Assumption \ref{assu:pilot_independence} imposes independence of the $\hat{G}_n$ from the sample $(Z_1,\ldots,Z_n)$, which will allow conditioning on $\hat{G}_n$ in the following arguments. Assumption \ref{assu:variance_uniform_estimability} asserts that the pilot estimates of $\hat{\sigma}_n(Q_\gamma)$ are uniformly consistent estimators of $\sigma_G(Q_\gamma)$. 

Before proceeding, it will be instructive to introduce some new notation. First, let $m_G(s)$ denote the marginal density of the $\sigma$:
\begin{equation*}
    m_G(s) = \int_\Theta g_G(\theta,s)d\theta~.
\end{equation*}
Consequently, we may write $g_G$ as a product of the marginal density of the $\sigma$, and the conditional density of the $\theta$ given realizations of $\sigma$:
\begin{equation*}
    g_G(\theta,s) = m_G(s)q_G(\theta|s)~.
\end{equation*}

\begin{assu}
    \label{assu:smoothness}

    The following assumptions will take as given a choice of density $g_0(\theta, s) = m_0(s)q_0(\theta|s)$. First, assume that $\Theta$ is given by the compact interval $[-M_\theta, M)\theta]$ and that $\mathcal{S}$ is given by $[\underline{s},\bar{s}]$ for $0<\underline{s}<\bar{s}<\infty$. For this choice, we assume that there exists some $B> 9$ that can be uniquely written as $k+\zeta$ for some integer $k$, and some $\zeta \in (0,1]$, such that:

    \begin{enumerate}

\item $c \leq \frac{m_G(s)}{m_0(s)} \leq C$ for each $G\in\mathcal G$ and $s\in\mathcal S$. Furthermore,
\begin{equation*}
    \sup_{G\in\mathcal G}\sup_{s\in\mathcal S}
    \left|
    \partial_s^j
    \left(
        \frac{m_G(s)}{m_0(s)}
    \right)
    \right|
    \leq M,
    \qquad
    j=0,\ldots,k,
\end{equation*}
and
\begin{equation*}
    \sup_{G\in\mathcal G}
    \sup_{\substack{s,s'\in\mathcal S\\s\neq s'}}
    \frac{
        \left|
        \partial_s^k
        \left(
            \frac{m_G(s)}{m_0(s)}
        \right)
        -
        \partial_s^k
        \left(
            \frac{m_G(s')}{m_0(s')}
        \right)
        \right|
    }{
        |s-s'|^\zeta
    }
    \leq M.
\end{equation*}

\item $c \leq
    \frac{q_G(\theta\mid s)}{q_0(\theta\mid s)}
    \leq C$ for each $G\in\mathcal G$, $s\in\mathcal S$, and
$\theta\in\Theta$. Furthermore,
\begin{equation*}
    \sup_{G\in\mathcal G}\sup_{s\in\mathcal S}
    \left\|
    \partial_s^j
    \left(
        \frac{q_G(\cdot\mid s)}{q_0(\cdot\mid s)}
    \right)
    \right\|_{L^\infty(\Theta)}
    \leq M,
    \qquad
    j=0,\ldots,k,
\end{equation*}
and
\begin{equation*}
    \sup_{G\in\mathcal G}
    \sup_{\substack{s,s'\in\mathcal S\\s\neq s'}}
    \frac{
        \left\|
        \partial_s^k
        \left(
            \frac{q_G(\cdot\mid s)}{q_0(\cdot\mid s)}
        \right)
        -
        \partial_s^k
        \left(
            \frac{q_G(\cdot\mid s')}{q_0(\cdot\mid s')}
        \right)
        \right\|_{L^\infty(\Theta)}
    }{
        |s-s'|^\zeta
    }
    \leq M.
\end{equation*}

\item $c\leq m_0(s)\leq C,~
    c\leq q_0(\theta\mid s)\leq C$ for all $s\in\mathcal S$ and $\theta\in\Theta$. Moreover,
\begin{equation*}
    \sup_{s\in\mathcal S}
    \left|
        \partial_s^j m_0(s)
    \right|
    \leq M,
    \qquad
    j=0,\ldots,k,
\end{equation*}
and
\begin{equation*}
    \sup_{\substack{s,s'\in\mathcal S\\s\neq s'}}
    \frac{
        \left|
            \partial_s^k m_0(s)
            -
            \partial_s^k m_0(s')
        \right|
    }{
        |s-s'|^\zeta
    }
    \leq M.
\end{equation*}
Finally,
\begin{equation*}
    \sup_{s\in\mathcal S}
    \left\|
        \partial_s^j q_0(\cdot\mid s)
    \right\|_{L^\infty(\Theta)}
    \leq M,
    \qquad
    j=0,\ldots,k,
\end{equation*}
and
\begin{equation*}
    \sup_{\substack{s,s'\in\mathcal S\\s\neq s'}}
    \frac{
        \left\|
            \partial_s^k q_0(\cdot\mid s)
            -
            \partial_s^k q_0(\cdot\mid s')
        \right\|_{L^\infty(\Theta)}
    }{
        |s-s'|^\zeta
    }
    \leq M.
\end{equation*}

\end{enumerate}
    
\end{assu}

\begin{assu}
    \label{assu:modulus_binding} There exist fixed $0 <\underline{\delta} < \bar{\delta}<\infty$ such that for each $n$ and each $\gamma$ in $\Gamma$, $\delta_{n,\gamma}$ lies in the interval $[\underline{\delta},\bar\delta]$. Moreover, the modulus constraint binds at $\delta_{n,\gamma}$, such that:
    \begin{equation*}
        \int_\calS\int_\bbR \frac{\Delta_{n,\gamma}(t,s)^2}{f_H(t,s)}dtds = \frac{\delta_{n,\gamma}^2}{n}~.
    \end{equation*}
\end{assu}

\begin{assu}
    \label{assu:pilot_independence}
    The sample $(Z_1,\ldots,Z_n)$ is independent from $\hat{G}_n$, and therefore from the solutions $G_{n,\gamma}^+$ and $G_{n,\gamma}^-$ to the modulus problem.
\end{assu}

\begin{assu}
    \label{assu:variance_uniform_estimability}
    For some $\vartheta$ in $(2/B,2/9)$, there exists $\varepsilon_n= o(n^{-\vartheta/2})$ such that:
    \begin{equation*}
        \limsup_{n\to\infty}\sup_{G\in\calG}P_{n,G}\left(
\sup_{\gamma\in\Gamma(G)}\left|
\frac{\hat\sigma_n(Q_\gamma)}{\sigma_G(Q_\gamma)}-1
\right|>\varepsilon_n
        \right) = 0~.
    \end{equation*}
\end{assu}

\subsection{H\"older Smoothness Lemmas, Entropy, and Envelope Properties}
\label{app:Holder_smoothness}

This section establishes H\"older smoothness properties of the affine minimax estimators considered in this paper. From there, I obtain entropy bounds on the class of affine minimax estimators, and obtain bounds on the envelope of this class. To this end, I first show that after studentization it suffices to consider ratios of densities. I establish H\"older bounds on these ratios, and then apply the entropy bound of \citealt{kolmogorovtikhomirov1961} as stated in theorem 2.7.1. of \citealt{VdVandWellner1996} to control the entropy of the class of affine minimax estimators. Throughout this subsection, I condition on the pilot sample, and thus on the pilot estimator $\hat{G}_n \equiv H$, thanks to assumption \ref{assu:pilot_independence}. All bounds below will apply uniformly over $H$ in $\calG$.

\subsubsection*{Simplification of Affine Minimax Estimators After Studentization}

Recall that, by lemma \ref{lem:final_minimaxity}, $\hat{\sigma}_n(Q_\gamma)^2 = n\omega_{n,\gamma}'(\delta_{n,\gamma})^2$. Moreover, $P_HQ_\gamma = d_{n,\gamma}$, since the densities $f_{G_{n,\gamma}^+}$ and $f_{G_{n,\gamma}^-}$ both integrate to one. I define the normalized function $\psi_{n,\gamma}$, which by the above simplifies as follows:
\begin{equation*}
    \psi_{n,\gamma}(t,s) := \frac{Q_\gamma(t,s)-P_HQ_\gamma}{\hat\sigma_n(Q_\gamma)} = \frac{\sqrt{n}\Delta_{n,\gamma}(t,s)}{\delta_{n,\gamma}f_H(t,s)}~.
\end{equation*}
Consequently, we may write:
\begin{align*}
    \frac{\sqrt{n}(\hat{L}_n(\gamma) - P_GQ_\gamma)}{\hat\sigma_n(Q_\gamma)} &= \sqrt{n}(P_n-P_G)\psi_{n,\gamma}~,\\
    \frac{1}{\sqrt{n}}\sum_{i=1}^n\xi_i\frac{Q_\gamma(Z_i)-P_nQ_\gamma}{\hat\sigma_n(Q_\gamma)}&= \frac{1}{\sqrt{n}}\sum_{i=1}^n \xi_i(\psi_{n,\gamma}(Z_i)-P_n\psi_{n,\gamma})~.
\end{align*}
Consequently, it suffices to analyze the normalized functions $\psi_{n,\gamma}$, rather than the $Q_\gamma$.

In what follows, I will analyze the enlarged class:
\begin{equation*}
    \Psi_n(H) = \left\{
\frac{\sqrt{n}(f_{G^+}-f_{G^-})}{uf_H}:G^+,G^-\in\calG,u\in[\underline{\delta},\bar\delta],\int_\calS\int_\bbR\frac{(f_{G^+}(t,s)-f_{G^-}(t,s))^2}{f_H(t,s)}dtds = \frac{u^2}{n}
    \right\}~.
\end{equation*}
This class includes all the $\psi_{n,\gamma}$ selected as solutions to the modulus problem. Moreover, for each $\psi$ in $\Psi_n(H)$, $P_H\psi = 0$ and $P_H\psi^2 = 1$.

By assumption \ref{assu:smoothness}, $c^2 g_0 < g_G < C^2 g_0$ for any $G$ in $\calG$. Consequently, integrating against $\varphi_s$ gives $c^2 f_0 < f_G < C^2 f_0$. Since $f_0 >0$, it follows that:
\begin{equation}
\label{eq:density_fraction}
    \frac{c^2}{C^2}\le \frac{f_G(t,s)}{f_H(t,s)}\le \frac{C^2}{c^2}~.
\end{equation}
Moreover, the following bound on elements $\psi$ in $\Psi_n(H)$ will prove useful in controlling the behavior of the affine minimax estimators in the tails:
\begin{equation}
\label{eq:global_psi_bounds}
    |\psi(t,s)|\le \frac{\sqrt{n}}{\underline{\delta}}\left\{
\frac{f_{G^+}(t,s)}{f_H(t,s)}+\frac{f_{G^-}(t,s)}{f_H(t,s)}
    \right\} \le \frac{2C^2}{\underline{\delta}c^2}\sqrt{n}~.
\end{equation}
The above is a consequence of the triangle inequality.

\subsubsection*{H\"older Smoothness}

For $\tau \ge 1$, define the rectangle $D_\tau := [-\tau,\tau]\times\calS$. For each multi-index $a := (a_s,a_t)$, let $|a|=a_t + a_s$, and let $D^a = \partial_t^{a_t}\partial_s^{a_s}$. I define the following H\"older norm:
\begin{equation}
\label{eq:Holder_norm_defn}
    \|v\|_{C^B(D_\tau)} := \sum_{|a| \le k} \|D^a v\|_{L^\infty(D_\tau)} + \sum_{|a| =k} \sup_{\substack{z,z'\in D_\tau\\ z\neq z'}} \frac{|D^av(z) - D^av(z')|}{\|z-z'\|_2^\zeta}
\end{equation}
Before proceeding, it will be useful to state a few facts on H\"older spaces. For $B\le 1$, it holds that (see e.g. \citealt{gilbarg_trudinger_2001}) for any bounded set $E$:
\begin{equation*}
    \|vw\|_{C^B(E)} \le \|v\|_{C^B(E)}\|w\|_{C^B(E)}
\end{equation*}
For arbitrary $B$, the fact that H\"older spaces form a generalized Banach algebra (See example 21.73 of \citealt{zeidler1990nonlinear}) yields:
\begin{equation}
\label{eq:Holder_generalized_banach_algebra}
    \|vw\|_{C^B(E)} \le  C_B\|v\|_{C^B(E)}\|w\|_{C^B(E)}~,
\end{equation}
for constants $C_B$ in $\bbR$ that do not depend on $E$.

Finally, it will be useful to obtain the following interpolation inequalities. Let $E$ be either $D_\tau$ or $\calS$. Convexity and the mean value theorem provide:
\begin{equation*}
    |v(z)-v(z')| \le \min\left\{
2\|v\|_{L^\infty(E)},\|\nabla v\|_{L^\infty(E)}\|z-z'\|_{2}
    \right\}~.
\end{equation*}
Therefore, for $0<\zeta<1$, the following holds:
\begin{equation*}
    \sup_{\substack{z,z'\in E\\ z\neq z'}} \frac{|v(z)-v(z')|}{\|z-z'\|^\zeta}\le 2^{1-\zeta}\|v\|_{L^\infty(E)}^{1-\zeta}\|\nabla v\|_{L^\infty(E)}^\zeta~.
\end{equation*}
Applying further, define the constant $c_{v,E}$ to be:
\begin{equation*}
    c_{v,E} := \max\left\{
\|v\|_{L^\infty(E)},\max_j \|\partial_jv\|_{L^\infty(E)}
    \right\}~.
\end{equation*}
Consequently, in two dimensions, $\|\nabla v\|_{L^\infty(E)} \le \sqrt{2}c_{v,E}$. The above interpolation inequality gives:
\begin{equation}
\label{eq:interpolation_ineq_consequence}
    \|v\|_{C^\zeta(E)} \le 3c_{v,E}~,\quad 0<\zeta\le 1~.
\end{equation}
Apart from the constant $c_{v,E}$, the right hand side of the above does not depend on $v$ or $E$. 

Before stating and proving the main H\"older continuity lemma, I record a few immediate consequences of assumption \ref{assu:smoothness}. First, recall that the density $g_G$ can be written as:
\begin{equation*}
    g_G(\theta,s) = m_0(s)q_0(\theta|s) \frac{m_G(s)}{m_0(s)}\frac{q_G(\theta|s)}{q_0(\theta|s)}~.
\end{equation*}
Since each of these four factors lies between $c$ and $C$, as defined in assumption \ref{assu:smoothness}, it follows that $c^4 \le g_G(\theta,s) \le C^4$. Repeated application of the generalized Banach algebra property of H\"older spaces \eqref{eq:Holder_generalized_banach_algebra} yields:
\begin{equation*}
    \|g_G(\theta,\cdot)\|_{C^B(\calS)} \le C_B^3 \|m_0(\cdot)\|_{C^B(\calS)}\|q_0(\theta|\cdot)\|_{C^B(\calS)}\left\|
    \frac{m_G(\cdot)}{m_0(\cdot)}
    \right\|_{C^B(\calS)}\left\|
    \frac{q_G(\theta|\cdot)}{q_0(\theta|\cdot)}
    \right\|_{C^B(\calS)} \le C_B^3 K^4~,
\end{equation*}
for some constant $K>0$, where the constants above hold uniformly in $G\in\calG$. Consequently, $\|\partial_s^jg_G\|_{L^\infty(\calS)}$ are bounded above by the same constant as in the above equation, for $j\le k$. The interpolation inequality \eqref{eq:interpolation_ineq_consequence} gives:
\begin{equation}
\label{eq:lower_order_Holder_norm_bounds}
    \sup_{G\in\calG}\sup_{\theta\in\Theta}\|\partial_s^j g_G(\theta,\cdot)\|_{C^\zeta(\calS)}\lesssim 1~,
\end{equation}
for $0<\zeta\le 1$.

Finally, to bound the smoothness of elements of $\Psi_n(H)$, it will be useful to factorize the densities $f_G(t,s)$. In particular, write:
\begin{align*}
    \Tilde{f}_G(t,s) &:= \frac{f_G(t,s)}{\varphi_s(t)} = \int_\theta K_\theta(t,s)g_G(\theta,s)d\theta~,\\
    K_\theta(t,s) &:= \exp\left(
\frac{t\theta}{s^2}- \frac{\theta^2}{2s^2}
    \right)~.
\end{align*}
For any $\psi$ in $\Psi_n(H)$, I rewrite:
\begin{equation*}
    \psi(t,s) = \frac{\sqrt{n}}{u} \frac{\Tilde{f}_{G^+}(t,s) - \Tilde{f}_{G^-}(t,s)}{\Tilde{f}_H(t,s)}~.
\end{equation*}
This factorization eliminates a term in $\Tilde{f}_H(t,s)^{-1}$ that is a power of $\exp(\tau^2)$, where we recall that $\tau$ defines the domain of interest via $D_\tau$. When $\tau$ is proportional to $\sqrt{\log n}$, this term eliminated term grows at a rate proportional to a power of $n$. By eliminating this term, I analyze the smoothness of the numerator and denominator separately.

\begin{lem}
    \label{lem:key_smoothness}
    Let assumption \ref{assu:smoothness} hold. Then, there exist constants $C_1 \ge 1$ and $C_2 > 0$ such that, for all $n\ge 2$, $\tau \ge 1$, and $H\in\calG$:
    \begin{equation*}
        \sup_{\psi\in\Psi_n(H)}\|\psi\|_{C^B(D_\tau)}\le C_1\sqrt{n}\exp(C_2\tau)~.
    \end{equation*}
\end{lem}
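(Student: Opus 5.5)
We use the factorization $\psi = \frac{\sqrt n}{u}\,(\Tilde f_{G^+}-\Tilde f_{G^-})/\Tilde f_H$ with $u\ge\underline\delta$. By the generalized Banach algebra property \eqref{eq:Holder_generalized_banach_algebra} it then suffices to prove two bounds, uniform over $G,H\in\calG$:
\begin{equation*}
\|\Tilde f_G\|_{C^B(D_\tau)}\lesssim e^{C\tau},\qquad \|1/\Tilde f_H\|_{C^B(D_\tau)}\lesssim e^{C\tau}.
\end{equation*}
Multiplying these gives $\|\psi\|_{C^B(D_\tau)}\le C_B\frac{\sqrt n}{\underline\delta}\cdot 2e^{C\tau}\cdot e^{C\tau}$. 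This is of the claimed form $C_1\sqrt n\exp(C_2\tau)$, with all constants depending only on assumption \ref{assu:smoothness} and $\underline s,\bar s,M_\theta,\underline\delta$.

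\textbf{Numerator.} Differentiate under the integral in $\Tilde f_G(t,s)=\int_\Theta K_\theta(t,s)g_G(\theta,s)d\theta$; this is justified because $\Theta$ is compact and the integrand is smooth in $(t,s)$ on $D_\tau$.

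For a multi-index $a$ with $|a|\le k+1$, Leibniz's rule writes $D^a\Tilde f_G$ as a finite sum of terms
\begin{equation*}
\int_\Theta \partial_t^{a_t}\partial_s^{j}K_\theta(t,s)\,\partial_s^{a_s-j}g_G(\theta,s)\,d\theta .
\end{equation*}
The $g_G$ factors are handled as follows:
\begin{itemize}
\item for orders up to $k$, $\partial_s^{j}g_G$ is bounded uniformly by the consequence of assumption \ref{assu:smoothness} recorded before \eqref{eq:lower_order_Holder_norm_bounds};
\item at order $k$, the H\"older quotient of $\partial_s^k g_G$ is bounded by \eqref{eq:lower_order_Holder_norm_bounds}.
\end{itemize}
The kernel factors satisfy, on $D_\tau$ with $\theta\in\Theta$ and $s\ge\underline s$,
\begin{equation*}
|\partial_t^{a_t}\partial_s^{j}K_\theta|\le P(\tau)\exp(\tau M_\theta/\underline s^2)
\end{equation*}
for a polynomial $P$, since each derivative of the exponent $t\theta/s^2-\theta^2/2s^2$ is polynomial in $t$ with coefficients bounded by the compactness of $\Theta$ and $\mathcal S$.

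The H\"older seminorm of the top-order derivatives splits in two. The part coming from $t$-derivatives and from derivatives falling on $K_\theta$ is handled by the interpolation inequality \eqref{eq:interpolation_ineq_consequence}, which uses one more derivative of $K_\theta$. The part coming from $\partial_s^k g_G$ uses the H\"older condition of assumption \ref{assu:smoothness} directly. Absorbing $P(\tau)$ into $e^{\tau}$ gives $\|\Tilde f_G\|_{C^B(D_\tau)}\lesssim e^{C\tau}$.

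\textbf{Denominator.} First, a lower bound. Since $g_H\ge c^4$ and $K_\theta\ge \exp(-\tau M_\theta/\underline s^2-M_\theta^2/2\underline s^2)$ on $D_\tau$, we get $\Tilde f_H\ge c' e^{-C\tau}$.

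Next, the derivatives of $1/\Tilde f_H$. By the Fa\`a di Bruno formula, $D^a(1/\Tilde f_H)$ is a finite sum of products of at most $|a|+1\le k+2$ factors, each a derivative of $\Tilde f_H$, divided by $\Tilde f_H^{\,m+1}$ with $m\le |a|$. Each such term is bounded by $e^{C(k+2)\tau}\cdot e^{C(k+2)\tau}$. The H\"older quotient of the top-order term is treated identically, via the product rule for H\"older seminorms and the same interpolation step. All exponents depend only on $k$, so they are absorbed into $C_2$.

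\textbf{Main obstacle.} The delicate point is the bookkeeping of the H\"older quotient in the $s$-direction at order $k$, where only $C^{k,\zeta}$ regularity in $s$ of $g_G$ is available rather than $k+1$ derivatives. There I will apply the triangle inequality: split $D^a v(z)-D^a v(z')$ into a change in $t$ at fixed $s$, handled by the mean value theorem using smoothness in $t$, and a change in $s$ at fixed $t'$, handled by the $\zeta$-H\"older modulus of $\partial_s^k g_G$ integrated against $K_\theta$. We must also check that the exponential growth arises only through $|t|\le\tau$ linearly in the exponent. This holds precisely because the factorization by $\varphi_s(t)$ removed the $e^{t^2/2s^2}$ term.
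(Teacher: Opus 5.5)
Your proposal is correct and follows essentially the same route as the paper: you factor through $\Tilde f_G=f_G/\varphi_s$, bound derivatives of $K_\theta$ by a polynomial in $\tau$ times $e^{C\tau}$, differentiate under the integral with Leibniz's rule, lower-bound $\Tilde f_H\gtrsim e^{-C\tau}$, and combine numerator and denominator via the H\"older Banach-algebra property. The differences are minor: you use Fa\`a di Bruno where the paper inducts on $D^a(\Tilde f_H\Tilde f_H^{-1})=0$, and you split the order-$k$ H\"older increment into $t$- and $s$-moves where the paper bounds each integrand in $C^\zeta(D_\tau)$ and applies Minkowski; note also that your Leibniz expansion for $|a|\le k+1$ should be restricted to $a_s\le k$, since $\partial_s^{k+1}g_G$ need not exist, though your argument never actually uses it.
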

\begin{proof}
    I first obtain bounds on derivatives of the kernel $K_\theta$. In particular, each derivative of $K_\theta(t,s)$ can be written as a sum of terms taking the form $t^p\theta^qs^{-r}K_\theta(t,s)$, for integers $p$, $q$, and $r$. To see this, I proceed by induction. The product rule yields:
    \begin{equation*}
        \partial_t\!\left(t^p\theta^q s^{-r}K_\theta\right)
        =
        p\,t^{p-1}\theta^q s^{-r}K_\theta
        +
        t^p\theta^{q+1}s^{-(r+2)}K_\theta~,
    \end{equation*}
    and:
    \begin{equation*}
        \partial_s\!\left(t^p\theta^q s^{-r}K_\theta\right)
        =
        -r\,t^p\theta^q s^{-(r+1)}K_\theta
        +
        t^p\theta^q s^{-r}
        \frac{\theta^2-2t\theta}{s^3}K_\theta~.
    \end{equation*}
    Since $|t|\le \tau$, $|\theta|\le M_\theta$, and $s\ge \underline{s}$, the derivatives of $K_\theta$ of order at most $k+1$ are bounded on $D_\tau$ by the product of a polynomial in $\tau$ and $K_\theta$. Consequently, for any multi-index $a$ such that $|a| \le k+1$, there exist constants $m_a$ and $C_0'$ such that:
    \begin{equation*}
        \sup_{\theta\in\Theta}\|D^aK_\theta\|_{L^\infty(D_\tau)} \lesssim (1+\tau)^{m_a}\exp(C_0'\tau)~.
    \end{equation*}
    There exists a further constant $C_0$ such that for any multi-index $a$ with $|a|\le k+1$ and any $\tau > 1$, $(1+\tau)^{m_a} \lesssim \exp (C_0\tau)$. Choosing $C_0 := C_0'\vee C_0''$ yields:
    \begin{equation}
    \label{eq:K_theta_Holder_norm_bounds}
        \sup_{\theta\in\Theta}\sup_{|a|\le k+1}\|D^aK_\theta\|_{L^\infty(D_\tau)} \lesssim \exp(C_0\tau)~, \quad \sup_{\theta\in\Theta}\sup_{|a|\le k}\|D^aK_\theta\|_{C^\zeta(D_\tau)} \lesssim \exp(C_0\tau)~,
    \end{equation}
    where the second inequality holds by \eqref{eq:interpolation_ineq_consequence}. 

    Combining \eqref{eq:lower_order_Holder_norm_bounds} and \eqref{eq:K_theta_Holder_norm_bounds}, it holds that whenever $p+q \le k$:
    \begin{equation*}
        \sup_{G\in\calG}\sup_{j\le q}\sup_{(t,s)\in D_\tau}\sup_{\theta\in\Theta}|\partial_t^p\partial_s^{q-j}K_\theta(t,s)\partial_s^jg_G(\theta,s)|\lesssim \exp(C_0\tau)
    \end{equation*}
    Since $\Theta$ is a compact interval, the right-hand side is integrable in $\theta$, and thus the left-hand side. Therefore, interchanging differentiation and integration via theorem 16.8.ii of \citealt{BillingsleyProbMeas} yields:
    \begin{equation}
    \label{eq:tilde_f_derivatives}
        \partial_t^p\partial_s^q\Tilde{f}_G(t,s) = \sum_{j=0}^q\binom{q}{j}\int_\Theta \partial_t^p\partial_s^{q-j}K_\theta(t,s)\partial_s^jg_G(\theta,s)d\theta~.
    \end{equation}
    Because $|s-s'| \le \|z-z'\|_2$, the H\"older norm of $\partial_s^jg_G(\theta,s)$ as a function of $t,s$ is bounded above the the H\"older norm of $\partial_s^jg_G(\theta,s)$ as a function of $s$. In particular, $\|\partial_s^jg_G(\theta,s)\|_{C^\zeta(D_\tau)} \le \|\partial_s^jg_G(\theta,s)\|_{C^\zeta(\calS)}$. Consequently, each integrand in \eqref{eq:tilde_f_derivatives} can be bounded in $C^\zeta(D_\tau)$. Applying \eqref{eq:Holder_generalized_banach_algebra} to \eqref{eq:K_theta_Holder_norm_bounds} and \eqref{eq:lower_order_Holder_norm_bounds} yields:
    \begin{equation*}
        \sup_{G\in\calG}\sup_{j\le q}\sup_{\theta\in\Theta} \left\|
        \partial_t^p\partial_s^{q-j}K_\theta(t,s)\partial_s^jg_G(\theta,s)
        \right\|_{C^\zeta(D_\tau)} \lesssim \exp(C_0\tau)~,
    \end{equation*}
    whenever $p+q\le k$. Applying Minkowski's inequality yields:\footnote{Let $F(z) := \int_{\Theta} f_{\theta}(z)\,d\theta$. Then
$\frac{|F(z)-F(z')|}{\|z-z'\|^\zeta}
\le
\int_{\Theta}
\frac{|f_{\theta}(z)-f_{\theta}(z')|}
{\|z-z'\|^\zeta}
\,d\theta$, by Minkowski's inequality.
Taking the supremum over $z\neq z'$ gives
$\|F\|_{C^\zeta(E)}
\le
\int_{\Theta}
\|f_{\theta}\|_{C^\zeta(E)}
\,d\theta$.}
    \begin{equation*}
        \left\|
        \int_\Theta \partial_t^p\partial_s^{q-j}K_\theta(t,s)\partial_s^jg_G(\theta,s)d\theta
        \right\|_{C^\zeta(D_\tau)} \le 
        \int_\Theta\left\| \partial_t^p\partial_s^{q-j}K_\theta(t,s)\partial_s^jg_G(\theta,s)\right\|_{C^\zeta(D_\tau)}d\theta
    \end{equation*}
    Because $\Theta$ is compact, it follows that:
    \begin{equation}
    \label{eq:D_A_f_G_holder_bounds}
        \sup_{G\in\calG}\max_{|a|\le k} \|D^a \Tilde{f}_G\|_{L^\infty(D_\tau)} \le \sup_{G\in\calG}\max_{|a|\le k} \|D^a \Tilde{f}_G\|_{C^\zeta(D_\tau)} \lesssim \exp(C_0\tau)~.
    \end{equation}
    Consequently, $\sup_{G\in\calG}\| \Tilde{f}_G\|_{C^B(D_\tau)} \lesssim \exp(C_0\tau)$.

    It remains to bound $\Tilde{f}_H^{-1}$. Since $|t|\le \tau$, $|\theta|\le M_\theta$, and $s\ge \underline{s}$, $K_\theta(t,s) \ge \exp(-M_\theta\tau/\underline{s}^2 - M_\theta^2/(2\underline{s}^2))$ on $D_\tau$. This bound, and the fact that $\int_\Theta g_H(\theta,s)d\theta = m_H(s)\ge c^2$ yields:
    \begin{equation}
    \label{eq:inv_f_H_lower_bound}
        \inf_{(t,s)\in D_\tau} \Tilde{f}_H(t,s)\ge c^2\exp\left(
-\frac{M_\theta\tau}{\underline{s}^2}-\frac{M_\theta^2}{2\underline{s}^2}
        \right)~.
    \end{equation}
    Consequently, $\|\Tilde{f}_H^{-1}\|_{L^\infty(D_\tau)}\lesssim \exp(D_0'\tau)$, where $D_0' := M_\theta / \underline{s}^2$. Moreover:
    \begin{equation*}
        \frac{\left|
\widetilde f_H(z)^{-1}
-
\widetilde f_H(z')^{-1}
\right|}{\|z-z'\|_2}
=
\frac{
\left|
\widetilde f_H(z)-\widetilde f_H(z')
\right|
}{
\widetilde f_H(z)\widetilde f_H(z')\|z-z'\|_2
}~.
    \end{equation*}
    Therefore, by \eqref{eq:D_A_f_G_holder_bounds} and \eqref{eq:inv_f_H_lower_bound}, $\sup_{H\in\calG}\|\Tilde{f}_H^{-1}\|_{C^\zeta(D_\tau)}\lesssim \exp(D_0\tau)$, where $D_0 = D_0'\vee C_0$. It remains to control the H\"older seminorm of higher-order derivatives. For two multi-indices $a$ and $b$, let $b\le a$ denote $b_s\le a_s$ and $b_t \le a_t$. Also let $a-b := (a_s-b_s, a_t-b_t)$. The product rule applied to $\Tilde{f}_H\Tilde{f}_H^{-1}$ yields:
    \begin{equation*}
        D^a(\Tilde{f}_H^{-1}) = -(\Tilde{f}_H^{-1}) \sum_{0<b\le a} \binom{a_s}{b_s}\binom{a_t}{b_t}D^b\Tilde{f}_HD^{a-b}(\Tilde{f}_H^{-1})~,\quad 1\le|a|\le k~.
    \end{equation*}
    To control $\|D^a(\Tilde{f}_H^{-1})\|_{C^\zeta(D_\tau)}$, I proceed by induction. Applying \eqref{eq:Holder_generalized_banach_algebra}, the triangle inequality, \eqref{eq:D_A_f_G_holder_bounds}, and the inductive hypothesis provide that there exists a constant $E_0$ satisfying:
    \begin{equation*}
        \sup_{H\in\calG}\max_{|a|\le k} \|D^a(\Tilde{f}_H^{-1})\|_{C^\zeta(D_\tau)}\lesssim \exp(D_0\tau)~.
    \end{equation*}
    The reasoning above provides $\sup_{H\in\calG}\|\Tilde{f}_H^{-1}\|_{C^B(D_\tau)}\lesssim \exp(D_0\tau)$. 

    Combining results yields, uniformly over $H$, $G^+$, $G^-$, and $u$:
    \begin{equation*}
        \|\psi(t,s)\|_{C^B(D_\tau)} \le \frac{\sqrt{n}}{\delta}C_B\|\Tilde{f}_{G^+}- \Tilde{f}_{G^-}\|_{C^B(D_\tau)}\|\Tilde{f}_H^{-1}\|_{C^B(D_\tau)} \le C_1\sqrt{n}\exp(C_2\tau)~,
    \end{equation*}
    thus concluding the proof.
\end{proof}

\subsubsection*{Entropy Bounds}

Lemma \ref{lem:key_smoothness} points to the regularity of the affine minimax estimators applied in this paper. In this section, I use these regularity results to obtain entropy bounds on the class $\Psi_n(H)$. For any given norm $\|\cdot\|$, set $\calF$, and $\varepsilon>0$, we denote by $N(\varepsilon,\calF,\|\cdot\|)$ the covering number of $\calF$, the minimal number of balls of radius $\varepsilon$ needed to cover $\calF$. As in \citealt{VdVandWellner1996}, the centers of these balls need not lie in $\calF$, but must have finite norm. We will primarily consider the seminorm $\|\cdot\|_{L^2(Q)}$ for Borel probability measures on $\bbR\times\calS$ given by $\|v\|_{L^2(Q)} = (\int v^2 dQ)^{1/2}$.

I provide two entropy lemmata in what follows. In the first lemma \ref{lem:general_holder_entropy}, I provide general results on the entropy of H\"older balls in $C^B(D_\tau)$ by extending theorem 2.7.1 of \citealt{VdVandWellner1996}. In lemma \ref{lem:psi_tr_entropy}, I apply these general results to obtain bounds on the covering numbers of a truncation of the class $\Psi_n(H)$.

\begin{lem}
\label{lem:general_holder_entropy}
    Let $\tau \ge 1$, $L\ge 1$, and let $\calF$ be a class of functions on $D_\tau$ satisfying $\sup_{v\in\calF} \|v\|_{C^B(D_\tau)} \le L$. Then:
    \begin{equation*}
        \log N(\varepsilon,\calF, \|\cdot\|_{L^\infty(D_\tau)})\le C_{B,\calS} (1+\tau)(L/\varepsilon)^{2/B}~,\quad 0\le\varepsilon\le 1~,
    \end{equation*}
    where $C_{B,\calS}$ depends only on $B$ and $\calS$.
\end{lem}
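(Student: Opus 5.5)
The plan is to reduce to the classical Kolmogorov--Tikhomirov bound for H\"older balls on a fixed bounded convex set, stated as theorem 2.7.1 of \citealt{VdVandWellner1996}. That theorem says the following. Let $\mathcal{X}\subset\bbR^d$ be bounded and convex with nonempty interior, and let $C_1^B(\mathcal{X})$ be the unit ball of the H\"older norm \eqref{eq:Holder_norm_defn}. Then $\log N(\varepsilon,C_1^B(\mathcal{X}),\|\cdot\|_\infty)\le K\lambda(\mathcal{X}^1)\varepsilon^{-d/B}$, where $\mathcal{X}^1$ is the $1$-enlargement of $\mathcal{X}$ and $K$ depends only on $B$ and $d$. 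Here $d=2$, which gives exactly the exponent $2/B$.

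There are two issues to handle: the amplitude $L$ and the growth of the domain $D_\tau$.

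\textbf{Amplitude.} Since $\sup_{v\in\calF}\|v\|_{C^B(D_\tau)}\le L$, the rescaled class $\calF/L$ lies in the unit H\"older ball. An $\varepsilon/L$-cover of $\calF/L$ in sup norm, multiplied by $L$, gives an $\varepsilon$-cover of $\calF$. Hence
\[
N(\varepsilon,\calF,\|\cdot\|_{L^\infty(D_\tau)})\le N(\varepsilon/L,C_1^B(D_\tau),\|\cdot\|_{L^\infty(D_\tau)}).
\]
Because $L\ge1$ and $\varepsilon\le1$, we have $\varepsilon/L\le1$, which is the regime the theorem covers.

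\textbf{Domain.} I would not apply the theorem directly on $D_\tau$. Its constant depends on the geometry of $\mathcal{X}$, and I want the dependence on $\tau$ to be explicit and linear. Instead, partition $[-\tau,\tau]$ into $m=\lceil 2\tau\rceil\le 2(1+\tau)$ consecutive closed intervals $I_1,\ldots,I_m$, each of length at most $1$. Set $R_j:=I_j\times\calS$. Each $R_j$ is a translate of a sub-rectangle of the fixed rectangle $R_*:=[0,1]\times\calS$.

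The restriction of $v$ to $R_j$ has $\|v|_{R_j}\|_{C^B(R_j)}\le\|v\|_{C^B(D_\tau)}$, since both the sup terms and the H\"older seminorm terms in \eqref{eq:Holder_norm_defn} can only decrease on a subset. One technicality remains: the $R_j$ are not all congruent, and a shorter last interval is a different convex set. I would handle this by letting the intervals overlap, each of length exactly $1$, still using at most $2(1+\tau)$ of them to cover $[-\tau,\tau]$ (this needs $\tau\ge1$, hence the hypothesis). Then every $R_j$ is a translate of $R_*$. The H\"older norm is translation invariant, so theorem 2.7.1 applies on $R_*$ with a constant $K_{B,\calS}:=K\lambda(R_*^1)$ that depends only on $B$ and $\calS$.

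\textbf{Combining the covers.} On each $R_j$, pick an $\varepsilon/L$-net $\{h^{(j)}_\ell\}$ of the unit ball in $\|\cdot\|_{L^\infty(R_j)}$, with cardinality at most $\exp(K_{B,\calS}(L/\varepsilon)^{2/B})$. For each tuple $(\ell_1,\ldots,\ell_m)$, define a function on $D_\tau$ that equals $h^{(j)}_{\ell_j}$ on $R_j$, resolving overlaps by taking the smallest index $j$. The result is a bounded function, which is admissible because the centres of the cover need not lie in $\calF$ or be smooth.

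Now take any $v\in\calF$ and choose $\ell_j$ for each $j$ so that $\|v/L-h^{(j)}_{\ell_j}\|_{L^\infty(R_j)}\le\varepsilon/L$. Then the glued centre is within $\varepsilon/L$ of $v/L$ in sup norm on all of $D_\tau$. Counting the tuples gives
\[
\log N\le m\,K_{B,\calS}(L/\varepsilon)^{2/B}\le 2K_{B,\calS}(1+\tau)(L/\varepsilon)^{2/B}.
\]
This is the claim with $C_{B,\calS}=2K_{B,\calS}$.

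The only delicate point is ensuring the constant from theorem 2.7.1 is uniform across the pieces. The overlapping unit-length translates take care of that, and everything else is bookkeeping.
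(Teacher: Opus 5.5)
Your proof is correct, but it takes a different route from the paper. The paper applies theorem 2.7.1 of \citealt{VdVandWellner1996} directly on $D_\tau$. The constant $K$ in that theorem depends only on $B$ and the dimension, and the domain enters only through $\Leb(D_\tau^1)\le(2\tau+2)(\bar s-\underline s+2)\lesssim 1+\tau$. The linear dependence on $\tau$ therefore falls out in one line. Your stated reason for avoiding the direct application, that the constant depends on the geometry of $\mathcal{X}$ in a way you cannot control, is overcautious: the geometry enters only through that explicit volume factor.

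Your tiling argument is nonetheless valid. You cover $D_\tau$ by overlapping unit-length translates of $[0,1]\times\calS$, take a product of per-tile nets, and glue by smallest index. Restriction to a tile does not increase the H\"older norm, translation preserves covering numbers, and $\lceil 2\tau\rceil\le 2(1+\tau)$ tiles suffice once $\tau\ge 1$. The glued centres are bounded, which is admissible because centres need not lie in $\calF$.

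Your route needs an entropy bound on only one fixed reference rectangle. It would therefore still work with any Kolmogorov--Tikhomirov-type bound whose domain dependence is unspecified, and it shows that the factor $1+\tau$ simply counts tiles. The paper's route buys brevity. The paper also moves to centres inside $\calF$ by doubling the radius; that step is unnecessary under its own convention, and you correctly omit it.

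One small point should be stated explicitly. Theorem 2.7.1 is stated for the norm built from maxima of the sup-norm and seminorm terms, whereas \eqref{eq:Holder_norm_defn} uses sums of those terms. The unit ball of \eqref{eq:Holder_norm_defn} therefore sits inside the unit ball to which the theorem applies, so the bound carries over.
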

\begin{proof}
    Let $D_\tau^1 :=\{z\in\bbR^2:\inf_{w\in D_\tau} \|z-w\|_2<1\}$ be the 1-enlargement $D_\tau$. Notice that $D_\tau^1$ is bounded and convex in $\bbR^2$ with nonempty interior. Let $\Leb$ denote the Lebesgue measure on $\bbR^2$. Since $D_\tau^1$ is contained $[-\tau-1,\tau+1]\times[\underline{s}-1,\bar{s}+1]$, it follows that:
    \begin{equation*}
        \Leb (D_\tau^1) \le (2\tau +2)(\bar{s}-\underline{s}+2)\lesssim 1+\tau~.
    \end{equation*}
    First, notice that the H\"older norm \eqref{eq:Holder_norm_defn} dominates the H\"older norm considered in \citealt{VdVandWellner1996}. Defining the class $\calF/L$ by $\{v/L:v\in\calF\}$, and applying theorem 2.7.1 of \citealt{VdVandWellner1996} yields:
    \begin{equation*}
        \log N(\varepsilon/(2L), \calF/L,\|\cdot\|_{L^\infty(D_\tau)}) \le K\Leb(D_\tau^1)\left(
\frac{2L}{\varepsilon}
        \right)^{2/B}\lesssim (1+\tau)\left(
\frac{L}{\varepsilon}
        \right)^{2/B}~,
    \end{equation*}
    for some constant $K$ depending only on $B$ and the dimension $D_\tau$, both of which are fixed. Let $B(v_i,\varepsilon/2L)$ denote, for $i=1,\ldots,N(\varepsilon/(2L), \calF/L,\|\cdot\|_{L^\infty(D_\tau)})$, be a collection of balls covering $\calF/L$. For each $i$, choose $v_i'$ to be an element of $\calF/L$, and notice that the collection $B(v_i',\varepsilon/L)$ also covers $\calF/L$. Finally, the collection $B(v_i'L,\varepsilon)$ covers $\calF$, thus proving the lemma.
\end{proof}
To apply this result, let $\tau_n := D\sqrt{\log(n)}$ for a constant $D>0$. I define the truncated class:
\begin{equation*}
    \Psi_n^{tr}(H) := \{\psi(t,s)\mathbb{1}\{|t|\le\tau_n\}:\psi\in\Psi_n(H)\}~.
\end{equation*}
For sufficiently large $n$, $\tau_n \ge 1$. The choice of $\tau_n$ will allow for control of observations outside $D_{\tau_n}$.
\begin{lem}
\label{lem:psi_tr_entropy}
Let assumption \ref{assu:smoothness} hold, and fix $\vartheta \in (2/B,2/9)$. Then, there exist constants $C <\infty$ and $n_0<\infty$ such that for every $n\ge n_0$, and for any $0<\varepsilon\le 1$:
\begin{equation*}
    \sup_{H\in\calG}\sup_{Q} \log N(\varepsilon,\Psi_n^{tr},\|\cdot\|_{L^2(Q)}) \le Cn^{\vartheta/2}\varepsilon^{-\vartheta}~,
\end{equation*}
where the second supremum is taken over all Borel probability measures over $\bbR\times\calS$.
\end{lem}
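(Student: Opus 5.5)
The plan is to transfer the sup-norm entropy bound of lemma \ref{lem:general_holder_entropy} to $L^2(Q)$ covering numbers of the truncated class. Lemma \ref{lem:key_smoothness}, with $\tau=\tau_n=D\sqrt{\log n}$, supplies the Hölder constant. Two observations make the transfer work.

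First, for any $\psi\in\Psi_n(H)$ the truncated function $\psi\mathbb{1}\{|t|\le\tau_n\}$ vanishes off $D_{\tau_n}$. Hence for any Borel probability measure $Q$ on $\bbR\times\calS$ and any two elements $v,w$ of $\Psi_n^{tr}(H)$,
\[
\|v-w\|_{L^2(Q)}\le \|v-w\|_{L^\infty(D_{\tau_n})}.
\]
So take an $\varepsilon$-net of the untruncated restrictions $\{\psi|_{D_{\tau_n}}\}$ in $L^\infty(D_{\tau_n})$, and multiply each center by $\mathbb{1}\{|t|\le\tau_n\}$. This gives an $\varepsilon$-net of $\Psi_n^{tr}(H)$ in $L^2(Q)$, uniformly in $Q$. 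The centers have finite norm, which is all that is required. Consequently
\[
\sup_Q \log N(\varepsilon,\Psi_n^{tr}(H),\|\cdot\|_{L^2(Q)})\le \log N(\varepsilon,\Psi_n(H)|_{D_{\tau_n}},\|\cdot\|_{L^\infty(D_{\tau_n})}).
\]

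Second, lemma \ref{lem:key_smoothness} gives $\sup_{H}\sup_{\psi}\|\psi\|_{C^B(D_{\tau_n})}\le L_n:=C_1\sqrt n\exp(C_2 D\sqrt{\log n})$. Here $L_n\ge1$ because $C_1\ge1$. Applying lemma \ref{lem:general_holder_entropy} with $L=L_n$ yields, for $0<\varepsilon\le 1$,
\[
\log N\le C_{B,\calS}(1+D\sqrt{\log n})\,C_1^{2/B}\,n^{1/B}\exp\!\Big(\tfrac{2C_2D}{B}\sqrt{\log n}\Big)\,\varepsilon^{-2/B}.
\]
The constant $C_{B,\calS}$ is independent of $H$, and lemma \ref{lem:key_smoothness} is uniform in $H\in\calG$, so this bound holds uniformly over $H$ and $Q$.

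It remains to compare this with $n^{\vartheta/2}\varepsilon^{-\vartheta}$. Since $\vartheta>2/B$ and $\varepsilon\le1$, we have $\varepsilon^{-2/B}\le\varepsilon^{-\vartheta}$. For the $n$-dependence, write $\eta:=\vartheta/2-1/B>0$. The factor $(1+D\sqrt{\log n})\exp(c\sqrt{\log n})$ equals $\exp(O(\sqrt{\log n}))=n^{o(1)}$, so it is eventually below $n^{\eta}$. Choosing $n_0$ so that this holds for all $n\ge n_0$, and with $\tau_{n_0}\ge1$ so that both lemmas apply, we get $\log N\le C n^{1/B+\eta}\varepsilon^{-\vartheta}=Cn^{\vartheta/2}\varepsilon^{-\vartheta}$ with $C:=C_{B,\calS}C_1^{2/B}$.

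The main point to check is the first step: the sup is over all probability measures $Q$, including those putting mass outside $D_{\tau_n}$. That is exactly why the truncation matters, since the untruncated functions are not controlled off $D_{\tau_n}$. The other delicate point is that the subpolynomial factor $\exp(C_2\sqrt{\log n})^{2/B}$ from the Hölder constant is absorbed by the slack $\vartheta>2/B$. The upper limit $\vartheta<2/9$ is not used here; it matters only downstream.
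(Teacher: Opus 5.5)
Your proposal is correct and follows essentially the same route as the paper. The paper bounds $\|(\psi-\tilde\psi)\mathbb{1}\{|t|\le\tau_n\}\|_{L^2(Q)}\le\|\psi-\tilde\psi\|_{L^\infty(D_{\tau_n})}$, applies lemma \ref{lem:key_smoothness} with $\tau=\tau_n$ and then lemma \ref{lem:general_holder_entropy}, absorbs the subpolynomial factor $(1+\tau_n)\exp(C_3\tau_n)$ into the slack $n^{\vartheta/2-1/B}$, and uses $\varepsilon^{-2/B}\le\varepsilon^{-\vartheta}$. Your explicit handling of the net centers (extending by zero) and your check that $L_n\ge 1$ are slightly more careful than the paper's write-up, but the argument is the same.
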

\begin{proof}
    I first relate the $L^2(Q)$ norm on $\Psi_n^{tr}(H)$ to the $L^\infty(D_\tau)$ norm on $\Psi_n(H)$.
    \begin{align*}
        \left\| 
        (\psi - \Tilde{\psi})\mathbb{1}\{|t|\le\tau_n\}
        \right\|_{L^2(Q)}^2 &= \int_{D_{\tau_n}}|\psi - \Tilde{\psi}|^2 dQ\\
        & \le \left\| 
        (\psi - \Tilde{\psi})
        \right\|_{L^\infty(D_{\tau_n})}^2Q(D_{\tau_n})\\
        & \le \left\| 
        (\psi - \Tilde{\psi})
        \right\|_{L^\infty(D_{\tau_n})}^2~.
    \end{align*}
    Therefore, any cover of $\Psi_n(H)$ using the $L^\infty(D_\tau)$ norm provides a cover of $\Psi_n^{tr}(H)$ using the $L^2(Q)$ norm. 

    By lemma \ref{lem:key_smoothness}, elements of $\Psi_n^{tr}$ have $C^B(D_{\tau_n})$ norm bounded by $C_1\sqrt{n}\exp(C_2\tau_n)$. Combining this result with the above inequality on norms, and with lemma \ref{lem:general_holder_entropy} yields:
    \begin{align}
        \sup_{H\in\calG}\sup_Q \log N(\varepsilon,\Psi_n^{tr},\|\cdot\|_{L^2(Q)}) &\lesssim (1+\tau_n)\left(
\frac{C_1\sqrt{n}\exp(C_2\tau_n)}{\varepsilon}
        \right)^{2/B}\nonumber\\
        & \lesssim (1+\tau_n)\exp(C_3\tau_n)n^{1/B}\varepsilon^{-2/B}\label{eq:entropy_bound_tr_class}~,
    \end{align}
    for $C_3 := C_2\frac{2}{B}$. First, notice that $(1+\tau_n)\exp(C_3\tau_n)$ is $n^{o(1)}$. To see this, notice that:
    \begin{equation*}
        \frac{\log(1+\tau_n) + C_3\tau_n}{\log n}\to 0~.
    \end{equation*}
    Since $\vartheta /2-1/B > 0$, it follows that there exists $n_0 < \infty$ such that for $n\ge n_0$:
    \begin{equation*}
        (1+\tau_n)\exp(C\tau_n) \le n^{\vartheta /2 - 1/B}~.
    \end{equation*}
    Again because $\vartheta > 2/B$, $\varepsilon^{-2/B} \le \varepsilon^{-\vartheta}$ for $0\le \varepsilon\le 1$. Substituting these bounds into \eqref{eq:entropy_bound_tr_class} provides the lemma.
\end{proof}
Before proceeding, I will provide an immediate consequences of lemma \ref{lem:psi_tr_entropy}. Let $\calF \subseteq \Psi_n^{tr}(H)$. Any covering of $\Psi_n^{tr}(H)$ is a covering of $\calF$, meaning that:
\begin{equation}
\label{eq:subclass_bound}
    \log N(\varepsilon,\calF,L^2(Q)) \le \log N(\varepsilon,\Psi_n^{tr}(H),L^2(Q))~.
\end{equation}
To handle centering, fix $G\in\calG$ and define $Q^* := (Q+P_G)/2$. For $v$ and $w$ in $\calF$, applying the triangle inequality and Cauchy-Schwarz yields:
\begin{align*}
    \|(v-P_Gv)-(w-P_Gw)\|_{L^2(Q)} &\le  \|v-w\|_{L^2(Q)} + |P_G(v-w)|\\
    & \le \|v-w\|_{L^2(Q)} + \|v-w\|_{L^2(P_G)}\\
    & \le \sqrt{2\left(
    \|v-w\|_{L^2(Q)}^2 + \|v-w\|_{L^2(P_G)}^2
    \right)}\\
    & \le 2\|v-w\|_{L^2(Q^*)}~.
\end{align*}
Consequently, any cover of $\calF$ at radius $\varepsilon/2$ using the $L^2(Q^*)$ seminorm provides a cover of $\{v-P_Gv:v\in\calF\}$ at radius $\varepsilon$ using the $L^2(Q)$ seminorm. Applying the bound \eqref{eq:subclass_bound} and lemma \ref{lem:psi_tr_entropy} yields, for any $\calF \subseteq \Psi_n^{tr}(H)$:
\begin{equation}
\label{eq:subclass_centered_bound}
    \sup_{H,G\in\calG}\sup_Q \log N(\varepsilon,\{v-P_Gv:v\in\calF\},L^2(Q))\lesssim n^{\vartheta/2}\varepsilon^{-\vartheta}~,\quad 0<\varepsilon\le 1~.
\end{equation} 

\subsubsection*{Envelope Properties}

The normalization $P_H\psi^2$ bounds the $L^2(P_H)$ norm of elements $\psi$ in $\Psi_n(H)$. This section converts these $L^2(P_H)$ bounds into bounds in $L^\infty(D_\tau)$. I leverage the H\"older smoothness assumptions in \ref{assu:smoothness} to obtain bounds in the $s$ coordinate. To obtain bounds in the coordinate $t$, I relate $\psi(t,s)$ to a particular Laplace transform, and then apply a general fact on Laplace transforms which I derive in lemma \ref{lem:laplace_fact}.

\begin{lem}
\label{lem:laplace_fact}
    Fix $L,v>0$ and $0<\epsilon,\rho<1$. For any real integrable function $w$ on $[-L,L]$, let:
    \begin{equation*}
        M_w(t) := \int_{-L}^L e^{\lambda t}w(\lambda)d\lambda~,\quad \|w\|_1:=\int_{-L}^L |w(a)|da~.
    \end{equation*}
    Then, for every $t\in\bbR$:
    \begin{equation*}
        |M_w(t)|\le C\|w\|_1^\epsilon\|M_w\|_{L^2(\varphi_v)}^{1-\epsilon}\exp\left\{
        \frac{\rho t^2}{2v^2(1+\rho)}
        \right\}~,
    \end{equation*}
    where $\varphi_v$ denotes the $\mathcal{N}(0,v^2)$ density.
\end{lem}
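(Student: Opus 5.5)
The plan is to interpolate between two bounds on the entire function $M_w(z)=\int_{-L}^L e^{\lambda z}w(\lambda)d\lambda$. The first is the trivial pointwise bound $|M_w(z)|\le \|w\|_1 e^{L|\mathrm{Re}\,z|}$, together with the same bound for all derivatives up to a factor $L^j$. The second is the integral bound $\|M_w\|_{L^2(\varphi_v)}$. Throughout, the constant $C$ is allowed to depend on $L,v,\epsilon,\rho$ but not on $w$ or $t$. By homogeneity I normalize $\|w\|_1=1$ and write $A:=\|M_w\|_{L^2(\varphi_v)}$. Replacing $w(\lambda)$ by $w(-\lambda)$, I take $t\ge 0$.

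The first step is a local lower-bound argument around $t$. Since $|M_w^{(j)}(x)|\le L^j e^{L(t+1)}$ for $|x-t|\le 1$, Taylor's theorem of order $k$ shows that $M_w$ is within $L^{k+1}e^{L(t+1)}r^{k+1}/(k+1)!$ of a polynomial $p$ of degree $k$ on $I_r:=[t-r,t+r]$, for any $r\le 1$. A Remez/Nikolskii-type inequality for polynomials on an interval then gives
\[
|p(t)|\le C_k\, r^{-1/2}\|p\|_{L^2(I_r)} .
\]
Also $\|M_w\|_{L^2(I_r)}\le A\,\varphi_v(t+1)^{-1/2}$, since $\varphi_v$ is decreasing on $[0,\infty)$ and $\varphi_v\ge\varphi_v(t+1)$ on $I_r$. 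Combining these yields
\[
|M_w(t)|\;\lesssim\; r^{-1/2}A\,e^{(t+1)^2/(4v^2)} + e^{L(t+1)}r^{k+1}.
\]
I will optimize over $r$ (capped at $r=1$; when the cap binds, the first term alone already gives the claim). This produces a bound of the form $A^{1-\epsilon_k}\exp\{(1-\epsilon_k)t^2/(4v^2)+O(t)\}$ with $\epsilon_k=1/(2k+3)\to0$. The linear $O(t)$ terms are absorbed into any quadratic $\eta t^2$ at the cost of a constant.

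This local argument alone leaves a Gaussian exponent near $t^2/(4v^2)$. That is too large when $\rho/(1+\rho)<1/2$, so the second step lowers the exponent using analyticity. I will apply the Hadamard two-constants theorem (equivalently, subharmonicity of $\log|M_w|$) on a sector or strip region containing $t$. On the part of the boundary near the origin, the first step gives $|M_w|\lesssim A^{1-\epsilon'}$. On the remainder, I use the growth bound $|M_w(z)|\le e^{L|z|}$. The harmonic measure $\omega(t)$ of the near-origin piece seen from $t$ then determines an interpolated bound $|M_w(t)|\le (CA^{1-\epsilon'})^{\omega(t)}e^{L|t|(1-\omega(t))}$.

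An alternative route for this step, which I would try first because it is more explicit, uses the Gaussian shift identity
\[
\int M_w(x)\varphi_{v'}(x-m)\,dx=M_{\tilde w}(m),\qquad \tilde w(\lambda)=w(\lambda)e^{\lambda^2v'^2/2}.
\]
The idea is to transfer the $L^2(\varphi_v)$ norm to a Gaussian of larger variance $v^2(1+\rho)/\rho$ centered away from $0$. Applying the first step with that variance gives the exponent $\rho t^2/(2v^2(1+\rho))$. The ratio $e^{\lambda^2v'^2/2}$ is bounded above and below on $[-L,L]$, which controls $\|\tilde w\|_1$ in terms of $\|w\|_1$.

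The main obstacle is matching the two free parameters $\epsilon$ and $\rho$ simultaneously. The local Remez argument ties a small $\epsilon$ to a large polynomial degree. The Gaussian exponent, by contrast, is governed by how much of the weight $\varphi_v$ can be ``moved'' toward $t$, and this is paid for through $\|w\|_1$ via the exponential-type bound. I expect the bookkeeping there, namely choosing $k$, the variance change and $r$ as functions of $(\epsilon,\rho)$, to be the delicate part. Everything else (Taylor remainder, polynomial inequality, absorbing linear-in-$t$ exponents into constants) is routine.
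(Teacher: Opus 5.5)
Your first step is sound. As you observe, though, it can only deliver a Gaussian exponent of roughly $t^2/(4v^2)$, which exceeds $\rho t^2/(2v^2(1+\rho))$ for every $\rho\in(0,1)$. So the lemma rests entirely on your second step, and neither route you sketch supplies it. This is a missing idea, not bookkeeping.

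\emph{The two-constants route.} Suppose the small bound $CA^{1-\epsilon'}$ is available only on a fixed piece near the origin. At a point such as $t=2$ the harmonic measure then satisfies $\omega(t)\le\omega_0<1$. Letting $A\to 0$, your bound $(CA^{1-\epsilon'})^{\omega(t)}e^{L|t|(1-\omega(t))}$ is far larger than $A^{1-\epsilon}$ once $\epsilon<1-\omega_0$, whereas the lemma requires the power $1-\epsilon$ uniformly in $t$. Enlarging the small piece until it reaches $t$ does not help. Then $t$ lies on the set carrying the data, and a subharmonic majorant cannot beat the boundary value there, which is your step-one bound $A^{1-\epsilon'}e^{(1-\epsilon')t^2/(4v^2)+O(t)}$. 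Nor is there smallness to harvest off the real axis. The value $M_w(iy)=\int e^{i\lambda y}w(\lambda)\,d\lambda$ can be of order $\|w\|_1$ while $A$ is tiny; take $w(\lambda)=\cos(N\lambda)$ with $N$ large.

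\emph{The Gaussian-shift route.} This runs in the wrong direction, because it is the heat semigroup. Cauchy--Schwarz against $\varphi_v$ gives only $|M_{\tilde w}(m)|\lesssim A\exp\{m^2/(2(2v^2-v'^2))\}$, an exponent no smaller than $m^2/(4v^2)$. Moreover, recovering $M_w$ from $M_{\tilde w}$ requires the values $M_{\tilde w}(t+iy)$ off the real axis. The ``transfer'' of $L^2(\varphi_v)$ control to $L^2(\varphi_V)$ for $V>v$ at cost $\|w\|_1^{\epsilon}$ is essentially equivalent to the lemma itself, and nothing in your outline proves it.

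The missing ingredient is a global spectral factorization, which is how the paper argues.
\begin{itemize}
\item The Hermite generating function gives $M_w(t)=\sum_j b_j\He_j(t/v)/j!$ with $b_j=\int_{-L}^L e^{v^2\lambda^2/2}(v\lambda)^j w(\lambda)\,d\lambda$.
\item Orthogonality of $\He_j(\cdot/v)$ in $L^2(\varphi_v)$ yields $\sum_j b_j^2/j!\le\|M_w\|_{L^2(\varphi_v)}^2$.
\item The compact support of $w$ yields $|b_j|\le\|w\|_1e^{v^2L^2/2}(vL)^j$, hence $\sum_j b_j^2\rho^{-j/\epsilon}/j!\le\|w\|_1^2\exp\{v^2L^2(1+\rho^{-1/\epsilon})\}$.
\item H\"older's inequality with exponents $1/(1-\epsilon)$ and $1/\epsilon$ then gives $\sum_j b_j^2\rho^{-j}/j!\lesssim\|w\|_1^{2\epsilon}\|M_w\|_{L^2(\varphi_v)}^{2(1-\epsilon)}$.
\item Cauchy--Schwarz combined with Mehler's formula $\sum_j\rho^j\He_j(x)^2/j!=(1-\rho^2)^{-1/2}e^{\rho x^2/(1+\rho)}$ at $x=t/v$ gives the claim.
\end{itemize}
In this argument $\epsilon$ and $\rho$ are decoupled simply by the choice of weights $\rho^{-j/\epsilon}$, so the parameter matching you anticipated as the delicate part never arises. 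The essential point is controlling the geometrically weighted Hermite coefficients by interpolating between orthogonality and the compact support of $w$. No local Taylor/Remez estimate or harmonic-measure interpolation reproduces this.
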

\begin{proof}
    First, notice that $|M_w(t)| \le \|w\|_1e^{L|t|}$. As a result, $M_w$ lies in $L^2(\varphi_v)$. We denote by $\He_j$ the probabilist's Hermite polynomials, which are orthogonal in $L^2(\varphi_1)$. Taking the generating function of the Hermite polynomials, and applying Mehler's formula (see equations 18.12.16 and 18.18.28 in \citealt{NIST:DLMF}, respectively):
    \begin{equation}
    \label{eq:gen_function_and_mehler}
        \exp\left(
\lambda x -\lambda^2/2
        \right) = \sum_{j=0}^\infty \frac{\lambda^j}{j!}\He_j(x)~,\quad \sum_{j=0}^\infty \rho^j\frac{\He_j(x)^2}{j!} = \frac{\exp\left(
        \frac{\rho x^2}{1+\rho}
        \right)}{\sqrt{1-\rho^2}}~.
    \end{equation}
    Applying the first identity with $x=t/v$ and replacing $\lambda$ by
    $v\lambda$ gives:
    \begin{equation*}
        e^{\lambda t}
        =
        e^{v^2\lambda^2/2}
        \sum_{j=0}^\infty
        \frac{(v\lambda)^j}{j!}\He_j(t/v)~.
    \end{equation*}
    Substituting this expression into the definition of $M_w$ gives:
\begin{align}
    M_w(t)
    &=
    \int_{-L}^L
    e^{v^2\lambda^2/2}
    \sum_{j=0}^\infty
    \frac{(v\lambda)^j}{j!}\He_j(t/v)
    w(\lambda)d\lambda \nonumber\\
    &=
    \sum_{j=0}^\infty
    b_j\frac{\He_j(t/v)}{j!}~,
\end{align}
where
\begin{equation*}
    b_j
    :=
    \int_{-L}^L
    e^{v^2\lambda^2/2}
    (v\lambda)^j
    w(\lambda)d\lambda~.
\end{equation*}
This defines the Hermite expansion of $M_w(t)$. In $L^2(\varphi_v)$, the functions $\He_j(t/v)/\sqrt{j!}$ are orthogonal. Multiplying the Hermite expansion of $M_w$ by $\He_j(t/v)$ and integrating
against $\varphi_v$ yields:
\begin{equation*}
    \int_{\mathbb R}
    M_w(t)\He_j(t/v)\varphi_v(t)dt
    =
    \sum_{k=0}^\infty
    \frac{b_k}{k!}
    \int_{\mathbb R}
    \He_k(t/v)\He_j(t/v)\varphi_v(t)dt
    =
    b_j~.
\end{equation*}
By Bessel's inequality, $\sum_{j=0}^\infty \frac{b_j^2}{j!} \le \|M_w\|_{L^2(\varphi_v)}^2$. As usual, we may bound $|b_j| \le \|w\|_1 e^{v^2L^2/2} (vL)^j$, implying that:
\begin{align*}
    \sum_{j=0}^\infty
    \frac{b_j^2}{j!}\rho^{-j/\epsilon}
    &\le
    \|w\|_1^2 e^{v^2L^2}
    \sum_{j=0}^\infty
    \frac{(vL)^{2j}\rho^{-j/\epsilon}}{j!}
    \nonumber\\
    &=
    \|w\|_1^2
    \exp\left\{
        v^2L^2\left(1+\rho^{-1/\epsilon}\right)
    \right\}.
\end{align*}
Applying H\"older's inequality to the above with exponents $1/(1-\epsilon)$ and $1/\epsilon$ yields:
\begin{align}
    \sum_{j=0}^\infty
    \frac{b_j^2}{j!}\rho^{-j}
    &=
    \sum_{j=0}^\infty
    \left(
        \frac{b_j^2}{j!}
    \right)^{1-\epsilon}
    \left(
        \frac{b_j^2}{j!}\rho^{-j/\epsilon}
    \right)^\epsilon
    \nonumber\\
    &\le
    \left(
        \sum_{j=0}^\infty
        \frac{b_j^2}{j!}
    \right)^{1-\epsilon}
    \left(
        \sum_{j=0}^\infty
        \frac{b_j^2}{j!}\rho^{-j/\epsilon}
    \right)^\epsilon
    \nonumber\\
    &\lesssim
    \|w\|_1^{2\epsilon}
    \|M_w\|_{L^2(\varphi_v)}^{2(1-\epsilon)}~.
\end{align}
Finally, I apply Cauchy--Schwarz to the Hermite expansion of $M_w$.
First, rewrite each term in the expansion as:
\begin{align}
    |M_w(t)|
    &=
    \left|
    \sum_{j=0}^\infty
    \left(
        \frac{b_j}{\sqrt{j!}}\rho^{-j/2}
    \right)
    \left(
        \rho^{j/2}\frac{\He_j(t/v)}{\sqrt{j!}}
    \right)
    \right|
    \nonumber\\
    &\le
    \left(
        \sum_{j=0}^\infty
        \frac{b_j^2}{j!}\rho^{-j}
    \right)^{1/2}
    \left(
        \sum_{j=0}^\infty
        \rho^j\frac{\He_j(t/v)^2}{j!}
    \right)^{1/2},
\end{align}
where the inequality follows from Cauchy--Schwarz.

The first term is bounded by the preceding H\"older inequality. To analyze the second term, recall Mehler's inequality as written in \eqref{eq:gen_function_and_mehler}. Combining yields:
\begin{equation*}
    |M_w(t)|
    \lesssim
    \|w\|_1^\epsilon
    \|M_w\|_{L^2(\varphi_v)}^{1-\epsilon}
    \exp\left\{
        \frac{\rho t^2}{2v^2(1+\rho)}
    \right\}~,
\end{equation*}
thus concluding the proof.
\end{proof}

\begin{lem}
\label{lem:main_envelope}
Let assumptions \ref{assu:smoothness} and \ref{assu:modulus_binding} hold, and fix $D> 0$ and $\eta > 0$. There exists $n_0 < \infty$ sufficiently large and a constant $C_\eta>0$ depending only on $\eta$ such that for any $n\ge n_0$:
\begin{equation}
\label{eq:main_envelope_main}
\sup_{H\in\calG}\sup_{\psi \in \Psi_n(H)} \|\psi\|_{L^\infty(D_{\tau_n})} \le C_\eta n^{1/(4B+2) + \eta}~.
\end{equation}
Moreover, for any $H,G\in\calG$ and $\psi \in \Psi_{n}^{tr}(H)$, $|\psi-P_G\psi| \le 2C_\eta n^{1/(4B+2) + \eta}$. For $j=2,3,4$:
\begin{equation}
\label{eq:main_envelope_moments}
    \sup_{G,H\in\calG}\sup_{\psi \in \Psi_n^{tr}(H)} P_G|\psi - P_G\psi|^j \le \frac{C^2}{c^2}\left(
C_\eta n^{1/(4B+2) + \eta}
    \right)^{j-2}~.
\end{equation}
\end{lem}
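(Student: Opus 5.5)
The plan is to interpolate between two facts about each $\psi\in\Psi_n(H)$: a weak norm that is pinned down by the normalization $P_H\psi^2=1$, and a strong norm that may be large, namely the H\"older bound of lemma \ref{lem:key_smoothness}, $\|\psi\|_{C^B(D_{\tau})}\le C_1\sqrt n\exp(C_2\tau)$. The exponent $1/(4B+2)=\tfrac12\cdot\tfrac1{2B+1}$ is exactly what a one-dimensional Gagliardo--Nirenberg interpolation gives between $L^2$ and $C^B$ when the $C^B$ norm is of order $\sqrt n$. Any factor of order $\exp(C\tau_n)$ or $\exp(\rho\tau_n^2)$ will be absorbed into $n^{\eta}$.

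\textbf{Step 1: pointwise control in $t$ for fixed $s$.} Write $w_s(\lambda)$ for the rescaled version of $g_{G^+}(\lambda s^2,s)-g_{G^-}(\lambda s^2,s)$ multiplied by $e^{-\lambda^2 s^2/2}$. This function is supported on $[-L,L]$ with $L=M_\theta/\underline s^2$, and it satisfies
\begin{equation*}
\Tilde f_{G^+}(t,s)-\Tilde f_{G^-}(t,s)=M_{w_s}(t)
\end{equation*}
in the notation of lemma \ref{lem:laplace_fact}. By the boundedness of $g_G$ in assumption \ref{assu:smoothness}, $\|w_s\|_1\lesssim 1$. Since $f_H(t,s)\asymp \varphi_s(t)\Tilde f_H(t,s)$ and $\Tilde f_H$ grows at most like $e^{L|t|}$, a weighted $L^2(\varphi_v)$ norm of $M_{w_s}$, with $v$ slightly above $s$, is controlled by the slice quantity
\begin{equation*}
a(s)^2:=\int (f_{G^+}-f_{G^-})^2(t,s)/f_H(t,s)\,dt ,
\end{equation*}
up to harmless constants. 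Lemma \ref{lem:laplace_fact} then gives
\begin{equation*}
|M_{w_s}(t)|\lesssim a(s)^{1-\epsilon}\exp\{\rho t^2/(2v^2(1+\rho))\}.
\end{equation*}
Dividing by $\Tilde f_H\ge c^2e^{-C\tau}$ from \eqref{eq:inv_f_H_lower_bound} and multiplying by $\sqrt n/u$, we obtain on $D_{\tau_n}$
\begin{equation*}
|\psi(t,s)|\lesssim \bigl(\sqrt n\,a(s)\bigr)^{1-\epsilon} n^{\epsilon/2} e^{C\tau_n+\rho\tau_n^2}.
\end{equation*}
With $\tau_n=D\sqrt{\log n}$, taking $\rho$ and $\epsilon$ small makes the last two factors $n^{\eta/2}$.

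\textbf{Step 2: the $s$ direction, which is the main obstacle.} The modulus constraint of assumption \ref{assu:modulus_binding} only controls $\int_{\calS}n\,a(s)^2\,ds=u^2\le\bar\delta^2$, that is, an $L^2(ds)$ bound on $h(s):=\sqrt n\,a(s)$. It gives no bound on $\sup_s h(s)$. To pass to a sup, I would apply a one-dimensional interpolation inequality on the interval $\calS$:
\begin{equation*}
\|h\|_\infty\lesssim \|h\|_{L^2}^{2B/(2B+1)}\|h\|_{C^B}^{1/(2B+1)}+\|h\|_{L^2}.
\end{equation*}
One could apply this to $h$ directly, or, more safely, to $s\mapsto\psi(t,s)\varphi$-weighted slices, where the $C^B$ norm in $s$ is inherited from lemma \ref{lem:key_smoothness} and the Hermite-type structure. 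This gives $\|h\|_\infty\lesssim (\sqrt n e^{C\tau_n})^{1/(2B+1)}=n^{1/(4B+2)+o(1)}$. I expect the delicate point to be ensuring that the smoothness in $s$ of the slice $L^2$ quantity, or of the pointwise Laplace bound, is really of order $\sqrt n\,n^{o(1)}$. If applying interpolation to $a(s)$ is awkward, my fallback is to interpolate $\psi$ itself along $s$ against a local $L^2(ds)$ average obtained from Step 1. Combining this with Step 1, with $\epsilon$ small, gives \eqref{eq:main_envelope_main}.

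\textbf{Step 3: centering and moments.} By Cauchy--Schwarz and \eqref{eq:density_fraction},
\begin{equation*}
|P_G\psi|\le (P_G\psi^2)^{1/2}\le \bigl((C^2/c^2)P_H\psi^2\bigr)^{1/2}\lesssim 1,
\end{equation*}
and the same holds for truncated $\psi$. A truncated $\psi$ vanishes off $D_{\tau_n}$, so $|\psi-P_G\psi|\le 2C_\eta n^{1/(4B+2)+\eta}$ after enlarging $C_\eta$. For $j=2,3,4$ we then write
\begin{equation*}
P_G|\psi-P_G\psi|^j\le \|\psi-P_G\psi\|_\infty^{j-2}\,P_G(\psi-P_G\psi)^2\le \|\psi-P_G\psi\|_\infty^{j-2}\,P_G\psi^2,
\end{equation*}
and $P_G\psi^2\le (C^2/c^2)P_H\psi^2\le C^2/c^2$ by the density ratio bound. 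This yields \eqref{eq:main_envelope_moments}.
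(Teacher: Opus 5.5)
Your overall strategy, a Laplace/Hermite bound in $t$ combined with a one-dimensional $L^2$--$C^B$ interpolation in $s$, is the right one. However, Step~1 has a sign error, and Step~2 only goes through via your fallback.

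\emph{Step~1.} The comparison $\|M_{w_s}\|_{L^2(\varphi_v)}\lesssim a(s)$ requires $\varphi_v(t)\Tilde f_H(t,s)/\varphi_s(t)\lesssim 1$. Since $\Tilde f_H$ can grow like $e^{L|t|}$, you need $v$ strictly \emph{below} $\underline s$; the paper takes $v=\underline s/2$. With $v$ slightly above $s$, as you wrote, this ratio grows like $e^{ct^2}$.

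\emph{Step~2.} The primary option fails. The function $h=\sqrt n\,a$ is the square root of a smooth function and need not be differentiable where $a$ vanishes, so no $C^B$ bound of order $\sqrt n\,n^{o(1)}$ on $h$ is available. Working with $h^2$ instead, whose $C^B$ norm is of order $n$, and interpolating between $L^1$ and $C^B$ only gives $\|h\|_\infty\lesssim n^{1/(2B+2)}$, which misses the target exponent. The fallback does close the argument, and you should commit to it. For $\epsilon,\rho$ small and $n$ large, Step~1 and H\"older give, uniformly in $|t|\le\tau_n$,
\[
\int_{\calS}\psi(t,s)^2\,ds\lesssim n^{\eta/2}\int_{\calS}h(s)^{2(1-\epsilon)}\,ds\lesssim n^{\eta/2}\Bigl(\int_{\calS}h(s)^2\,ds\Bigr)^{1-\epsilon}=n^{\eta/2}u^{2(1-\epsilon)}\lesssim n^{\eta/2}.
\]
Then apply $|\psi(t,s)|\lesssim \ell^{-1/2}\|\psi(t,\cdot)\|_{L^2(\calS)}+\ell^{B}\|\psi\|_{C^B(D_{\tau_n})}$ together with lemma~\ref{lem:key_smoothness}. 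Choosing $\ell=n^{-1/(2B+1)}$ yields $|\psi|\lesssim n^{1/(4B+2)+\eta/4}+n^{1/(4B+2)}e^{C_2\tau_n}\lesssim n^{1/(4B+2)+\eta}$.

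\emph{Comparison with the paper.} With this fix, your proof orders the steps differently from the paper. The paper proceeds as follows:
\begin{enumerate}
\item It turns $P_H\psi^2=1$ into a Gaussian-weighted $L^2$ bound on $\psi\Tilde f_H$, averaged over $(t,s)$.
\item It interpolates in $s$ at each fixed $t$, using the $C^B(\calS)$ bound $\sqrt n\,e^{C_0|t|}$ on the numerator $\psi\Tilde f_H(t,\cdot)$.
\item It squares and integrates in $t$ to get $\sup_s\|\psi(\cdot,s)\Tilde f_H(\cdot,s)\|_{L^2(\varphi_{\underline s/2})}\lesssim n^{1/(4B+2)}$.
\item Only then does it apply lemma~\ref{lem:laplace_fact}, once, with $\|w\|_1\lesssim\sqrt n$ at a cost of $n^{\epsilon/2}$, before dividing by $\Tilde f_H$.
\end{enumerate}
You instead apply lemma~\ref{lem:laplace_fact} slice by slice, turning the modulus constraint into an $L^2(ds)$ bound uniform in $t$, and then interpolate $\psi$ itself. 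Your route reuses the already-established lemma~\ref{lem:key_smoothness} and never integrates the interpolation inequality against the Gaussian weight. The paper's route needs smoothness only of the numerator and a single application of the Laplace bound. Both incur the same $n^{\epsilon/2+o(1)}$-type loss, which is absorbed into $n^{\eta}$. Your Step~3 coincides with the paper's, and your Cauchy--Schwarz bound on $|P_G\psi|$ is slightly sharper than the paper's $2\|\psi\|_\infty$.
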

\begin{proof}
    Fix $H\in\calG$ and $\psi \in \Psi_n(H)$. Recall that:
    \begin{equation*}
        \psi(t,s)\Tilde{f}_H(t,s) = \frac{\sqrt{n}}{u}\left\{
\Tilde{f}_{G^+}(t,s) - \Tilde{f}_{G^-}(t,s)
        \right\}~.
    \end{equation*}
    I first bound this product using the preceding lemma \ref{lem:laplace_fact}. The bound for $\Tilde{f}_H^{-1}$ obtained in the proof of lemma \ref{lem:key_smoothness} will provide the desired bound for $\psi$. Throughout the proof, implicit constants are independent of $G^+$, $G^-$, $H$, $u$, and $n$.

    I first consider the problem of obtaining bounds for the product $\psi(t,s)\Tilde{f}_H(t,s)$. I proceed in two steps. First, I obtain bounds on the $L^2(\varphi_{\underline{s}/2})$-norm of $\psi(t,s)\Tilde{f}_H(t,s)$. I apply the Gagliardo-Nirenberg inequalities (see e.g. \citealt{Nirenberg1959}) to use the smoothness assumptions of assumption \ref{assu:smoothness} to translate this $L^2(\varphi_{\underline{s}/2})$ bound into a uniform bound. 

    Towards the first objective, recall the normalization $P_H\psi^2$. For $s\in\calS$, notice that:
    \begin{align}
        \frac{\varphi_{\bar{s}/2}(t) \Tilde{f}_H(t,s)}{\varphi_{s}(t)} &= \frac{2s}{\underline{s}}\Tilde{f}_H(t,s)\exp\left(
\frac{-2t^2}{\underline{s}^2} + \frac{t^2}{2s^2}
        \right)\nonumber\\
        & \le \frac{2s}{\underline{s}} m_H(s)\exp\left(
\frac{-2t^2}{\underline{s}^2} + \frac{t^2}{2s^2} + \frac{M_\theta|t|}{\underline{s}^2}
        \right)\nonumber\\
        & \lesssim \exp\left(
\frac{-3t^2}{2\underline{s}^2} + \frac{M_\theta|t|}{\underline{s}^2}
        \right)\lesssim 1 \label{eq:tilde_f_CoM}~.
    \end{align}
    The first inequality holds since $K_\theta(t,s) \le \frac{M_\theta|t|}{\underline{s}^2}$. The second inequality holds since $\underline{s}<s<\bar{s}$ and since $c<m_H(s) <C$, per assumption \ref{assu:smoothness}. The final inequality holds by completing the square.\footnote{To see this, notice that $-\frac{3t^2}{2s^2}+\frac{M_\theta |t|}{s^2}
=
-\frac{3}{2s^2}\left(|t|-\frac{M_\theta}{3}\right)^2
+\frac{M_\theta^2}{6s^2}
\le
\frac{M_\theta^2}{6\underline{s}^2}$.} As a result, it follows that:
\begin{align}
\int_{\mathcal S}\int_{\mathbb R}
\left|\psi(t,s)\widetilde f_H(t,s)\right|^2
\varphi_{\underline s/2}(t)\,dt\,ds
&\lesssim
\int_{\mathcal S}\int_{\mathbb R}
\psi(t,s)^2 \widetilde f_H(t,s)\varphi_s(t)\,dt\,ds\nonumber
\\
&=\int_{\mathcal S}\int_{\mathbb R}
\psi(t,s)^2 f_H(t,s)\,dt\,ds\nonumber\\
&= P_H\psi^2
= 1~, \label{eq:avg_L2_bound_on_prod}
\end{align}
where the inequality holds by the preceding bound \eqref{eq:tilde_f_CoM}. 

This controls the average $L^2(\varphi_{\underline{s}/2})$ norm over $s$. Our goal is to obtain a bound on the $L^2(\varphi_{\underline{s}/2})$ norm, uniformly over $s$. I achieve this by leveraging the smoothness assumption \ref{assu:smoothness}. Applying \eqref{eq:D_A_f_G_holder_bounds} with $\tau = 1\vee|\tau|$ yields:
\begin{align}
    \|\psi(t,\cdot)\Tilde{f}_H(t,\cdot)\|_{C^B(\calS)} &\le \frac{\sqrt{n}}{u}\left\{
\|\Tilde{f}_{G^+}(t,\cdot)\|_{C^B(\calS)} + \|\Tilde{f}_{G^-}(t,\cdot)\|_{C^B(\calS)}
    \right\}\nonumber\\
    &\lesssim \sqrt{n}\exp(C_0|t|)\label{eq:C_B_bound_num_envelope}~,
\end{align}
where the final inequality uses $u\ge \underline{\delta}$ under assumption \ref{assu:modulus_binding}. An extension of the interpolation lemma in \citealt{Nirenberg1959} yields, for any $0<\ell<\bar{s}- \underline{s} $:
\begin{align}
    |\psi(t,s)\Tilde{f}_H(t,s)| &\lesssim \ell^{-1/2}\|\psi(t,\cdot)\Tilde{f}_H(t,\cdot)\|_{L^2(\calS)} + \ell^B\|\psi(t,\cdot)\Tilde{f}_H(t,\cdot)\|_{C^B(\calS)}\nonumber\\
    & \lesssim \ell^{-1/2}\|\psi(t,\cdot)\Tilde{f}_H(t,\cdot)\|_{L^2(\calS)} + \ell^B\sqrt{n}\exp(C_0|t|)\label{eq:Nirenberg_extension}~,
\end{align}
where the second inequality holds by \eqref{eq:C_B_bound_num_envelope}. As a result:
\begin{align*}
    \|\psi(\cdot,s)\Tilde{f}_H(\cdot,s)\|_{L^2(\varphi_{\underline{s}/2})} &= \left(\int_\bbR (\psi(t,s)\Tilde{f}_H(t,s))^2\varphi_{\underline{s}/2}(t)dt\right)^{1/2}\\
    & \lesssim \left(\int_\bbR\left\{
\ell^{-1/2}\|\psi(t,\cdot)\Tilde{f}_H(t,\cdot)\|_{L^2(\calS)} + \ell^B\sqrt{n}\exp(C_0|t|)
    \right\}^2 \varphi_{\underline{s}/2}(t)dt\right)^{1/2}\\
    & \le \ell^{-1/2}\int_{\bbR}\|\psi(t,\cdot)\Tilde{f}_H(t,\cdot)\|_{L^2(\calS)}^2\varphi_{\underline{s}/2}(t)dt +\ell^B\sqrt{n}\int_{\bbR} \exp(2C_0|t|)\varphi_{\underline{s}/2}(t)dt\\
    & \lesssim \ell^{-1/2} + \sqrt{n}\ell^B~.
\end{align*}
The first inequality holds by \eqref{eq:Nirenberg_extension}. The second inequality follows by the triangle inequality. The final inequality holds by Fubini's theorem and \eqref{eq:avg_L2_bound_on_prod}. The constants implicit above do not depend on $s$. Choosing $\ell = n^{-1/(2B+1)}$, which is less than $\bar{s}-\underline{s}$ for sufficiently large $n$, gives:
\begin{equation}
    \label{eq:envelope_unif_in_s_bounds}
    \sup_{s\in\calS}\|\psi(\cdot,s)\Tilde{f}_H(\cdot,s)\|_{L^2(\varphi_{\underline{s}/2})}\lesssim n^{1/(4B+2)}~.
\end{equation}

To complete the proof, it remains to translate the uniform bounds \eqref{eq:envelope_unif_in_s_bounds}. To this end, I show how the product $\psi\Tilde{f}_H$ can be written a Laplace transform. I then apply the general bounds developed in lemma \ref{lem:laplace_fact} and \eqref{eq:envelope_unif_in_s_bounds} to obtain the desired bounds uniform in $t$. To this end, let $\lambda := \theta/s^2$, and consider the following change of variables:
\begin{align*}
\psi(t,s)\Tilde{f}_H(t,s)
&=
\frac{\sqrt n}{u}
\int_{\Theta}
e^{t\theta/s^2-\theta^2/(2s^2)}
\left\{
g_{G^+}(\theta,s)-g_{G^-}(\theta,s)
\right\}
\,d\theta
\\
&=
\int_{-M_\theta/s^2}^{M_\theta/s^2}
e^{\lambda t}\left\{
\frac{\sqrt n\,s^2}{u}
e^{-s^2\lambda^2/2}
\left\{
g_{G^+}(s^2\lambda,s)-g_{G^-}(s^2\lambda,s)
\right\}\right\}
\,d\lambda~.
\end{align*}
Therefore, $\psi(t,s)\Tilde{f}_H(t,s)$ can be viewed as a Laplace transform of a particular function. Before applying lemma \ref{lem:laplace_fact}, it remains to bound the $L^1$ norm of this function:
\begin{align*}
\frac{\sqrt n}{u}
\int_{\Theta}
e^{-\theta^2/(2s^2)}
\left|
g_{G^+}(\theta,s)-g_{G^-}(\theta,s)
\right|
\,d\theta
&\le
\frac{\sqrt n}{u}
\int_{\Theta}
e^{-\theta^2/(2s^2)}
\left\{
g_{G^+}(\theta,s)+g_{G^-}(\theta,s)
\right\}
\,d\theta
\\
&\qquad=
\frac{\sqrt n}{u}
\left\{
m_{G^+}(s)+m_{G^-}(s)
\right\}
\\
&\qquad\le
\frac{2C^2}{\underline\delta}\sqrt n~.
\end{align*}
Therefore, lemma \ref{lem:laplace_fact} and \eqref{eq:envelope_unif_in_s_bounds}, for fixed $0<\epsilon,\rho<1$:
\begin{align*}
|\psi(t,s)\widetilde f_H(t,s)|
&\lesssim
(\sqrt n)^\epsilon
\left(n^{1/(4B+2)}\right)^{1-\epsilon}
\exp\left\{
\frac{2\rho t^2}
{\underline s^2(1+\rho)}
\right\}
\\
&=
n^{(1-\epsilon)/(4B+2)+\epsilon/2}
\exp\left\{
\frac{2\rho t^2}
{\underline s^2(1+\rho)}
\right\}.
\end{align*}
Since $|t| \le \tau_n = D\sqrt{\log n}$ on $D_{\tau_n}$, we can further bound the above as follows:
\begin{equation*}
    |\psi(t,s)\widetilde f_H(t,s)|
\lesssim
n^{(1-\epsilon)/(4B+2)+\epsilon/2+2\rho D^2/[\underline{s}^2(1+\rho)]}
\end{equation*}
Choosing $\epsilon$ and $\rho$ sufficiently small such that $\frac{\epsilon}{2}
+
\frac{2\rho D^2}{\underline{s}^2(1+\rho)}
<
\frac{\eta}{2}$ yields:
\begin{equation*}
    |\psi(t,s)\widetilde f_H(t,s)|
\lesssim
n^{(1-\epsilon)/(4B+2)+\epsilon/2+2\rho D^2/[\underline{s}^2(1+\rho)]}
\le
n^{1/(4B+2)+\eta/2}
\end{equation*}
By \eqref{eq:inv_f_H_lower_bound}, $\|\Tilde{f}_H^{-1}\|_{L^\infty(D_{\tau_n})}\lesssim \exp(M_\theta\tau_n/\underline{s}^2)$. Since $\frac{M_\theta\tau_n}{\underline{s}^2\sqrt{\log n}} = o(1)$, $\exp(M_\theta\tau_n/\underline{s}^2) \le n^{\eta/2}$ for sufficiently large $n$. Multiplying these bounds for $\psi \Tilde{f}_H$ and $\Tilde{f}_H^{-1}$ yields \eqref{eq:main_envelope_main}. 

It remains to prove \eqref{eq:main_envelope_moments}, which is a corollary to \eqref{eq:main_envelope_main}. Truncation can only reduce the $L^2(P_H)$, meaning that, for $\psi$ in $\Psi_n^{tr}(H)$ and thus $P_H\psi^2 \le 1$. By the triangle inequality $\|\psi -P_G\psi\|_{L^\infty(D_{\tau_n})} \le 2\|\psi\|_{L^\infty(D_{\tau_n})} \le 2C_\eta n^{1/(4B+2)+\eta}$. \eqref{eq:density_fraction} gives:
\begin{equation*}
    P_G(\psi - P_G\psi)^2 \le P_G\psi^2 \le \frac{C^2}{c^2}P_H\psi^2 \le \frac{C^2}{c^2}~.
\end{equation*}
For $j=3,4$, multiply this bound by the bound $\|\psi -P_G\psi\|_{L^\infty(D_{\tau_n})}$, thus obtaining \eqref{eq:main_envelope_moments}.
\end{proof}
\begin{rem}
    \label{rem:recentered_bounding}
    Since $\|u-P_Gu\|_{L^\infty(D_{\tau_n})} \le 2 \|u\|_{L^\infty(D_{\tau_n})}$, we can choose $C_\eta$ such that, without loss of generality:
    \begin{equation*}
        \sup_{H,G\in\calG}\sup_{\psi \in \Psi_n(H)} \|\psi - P_G\psi\|_{L^\infty(D_{\tau_n})} \le C_\eta n^{1/(4B+2) + \eta}~.
    \end{equation*}
    Consequently, the bounds established in the preceding lemma \ref{lem:main_envelope} apply to the class containing centered elements of $\Psi_n(H)$, without loss of generality.
\end{rem}

\begin{lem}
    \label{lem:variance_bounds}
    Let assumption \ref{assu:smoothness} hold. For every $n\ge 2$, any $G,H$ in $\calG$, and any $\psi$ in $\Psi_n(H)$:
    \begin{equation}
    \label{eq:variance_bounds}
        \frac{c^2}{C^2} \le \var_G(\psi) \le P_G\psi^2 \le \frac{C^2}{c^2}~.
    \end{equation}
    Fix $D>0$ such that $D^2/(8\bar{s}^2) > 1$. For any $\varepsilon> 0$, there exists $n_0<\infty$ sufficiently large such that, for any $n\ge n_0$ and for any $G,H$ in $\calG$ and $\psi$ in $\Psi_n(H)$:
    \begin{equation}
    \label{eq:variance_tail_bounds}
        (1-\varepsilon)\frac{c^2}{C^2}\le \var_G (\psi\mathbb{1}(|t|\le\tau_n)) \le \frac{C^2}{c^2}~.
    \end{equation}
\end{lem}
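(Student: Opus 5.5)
The plan is to compare moments under $P_G$ and $P_H$ using the density ratio bound \eqref{eq:density_fraction}, together with the normalization $P_H\psi=0$ and $P_H\psi^2=1$, which holds for every $\psi\in\Psi_n(H)$.

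\textbf{Untruncated bounds \eqref{eq:variance_bounds}.} For the upper bound, write $\var_G(\psi)\le P_G\psi^2=\int\psi^2 (f_G/f_H) f_H$. By \eqref{eq:density_fraction} this is at most $\frac{C^2}{c^2}P_H\psi^2=\frac{C^2}{c^2}$.

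For the lower bound, use the variational characterization $\var_G(\psi)=\inf_{a\in\bbR}P_G(\psi-a)^2$. For each fixed $a$, \eqref{eq:density_fraction} gives
\[
P_G(\psi-a)^2\ge \tfrac{c^2}{C^2}P_H(\psi-a)^2\ge \tfrac{c^2}{C^2}\var_H(\psi).
\]
Since $P_H\psi=0$ and $P_H\psi^2=1$, we have $\var_H(\psi)=1$. Taking the infimum over $a$ yields $\var_G(\psi)\ge c^2/C^2$. All constants are uniform in $n$, $G$, $H$ and $\psi$.

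\textbf{Truncated bounds \eqref{eq:variance_tail_bounds}.} Write $\psi^{tr}:=\psi\mathbb{1}(|t|\le\tau_n)$. The upper bound is immediate: $\var_G(\psi^{tr})\le P_G(\psi^{tr})^2\le P_G\psi^2\le C^2/c^2$.

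For the lower bound, the same variational argument gives $\var_G(\psi^{tr})\ge \frac{c^2}{C^2}\var_H(\psi^{tr})$. It then suffices to show $\var_H(\psi^{tr})\ge 1-\varepsilon$ for large $n$, uniformly. Set $r_n:=P_H\psi^2\mathbb{1}(|t|>\tau_n)$. Then $P_H(\psi^{tr})^2=1-r_n$. Because $P_H\psi=0$,
\[
|P_H\psi^{tr}|=|P_H\psi\mathbb{1}(|t|>\tau_n)|\le r_n^{1/2}.
\]
Hence $\var_H(\psi^{tr})\ge 1-2r_n$, and it remains to show $\sup_{H}\sup_{\psi\in\Psi_n(H)} r_n\to0$.

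\textbf{Main step: the tail term $r_n$.} By the global bound \eqref{eq:global_psi_bounds}, $|\psi|\le K\sqrt n$ with $K=2C^2/(\underline\delta c^2)$. Therefore
\[
r_n\le K^2 n\,P_H(|\hat\theta|>\tau_n).
\]
Under $H$, $\hat\theta=\theta+\sigma\varepsilon$ with $|\theta|\le M_\theta$, $\sigma\le\bar s$ and $\varepsilon$ standard normal. A Gaussian tail bound then gives
\[
P_H(|\hat\theta|>\tau_n)\le 2\exp\!\big(-(\tau_n-M_\theta)^2/(2\bar s^2)\big).
\]
Since $\tau_n=D\sqrt{\log n}$, we have $(\tau_n-M_\theta)^2\ge \tau_n^2/4$ for large $n$. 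The tail is therefore at most $2n^{-D^2/(8\bar s^2)}$, which is $o(n^{-1})$ because $D^2/(8\bar s^2)>1$. This yields $r_n\le 2K^2 n^{1-D^2/(8\bar s^2)}\to0$, uniformly over $H$ and $\psi$.

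Choosing $n_0$ so that $2r_n\le\varepsilon$ for all $n\ge n_0$ completes the argument. The only delicate point is this tail control: the crude $\sqrt n$ envelope must be beaten by the Gaussian tail, which is exactly why the lemma imposes the condition on $D$.
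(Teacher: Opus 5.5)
Your proof is correct and follows essentially the paper's route: the density-ratio bound \eqref{eq:density_fraction} combined with $\var_G(\psi)=\inf_a P_G(\psi-a)^2$ for the untruncated bounds, and the crude $\sqrt{n}$ envelope \eqref{eq:global_psi_bounds} times a Gaussian tail, using $D^2/(8\bar{s}^2)>1$, to kill the truncation error. The only minor difference is in the truncated lower bound: you pass to $P_H$ and use $P_H\psi=0$, $P_H\psi^2=1$ with Cauchy--Schwarz to get $\var_H(\psi\mathbb{1}(|t|\le\tau_n))\ge 1-2r_n$, whereas the paper stays under $P_G$ and combines the untruncated lower bound with the reverse triangle inequality $\bigl|\sqrt{\var_G(\psi)}-\sqrt{\var_G(\psi\mathbb{1}(|t|\le\tau_n))}\bigr|\le\|\psi\mathbb{1}(|t|>\tau_n)\|_{L^2(P_G)}$.
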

\begin{proof}
    The statement \eqref{eq:variance_bounds} holds by \eqref{eq:density_fraction}, and the fact that $\var_G(\psi) = \inf_a P_G(\psi-a)^2$. 

    Proving \eqref{eq:variance_tail_bounds} is more involved. Since $\sigma \ge \underline{s}$ and since $|\theta| \le M_\theta$, it holds that for $x \ge 2M_\theta$:
    \begin{equation*}
        \sup_{G\in\calG}P_G(|t|>x) \le 2\exp\left(
-\frac{x^2}{8\bar{s}^2}
        \right)~.
    \end{equation*}
    By \eqref{eq:global_psi_bounds}, it holds that:
    \begin{align}
        \sup_{G,H\in\calG}\sup_{\psi\in\Psi_n(H)} P_G(\psi^2\mathbb{1}(|t|>\tau_n)) &\lesssim n\sup_{G\in\calG}P_G(|t|>\tau_n)\nonumber\\
        & \lesssim n \exp\left(
-\frac{\tau_n^2}{8\bar{s}^2}
        \right)\nonumber\\
        & = n^{1-D^2/(8\underline{s}^2)} = o(1)\label{eq:L2_bounds_tail_psi}
    \end{align}
    The final equality holds by the assumption that $D^2/(8\bar{s}^2) > 1$. The triangle inequality for the $L^2(P_G)$ norm yields:
    \begin{align*}
        \left|
\sqrt{\var_G(\psi)}
-
\sqrt{\var_G\!\left(\psi\mathbbm{1}\{|t|\le\tau_n\}\right)}
\right|&\le
\left\|
(\psi-P_G\psi)
-
\left(
\psi\mathbbm{1}\{|t|\le\tau_n\}
-
P_G\!\left[\psi\mathbbm{1}\{|t|\le\tau_n\}\right]
\right)
\right\|_{L^2(P_G)}
\\
&=
\sqrt{
\var_G\!\left(
\psi\mathbbm{1}\{|t|>\tau_n\}
\right)
}
\\
&\le
\left\|
\psi\mathbbm{1}\{|t|>\tau_n\}
\right\|_{L^2(P_G)}~.
    \end{align*}
    By \eqref{eq:L2_bounds_tail_psi}, this final term is $o(1)$ uniformly in $H,G$ in $\calG$ and in $\psi$ in $\Psi_n(H)$, yielding the lower bound in \eqref{eq:variance_tail_bounds}. To prove the upper bound, notice that:
    \begin{equation*}
        \var_G\!\left(\psi\mathbbm{1}\{|t|\le\tau_n\}\right)
\le
P_G\!\left[\psi^2\mathbbm{1}\{|t|\le\tau_n\}\right]
\le
P_G\psi^2
\le
\frac{C^2}{c^2}~,
    \end{equation*}
    thus concluding the proof.
\end{proof}

\subsection{Gaussian and Multiplier Bootstrap Coupling}

In this section, I use the results developed in the previous section \ref{app:Holder_smoothness} to obtain a coupling of the statistic $T_n(\Gamma(G))$ with the supremum of a Gaussian process by applying results due to \citealt{CCK_2014}. Extending arguments due to \citealt{CCK_2016} yields a coupling of the multiplier statistic $T_n^\xi(\Gamma(G))$ with the same Gaussian process supremum.

Before proceeding, first define the empirical and multiplier processes by:
\begin{equation*}
    \bbG_{n,G} v:= \sqrt{n}(P_n - P_G)v~,\quad \bbG_n^\xi:= \frac{1}{\sqrt{n}}\sum_{i=1}^n\xi_i (v(Z_i)-P_nv)~.
\end{equation*}
In what follows, I use the centered class:
\begin{equation*}
    \calF_{n,G} := \{u-P_Gu: u = \psi_{n,\gamma}\mathbb{1}(|t|\le\tau_n),~\gamma\in\Gamma(G)\}~.
\end{equation*}
Since subtracting $P_Gu$ does not change $\bbG_{n,G}u$ of $\bbG_n^\xi u$, the truncated versions of the statistics $T_n(\Gamma(G))$ and $T_n^\xi(\Gamma(G))$ are as follows:
\begin{equation*}
    T_{n,G}^{tr} := \sup_{f\in\calF}|\bbG_n f| \equiv \|\bbG_n\|_\calF~,\quad T_{n,G}^{\xi,tr} := \sup_{f\in\calF} |\bbG_n^\xi f|\equiv \|\bbG_n^\xi\|_\calF~.
\end{equation*}
Notice that $\Gamma(G)$, and thus $\calF_{n,G}$, is nonempty for any $G$ in $\calG$.

Fix $\vartheta \in (2/B , 2/9)$ as in lemma \ref{lem:psi_tr_entropy}. Furthermore, choose $\eta$ in lemma \ref{lem:main_envelope} such that $1/(4B+2) + \eta \le \vartheta/4$. Such $\eta$ exists, since $1/(4B+2) < 1/(2B) < \vartheta/4$. Lemma \ref{lem:main_envelope} yields an envelope $F_n := C_\eta n^{1/(4B+2)+\eta}$ for $\calF$. Lemma \ref{lem:variance_bounds} and \ref{lem:psi_tr_entropy} yield, respectively:
\begin{equation*}
    \sup_{f\in\calF} P_Gf^2 \le \frac{C^2}{c^2}~,\quad \sup_{G\in\calG}\sup_Q\log N(\varepsilon,\calF_{n,G},L^2(Q))\lesssim n^{\vartheta/2}\varepsilon^{-\vartheta}~.
\end{equation*}
Lemma \ref{lem:CCK_meas_verification} verifies the pointwise measurability of $\calF_{n,G}$. In what follows, we will hold the pilot fixed, and suppress dependence.

\subsubsection*{Measurability}

In order to demonstrate that $\calF_{n,G}$ is pointwise measurable in the sense of assumption A1 of \citealt{CCK_2014}, I demonstrate that $\Psi_n$ is separable with respect to a choice of metric $d_{loc}$ which metrizes the topology of uniform convergence on compacta. To this end, define $K_k := [-k,k]\times \mathcal{S}$ for $k\in\mathbb{N}$. For any two function $h_1$ and $h_2$ from $\mathbb{R}\times\mathcal{S}\mapsto\mathbb{R}$, define:
\begin{equation*}
    d_{loc}(h_1,h_2) := \sum_{k=1}^\infty 2^{-k}(1\wedge \|h_1 - h_2\|_{\infty, K_k})
\end{equation*}
\begin{lem}
\label{lem:CCK_meas_verification}
    For any $n$, $G$, and pilot estimate $\hat{G}_n$, the class $\Psi_n(H)$ is separable with respect to the metric $d_{loc}$, and $\calF_{n,G}$ is pointwise measurable in the sense of assumption A1 of \citealt{CCK_2014}.
\end{lem}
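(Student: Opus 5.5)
The plan is to show that a countable subclass $\Psi_n^\circ(H)\subseteq\Psi_n(H)$ is $d_{loc}$-dense, and then to transfer density under $d_{loc}$ into the pointwise convergence required by assumption A1 of \citealt{CCK_2014}. That assumption asks for a countable subclass $\calF^\circ\subseteq\calF_{n,G}$ such that every $f\in\calF_{n,G}$ is the pointwise limit of some sequence in $\calF^\circ$.

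\textbf{Step 1: separability.} Fix $H$. By lemma \ref{lem:key_smoothness}, for each $k$ every $\psi\in\Psi_n(H)$ satisfies $\|\psi\|_{C^B(K_k)}\le C_1\sqrt n\exp(C_2 k)$. In particular, the restrictions of $\Psi_n(H)$ to $K_k$ are uniformly bounded and equicontinuous, because the first derivatives are bounded on $K_k$ since $B>1$.

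By Arzel\`a--Ascoli, the restriction of $\Psi_n(H)$ to each $K_k$ is relatively compact in $C(K_k)$ under the sup norm. Since $d_{loc}$ is the standard metric for uniform convergence on the exhausting compacta $K_k$, a diagonal argument shows that $\Psi_n(H)$ is totally bounded under $d_{loc}$. A totally bounded metric space is separable, and separability passes to subsets, so there is a countable $d_{loc}$-dense subset $\Psi_n^\circ(H)\subseteq\Psi_n(H)$.

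The same argument applies to the subset $\{\psi_{n,\gamma}:\gamma\in\Gamma(G)\}$ itself. This lets us choose the countable dense family inside the relevant index set.

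\textbf{Step 2: transfer to $\calF_{n,G}$.} Take $f=\psi\mathbb 1(|t|\le\tau_n)-P_G[\psi\mathbb 1(|t|\le\tau_n)]$ and choose $\psi_m$ in the countable dense family with $d_{loc}(\psi_m,\psi)\to0$.

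Convergence in $d_{loc}$ gives uniform convergence on $K_k$ for every $k$, and hence pointwise convergence on $\bbR\times\calS$. The truncated functions $\psi_m\mathbb 1(|t|\le\tau_n)$ are supported in $D_{\tau_n}\subseteq K_{\lceil\tau_n\rceil}$, so they converge to $\psi\mathbb 1(|t|\le\tau_n)$ uniformly.

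For the centering constants, the truncated functions are uniformly bounded by \eqref{eq:global_psi_bounds}. Dominated convergence, or simply the uniform convergence on $D_{\tau_n}$, then gives $P_G[\psi_m\mathbb 1]\to P_G[\psi\mathbb 1]$. Therefore $f_m\to f$ pointwise, and $\calF^\circ:=\{f_m\}$ is the required countable subclass. This is exactly pointwise measurability in the sense of A1.

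\textbf{Expected obstacle.} The main subtlety is that $\calF_{n,G}$ is indexed by the true nulls $\Gamma(G)$ rather than by all of $\Psi_n(H)$, so the approximating $\psi_m$ must themselves come from $\Gamma(G)$. I handle this by applying the separability argument directly to the subset $\{\psi_{n,\gamma}:\gamma\in\Gamma(G)\}$, which works because every subset of a separable metric space is separable. No continuity in $\gamma$ of the modulus solutions is needed. The other point to check is that lemma \ref{lem:key_smoothness} is applied with $\tau=k$ for each fixed $k$, where the bound is finite for fixed $n$. This is all that Arzel\`a--Ascoli requires, even though the bound grows with $k$.
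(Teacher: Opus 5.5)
Your proof is correct. Your Step 2 is essentially the paper's: the paper also restricts to $\{\psi_{n,\gamma}:\gamma\in\Gamma(G)\}$, notes that truncation preserves local uniform convergence, and handles the centering constants by dominated convergence via \eqref{eq:global_psi_bounds}, where you use the equally valid uniform convergence on $D_{\tau_n}$.

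The real difference is in Step 1. The paper uses lemma \ref{lem:key_smoothness} only to get continuity of each $\psi\in\Psi_n(H)$. It then invokes separability of the whole space $\mathcal{C}(\mathbb{R}\times\mathcal{S},\mathbb{R})$ under $d_{loc}$ (lemma A5.1 of \citealt{Kallenberg2021}), so $\Psi_n(H)$ inherits separability from the ambient space. You instead spend the uniform bound $\sup_{\psi}\|\psi\|_{C^B(K_k)}\le C_1\sqrt n\exp(C_2k)$ on Arzel\`a--Ascoli and obtain total boundedness of $\Psi_n(H)$ under $d_{loc}$. Your route is self-contained and proves more, namely relative compactness in the topology of local uniform convergence. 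The paper's route is shorter and needs only continuity of each element, so it would survive without any uniform control over the class.

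Two small remarks. First, no diagonal argument is needed. For $N$ with $\sum_{k>N}2^{-k}<\epsilon/2$ one has $d_{loc}(h_1,h_2)\le\|h_1-h_2\|_{\infty,K_N}+\epsilon/2$, so an $\epsilon/2$ sup-norm net on $K_N$ (chosen from restrictions of members of the class) is already an $\epsilon$-net for $d_{loc}$. Second, $\calF^\circ$ should be defined as the image of your fixed countable dense subfamily of $\{\psi_{n,\gamma}:\gamma\in\Gamma(G)\}$ under $\psi\mapsto\psi\mathbb{1}(|t|\le\tau_n)-P_G[\psi\mathbb{1}(|t|\le\tau_n)]$. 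It should not be the sequence $\{f_m\}$ attached to one particular $f$, since A1 asks for a single countable class that serves every $f$.
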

\begin{proof}
    First, notice that lemma A5.1 in \citealt{Kallenberg2021} provides that the space of continuous functions from $\mathbb{R}\times\mathcal{S}\to \mathbb{R}$, which I denote $\mathcal{C}(\mathbb{R}\times\mathcal{S},\mathbb{R})$ is separable under $d_{loc}$, which metrizes the topology of uniform convergence on compacta (See exercise 9 in \citealt{Lang1993}). By lemma \ref{lem:key_smoothness}, the class $\Psi_n(H)$ is contained in $\mathcal{C}(\mathbb{R}\times\mathcal{S},\mathbb{R})$, and is thus separable under $d_{loc}$ (see chapter 2 in \citealt{Dudley2002}). Consequently, $\{\psi_{n,\gamma}:\gamma \in \Gamma(G)\}\subseteq\Psi_n(H)$ is also separable under $d_{loc}$. Because uniform convergence of sequences of $\psi_{n,\gamma}$ on compacta implies the uniform convergence of the $\psi_{n,\gamma}\mathbb{1}(|t|\le\tau_n)$ on compacta, implying that  $\{\psi_{n,\gamma}\mathbb{1}(|t|\le\tau_n):\gamma \in \Gamma(G)\}\subseteq\Psi_n(H)$ is separable under $d_{loc}$. Finally, applying \eqref{eq:global_psi_bounds} and thus dominated convergence provides that $\calF_{n,G}$ is separable under $d_{loc}$.  Finally, uniform convergence on compacta implies pointwise convergence, concluding the proof.
\end{proof}

\begin{rem}
    The pointwise measurability of $\calF_{n,G}$ implies that the supremum statistic $\sup_{\gamma\in\Gamma(G)}\mathbb{G}_n\psi_{n,\gamma}$ is measurable, allowing me to neglect the measurability concerns discussed in \citealt{VdVandWellner1996}. I can therefore proceed using standard probabilities and expectations, rather than their outer versions, in the analysis that follows.
\end{rem}

\subsubsection*{Constructing the Couplings}

Before proceeding, I define the uniform entropy integral as considered in \citealt{CCK_2014}:
\begin{equation*}
    J_n(\delta) := \int_0^\delta \sup_Q \sqrt{1+\log N(\varepsilon C_\eta c_n,\calF_{n,G},L^2(Q))}d\varepsilon~,\quad c_n := n^{1/(4B+2)+\eta}.
\end{equation*}
A change of variables yields:
\begin{align}
    C_\eta c_n J_n(\delta)
&=
\int_0^{C_\eta c_n\delta}
\sup_Q
\sqrt{
1+\log N(\varepsilon,\mathcal F_{n,G},L^2(Q))
}\,d\varepsilon \nonumber\\
&\lesssim
n^{\vartheta/4}
\int_0^{C_\eta c_n\delta}
\varepsilon^{-\vartheta/2}\,d\varepsilon\nonumber \\
&\lesssim
n^{\vartheta/4}
(C_\eta c_n\delta)^{1-\vartheta/2} \asymp n^{\frac{1+B\vartheta}{4B+2}
+\eta\left(1-\frac{\vartheta}{2}\right)}
\delta^{1-\vartheta/2}~.
\label{eq:unif_entropy_integral_bounds}
\end{align}
Lemma 2.2 of \citealt{CCK_2014} provides a tight centered Gaussian process $\bbX_{n,G}$ on $\calF$ with the covariance function:
\begin{equation*}
    \bbE_G \bbX_{n,G}f\bbX_{n,G}g = P_Gfg~.
\end{equation*}
Allowing $Z_{n,G}^{tr}$ to be the corresponding Gaussian process supremum to $T_{n,G}^{tr}$, I take $Z_{n,G}^{tr} := \sup_{f\in\calF}|\bbX_{n,G}f|$. The following application of lemma 2.1 and theorem 2.1 of \citealt{CCK_2014} relates $Z_{n,G}^{tr}$ to $T_{n,G}^{tr}$.
\begin{lem}
    \label{lem:Gaussian_coupling}
    There exists a sequence $d_n = o(n^{-\vartheta /4})$ such that, for any $G,H$ in $\calG$, there exists a random variable $\Tilde{T}_{n,G}^{tr}$ satisfying:
    \begin{equation*}
        \Tilde{T}_{n,G}^{tr} \overset{d}{=} Z_{n,G}^{tr}~,\quad \sup_{G\in\calG}P_G\left(
|T_{n,G}^{tr} - \Tilde{T}_{n,G}^{tr}| > d_n
        \right) = o(1)~.
    \end{equation*}
\end{lem}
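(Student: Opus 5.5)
The plan is to apply theorem 2.1 of \citealt{CCK_2014} (with lemma 2.1 therein supplying the Gaussian process $\bbX_{n,G}$ and the coupled copy) to the class $\calF_{n,G}$, conditionally on the pilot $H$ via assumption \ref{assu:pilot_independence}. The hypotheses that theorem needs are already in hand. Pointwise measurability is lemma \ref{lem:CCK_meas_verification}. The envelope $F_n = C_\eta c_n$ with $c_n = n^{1/(4B+2)+\eta}$ comes from lemma \ref{lem:main_envelope} and remark \ref{rem:recentered_bounding}, and the moment bounds $P_G|f|^j \lesssim F_n^{j-2}$ for $j=2,3,4$ come from \eqref{eq:main_envelope_moments}. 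The variance bound $\sigma^2 := \sup_f P_G f^2 \le C^2/c^2$ comes from lemma \ref{lem:variance_bounds}. The uniform entropy bound \eqref{eq:subclass_centered_bound} controls the entropy integral via \eqref{eq:unif_entropy_integral_bounds}. All constants here are uniform in $G,H\in\calG$, so the resulting coupling bound will also be uniform, which is what gives the $\sup_G$ in the statement.

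Theorem 2.1 of \citealt{CCK_2014} gives, for every $\gamma_n\in(0,1)$, a coupled $\Tilde T^{tr}_{n,G}\overset{d}{=}Z^{tr}_{n,G}$ with
\begin{equation*}
P\left(|T^{tr}_{n,G}-\Tilde T^{tr}_{n,G}| > K_q\frac{b\,K_n}{\gamma_n^{1/q}n^{1/2-1/q}} + \frac{(b\sigma)^{1/2}K_n^{3/4}}{\gamma_n^{1/2}n^{1/4}} + \frac{(b\sigma^{2/3}K_n^{2/3})^{1/3}}{\gamma_n^{1/3}n^{1/6}}\right) \lesssim \gamma_n + \frac{\log n}{n}.
\end{equation*}
Here I take $q=\infty$ with $b = F_n$, which is allowed since the envelope is bounded. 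The quantity $K_n$ is the entropy/complexity term. I will bound it using the polynomial entropy $\log N \lesssim n^{\vartheta/2}\varepsilon^{-\vartheta}$. This class is not VC-type, so $K_n \lesssim v_n\log(A n)$ is not available. Instead I expect $K_n$ to be governed by $\log N(\varepsilon F_n,\ldots)$ at the smallest relevant scale $\varepsilon\asymp \sigma/F_n$. That gives $K_n \lesssim n^{\vartheta/2}(F_n/\sigma)^{\vartheta} \lesssim n^{\vartheta/2+\vartheta^2/4}$, up to polylogarithmic factors.

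Next I substitute $b = F_n \le n^{\vartheta/4}$, using the choice $1/(4B+2)+\eta\le\vartheta/4$, and this $K_n$. The aim is to check that each of the three terms is $o(n^{-\vartheta/4})$ for some $\gamma_n\to0$ slowly, say $\gamma_n = 1/\log n$. The $q=\infty$ term is roughly $n^{\vartheta/4+\vartheta/2+\vartheta^2/4-1/2}$. The other terms are roughly $n^{(\vartheta/4)/2+(3/4)(\vartheta/2+\vartheta^2/4)-1/4}$ and $n^{(\vartheta/4+(2/3)(\vartheta/2+\vartheta^2/4))/3-1/6}$. The restriction $\vartheta<2/9$ is exactly what should make the worst of these exponents, plus $\vartheta/4$, negative. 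I would verify this arithmetic carefully, since it drives the whole lemma. I would then set $d_n$ equal to the sum of the three terms times $\log n$, so that $d_n = o(n^{-\vartheta/4})$ and the probability bound $\gamma_n+\log n/n\to0$ uniformly in $G$.

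The main obstacle is the correct form of $K_n$ for a non-VC class. Theorem 2.1 of \citealt{CCK_2014} is stated in terms of an abstract $K_n$ tied to a discretization. That discretization is handled through the entropy integral $J_n(\delta)$ and Dudley/Talagrand-type maximal inequalities of the form $\bbE\|\bbG_n\|_{\calF_\delta}\lesssim J(\delta)F_n + J(\delta)^2 b F_n/(\delta^2\sqrt n)$. If the generic bound on $K_n$ does not close the exponent arithmetic, my fallback is to redo the proof of theorem 2.1 of \citealt{CCK_2014} directly. I would choose a net of $\calF_{n,G}$ at radius $\delta_n \sigma$ in $L^2(P_G)$ and apply Koltchinskii–Massart/Yurinskii coupling on the finite net, whose log-cardinality is $\lesssim n^{\vartheta/2}\delta_n^{-\vartheta}$. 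I would then bound the oscillation terms of the empirical and Gaussian processes using \eqref{eq:unif_entropy_integral_bounds}, and optimize over $\delta_n$.
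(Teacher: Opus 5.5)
There is a genuine gap in your main route. The inequality you display, with a single complexity parameter $K_n$, is the VC-type corollary of \citealt{CCK_2014}, not theorem 2.1. Its derivation uses logarithmic entropy to discretize at a very fine scale, where the oscillation error comes for free. Theorem 2.1 itself is already the general discretization statement. Its error $\Delta_n(\varepsilon,\gamma)$ contains the oscillation term $\phi_n(\varepsilon)=\bbE\|\bbG_{n,G}\|_{\calF_{n,G}(\varepsilon)}\vee\bbE\|\bbX_{n,G}\|_{\calF_{n,G}(\varepsilon)}$, alongside terms in $H_n(\varepsilon)=\log\{N(\varepsilon,\calF_{n,G},L^2(P_G))\vee n\}$. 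Plugging the entropy at one scale into the VC formula silently drops $\phi_n$. For a class with $\log N\lesssim n^{\vartheta/2}\varepsilon^{-\vartheta}$, that is the binding term. Your $K_n\asymp n^{\vartheta/2+\vartheta^2/4}$ is the entropy bound at a radius $r$ of order $n^{-\vartheta/4}$ or coarser. There, lemma 2.2 of \citealt{CCK_2014} together with \eqref{eq:unif_entropy_integral_bounds} only gives $\phi_n\lesssim n^{\vartheta/4}r^{1-\vartheta/2}$, and this is at least $n^{\vartheta^2/8}$, which does not even vanish. A symptom of the problem: with your $K_n$, and with the third term written correctly as $(b\sigma^2K_n^2)^{1/3}\gamma_n^{-1/3}n^{-1/6}$, all three exponents fall below $-\vartheta/4$ for every $\vartheta<\sqrt5-2$. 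So the restriction $\vartheta<2/9$ is not what governs them, and the arithmetic you plan to check would close only because the binding term is missing.

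The missing idea is to apply theorem 2.1 (corollary 2.1 with $q=4$) directly at radius $n^{-\rho}$ with $\gamma=1/\log n$, bounding its ingredients through the local maximal inequality:
\begin{itemize}
\item $\phi_n(n^{-\rho})\lesssim n^{\vartheta/4-\rho(1-\vartheta/2)}+n^{-1/2+\vartheta/2+\rho\vartheta}c_n$;
\item $\bbE\|\bbG_{n,G}\|_{\calF_{n,G}\cdot\calF_{n,G}}\lesssim n^{\vartheta/2}$;
\item $\bbE\sup_fP_n|f|^3\lesssim c_n$, so that $\kappa^3\asymp c_n$;
\item $H_n\lesssim n^{\vartheta(\rho+1/2)}$;
\item $\delta_n=0$ for large $n$.
\end{itemize}
Making the oscillation term $o(n^{-\vartheta/4})$ forces $\rho>\vartheta/(2-\vartheta)$. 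The term $n^{-1/6}\kappa H_n^{2/3}$ forces $\vartheta(1+\rho)<1/4$. These two conditions are compatible exactly when $\vartheta<2/9$, which is where that restriction comes from.

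Your fallback gestures at this tradeoff, but it has three problems. First, there is nothing to redo, since theorem 2.1 already is that argument for general classes. Second, a classical Yurinskii coupling on a net of log-cardinality $n^{\vartheta(\rho+1/2)}$ would fail, because its error grows polynomially in the dimension. The finite-net step needs the CCK Gaussian approximation for maxima, whose dependence on the net enters only through $H_n$. Third, the existence of an admissible scale is the actual content of the lemma, and your proposal leaves it unverified.
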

\begin{proof}
    The proof of this lemma proceeds by application of lemma 2.2 and theorem 2.1 of \citealt{CCK_2014}. For $0 < r \leq 1$, define:
\[
\calF_{n,G}(\varepsilon)
:=
\left\{
f-g:
f,g\in\calF_{n,G},\ 
\|f-g\|_{L^2(P_G)}<\varepsilon
\right\},
\qquad
\calF_{n,G}\cdot\calF_{n,G}
:=
\{fg:f,g\in\calF_{n,G}\}.
\]
The three quantities I will bound are:
\begin{gather*}
    \phi_n(\varepsilon)
:=
\mathbb E_{G}\|\mathbb G_n\|_{\calF_{n,G}(\varepsilon)}
\vee
\mathbb E_{G}\|X_{n,G}\|_{\calF_{n,G}(\varepsilon)}~,\\
A_n
:=
\mathbb E_{G}\|\mathbb G_n\|_{\calF_{n,G}\cdot\calF_{n,G}}~,
\qquad
B_n
:=
\mathbb E_{G}
\sup_{f\in\calF_{n,G}} P_n|f|^3~.
\end{gather*}
To do so, I apply lemma 2.2 of \citealt{CCK_2014}. To this end, I first verify assumption A4 of \citealt{CCK_2014} by applying their lemma 2.1. In particular, I first show that $J_n(1) <\infty$ for each $n$. I first consider values of $\varepsilon$
such that $\varepsilon C_\eta c_n\leq 1$. By the entropy bound in
\eqref{eq:entropy_bound_tr_class},
\begin{equation}
\label{eq:uniform_entropy_bounds_for_integral}
\sup_Q
\sqrt{
1+\log N(\varepsilon C_\eta c_n,\mathcal F_{n,G},L^2(Q))
}
\lesssim
n^{\vartheta/4}
(C_\eta c_n)^{-\vartheta/2}
\varepsilon^{-\vartheta/2}.
\end{equation}
The right-hand side is integrable near zero since $\vartheta/2<1$.
For values of $\varepsilon$ such that $\varepsilon C_\eta c_n>1$,
monotonicity of covering numbers gives
\begin{equation}
\label{eq:covering_num_monotonicity}
N(\varepsilon C_\eta c_n,\mathcal F_{n,G},L^2(Q))
\leq
N(1,\mathcal F_{n,G},L^2(Q)),
\end{equation}
and the entropy bound at radius one implies that the integrand is
bounded over the remainder of $[0,1]$. $J_n(1) \lesssim n^{\vartheta/4}$, since:
\begin{align*}
    J_n(1)
&=
\int_0^{1/(C_\eta c_n)}
\sup_Q
\sqrt{
1+\log N(\varepsilon C_\eta c_n,\mathcal F_{n,G},L^2(Q))
}\,d\varepsilon
\\
&\qquad
+
\int_{1/(C_\eta c_n)}^1
\sup_Q
\sqrt{
1+\log N(\varepsilon C_\eta c_n,\mathcal F_{n,G},L^2(Q))
}\,d\varepsilon
\\
&\lesssim
n^{\vartheta/4}(C_\eta c_n)^{-\vartheta/2}
\int_0^{1/(C_\eta c_n)}
\varepsilon^{-\vartheta/2}\,d\varepsilon
+
n^{\vartheta/4}
\\
&\lesssim
\frac{n^{\vartheta/4}}{C_\eta c_n}
+
n^{\vartheta/4}
\lesssim
n^{\vartheta/4}~.
\end{align*}
The first inequality holds by \eqref{eq:uniform_entropy_bounds_for_integral}, the monotonicity in \eqref{eq:covering_num_monotonicity}, and taking  \eqref{eq:uniform_entropy_bounds_for_integral} with $\varepsilon C_\eta c_n=1$. The second inequality follows by evaluating the integral in the preceding line. Thus $J_n(1)<\infty$ for each
$n$, meaning that assumption A4 of \citealt{CCK_2014} is satisfied. 

I first focus on $\phi_n(\varepsilon)$. Lemma 2.2 of \citealt{CCK_2014} yields:
\begin{align}
&\mathbb E_G\|\mathbb G_{n,G}\|_{\mathcal F_{n,G}(\varepsilon)}
\vee
\mathbb E_G\|X_{n,G}\|_{\mathcal F_{n,G}(\varepsilon)}
\nonumber\\
&\qquad
\lesssim
J_n(\varepsilon)\|C_\eta c_n\|_{P_G,2}
+
n^{-1/2}\varepsilon^{-2}J_n(\varepsilon)^2
\left\|\max_{1\leq i\leq n} C_\eta c_n\right\|_{P_G,2}\nonumber\\
&\qquad\lesssim
C_\eta c_n
J_n\left(\frac{\varepsilon}{C_\eta c_n}\right)
+
\frac{C_\eta c_n}{\sqrt n}
\left(\frac{\varepsilon}{C_\eta c_n}\right)^{-2}
J_n\left(\frac{\varepsilon}{C_\eta c_n}\right)^2
\nonumber\\
&\qquad\lesssim
n^{\vartheta/4}\varepsilon^{1-\vartheta/2}
+
\frac{C_\eta c_n}{\sqrt n}
\left(\frac{\varepsilon}{C_\eta c_n}\right)^{-2}
\left(
\frac{
n^{\vartheta/4}\varepsilon^{1-\vartheta/2}
}{
C_\eta c_n
}
\right)^2
\nonumber\\
&\qquad\lesssim
n^{\vartheta/4}\varepsilon^{1-\vartheta/2}
+
n^{-1/2+\vartheta/2}c_n\varepsilon^{-\vartheta}\label{eq:phi_n_bounds_CCK}~.
\end{align}
The first inequality holds by application of lemma 2.2 of \citealt{CCK_2014}. The second inequality holds by application of lemma \ref{lem:main_envelope}. The third inequality holds by \eqref{eq:unif_entropy_integral_bounds}. The final inequality holds by simplifying terms, and by absorbing $C_\eta$ into the implicit constant.

I now bound $A_n$. To this end, notice that lemma \ref{lem:main_envelope} implies:
\begin{equation*}
    \sup_{f\in\calF_{n,G}} P_Gf^4 \le C_\eta^2 c_n^2\sup_{f\in\calF_{n,G}} P_Gf^2 \lesssim c_n^2~.
\end{equation*}
Similarly, $\|f\|_{L^4(P_G)}\le F_n^{1/2}\|f\|_{L^2(P_G)}^{1/2}  \lesssim c_n^{1/2}$. Since $\|F_n\|_{L^4(P_G)} = C_\eta c_n$, it holds that $\sup_{f\in\calF_{n,G}}\frac{\|f\|_{L^4(P_G)}}{\|F_n\|_{L^4(P_G)}}\lesssim c_n^{-1/2}$. Similarly, $\sup_{f\in\calF_{n,G}}\frac{\|f\|_{L^4(P_G)}}{\|F_n\|_{L^4(P_G)}}\lesssim c_n^{-2/3}$. As a result, I choose $\delta_3 \asymp c_n^{-2/3}$ and $\delta_4\asymp c_n^{-1/2}$. I next bound $J_n(\delta_4^2)$. First consider values of $\varepsilon$
such that $\varepsilon C_\eta c_n\leq 1$. By
\eqref{eq:entropy_bound_tr_class},
\begin{equation*}
\sup_Q
\sqrt{
1+\log N(\varepsilon C_\eta c_n,\mathcal F_{n,G},L^2(Q))
}
\lesssim
n^{\vartheta/4}
(C_\eta c_n)^{-\vartheta/2}
\varepsilon^{-\vartheta/2}~.
\end{equation*}
Therefore,
\begin{align*}
\int_0^{1/(C_\eta c_n)}
\sup_Q
\sqrt{
1+\log N(\varepsilon C_\eta c_n,\mathcal F_{n,G},L^2(Q))
}
\,d\varepsilon
&\lesssim
n^{\vartheta/4}
(C_\eta c_n)^{-\vartheta/2}
\int_0^{1/(C_\eta c_n)}
\varepsilon^{-\vartheta/2}\,d\varepsilon \\
&\lesssim
\frac{n^{\vartheta/4}}{c_n}~.
\end{align*}
For values of $\varepsilon$ such that $\varepsilon C_\eta c_n>1$,
monotonicity of covering numbers gives
\begin{equation*}
\log N(\varepsilon C_\eta c_n,\mathcal F_{n,G},L^2(Q))
\leq
\log N(1,\mathcal F_{n,G},L^2(Q))\lesssim n^{\vartheta/2}~.
\end{equation*}
Therefore, the integrand
is bounded by a constant multiple of $n^{\vartheta/4}$. Since
$\delta_4^2\asymp c_n^{-1}$, it follows that
\begin{equation*}
    \int_{1/(C_\eta c_n)}^{\delta_4^2}
\sup_Q
\sqrt{
1+\log N(\varepsilon C_\eta c_n,\mathcal F_{n,G},L^2(Q))
}
\,d\varepsilon
\lesssim
\frac{n^{\vartheta/4}}{c_n}~.
\end{equation*}
Combining the two bounds yields
\begin{equation}
\label{eq:J_n_d_4_rate}
    J_n(\delta_4^2)
\lesssim
\frac{n^{\vartheta/4}}{c_n}~.
\end{equation}
Applying lemma 2.2 of \citealt{CCK_2014} yields:
\begin{align*}
\mathbb E_G
\|\mathbb G_{n,G}\|_{\mathcal F_{n,G}\cdot\mathcal F_{n,G}}
&\lesssim
J_n(\delta_4^2)
\|C_\eta c_n\|_{L^4(P_G)}^2
+
\frac{
J_n(\delta_4^2)^2
}{
\sqrt n\,\delta_4^4
}
\left\|
\max_{1\leq i\leq n} C_\eta c_n
\right\|_{L^4(P_G)}^2
\\
&\lesssim
c_n^2J_n(\delta_4^2)
+
\frac{
c_n^2J_n(\delta_4^2)^2
}{
\sqrt n\,\delta_4^4
}
\\
&\lesssim
n^{\vartheta/4}c_n
+
n^{-1/2+\vartheta/2}c_n^2
\\
&\lesssim
n^{\vartheta/2}~.
\end{align*}
The first inequality is an application of lemma 2.2 of \citealt{CCK_2014}. The second inequality holds simply by noting that $C_\eta c_n$ is a constant, and thus $\|C_\eta c_n\|_{L^4(P_G)}$ is a constant. The third inequality holds by \eqref{eq:J_n_d_4_rate}. The final inequality holds since $1/(4B+2) + \eta < \vartheta/4$, and thus $c_n \lesssim n^{\vartheta/4}$. 

It finally remains to bound $B_n$. My bounds on $B_n$ follow from the bound on $A_n$ established above. In particular, notice that $|f|^3 \le C_\eta c_nf^2$ and $f^2$ lies in $\calF_{n,G}\cdot\calF_{n,G}$.
\begin{align}
\mathbb E_G
\sup_{f\in\mathcal F_{n,G}} P_n|f|^3
&\leq
C_\eta c_n
\mathbb E_G
\sup_{f\in\mathcal F_{n,G}} P_nf^2\nonumber
\\
&\leq
C_\eta c_n
\left\{
\sup_{f\in\mathcal F_{n,G}} P_Gf^2
+
n^{-1/2}
\mathbb E_G
\|\mathbb G_{n,G}\|_{\mathcal F_{n,G}\cdot\mathcal F_{n,G}}
\right\}\nonumber
\\
&\lesssim
c_n\left(1+n^{-1/2+\vartheta/2}\right)
\lesssim
c_n\label{eq:kappa_useful}~.
\end{align}
The first inequality holds by the above bound on $|f|^3$. The second inequality holds, since $P_nf^2 = P_Gf^2 + n^{-1/2}\bbG_{n,G}f^2$, and by the fact that $f^2$ lies in $\calF_{n,G}\cdot\calF_{n,G}$. The third inequality holds by applying \eqref{eq:density_fraction}, and the above bounds on $\mathbb E_G
\|\mathbb G_{n,G}\|_{\mathcal F_{n,G}\cdot\mathcal F_{n,G}}$. The final inequality holds since $\vartheta < 1$. 

With the above bounds established, it remains to apply theorem 2.1 and corollary 2.1 of \citealt{CCK_2014}. To this end, choose some $\rho$ satisfying:
\begin{equation*}
    \frac{\vartheta}{2-\vartheta} < \rho < \frac{1 - 4\vartheta}{4\vartheta} \wedge \frac{1-\vartheta}{4}~.
\end{equation*}
Because $0 < \vartheta < 2/9 $, such a $\rho$ exists. Let $\varepsilon$ and $\gamma$, as used in the statement of theorem 2.1 and corollary 2.1 of \citealt{CCK_2014}, be given by $\frac{n^{-\rho}}{C_\eta C_n}$ and $\frac{1}{\log n}$, respectively. Finally, let $\kappa$ be such that $\mathbb E_G
\sup_{f\in\mathcal F_{n,G}} P_n|f|^3 \le \kappa^3$. By \eqref{eq:kappa_useful}, one can take $\kappa^3 \asymp c_n$. Following \citealt{CCK_2014}, I define the following:
\[
\begin{aligned}
\Delta_n
&=
\phi_n\left(n^{-\rho}\right)
+
(\log n)^{1/4}n^{-\rho}
+
n^{-1/2}(\log n)^{1/4}C_\eta c_n
+
n^{-1/2}(\log n)^{1/2}C_\eta c_n
\\
&\quad
+
n^{-1/4}(\log n)^{1/2}
\left(
\mathbb E_G
\|\mathbb G_{n,G}\|_{\mathcal F_{n,G}\cdot\mathcal F_{n,G}}
\right)^{1/2}
\left[
\log\left\{
N\left(
n^{-\rho},
\mathcal F_{n,G},
L^2(P_G)
\right)
\vee n
\right\}
\right]^{1/2}
\\
&\quad
+
n^{-1/6}(\log n)^{1/3}
\left(
\mathbb E_G
\sup_{f\in\mathcal F_{n,G}}
P_n|f|^3
\right)^{1/3}
\left[
\log\left\{
N\left(
n^{-\rho},
\mathcal F_{n,G},
L^2(P_G)
\right)
\vee n
\right\}
\right]^{2/3}.
\end{aligned}
\]
\[
\begin{aligned}
\delta_n
=
\frac{1}{4}P_G
\Bigg[
&
\left(
\frac{C_\eta c_n}{\kappa}
\right)^3
\\
&\times
\mathbf 1\Bigg\{
\frac{C_\eta c_n}{\kappa}
>
c(\log n)^{1/3}n^{1/3}
\\
&\hspace{4cm}\times
\left[
\log\left\{
N\left(
n^{-\rho},
\mathcal F_{n,G},
L^2(P_G)
\right)
\vee n
\right\}
\right]^{-1/3}
\Bigg\}
\Bigg].
\end{aligned}
\]
I bound these objects using the previously-established bounds on $\phi_n$, $A_n$ and $B_n$. I first analyze each term in the above expression for $\Delta_n$. 
\begin{align*}
\phi_n(n^{-\rho})
&\lesssim
n^{\vartheta/4}(n^{-\rho})^{1-\vartheta/2}
+
n^{-1/2+\vartheta/2}c_n(n^{-\rho})^{-\vartheta}
\\
&=
n^{\vartheta/4-\rho(1-\vartheta/2)}
+
n^{-1/2+\vartheta/2+\rho\vartheta}c_n
\\
&\lesssim
n^{\vartheta/4-\rho(1-\vartheta/2)}
+
n^{-1/2+3\vartheta/4+\rho\vartheta}
\\
&=
o(n^{-\vartheta/4})~.
\end{align*}
The first inequality holds by \eqref{eq:phi_n_bounds_CCK}. The second inequality holds since $c_n\lesssim n^{\vartheta/4}$, since $1/(4B+2) + \eta \le \vartheta / 4$. The final equality follows since $\rho>\vartheta/(2-\vartheta)$ implies $\vartheta/2-\rho(1-\vartheta/2)<0$, while $\rho<(1-4\vartheta)/(4\vartheta)$ implies $-1/2+\vartheta+\rho\vartheta<-1/4<0$.
\begin{align*}
(\log n)^{1/4}n^{-\rho}
&=
n^{-\vartheta/4}
n^{-(\rho-\vartheta/4)}
(\log n)^{1/4}
\\
&=
o(n^{-\vartheta/4})~.
\end{align*}
The first equality follows by factoring out $n^{-\vartheta/4}$. The final equality follows since $\rho>\vartheta/(2-\vartheta)>\vartheta/4$.
\begin{align*}
n^{-1/2}(\log n)^{1/4}C_\eta c_n
&\lesssim
n^{-1/2+\vartheta/4}(\log n)^{1/4}
\\
&=
n^{-\vartheta/4}
n^{-1/2+\vartheta/2}
(\log n)^{1/4}
\\
&=
o(n^{-\vartheta/4})~.
\end{align*}
The first inequality holds since $c_n\lesssim n^{\vartheta/4}$. The first equality follows by factoring out $n^{-\vartheta/4}$. The final equality follows since $\vartheta<2/9$ as in lemma \ref{lem:psi_tr_entropy}, and therefore $-1/2+\vartheta/2<0$.
\begin{align*}
n^{-1/2}(\log n)^{1/2}C_\eta c_n
&\lesssim
n^{-1/2+\vartheta/4}(\log n)^{1/2}
\\
&=
n^{-\vartheta/4}
n^{-1/2+\vartheta/2}
(\log n)^{1/2}
\\
&=
o(n^{-\vartheta/4})~.
\end{align*}
The first inequality holds since $c_n\lesssim n^{\vartheta/4}$. The first equality follows by factoring out $n^{-\vartheta/4}$. The final equality follows since $\vartheta<2/9$ as in lemma \ref{lem:psi_tr_entropy}, and therefore $-1/2+\vartheta/2<0$.
\begin{align*}
&n^{-1/4}(\log n)^{1/2}
\left(
\mathbb E_G
\|\mathbb G_{n,G}\|_{\mathcal F_{n,G}\cdot\mathcal F_{n,G}}
\right)^{1/2}
\left[
\log\left\{
N\left(
n^{-\rho},
\mathcal F_{n,G},
L^2(P_G)
\right)
\vee n
\right\}
\right]^{1/2}
\\
&\qquad\lesssim
n^{-1/4+\vartheta/4}(\log n)^{1/2}
\left[
\log\left\{
N\left(
n^{-\rho},
\mathcal F_{n,G},
L^2(P_G)
\right)
\vee n
\right\}
\right]^{1/2}
\\
&\qquad\lesssim
n^{-1/4+\vartheta/4+\frac{\vartheta}{2}(\rho+1/2)}
(\log n)^{1/2}
\\
&\qquad=
n^{-1/4+\vartheta/2+\rho\vartheta/2}
(\log n)^{1/2}
\\
&\qquad=
n^{-\vartheta/4}
n^{-1/4+3\vartheta/4+\rho\vartheta/2}
(\log n)^{1/2}
\\
&\qquad=
o(n^{-\vartheta/4})~.
\end{align*}
The first inequality holds by the bound on $\mathbb E_G\|\mathbb G_{n,G}\|_{\mathcal F_{n,G}\cdot\mathcal F_{n,G}}$ established above. The second inequality holds by \eqref{eq:subclass_centered_bound}. The first equality follows by simplifying powers of $n$. The second equality follows by factoring out $n^{-\vartheta/4}$. The final equality follows since $\rho<(1-4\vartheta)/(4\vartheta)$ implies $\rho\vartheta/2<1/8-\vartheta/2$, and hence $-1/4+3\vartheta/4+\rho\vartheta/2<-1/8+\vartheta/4<0$, where the final inequality uses $\vartheta<2/9<1/2$.
\begin{align*}
&n^{-1/6}(\log n)^{1/3}
\left(
\mathbb E_G
\sup_{f\in\mathcal F_{n,G}}P_n|f|^3
\right)^{1/3}
\left[
\log\left\{
N\left(
n^{-\rho},
\mathcal F_{n,G},
L^2(P_G)
\right)
\vee n
\right\}
\right]^{2/3}
\\
&\qquad\lesssim
n^{-1/6}(\log n)^{1/3}
c_n^{1/3}
\left[
\log\left\{
N\left(
n^{-\rho},
\mathcal F_{n,G},
L^2(P_G)
\right)
\vee n
\right\}
\right]^{2/3}
\\
&\qquad\lesssim
n^{-1/6+\vartheta/12}(\log n)^{1/3}
\left[
\log\left\{
N\left(
n^{-\rho},
\mathcal F_{n,G},
L^2(P_G)
\right)
\vee n
\right\}
\right]^{2/3}
\\
&\qquad\lesssim
n^{-1/6+\vartheta/12+\frac{2\vartheta}{3}(\rho+1/2)}
(\log n)^{1/3}
\\
&\qquad=
n^{-1/6+5\vartheta/12+2\rho\vartheta/3}
(\log n)^{1/3}
\\
&\qquad=
n^{-\vartheta/4}
n^{-1/6+\frac{2\vartheta}{3}(\rho+1)}
(\log n)^{1/3}
\\
&\qquad=
o(n^{-\vartheta/4})~.
\end{align*}
The first inequality holds by \eqref{eq:kappa_useful}. The second inequality holds since $c_n^{1/3}\lesssim n^{\vartheta/12}$. The third inequality holds by \eqref{eq:subclass_centered_bound}. The first equality follows by simplifying powers of $n$. The second equality follows by factoring out $n^{-\vartheta/4}$. The final equality follows since $\rho<(1-4\vartheta)/(4\vartheta)=1/(4\vartheta)-1$ implies $\vartheta(\rho+1)<1/4$, and therefore $-1/6+(2\vartheta/3)(\rho+1)<0$.

I next analyze $\delta_n$.
\begin{align*}
\delta_n
&=
\frac{1}{4}P_G\Bigg[
\left(
\frac{C_\eta c_n}{\kappa}
\right)^3
\mathbf 1\Bigg\{
\frac{C_\eta c_n}{\kappa}
>
c(\log n)^{1/3}n^{1/3}
\left[
\log\left\{
N\left(
n^{-\rho},
\mathcal F_{n,G},
L^2(P_G)
\right)
\vee n
\right\}
\right]^{-1/3}
\Bigg\}
\Bigg]
\\
&\lesssim
c_n^2
\mathbf 1\Bigg\{
c_n^{2/3}
>
c(\log n)^{1/3}n^{1/3}
\left[
\log\left\{
N\left(
n^{-\rho},
\mathcal F_{n,G},
L^2(P_G)
\right)
\vee n
\right\}
\right]^{-1/3}
\Bigg\}
\\
&\lesssim
c_n^2
\mathbf 1\left\{
c_n^{2/3}
>
c(\log n)^{1/3}
n^{\{1-\vartheta(\rho+1/2)\}/3}
\right\}
\\
&\lesssim
c_n^2
\mathbf 1\left\{
n^{\vartheta/6}
>
c(\log n)^{1/3}
n^{\{1-\vartheta(\rho+1/2)\}/3}
\right\}
\\
&=
0
\end{align*}
The first inequality holds since $\kappa^3\asymp C_\kappa c_n$, so that
$(C_\eta c_n/\kappa)^3\lesssim c_n^2$ and
$C_\eta c_n/\kappa\lesssim c_n^{2/3}$.
The second inequality holds by \eqref{eq:subclass_centered_bound}, which
implies
$\log\{N(n^{-\rho},\mathcal F_{n,G},L^2(P_G))\vee n\}
\lesssim n^{\vartheta(\rho+1/2)}$.
The third inequality holds since $c_n\lesssim n^{\vartheta/4}$, and
therefore $c_n^{2/3}\lesssim n^{\vartheta/6}$.
The final equality holds for all sufficiently large $n$, since
$\rho<(1-4\vartheta)/(4\vartheta)$ implies
$\vartheta(\rho+1)<1/4<1$, and hence
$\vartheta/6<\{1-\vartheta(\rho+1/2)\}/3$.

Corollary 2.1 of \citet{CCK_2014}, applied with $q=4$, implies that
there exists a random variable $\widetilde T_{n,G}^{tr}$ satisfying
$\widetilde T_{n,G}^{tr}\overset{d}{=}Z_{n,G}^{tr}$ such that:
\begin{equation*}
P_G\left(
\left|
T_{n,G}^{tr}-\widetilde T_{n,G}^{tr}
\right|
>
K(4)\Delta_n
\right)
\leq
\gamma\{1+\delta_n\}
+
\frac{C\log n}{n}~,
\end{equation*}
where $K(4)$ is a constant that depends only on the space $L^q(P_G)$ containing $F_n$. The result follows since $\gamma = o(1)$, $1+\delta = 1 + o(1)$, and $\frac{C\log n}{n} = o(1)$.
\end{proof}

To show how our test statistic of interest, $T_{n,G}^{tr}$ compares to the multiplier bootstrap test statistic, $T_{n,G}^{\xi,tr}$, the following lemma \ref{lem:multiplier_coupling} establishes a further coupling of $T_{n,G}^{\xi,tr}$ to a Gaussian process supremum equal in distribution to $Z_{n,G}^{tr}$ considered in the previous lemma \ref{lem:Gaussian_coupling}. I extend the proof of theorem 2.2 of \citealt{CCK_2016} to account for the fact that $\calF_{n,G}$ is not VC-type, thus relaxing assumption B in \citealt{CCK_2016} to allow for H\"older-type entropy bounds. 

\begin{lem}
    \label{lem:multiplier_coupling}
    Let assumptions \ref{assu:smoothness}, \ref{assu:modulus_binding}, and \ref{assu:pilot_independence} hold. Let $\vartheta$ and $\rho$ be as defined in the proof of lemma \ref{lem:Gaussian_coupling}. There exist positive constants $K_1,K_2 >0$ and $n_0<\infty$ such that for every $G,H \in \calG$ and every $\beta \in (0,1)$, there exists for each fixed $H$ a random variable $\Tilde{T}_{n,G}^{\xi,tr}$ satisfying:
    \begin{equation*}
        \Tilde{T}_{n,G}^{\xi,tr} | Z_1,\ldots,Z_n \overset{d}{=} Z_{n,G}^{tr}~a.s.,
    \end{equation*}
    and:
    \begin{equation*}
        P_{n,G,\xi}\left(
|T_{n,G}^{\xi,tr} - \Tilde{T}_{n,G}^{\xi,tr}| > K_1\Delta_n^\xi(\beta)
        \right) \le K_2(\beta + n^{-1})~,
    \end{equation*}
    where $\Delta_n^\xi(\beta)$ is given by:
    \begin{equation}
    \label{eq:delta_n_xi}
        \Delta_n^\xi(\beta)
:=
\frac{
n^{\vartheta/4-\rho(1-\vartheta/2)}
+
n^{-\rho}\sqrt{\log n}
}{
\beta^{1/4}
}
+
\frac{
n^{-1/4+\vartheta/2+\rho\vartheta/2}
}{
\beta^{5/4}
}~.
    \end{equation}

\end{lem}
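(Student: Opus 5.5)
The plan follows the proof of Theorem 2.2 of \citealt{CCK_2016}, replacing their VC-type entropy assumption with the H\"older-type bound \eqref{eq:subclass_centered_bound}. We condition on the pilot $H$, which is permitted by assumption \ref{assu:pilot_independence}. Conditionally on $Z_1,\ldots,Z_n$, the process $f\mapsto\bbG_n^\xi f$ indexed by $\calF_{n,G}$ is a centered Gaussian process with covariance $P_n(f-P_nf)(g-P_ng)$. The target $\bbX_{n,G}$ has covariance $P_Gfg$. The coupling will therefore come from discretization plus Gaussian comparison.

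\textbf{Discretization.} Let $\varepsilon_n := n^{-\rho}$, and take an $\varepsilon_n\|F_n\|_{L^2(P_G)}$-net $\{f_1,\ldots,f_N\}$ of $\calF_{n,G}$ in $L^2(P_G)$. By \eqref{eq:subclass_centered_bound}, $\log N\lesssim n^{\vartheta(\rho+1/2)}$.

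\textbf{Oscillation terms.} Three terms need control.
\begin{itemize}
\item The Gaussian oscillation $\bbE\|\bbX_{n,G}\|_{\calF_{n,G}(\varepsilon_n)}$. By \eqref{eq:phi_n_bounds_CCK}, this is at most $\phi_n(n^{-\rho})\lesssim n^{\vartheta/4-\rho(1-\vartheta/2)}+\ldots$.
\item The multiplier oscillation $\bbE_\xi\|\bbG_n^\xi\|_{\calF_{n,G}(\varepsilon_n)}$. Conditionally, this is a Gaussian supremum. Dudley's entropy integral bounds it by $J$ evaluated at the empirical $L^2(P_n)$ radius, using the uniform entropy bound of lemma \ref{lem:psi_tr_entropy}. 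This is valid because that lemma holds uniformly in $Q$. We then need to relate $\sup_{\calF_{n,G}(\varepsilon_n)}\|f\|_{L^2(P_n)}$ to $\varepsilon_n$. This uses $P_nf^2\le P_Gf^2+n^{-1/2}\|\bbG_{n,G}\|_{\calF\cdot\calF}$, together with the bound $A_n\lesssim n^{\vartheta/2}$ from the proof of lemma \ref{lem:Gaussian_coupling}.
\item The centering term $\sqrt n\,\xi$-average times $P_nf$. This is of order $\sqrt{\log n}\,n^{-\rho}$ on the oscillation class.
\end{itemize}
Markov's inequality in $P_{n,G}$, applied to these expectations at level $\beta$, produces the $\beta^{-1/4}$ factors. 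We use fourth-root Markov, with exponents as in \citealt{CCK_2016}. This gives the first summand of \eqref{eq:delta_n_xi}.

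\textbf{Finite-dimensional comparison.} On the net, compare the conditional Gaussian vector $(\bbG_n^\xi f_j)_j$ with $(\bbX_{n,G}f_j)_j$. The Gaussian comparison (Strassen-type coupling) argument of \citealt{CCK_2016} bounds the coupling error by $(\Delta_{cov}\log N)^{1/2}$ up to $\beta$ powers. Here $\Delta_{cov}:=\max_{j,k}|P_nf_jf_k-P_Gf_jf_k|+\max_j|P_nf_j|^2$. We bound $\bbE\Delta_{cov}\lesssim n^{-1/2}A_n\lesssim n^{-1/2+\vartheta/2}$, plus a lower-order mean term. We then apply Markov at level $\beta$. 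Combining with $\log N$ gives a term of order $n^{-1/4+\vartheta/4}n^{\vartheta(\rho+1/2)/2}=n^{-1/4+\vartheta/2+\rho\vartheta/2}$, with a $\beta^{-5/4}$ factor from the combined Markov and comparison steps. This matches the second summand of \eqref{eq:delta_n_xi}.

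\textbf{Assembly.} The conditional coupling is extended from the net to the full supremum by the Gaussian-process version of Strassen's theorem, as in \citealt{CCK_2016}. This gives a variable $\Tilde T_{n,G}^{\xi,tr}$ whose conditional law equals that of $Z_{n,G}^{tr}$ almost surely. Exceptional events contribute $K_2(\beta+n^{-1})$, where the $n^{-1}$ comes from the envelope and truncation events controlled by lemma \ref{lem:main_envelope}. All constants are uniform in $G,H$, because the bounds of lemmas \ref{lem:psi_tr_entropy}, \ref{lem:main_envelope} and \ref{lem:variance_bounds} are uniform.

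\textbf{Main obstacle.} The hardest step is the multiplier oscillation term. There, the random $L^2(P_n)$-diameter of $\calF_{n,G}(\varepsilon_n)$ must be controlled without VC structure. I would handle this by bounding $\sup_{\calF(\varepsilon_n)}P_nf^2\le\varepsilon_n^2+n^{-1/2}\|\bbG_{n,G}\|_{\calF\cdot\calF}$ on a high-probability event. I would then integrate the polynomial entropy $\varepsilon^{-\vartheta/2}n^{\vartheta/4}$ up to that random radius.
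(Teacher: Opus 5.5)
Your architecture matches the paper's. You take a net $\mathcal N_n$ of $\calF_{n,G}$ with $\log\#\mathcal N_n\lesssim n^{\vartheta(\rho+1/2)}$ and bound the Gaussian and conditional multiplier oscillations by Dudley. You control the random radius through $\sup_{h\in\calF_{n,G}(n^{-\rho})}P_nh^2\le n^{-2\rho}+4n^{-1/2}\|\bbG_{n,G}\|_{\calF_{n,G}\cdot\calF_{n,G}}$, couple on the net, and finish with conditional Strassen.

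The one genuinely different piece is the net coupling. The paper applies Theorem 2.2 of \citealt{CCK_2016} as a black box to the finite, hence trivially VC-type, class $\mathcal N_n\cup(-\mathcal N_n)$, with envelope $C_\eta n^{\vartheta/8}$ and $q=8$. You unpack this into a direct Gaussian-to-Gaussian covariance comparison via $\Delta_{cov}$. That is more transparent and gives the same rate $n^{-1/4+\vartheta/2+\rho\vartheta/2}$ with a $\beta$ power within $\beta^{-5/4}$; the paper's route buys brevity.

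Two steps do not go through as written.

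\emph{Net radius.} The net must have $L^2(P_G)$-radius $n^{-\rho}$, not $n^{-\rho}\|F_n\|_{L^2(P_G)}$. Here $\|F_n\|_{L^2(P_G)}=C_\eta c_n$ grows polynomially, so the normalized radius would inflate both oscillation terms by $c_n^{1-\vartheta/2}$. Your later use of $\phi_n(n^{-\rho})$ and $\log N\lesssim n^{\vartheta(\rho+1/2)}$ is consistent only with the unnormalized radius.

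\emph{The $\beta^{-1/4}$ factor.} Markov's inequality does not deliver the $\beta^{-1/4}$ in the first summand of $\Delta_n^\xi(\beta)$, and ``fourth-root Markov'' is not a mechanism that does.
\begin{itemize}
\item Markov on $\|\bbG_{n,G}\|_{\calF_{n,G}\cdot\calF_{n,G}}$ at level $\beta$ only gives $\sigma_n\lesssim\beta^{-1/2}n^{-\rho}$.
\item Markov applied to the oscillation expectations themselves costs $\beta^{-1}$.
\item Markov also does not produce the $n^{-\rho}\sqrt{\log n}$ term with an $O(n^{-1})$ failure probability.
\end{itemize}

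The paper obtains these from two concentration tools you do not invoke.
\begin{itemize}
\item The deviation inequality of lemma 6.1 of \citealt{CCK_2016}, taken with $t=\beta^{-1}$. It gives $\|\bbG_{n,G}\|_{\calF_{n,G}\cdot\calF_{n,G}}\lesssim\beta^{-1/2}n^{\vartheta/2}$ with probability at least $1-\beta$. Hence $\sigma_n^2\le s_n^2:=n^{-2\rho}+C\beta^{-1/2}n^{-(1-\vartheta)/2}\lesssim\beta^{-1/2}n^{-2\rho}$, using $\rho<(1-\vartheta)/4$.
\item The Borell--TIS inequality, at deviation $\sqrt{2\log n}$ times the maximal standard deviation. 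It is applied to $\bbX_{n,G}$ with standard deviation $n^{-\rho}$. It is also applied to $\bbG_n^\xi$ with standard deviation $s_n$, conditionally on $Z_1,\ldots,Z_n$ and on the event $\{\sigma_n\le s_n\}$.
\end{itemize}
These are the source of the $n^{-\rho}\sqrt{\log n}$ term and of the $2/n$ probabilities. The remaining $n^{-1}$ in the paper comes from the failure probability in Theorem 2.2 of \citealt{CCK_2016}.

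Your attribution of the $n^{-1}$ to ``envelope and truncation events'' is off. $\calF_{n,G}$ is already truncated and its envelope is deterministic, so no such events arise in this lemma.

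With your Markov bookkeeping you would prove the lemma only with larger negative powers of $\beta$. That is harmless for the later choice $\beta_n=1/\log n$, where $\eta_n=o(n^{-\vartheta/4})$ still holds, but it is not the statement as written.
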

\begin{proof}
    Fix $G$ and a realization of $H$. The proof will follow the arguments behind the proof of theorem 2.2 of \citealt{CCK_2016}, though I will apply the entropy bounds obtained in lemma \ref{lem:psi_tr_entropy} in lieu of using the entropy bounds for VC-type function classes applied in \citealt{CCK_2016}. I first consider a discretization of the empirical and multiplier processes, and show how to bound the corresponding discretization errors of each by analyzing the equicontinuity properties of the empirical and multiplier processes. I then obtain a coupling for the discretized class. Combining results and applying the conditional Strassen theorem provides the desired coupling. 

    \textit{Step 1: Discretization of Empirical and Multiplier Processes.} Let $\mathcal{N}_n$ be a finite $n^{-\rho}$ net of $\calF_{n,G}$ such that each element in $\calF_{n,G}$ lies in $L^2(P_G)$ distance strictly less than $n^{-\rho}$ from an element in $\mathcal{N}_n$. By lemma \ref{lem:psi_tr_entropy}:
    \begin{equation*}
        \log (2\#\mathcal{N}_n) \lesssim n^{\vartheta/2}n^{\rho\vartheta} = n^{\vartheta(\rho + 1/2)}~.
    \end{equation*}
    My bounds on $\rho$ imply that $\vartheta(\rho + 1/2) < 1/4$. As in the proof of theorem 2.1 of \citealt{CCK_2016}, I obtain the following inequalities:
    \begin{align*}
        0 &\le T_{n,G}^{\xi,tr} - \sup_{f\in\mathcal{N}_n} |\bbG_n^\xi f|\le \|\bbG_n^\xi\|_{\calF_{n,G}(n^{-\rho})}~,\\
        0 &\le Z_{n,G}^{tr} - \sup_{f\in\mathcal{N}_n}| \mathbb{X}_{n,G}f| \le \|\mathbb{X}_{n,G}\|_{\calF_{n,G}(n^{-\rho})}~.
    \end{align*}
    It therefore remains to bound the objects $\|\bbG_n^\xi\|_{\calF_{n,G}(n^{-\rho})}$ and $\|\mathbb{X}_{n,G}\|_{\calF_{n,G}(n^{-\rho})}$.
    
    \textit{Step 2: Gaussian Equicontinuity.} For any probability measure $Q$:
    \begin{equation*}
    \label{eq:bootstrap_useful_entropy_facts}
        \log N(\varepsilon, \calF_{n,G}(n^{-\rho}), L^2(Q)) \le 2\log N(\varepsilon/4, \calF_{n,G}, L^2(Q)) \lesssim n^{\vartheta / 2} \varepsilon^{-\vartheta}~.
    \end{equation*}
    The first inequality holds by applying the triangle inequality to obtain balls over pairwise differences. For sufficiently large $n$, $2n^{-\rho}<1$, and thus applying Dudley's inequality (corollary 2.2.8 in \citealt{VdVandWellner1996}) implies:
    \begin{equation*}
        \bbE \|\mathbb{X}_{n,G}\|_{\calF_{n,G}(n^{-\rho})}\lesssim n^{\vartheta/4}n^{-\rho(1-\vartheta/2)}~.
    \end{equation*}
    By the Borell-Tsirelson-Ibragimov-Sudakov inequality (see proposition A.2.1 of \citealt{VdVandWellner1996}), for some constant $C>0$:
    \begin{equation}
    \label{eq:step_2_equic}
        P\left(
\|\mathbb{X}_{n,G}\|_{\calF_{n,G}(n^{-\rho})} > C\left\{
n^{\vartheta/4}n^{-\rho(1-\vartheta/2)} + n^{-\rho}\sqrt{\log n}
\right\}
        \right)\le \frac{2}{n}~.
    \end{equation}

    \textit{Step 3: Multiplier Process Equicontinuity.} The canonical semimetric for $\mathbb{G}_n^\xi$, conditional on $Z_1,\ldots,Z_n$, satisfies:
    \begin{equation*}
       d_{\xi,n}(h,h')^2:=\bbE_\xi((\bbG_n^\xi h-\bbG_n^\xi h')^2)= P_n(h-h')^2- (P_n(h-h'))^2\le  P_n(h-h')^2~.
    \end{equation*}
    Define:
    \begin{equation*}
        \sigma_n^2 := \sup_{h\in\calF_{n,G}(n^{-\rho})} P_nh^2~.
    \end{equation*}
    Since $h \in \calF_{n,G}(n^{-\rho})$ can be written as $f-g$ for $f$ and $g$ in $\calF_{n,G}$ satisfying $\|f-g\|_{L^2(P_G)} \le n^{-\rho}$, I write:
    \begin{equation*}
        \sigma_n^2 \le n^{-2\rho} +\frac{4}{\sqrt{n}}\|\bbG_n\|_{\calF_{n,G}\cdot\calF_{n,G}}~.
    \end{equation*}
    By the arguments of lemma \ref{lem:Gaussian_coupling}, $\bbE_G\|\bbG_n\|_{\calF_{n,G}\cdot\calF_{n,G}} \lesssim n^{\vartheta/2}$. Applying lemma 6.1 of \citealt{CCK_2016} yields:
    \begin{align*}
\|\bbG_{n,G}\|_{\calF_{n,G}\cdot\calF_{n,G}}
&\le
(1+\beta^{-1/2})
\bbE_G\|\bbG_{n,G}\|_{\calF_{n,G}\cdot\calF_{n,G}}
\\
&\qquad
+
K_2\left[
\left(
\frac{F_nC}{c}
+
\frac{F_n^2}{\sqrt n}
\right)\beta^{-1/2}
+
\beta^{1/2}
\frac{F_n^2}{\sqrt n}\beta^{-1}
\right]
\\
&\lesssim
\beta^{-1/2}
\left\{
\bbE_G\|\bbG_{n,G}\|_{\calF_{n,G}\cdot\calF_{n,G}}
+
F_n
+
\frac{F_n^2}{\sqrt n}
\right\}
\\
&\lesssim
\beta^{-1/2}
\left\{
n^{\vartheta/2}
+
n^{\vartheta/4}
+
n^{-1/2+\vartheta/2}
\right\}
\\
&\lesssim
\beta^{-1/2}n^{\vartheta/2}~,
\end{align*}
with probability at least $1-\beta$. The first inequality holds by application of \citealt{CCK_2016}. The second inequality is algebraic, noting that $1+\beta^{-1/2} \le 2\beta^{-1/2}$. The third inequality holds by the bounds on $A_n$ stated in the proof of lemma \ref{lem:Gaussian_coupling} and by the bounds on $F_n$ provided in lemma \ref{lem:main_envelope}. The final inequality is algebraic. Defining the event $\mathcal{E}_n$ as follows, it holds that:
    \begin{equation*}
        \mathcal{E}_n :=\left\{
\sigma_n^2 \le s_n^2
        \right\}~,\quad s_n^2 := n^{-2\rho} + C\beta^{-1/2}n^{-(1-\vartheta)/2}~, \quad \sup_{G,H} P_G(\mathcal{E}_n^c) \le \beta~.
    \end{equation*}
    Since $\rho < (1-\vartheta)/4$ and $\beta <1$, $s_n \lesssim \beta^{-1/4}n^{-\rho}$. 
    
    On $\mathcal{E}_n$, and conditional on $Z_1,\ldots,Z_n$, $\bbG_n^\xi$ is a centered Gaussian process, with $d_{\xi,n}(h,h') \le 2s_n$ for any $h$ and $h'$ in $\calF_{n,G}(n^{-\rho})$. Applying Dudley's inequality in the usual way used throughout the proof provides:
    \begin{align*}
\mathbb E_\xi\!\left(
\|\mathbb G_n^\xi\|_{\mathcal F_{n,G}(n^{-\rho})}
\right)
&\lesssim
n^{\vartheta/4}
\left\{
\int_0^{(2s_n)\wedge1}
u^{-\vartheta/2}\,du
+
(2s_n-1)_+
\right\}
\\
&\lesssim
n^{\vartheta/4}
\left(
s_n^{1-\vartheta/2}
+
s_n
\right)
\\
&\lesssim
\beta^{-1/4}
n^{\vartheta/4}
n^{-\rho(1-\vartheta/2)}.
\end{align*}
The Borell inequality (proposition A.2.1 in \citealt{VdVandWellner1996}) yields:
    \begin{equation}
    \label{eq:step_3_equic}
        P_\xi\left(
\|\bbG_n^\xi\|_{\calF_{n,G}(n^{-\rho})} > C\beta^{-1/4}\left\{
n^{\vartheta/4} n^{-\rho(1-\vartheta/2)} + n^{-\rho} \sqrt{\log n}
\right\}
        \right) \le \frac{2}{n}~.
    \end{equation}

    \textit{Step 4: Coupling for the Discretized Class.} I apply theorem 2.2 of \citealt{CCK_2016} to $\mathcal N_n \cup \mathcal N_n$. By lemma \ref{lem:main_envelope}, there exists a constant envelope $C_\eta n^{\vartheta /8}$ for this class, since a choice of $\eta >0$ can be taken to satisfy $1/(4B+2)+\eta \le\vartheta/8$. 

    For any probability measure $Q$ and any $0<\varepsilon \le 1$, notice that:
    \begin{equation*}
        N(\varepsilon C_\eta n^{\vartheta /8},\mathcal N_n\cup (-\mathcal N_n),L^2(Q)) \le 2\#\mathcal N_n\le \frac{e \vee (2\#\mathcal N_n)}{\varepsilon}~.
    \end{equation*}
    Therefore, $\mathcal N_n\cup (-\mathcal N_n)$ is VC-type, as defined in \citealt{CCK_2016}. Therefore, assumption B in \citealt{CCK_2016} holds. Similarly, assumption A in \citealt{CCK_2016} is immediate. Assumption C follows from lemma \ref{lem:main_envelope}, and with $q=8$ noting that $\|C_\eta n^{\vartheta/8}\|_{L^8(P_G)}$. 

    It remains to apply theorem 2.2 of \citealt{CCK_2016}. To this end, using the notation from \citealt{CCK_2016}, set $\gamma = \beta$, and let:
    \begin{equation*}
        \zeta_n := n^{-2}~,\quad K_n
:=
\log n
\vee
\log\!\left(
\frac{(e\vee 2\#\mathcal N_n)cC_\eta n^{\vartheta/8}}{C}
\right)
\lesssim
n^{\vartheta(\rho+1/2)}~.
    \end{equation*}
    The drift functional in \citealt{CCK_2016} is zero, and thus $N_B(\zeta_n) =1$. Moreover, $K_n\le n$ for sufficiently large $n$. Let $C_3$ and $C_4$ be the constants in theorem 2.2 of \citealt{CCK_2016}. Application of this theorem that there exists $\Tilde Z_{n,G}^{\mathcal N}$ such that:
    \[
    \Tilde Z_{n,G}^{\mathcal N}
    \overset{d}{=}
    \sup_{f\in\mathcal N_n}
    |\bbX_{n,G}f|
\]
    satisfying:
    \begin{equation}
    \label{eq:finite_net_approximation}
        P_{n,G,\xi}\!\left(
\left|
\sup_{f\in\mathcal N_n}|\mathbb G_n^\xi f|
-
\Tilde Z_{n,G}^{\mathcal N}
\right|
>
C_3\!\left[
n^{-2}
+
\frac{C_\eta n^{\vartheta/8}K_n}{\beta^{1+1/8}n^{1/2-1/8}}
+
\frac{\left\{C_\eta n^{\vartheta/8}\frac{C}{c}K_n^{3/2}\right\}^{1/2}}{\beta^{1+1/8}n^{1/4}}
\right]
\right)
\le
C_4(\beta+n^{-1})~.
    \end{equation}
    It remains to simplify the rates on the left hand side of the above. To this end, write:
    \begin{align*}
&n^{-2}
+
\frac{
C_\eta n^{\vartheta/8}K_n
}{
\beta^{1+1/8}n^{1/2-1/8}
}
+
\frac{
\left\{
C_\eta n^{\vartheta/8}
\frac{C}{c}
K_n^{3/2}
\right\}^{1/2}
}{
\beta^{1+1/8}n^{1/4}
}
\\
&\qquad\lesssim
n^{-2}
+
\beta^{-9/8}
\left\{
n^{-3/8+\vartheta/8+\vartheta(\rho+1/2)}
+
n^{-1/4+\vartheta/16+
\frac34\vartheta(\rho+1/2)}
\right\}
\\
&\qquad=
n^{-2}
+
\beta^{-9/8}
\left\{
n^{-3/8+5\vartheta/8+\rho\vartheta}
+
n^{-1/4+7\vartheta/16+3\rho\vartheta/4}
\right\}
\\
&\qquad\lesssim
\beta^{-5/4}
n^{-1/4+\vartheta/2+\rho\vartheta/2}~.
\end{align*}

\textit{Step 5: Combining steps.} It remains to combine the above multiplier process coupling over $\mathcal N_n$ with the equicontinuity derived in steps 2 and 3 of the proof. For any Borel set $A\subseteq \bbR$, let $A^r$ denote the $r$-enlargement of $A$:
\begin{equation*}
    A^r := \left\{
x\in\bbR : \inf_{a\in A} |x-a| \le r
    \right\}~.
\end{equation*}
Since $\beta < 1$, the bound in \eqref{eq:step_2_equic} is less than that in \eqref{eq:step_3_equic}. Moreover, $\Delta_n^\xi(\beta)$ is given by the sum of the bounds in \eqref{eq:step_3_equic} and \eqref{eq:finite_net_approximation}. Consequently, for some choice of $K_1$ sufficiently large, each of these three bounds can be bounded above $K_1\Delta_n^\xi(\beta)/3$. 

I proceed by applying the conditional Strassen theorem. Conditional on $Z_1,\ldots, Z_n$, I compare $T_{n,G}^{\xi,tr}$ first to the corresponding supremum over $\mathcal N_n \cup (-\mathcal N_n)$. First, notice that:
\begin{align*}
    & \bbE_G\left[
        \sup_{A\in\mathcal B(\bbR)}
        \left\{
        P_{n,\xi}\left(
            T_{n,G}^\xi \in A
        \right)
        -
        P_{n,G}\left(
            Z_{n,G}^{tr}
            \in A^{K_1\Delta_n^\xi(\beta)}
        \right)
        \right\}
    \right]
    \\
    & \qquad \le
    P_{n,G,\xi}\left(
        \left\|
        \bbG_n^\xi
        \right\|_{\calF_{n,G}(n^{-\rho})}
        >
        \frac{K_1\Delta_n^\xi(\beta)}{3}
    \right)
    \\
    & \qquad\quad
    +
    P_{n,G,\xi}\left(
        \left|
        \sup_{f\in\calN_n}
        \left|
        \bbG_n^\xi f
        \right|
        -
        \widetilde Z_{n,G}^{\calN}
        \right|
        >
        \frac{K_1\Delta_n^\xi(\beta)}{3}
    \right)
    \\
    & \qquad\quad
    +
    P\left(
        \left\|
        X_{n,G}
        \right\|_{\calF_{n,G}(n^{-\rho})}
        >
        \frac{K_1\Delta_n^\xi(\beta)}{3}
    \right)
    \\
    & \qquad \le
    P_{n,G}\left(\mathcal E_n^c\right)
    +
    P_{n,G,\xi}\left(
        \left\|
        \bbG_n^\xi
        \right\|_{\calF_{n,G}(n^{-\rho})}
        >
        \frac{K_1\Delta_n^\xi(\beta)}{3}\ \cap
        \ \mathcal E_n
    \right)
    \\
    & \qquad\quad
    +
    P_{n,G,\xi}\left(
        \left|
        \sup_{f\in\calN_n}
        \left|
        \bbG_n^\xi f
        \right|
        -
        \widetilde Z_{n,G}^{\calN}
        \right|
        >
        \frac{K_1\Delta_n^\xi(\beta)}{3}
    \right)
    \\
    & \qquad\quad
    +
    P\left(
        \left\|
        X_{n,G}
        \right\|_{\calF_{n,G}(n^{-\rho})}
        >
        \frac{K_1\Delta_n^\xi(\beta)}{3}
    \right)
    \\
    & \qquad \le
    \beta
    +
    \frac{2}{n}
    +
    C_4\left(\beta+n^{-1}\right)
    +
    \frac{2}{n}
    \\
    & \qquad \le
    K_2\left(\beta+n^{-1}\right).
\end{align*}
The first inequality holds by step 1. The third inequality holds by \eqref{eq:step_3_equic} and by \eqref{eq:finite_net_approximation}. The final inequality holds by taking $K_2 := C_4 +5$. 

Since $\sigma(Z_1,\ldots,Z_n)$ is countable generated, lemma 4.2 of \citealt{CCK_2016} provides $\Tilde{T}_{n,G}^{\xi,tr}$ satisfying:
\begin{equation*}
    \Tilde{T}_{n,G}^{\xi,tr}|Z_1,\ldots,Z_n\overset{d}{=} Z_{n,G}^{tr}\quad a.s.~,
\end{equation*}
and:
\begin{equation*}
    P_{n,G,\xi}\left(
|\Tilde{T}_{n,G}^{\xi,tr} - T_{n,G}^{\xi,tr}|>K_1\Delta_n^\xi(\beta)
    \right)\le K_2(\beta + n^{-1})~,
\end{equation*}
thus concluding the proof.
\end{proof} 

In what follows, we take $\beta_n = \frac{1}{\log n}$, and define $\eta_n = K_1\Delta_n^\xi(\beta_n)$. The following reasoning demonstrates that $\eta_n = o(n^{-\vartheta/4})$:
\begin{align*}
    \eta_n
    &= K_1\Delta_n^\xi(\beta_n)
    \\
    &= K_1\left\{
        n^{\vartheta/4-\rho(1-\vartheta/2)}
        +
        n^{-\rho}(\log n)^{3/4}
        +
        n^{-1/4+\vartheta/2+\rho\vartheta/2}
        (\log n)^{5/4}
    \right\}
    \\
    &= K_1 n^{-\vartheta/4}
    \left\{
        n^{\vartheta/2-\rho(1-\vartheta/2)}
        +
        n^{-(\rho-\vartheta/4)}(\log n)^{3/4}
        +
        n^{-1/4+3\vartheta/4+\rho\vartheta/2}
        (\log n)^{5/4}
    \right\}
    \\
    &= o\left(n^{-\vartheta/4}\right)~.
\end{align*}
The first equality follows by the definition of $\eta_n$. The second equality substitutes $\beta_n=(\log n)^{-1}$ into the definition of $\Delta_n^\xi(\beta)$ in \eqref{eq:delta_n_xi}. The third equality follows by factoring out $n^{-\vartheta/4}$. For the first term in braces, $\vartheta/2-\rho(1-\vartheta/2)<0$ since $\rho>\vartheta/(2-\vartheta)$. For the second term, $\rho>\vartheta/(2-\vartheta)>\vartheta/4$, and therefore $n^{-(\rho-\vartheta/4)}(\log n)^{3/4}=o(1)$. Finally, since $\rho<(1-4\vartheta)/(4\vartheta)$, it follows that $-1/4+3\vartheta/4+\rho\vartheta/2<-1/8+\vartheta/4<0$, where the final inequality uses $\vartheta<2/9<1/2$. Hence each term in braces is $o(1)$, giving that $\eta_n=o(n^{-\vartheta/4})$.

\subsection{Critical Value Validity}

The preceding lemmata establish that both the empirical and multiplier statistics can be strongly approximated by the supremum of the same Gaussian process. To obtain critical value validity, I translate this approximation into a comparison of distribution functions. This requires the distribution function of the limiting Gaussian process supremum to be sufficiently continuous. Lemma \ref{lem:gaussian_AC} supplies the appropriate anti-concentration result. Lemma \ref{lem:CDF_cv} uses the couplings established in lemmata \ref{lem:Gaussian_coupling} and \ref{lem:multiplier_coupling} and this anti-concentration result to establish convergence of CDFs. Finally, an application of assumption \ref{assu:variance_uniform_estimability} accounts for estimation error in the $\hat\sigma_n$. An application of lemma A.1 in \citealt{RomanoShaikh_2012} yields the desired critical value validity.

\begin{lem}
    \label{lem:gaussian_AC}
    Let assumptions \ref{assu:smoothness}, \ref{assu:modulus_binding}, and \ref{assu:pilot_independence} hold. There exist constants $C_0<\infty$ and $n_0<\infty$ such that for any $n\ge n_0$ and $ 0<r\le 1$:
    \begin{equation*}
        \sup_{G,H\in\calG}\sup_{x\in\bbR} P_{n,G}\left(
|Z_{n,G}-x|\le r
        \right) \lesssim r\left(
n^{\vartheta/4} + \sqrt{1\vee \log (C_0/r)}
        \right)~.
    \end{equation*}
    Consequently, the left hand side is $o(1)$ for any $r = o(n^{-\vartheta /4})$.
\end{lem}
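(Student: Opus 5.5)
The plan is to apply the Gaussian anti-concentration inequality of \citealt{CCK_ac_2014} (their Corollary 2.1 / Theorem 2.1) to the separable centered Gaussian process $\bbX_{n,G}$ indexed by $\calF_{n,G}$. The first step is to reduce to a supremum of a process rather than of its absolute value. Since $Z_{n,G}=\sup_{f\in\calF_{n,G}}|\bbX_{n,G}f|=\sup_{f\in\calF_{n,G}\cup(-\calF_{n,G})}\bbX_{n,G}f$, it suffices to work with the symmetrized class, which is still pointwise measurable by lemma \ref{lem:CCK_meas_verification}. The second step is to record uniform bounds on the variances. By lemma \ref{lem:variance_bounds}, specifically \eqref{eq:variance_tail_bounds}, for $n\ge n_0$ we have
\begin{equation*}
\underline\sigma^2:=(1-\varepsilon)c^2/C^2\le \var_G(f)\le C^2/c^2=:\bar\sigma^2 .
\end{equation*}
These bounds hold uniformly over $f$, $G$ and $H$. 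The anti-concentration inequality for suprema of Gaussian processes whose variances lie in $[\underline\sigma^2,\bar\sigma^2]$ then gives
\begin{equation*}
\sup_x P(|Z_{n,G}-x|\le r)\le C_{\underline\sigma,\bar\sigma}\, r\left\{\bbE Z_{n,G}+\sqrt{1\vee\log(\underline\sigma/r)}\right\}.
\end{equation*}
Taking $C_0=\underline\sigma$, this is the stated form once $\bbE Z_{n,G}$ is controlled. The constant $C_{\underline\sigma,\bar\sigma}$ depends only on $c,C$, so it is uniform in $G,H$.

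The third step, and the main one, is to bound $\bbE Z_{n,G}\lesssim n^{\vartheta/4}$ uniformly. I would use Dudley's entropy integral with the canonical metric, which is the $L^2(P_G)$ seminorm, together with the uniform entropy bound \eqref{eq:subclass_centered_bound}, namely $\log N(\varepsilon,\calF_{n,G},L^2(P_G))\lesssim n^{\vartheta/2}\varepsilon^{-\vartheta}$ for $\varepsilon\le 1$. The diameter of the class is at most $2\bar\sigma$. Radii above $1$ are handled by monotonicity of covering numbers, exactly as in the computation of $J_n(1)$ in lemma \ref{lem:Gaussian_coupling}. For any fixed $f_0$, this yields
\begin{equation*}
\bbE Z_{n,G}\le \bbE|\bbX_{n,G}f_0|+C\int_0^{2\bar\sigma}\sqrt{1+\log N(\varepsilon,\calF_{n,G},L^2(P_G))}\,d\varepsilon\lesssim 1+n^{\vartheta/4}\int_0^{1}\varepsilon^{-\vartheta/2}d\varepsilon+n^{\vartheta/4}.
\end{equation*}
The right-hand side is $\lesssim n^{\vartheta/4}$ because $\vartheta/2<1$. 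Finiteness of the integral also confirms that $Z_{n,G}$ is a.s. finite, so the anti-concentration theorem applies. This is really the same Dudley computation already used in Step 2 of lemma \ref{lem:multiplier_coupling}, now taken at full radius. The obstacle I expect is mainly bookkeeping. One point is to make sure the lower variance bound holds for the truncated, centered elements, which is where the condition $D^2/(8\bar s^2)>1$ from lemma \ref{lem:variance_bounds} is used. The other is to make sure the variance lower bound and the entropy bound hold uniformly in $H$ and $G$, so that the constants do not depend on these.

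Finally, substituting $r=o(n^{-\vartheta/4})$ gives $r n^{\vartheta/4}=o(1)$. For the remaining term, $r\sqrt{\log(C_0/r)}\to0$ because $r\to0$. Together these give the concluding claim.
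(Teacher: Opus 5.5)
Your proposal is correct and follows the paper's proof essentially step for step. Both use the uniform variance bounds from Lemma~\ref{lem:variance_bounds}, a Dudley bound $\bbE Z_{n,G}\lesssim n^{\vartheta/4}$ built from the entropy bound plus the same monotonicity trick used for $J_n(1)$, and the Gaussian anti-concentration inequality (the paper cites Lemma A.1 of \citealt{CCK_2014}, which is the same result as the corollary you invoke from \citealt{CCK_ac_2014}). The only cosmetic differences are that the paper adds $\{0\}$ to the symmetrized class for the Dudley step where you instead add $\bbE|\bbX_{n,G}f_0|$, and that it takes $C_0=C/c$ rather than $\underline\sigma$.
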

\begin{proof}
    Taking $\varepsilon = 1/2$ in lemma \ref{lem:variance_bounds} gives that for sufficiently large $n$, it holds that for any $G,H$ in $\calG$ and any $f$ in $\calF_{n,G}$:
    \begin{equation*}
        \frac{c^2}{2C^2}\le P_Gf^2 \le \frac{C^2}{c^2}~.
    \end{equation*}
    It remains to bound the expected supremum of $Z_{n,G}$. To take suprema of absolute values, consider $\calF_{n,G}\cup \{0\}\cup(-\calF_{n,G})$. The covering number of this augmented class is bounded above by twice the covering number of $\calF_{n,G}$ plus one. Applying Dudley's entropy inequality (corollary 2.2.8 of \citealt{VdVandWellner1996}) and applying the logic used to bound $J_n(1)$ in the proof of \ref{lem:Gaussian_coupling} yields:
    \begin{equation}
        \bbE_G Z_{n,G}^{tr} \lesssim n^{\vartheta /4}\label{eq:Z_n_G_tr_bound}~.
    \end{equation}
    Lemma A.1 of \citealt{CCK_2014} gives, for $0<r\le 1$:
    \begin{align*}
        \sup_x P_{n,G}\!\left(\left|Z^{tr}_{n,G}-x\right|\le r\right)
&\le
Kr\left\{
\mathbb E_G Z^{tr}_{n,G}
+
\sqrt{
1\vee\log\left(\frac{C}{cr}\right)
}
\right\}
\\
&\le
Kr\left\{
n^{\vartheta/4}
+
\sqrt{
1\vee\log\left(\frac{C}{cr}\right)
}
\right\}~,
    \end{align*}
    thus proving the result with $C_0 := C/c$.
\end{proof}

To obtain critical value validity, I compare distribution functions. In particular, define:
\begin{equation*}
    F_{n,G}(x)
:=
P_{n,G}\!\left(T_n(\Gamma(G))\le x\right),
\qquad
\widehat F_{n,G}(x)
:=
P_{n,\xi}\!\left(
T_n^\xi(\Gamma(G))\le x
\,\middle|\,
Z_1,\ldots,Z_n
\right)~.
\end{equation*}
The following lemma \ref{lem:CDF_cv} verifies the convergence of these two distribution functions.

\begin{lem}
    \label{lem:CDF_cv}
    Let assumptions \ref{assu:smoothness}, \ref{assu:modulus_binding}, and \ref{assu:pilot_independence} hold. Let $D$ be such that $D^2 / (8\underline{s}^2) > 1+\frac{\vartheta}{4}$. Then the empirical and multiplier distribution functions satisfy, for any $\varepsilon>0$:
    \begin{equation}
    \label{eq:CDF_cv_1}
       \sup_{G,H\in\calG}
P_{n,G}\left(
\sup_x
\left|
\widehat F_{n,G}(x)
-
P_{n,G}\left(Z_{n,G}^{tr}\le x\right)
\right|
>
\varepsilon
\right)
\to 0~,
    \end{equation}
    and:
    \begin{equation}
    \label{eq:CDF_cv_2}
        \sup_{G,H\in\calG} \sup_{x\in\bbR} \left|
        F_{n,G}(x) - P_{n,G}(Z_{n,G}^{tr}\le x)
       \right|\to0~.
    \end{equation}
\end{lem}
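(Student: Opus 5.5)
The plan is to derive both statements from the two couplings, lemmas \ref{lem:Gaussian_coupling} and \ref{lem:multiplier_coupling}, combined with the anti-concentration bound of lemma \ref{lem:gaussian_AC}. The couplings give perturbations of size $d_n,\eta_n=o(n^{-\vartheta/4})$, and anti-concentration converts these into CDF errors. Before that, one must pass from the untruncated statistics $T_n(\Gamma(G))$ and $T_n^\xi(\Gamma(G))$ to their truncated analogues $T_{n,G}^{tr}$ and $T_{n,G}^{\xi,tr}$. This truncation step is where the condition $D^2/(8\underline{s}^2)>1+\vartheta/4$ is used.

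\textbf{Step 1 (truncation).} Write $\psi_{n,\gamma}=\psi_{n,\gamma}\mathbb{1}(|t|\le\tau_n)+\psi_{n,\gamma}\mathbb{1}(|t|>\tau_n)$. For the empirical statistic, the contribution of the tail pieces has two parts. The first is the sample part, which vanishes on the event that no $Z_i$ satisfies $|\hat\theta_i|>\tau_n$. By the Gaussian tail bound in the proof of lemma \ref{lem:variance_bounds}, that event fails with probability at most $2n\exp(-\tau_n^2/(8\bar s^2))=o(1)$ uniformly over $G$. The second part is the centering term $\sqrt n\,|P_G\psi_{n,\gamma}\mathbb{1}(|t|>\tau_n)|$. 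Using the global bound \eqref{eq:global_psi_bounds}, this is at most $\sqrt n\cdot\sqrt n\, P_G(|t|>\tau_n)\lesssim n^{1-D^2/(8\underline s^2)}$, which is $o(n^{-\vartheta/4})$ by the choice of $D$. Hence, with probability $1-o(1)$ uniformly in $G$ and $H$, $|T_n(\Gamma(G))-T_{n,G}^{tr}|\le r_n=o(n^{-\vartheta/4})$.

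For the multiplier statistic, on the same event all tail indicators vanish in the sample. The only remaining discrepancy comes from the recentring at $P_n\psi$, and this term is identical for the truncated and untruncated functions. Hence $T_n^\xi(\Gamma(G))=T_{n,G}^{\xi,tr}$ there. The event does not depend on $\xi$, so the conditional laws agree on an event of probability $1-o(1)$.

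\textbf{Step 2 (CDF comparison).} For \eqref{eq:CDF_cv_2}, combine Step 1 with lemma \ref{lem:Gaussian_coupling} to get $|T_n(\Gamma(G))-\widetilde T^{tr}_{n,G}|\le r_n+d_n$ with probability $1-o(1)$ uniformly. The standard sandwich then gives
\[
P(Z^{tr}_{n,G}\le x-r_n-d_n)-o(1)\le F_{n,G}(x)\le P(Z^{tr}_{n,G}\le x+r_n+d_n)+o(1).
\]
Lemma \ref{lem:gaussian_AC} with $r=r_n+d_n=o(n^{-\vartheta/4})$ bounds the gap between the outer probabilities by $o(1)$ uniformly in $x$, $G$ and $H$.

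For \eqref{eq:CDF_cv_1}, use lemma \ref{lem:multiplier_coupling} with $\beta_n=1/\log n$. It gives $P_{n,G,\xi}(|T^{\xi,tr}_{n,G}-\widetilde T^{\xi,tr}_{n,G}|>\eta_n)\le K_2(\beta_n+n^{-1})=o(1)$, where $\widetilde T^{\xi,tr}_{n,G}\mid Z\overset{d}{=}Z^{tr}_{n,G}$. Markov's inequality applied to the conditional probability $P_\xi(\,|\cdot|>\eta_n\mid Z)$ shows it is at most $\sqrt{K_2(\beta_n+n^{-1})}$ outside a $Z$-event of probability at most $\sqrt{K_2(\beta_n+n^{-1})}$. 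On the good $Z$-event, the conditional sandwich together with anti-concentration at radius $\eta_n=o(n^{-\vartheta/4})$ yields $\sup_x|\widehat F_{n,G}(x)-P(Z^{tr}_{n,G}\le x)|\le o(1)+\sqrt{K_2(\beta_n+n^{-1})}$. This is below $\varepsilon$ for large $n$, uniformly in $G$ and $H$.

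\textbf{Main obstacle.} I expect Step 1 to be the main difficulty, for two reasons. First, the truncation must be controlled at the scale $o(n^{-\vartheta/4})$ required by the anti-concentration bound, not merely $o(1)$, because $\bbE Z^{tr}_{n,G}$ grows like $n^{\vartheta/4}$. This is exactly what forces the strengthened condition on $D$. Second, the event and all constants must be uniform over $G$ and $H$. Uniformity in $H$ relies on assumption \ref{assu:pilot_independence}, which lets us fix $H$, and on the bounds being $H$-free.
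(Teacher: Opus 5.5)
Your proposal is correct and follows the paper's proof essentially step for step. The paper also truncates on $\{\max_i|\hat\theta_i|\le\tau_n\}$ and bounds the centering term via \eqref{eq:global_psi_bounds} at scale $o(n^{-\vartheta/4})$; it then combines the couplings of lemmas \ref{lem:Gaussian_coupling} and \ref{lem:multiplier_coupling} with the anti-concentration bound of lemma \ref{lem:gaussian_AC}, packaging your sandwich as lemma 2.1 of \citealt{CCK_2016}, and uses Markov's inequality for the conditional bootstrap probability.

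One small point you inherit from the paper: the Gaussian tail bound you cite is governed by $\bar s$, so the centering term is really $\lesssim n^{1-D^2/(8\bar s^2)}$. The condition on $D$ should therefore read $D^2/(8\bar s^2)>1+\vartheta/4$; the stated version with $\underline s$ does not imply it, and your own first truncation step already needs $D^2/(8\bar s^2)>1$.
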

\begin{proof}
    I apply lemma 2.1 of \citealt{CCK_2016} which states, for random variables $V$ and $W$ and for $r>0$:
    \begin{equation*}
        \sup_x| P(V\le x) - P(W\le x)| \le P(|V-W|> r) + \sup_x P(|W-x|\le r)~.
    \end{equation*}
    The couplings in lemmata \ref{lem:Gaussian_coupling} and \ref{lem:multiplier_coupling}, and the anti-concentration result in lemma \ref{lem:gaussian_AC} control both terms on the right hand side of the above.

    By \eqref{eq:L2_bounds_tail_psi}, and since $D$ is such that $D^2 / (8\underline{s}^2) > 1+\frac{\vartheta}{4}$, it follows that:
    \begin{equation}
    \label{eq:tail_bounds_CDF_cv}
        nP_G(|\hat\theta| > \tau_n)\lesssim n^{1-D^2/(8\underline{s}^2)} = o(n^{-\vartheta / 4})~.
    \end{equation}
    By \eqref{eq:tail_bounds_CDF_cv} and by a union bound, $P_{n,G}(\max_i |\hat\theta_i|>\tau_n) = o(n^{-\vartheta / 4})$. On the event $\max_i |\hat\theta_i|\le\tau_n$:
    \begin{align*}
\left|T_n(\Gamma(G))-T_{n,G}^{tr}\right|
&=
\left|
\sup_{\gamma\in\Gamma(G)}
\left|\mathbb G_{n,G}\psi_{n,\gamma}\right|
-
\sup_{\gamma\in\Gamma(G)}
\left|
\mathbb G_{n,G}
\left(
\psi_{n,\gamma}\mathbbm{1}\{|t|\le\tau_n\}
\right)
\right|
\right|
\\
&\le
\sup_{\gamma\in\Gamma(G)}
\left|
\mathbb G_{n,G}
\left(
\psi_{n,\gamma}\mathbbm{1}\{|t|>\tau_n\}
\right)
\right|
\\
&=
\sup_{\gamma\in\Gamma(G)}
\left|
\sqrt n(P_n-P_G)
\left(
\psi_{n,\gamma}\mathbbm{1}\{|t|>\tau_n\}
\right)
\right|
\\
&=
\sqrt n\sup_{\gamma\in\Gamma(G)}
\left|
P_G\!\left(
\psi_{n,\gamma}\mathbbm{1}\{|t|>\tau_n\}
\right)
\right|
\\
&\lesssim
nP_G(|\hat\theta|>\tau_n)
=
o(n^{-\vartheta/4})~.
\end{align*}
The final inequality holds by the bound in \eqref{eq:global_psi_bounds}.

I use these rates to apply lemma 2.1 of \citealt{CCK_2014} and obtain \eqref{eq:CDF_cv_2}. To this end, let $d_n = o(n^{-\vartheta/4})$ as in lemma \ref{lem:Gaussian_coupling}. It follows that:
\begin{align*}
&P_{n,G}\left(
\left|T_n(\Gamma(G))-\Tilde{T}_{n,G}^{tr}\right|
>
d_n+nP_G\left(|\hat{\theta}|>\tau_n\right)
\right)
\\
&\qquad\le
P_{n,G}\left(
\left|T_n(\Gamma(G))-T_{n,G}^{tr}\right|
>
nP_G\left(|\hat{\theta}|>\tau_n\right)
\right)
+
P_{n,G}\left(
\left|T_{n,G}^{tr}-\Tilde{T}_{n,G}^{tr}\right|
>d_n
\right)
\\
&\qquad \le
P_{n,G}\left(
\max_i |\hat\theta_i| >\tau_n
\right)
+
P_{n,G}\left(
\left|T_{n,G}^{tr}-\Tilde{T}_{n,G}^{tr}\right|
>d_n
\right) =o(1)~.
\end{align*}
The first inequality holds by the triangle inequality and a union bound. The second inequality holds since $\left|T_{n,G}^{tr}-\Tilde{T}_{n,G}^{tr}\right| \ne 0$ if and only if $\max_i |\hat\theta_i| >\tau_n$. The final equality holds by the above application of \eqref{eq:tail_bounds_CDF_cv} and by lemma \ref{lem:Gaussian_coupling}. Applying the above, lemma \ref{lem:gaussian_AC}, and lemma 2.1 of \citealt{CCK_2016} yields \ref{eq:CDF_cv_2}.

It remains to establish \eqref{eq:CDF_cv_1}. To this end, notice that:
\begin{align*}
\sup_x
\left|
\widehat F_{n,G}(x)
-
P_{n,G}\left(Z_{n,G}^{tr}\le x\right)
\right|
&\le
P_{n,G}\left(
\left|
T_{n,G}^{\xi,tr}
-
\widetilde T_{n,G}^{\xi,tr}
\right|
>
\eta_n
\,\middle|\,
Z_1,\ldots,Z_n
\right)
\\
&\quad+
\mathbbm{1}
\left\{
\max_{i\le n}|\hat\theta_i|>\tau_n
\right\}
\\
&\quad+
\sup_x
P_{n,G}\left(
\left|Z_{n,G}^{tr}-x\right|
\le
\eta_n
\right).
\end{align*}
This inequality holds by applying lemma 2.1 of \citealt{CCK_2016} on the event that $\Tilde{T}_{n,G}^{\xi,tr} = \Tilde{T}_{n,G}^{\xi}$, and adding an additional term to indicate the event $\Tilde{T}_{n,G}^{\xi,tr} \ne \Tilde{T}_{n,G}^{\xi}$. I control the first term on the right hand side above.
\begin{align*}
\bbE_G\left[
P_{n,G}\left(
\left|
T_{n,G}^{\xi,tr}
-
\widetilde T_{n,G}^{\xi,tr}
\right|
>
\eta_n
\,\middle|\,
Z_1,\ldots,Z_n
\right)
\right]
&=
P_{n,G}\left(
\left|
T_{n,G}^{\xi,tr}
-
\widetilde T_{n,G}^{\xi,tr}
\right|
>
\eta_n
\right)
\\
&=o(1)~.
\end{align*}
Therefore, by Markov's inequality, for any $\varepsilon>0$,
\begin{align*}
& \sup_{G,H\in\calG}
P_{n,G}\Bigg(
P_{n,G}\left(
\left|
T_{n,G}^{\xi,tr}
-
\widetilde T_{n,G}^{\xi,tr}
\right|
>
\eta_n
\,\middle|\,
Z_1,\ldots,Z_n
\right)
>
\varepsilon
\Bigg)\\
&\qquad \le
\frac{1}{\varepsilon}
\sup_{G,H\in\calG}
\bbE_G\left[
P_{n,G}\left(
\left|
T_{n,G}^{\xi,tr}
-
\widetilde T_{n,G}^{\xi,tr}
\right|
>
\eta_n
\,\middle|\,
Z_1,\ldots,Z_n
\right)
\right]
\\
&\qquad =o(1)~.
\end{align*}
Moreover, by \eqref{eq:tail_bounds_CDF_cv}:
\begin{equation*}
    \sup_{G,H\in\calG}
P_{n,G}\left(
\max_{i\le n}|\hat\theta_i|>\tau_n
\right)
=
o(n^{-\vartheta/4})
\to 0~.
\end{equation*}
Finally, combining the above two bounds and applying lemma \ref{lem:gaussian_AC} yields:
\begin{equation*}
    \sup_{G,H\in\calG}
P_{n,G}\left(
\sup_x
\left|
\widehat F_{n,G}(x)
-
P_{n,G}\left(Z_{n,G}^{tr}\le x\right)
\right|
>
\varepsilon
\right)
\to 0~,
\end{equation*}
Thus proving the result.
\end{proof}
\begin{proof}
    \textbf{Proof of theorem \ref{thm:bootstrap_validity}:} By the triangle inequality, and by applying \eqref{eq:CDF_cv_1} and \eqref{eq:CDF_cv_2}:
    \begin{equation}
        \sup_{G,H\in\calG} P_{n,G}\left(
        \sup_x \left|
\hat F_{n,G}(x) - F_{n,G}(x)
        \right|>\varepsilon
        \right)\to 0~,
    \end{equation}
    as $n\to\infty$. Applying lemma A.1.vi of \citealt{RomanoShaikh_2012} provides:
         \begin{align*}
\sup_{G,H\in\mathcal G}P_{n,G}\!\left(
T_n(\Gamma(G))
>
\hat c_n(1-\alpha,\Gamma(G))
\right)
&\le
\alpha+\varepsilon
+
\sup_{G,H\in\mathcal G}P_{n,G}\!\left(
\sup_x
\left\{
\widehat F_{n,G}(x)
-
F_{n,G}(x)
\right\}
>
\varepsilon
\right)
\\
&=
\alpha+\varepsilon+o(1)~.
\end{align*}
Since $\varepsilon$ was arbitrary, $\hat{c}_n$ satisfy condition (i) of assumption \ref{assu:cv_validity_monotonicity}. Condition (ii) of assumption \ref{assu:cv_validity_monotonicity} holds by the monotonicity of the supremum multiplier statistic. 
\end{proof}
It remains to remark on estimation error in the standard errors. I apply the arguments contained in the proof of theorem 3.1 of \citealt{CCK_ac_2014} and apply assumption \ref{assu:variance_uniform_estimability}. Define:
\begin{equation*}
    S_{n,G} := \sup_{\gamma \in \Gamma(G)}\left|
    \frac{\sqrt{n}(\hat{L}_n(\gamma)- P_GQ_\gamma)}{\sigma_G(Q_\gamma)}
    \right|~.
\end{equation*}
Let $\calH_n(G)$ denote the set of $H$ such that:
\begin{equation*}
    \sup_{\gamma \in \Gamma(G)}\left|
\frac{\hat\sigma_n(Q_\gamma)}{\sigma_G(Q_\gamma)}-1
    \right|\le\varepsilon_n~,
\end{equation*}
where $\varepsilon_n = o(n^{-\vartheta/2})$ is as supplied in assumption \ref{assu:variance_uniform_estimability}. On the event $H\in\calH_{n,G}$:
\begin{equation*}
    |S_{n,G}-T_n(\Gamma(G))|\le\varepsilon_nT_n(\Gamma(G))~.
\end{equation*}
On the further event $T_n(\Gamma(G)) \le \varepsilon_n^{-1/2}$, this difference can be further bounded by $\varepsilon_n^{1/2}$. On this event, $T_n(\Gamma(G)) \le x-\varepsilon_n^{1/2}$ implies that $S_{n,G}\le x$ which further implies that $T_n(\Gamma(G)) \le x+\varepsilon_n^{1/2}$. Consequently:
\begin{align*}
    &\{T_n(\Gamma(G))\le x-\varepsilon_n^{1/2}\}\cap \{T_n(\Gamma(G))\le \varepsilon_n^{-1/2}\}\\
    &\qquad\subseteq \{S_{n,G}\le x\} \\
    &\qquad \subseteq \{T_n(\Gamma(G))\le x+\varepsilon_n^{1/2}\}\cup \{T_n(\Gamma(G))> \varepsilon_n^{-1/2}\}\}
\end{align*}
This further implies:
\begin{align}
    &F_{n,G}(x-\varepsilon_n^{1/2}) - P_{n,G}(T_n(\Gamma(G))> \varepsilon_n^{-1/2})\nonumber\\
    &\qquad P_{n,G}(S_{n,G}\le x)\nonumber\\
    & \qquad F_{n,G}(x+\varepsilon_n^{1/2}) + P_{n,G}(T_n(\Gamma(G))> \varepsilon_n^{-1/2})\label{eq:prob_bounds}
\end{align}
Therefore:
\begin{align}
\sup_x
\left|
P_{n,G}(S_{n,G}\le x)
-
P_{n,G}(Z_{n,G}^{tr}\le x)
\right|
&\le
\sup_x
\max\Bigg\{
\left|
F_{n,G}\!\left(x+\varepsilon_n^{1/2}\right)
-
P_{n,G}(Z_{n,G}^{tr}\le x)
\right|,\nonumber
\\
&\hspace{45mm}
\left|
F_{n,G}\!\left(x-\varepsilon_n^{1/2}\right)
-
P_{n,G}(Z_{n,G}^{tr}\le x)
\right|
\Bigg\}
\nonumber\\
&\qquad
+
P_{n,G}\!\left(
T_n(\Gamma(G))
>
\varepsilon_n^{-1/2}
\right)
\nonumber\\
&\le
\sup_x
\left|
F_{n,G}(x)
-
P_{n,G}(Z_{n,G}^{tr}\le x)
\right|
\nonumber\\
&\qquad
+
P_{n,G}\!\left(
T_n(\Gamma(G))
>
\varepsilon_n^{-1/2}
\right)
\nonumber\\
&\qquad
+
\sup_x
P_{n,G}\!\left(
\left|
Z_{n,G}^{tr}-x
\right|
\le
\varepsilon_n^{1/2}
\right)
\nonumber\\
&\le
2
\sup_x
\left|
F_{n,G}(x)
-
P_{n,G}(Z_{n,G}^{tr}\le x)
\right|
+
\varepsilon_n^{1/2}\,
\mathbb E_G Z_{n,G}^{tr}
\nonumber\\
&\qquad
+
\sup_x
P_{n,G}\!\left(
\left|
Z_{n,G}^{tr}-x
\right|
\le
\varepsilon_n^{1/2}
\right)
=
o(1)\label{eq:CDF_bounds_true_sigma}~.
 \end{align}
 The first inequality holds by \eqref{eq:prob_bounds} and the convexity of absolute values. The second inequality holds by a union bound. The third inequality holds by Markov's inequality. The final equality holds by application of lemma \ref{lem:CDF_cv}, \eqref{eq:Z_n_G_tr_bound}, the assumption that $\varepsilon_n = o(n^{-\vartheta/2})$, and by lemma \ref{lem:gaussian_AC}. Applying \eqref{eq:CDF_bounds_true_sigma}, \eqref{eq:CDF_cv_1}, and the triangle inequality yields:
 \begin{equation*}
     \label{eq:CDF_cv_true_sigma}
     \sup_{\substack{G,H\in\mathcal G\\ H\in\calH_n(G)}}
P_{n,G}\!\left(
\sup_x
\left|
\widehat F_{n,G}(x)
-
P_{n,G}(S_{n,G}\le x)
\right|
>
\varepsilon
\right)
\to 0~.
 \end{equation*}
 \begin{theorem}
 \label{thm:cv_assumptions_verified_var_est}
      Let assumptions \ref{assu:smoothness}, \ref{assu:modulus_binding}, \ref{assu:pilot_independence}, and \ref{assu:variance_uniform_estimability} hold. Then, for any $\alpha\in(0,1)$, $\hat{c}_n$ satisfy:
      \begin{equation}
          \limsup_{n\to\infty}\sup_{G\in\calG}P_{n,G}\!\left(
S_{n,G}
>
\hat c_n(1-\alpha,\Gamma(G))
\right)\le\alpha
      \end{equation}
 \end{theorem}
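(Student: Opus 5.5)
The argument reduces the claim to the uniform closeness of distribution functions just established, and then invokes lemma A.1.vi of \citealt{RomanoShaikh_2012} exactly as in the proof of theorem \ref{thm:bootstrap_validity}. The difference is that $T_n(\Gamma(G))$ is replaced by the studentized-by-truth statistic $S_{n,G}$.

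First, I split according to whether the pilot lies in $\calH_n(G)$. By assumption \ref{assu:variance_uniform_estimability}, $\sup_{G}P_{n,G}(H\notin\calH_n(G))\to 0$, so
\begin{equation*}
\sup_{G\in\calG}P_{n,G}\left(S_{n,G}>\hat c_n(1-\alpha,\Gamma(G))\right)\le \sup_{G\in\calG}P_{n,G}\left(S_{n,G}>\hat c_n(1-\alpha,\Gamma(G)),\,H\in\calH_n(G)\right)+o(1).
\end{equation*}
By assumption \ref{assu:pilot_independence}, I can condition on $H$ and work uniformly over $H\in\calH_n(G)$.

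Second, on this event the display preceding the theorem gives the following. For every $\varepsilon>0$, with probability tending to one uniformly in $G$ and $H\in\calH_n(G)$,
\begin{equation*}
\sup_x\left|\widehat F_{n,G}(x)-P_{n,G}(S_{n,G}\le x)\right|\le\varepsilon.
\end{equation*}
This bound combines \eqref{eq:CDF_bounds_true_sigma} with \eqref{eq:CDF_cv_1}. Let $F^S_{n,G}(x):=P_{n,G}(S_{n,G}\le x)$. On the event where the sup-distance is at most $\varepsilon$, I have $\widehat F_{n,G}(\hat c_n)\ge 1-\alpha$, hence $F^S_{n,G}(\hat c_n)\ge 1-\alpha-\varepsilon$. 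Therefore $\hat c_n(1-\alpha,\Gamma(G))\ge c^S_{n,G}(1-\alpha-\varepsilon)$, the deterministic $(1-\alpha-\varepsilon)$-quantile of $S_{n,G}$. This gives
\begin{equation*}
P_{n,G}(S_{n,G}>\hat c_n)\le P_{n,G}\left(S_{n,G}>c^S_{n,G}(1-\alpha-\varepsilon)\right)+P_{n,G}\left(\sup_x|\widehat F_{n,G}-F^S_{n,G}|>\varepsilon\right)\le \alpha+\varepsilon+o(1).
\end{equation*}
This is precisely lemma A.1.vi of \citealt{RomanoShaikh_2012}, which I would cite directly.

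The one point needing care is that $F^S_{n,G}$ may have atoms, so that $P(S>c^S(1-\alpha-\varepsilon))\le\alpha+\varepsilon$ holds only by right-continuity. To handle this I would instead pass through the Gaussian proxy. Its CDF $P_{n,G}(Z^{tr}_{n,G}\le x)$ is continuous uniformly by lemma \ref{lem:gaussian_AC}, and \eqref{eq:CDF_bounds_true_sigma} transfers this to $F^S_{n,G}$ up to $o(1)$.

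The main obstacle is uniformity. The term $\varepsilon_n^{1/2}\,\bbE_G Z^{tr}_{n,G}$ in \eqref{eq:CDF_bounds_true_sigma} must vanish uniformly, and this is why $\varepsilon_n=o(n^{-\vartheta/2})$ is needed against the bound $\bbE_G Z^{tr}_{n,G}\lesssim n^{\vartheta/4}$ in \eqref{eq:Z_n_G_tr_bound}. The anti-concentration at scale $\varepsilon_n^{1/2}=o(n^{-\vartheta/4})$ likewise follows from lemma \ref{lem:gaussian_AC}. With these, taking the limsup and letting $\varepsilon\downarrow0$ yields the claim.
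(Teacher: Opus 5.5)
Your proposal is correct and follows the paper's proof. You restrict to $H\in\calH_n(G)$, which has probability tending to one by assumption \ref{assu:variance_uniform_estimability}, apply lemma A.1.vi of \citealt{RomanoShaikh_2012} to the uniform closeness of $\widehat F_{n,G}$ and the law of $S_{n,G}$ displayed just before the theorem, and then let $\varepsilon\downarrow 0$. Your concern about atoms is unnecessary, because every CDF is right-continuous, so $P_{n,G}\left(S_{n,G}>c^S_{n,G}(1-\alpha-\varepsilon)\right)\le\alpha+\varepsilon$ always holds and the detour through the Gaussian proxy is not needed.
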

 \begin{proof}
     Fix $0<\varepsilon < 1-\alpha$. Lemma A.1.vi of \citealt{RomanoShaikh_2012} provides:
     \begin{align*}
\sup_{\substack{G,H\in\mathcal G\\ H\in\calH_n(G)}}P_{n,G}\!\left(
S_{n,G}
>
\hat c_n(1-\alpha,\Gamma(G))
\right)
&\le
\alpha+\varepsilon
+
\sup_{\substack{G,H\in\mathcal G\\ H\in\calH_n(G)}}P_{n,G}\!\left(
\sup_x
\left\{
\widehat F_{n,G}(x)
-
P_{n,G}(S_{n,G}\le x)
\right\}
>
\varepsilon
\right)
\\
&=
\alpha+\varepsilon+o(1)~.
\end{align*}
By assumption \ref{assu:variance_uniform_estimability}, $H\in\calH_{n,G}$ with probability approaching one. Therefore:
\begin{equation*}
    \sup_{G\in\calG}P_{n,G}\!\left(
S_{n,G}
>
\hat c_n(1-\alpha,\Gamma(G))
\right) \le \alpha+\varepsilon~.
\end{equation*}
Since $\varepsilon$ was arbitrary, part (i) of \ref{assu:cv_validity_monotonicity} holds. Since $\widetilde{\Gamma}_1 \subseteq \widetilde{\Gamma}_2
\text{ implies }
T_n^\xi(\widetilde{\Gamma}_1)
\le
T_n^\xi(\widetilde{\Gamma}_2)$, part (ii) of \ref{assu:cv_validity_monotonicity} holds, thus concluding the proof.
 \end{proof}

\section{Additional Results from Economic Mobility Application}
\label{app:additional_results_mob}

This appendix presents supplementary results for the economic mobility application in section \ref{sec:economic-mobility-application}. In appendix \ref{app:KDE_results_mob}, I present results including an extension of the KDE $F$-localizations of \citealt{IgnatiadisandWager2021} to the heteroskedastic hierarchical model in \eqref{eq:EB_DGP}. In appendix \ref{app:additional_figs_mob}, I provide additional figures for the economic mobility application referenced in the main text.

% \subsection{Implementation Details from Economic Mobility Application}
% \label{app:Implementation_details_mob}

\subsection{Results using KDE \texorpdfstring{$F$}{F}-localization}
\label{app:KDE_results_mob}

This section presents results using an alternative $F$-localization procedure, discussed in remark \ref{rem:choice_of_floc}.

\begin{figure}[H]
    
    \begin{subfigure}{0.45\textwidth}
        \centering
        \includegraphics[width=\linewidth]{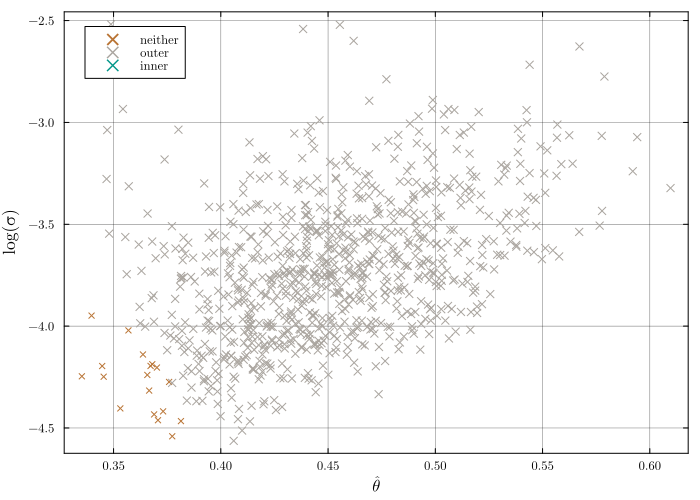}
        \caption{KDE $F$-localization}
    \end{subfigure}
    \hfill
    \begin{subfigure}{0.45\textwidth}
        \centering
        \includegraphics[width=\linewidth]{figs/unif_boot_inner_outer_theta_logsigma.png}
        \caption{Step-Down}
    \end{subfigure}
    \caption{Inner and outer confidence sets for top-third Bayes-optimal selection, in $(\hat\theta,\log(\sigma))$ space.}
    \label{fig:confidence_sets_inner_outer_pts_KDE}
    \end{figure}

\begin{figure}[H]
\begin{subfigure}{0.45\textwidth}
        \centering
        \includegraphics[width=\linewidth]{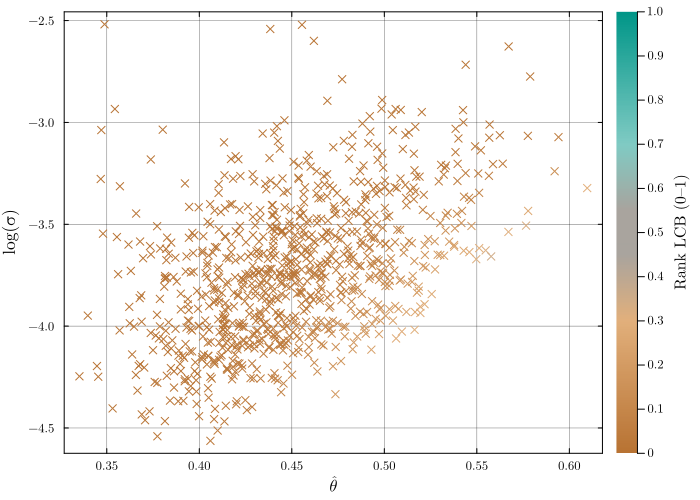}
        \caption{KDE $F$-localization}
    \end{subfigure}
    \hfill
    \begin{subfigure}{0.45\textwidth}
        \centering
        \includegraphics[width=\linewidth]{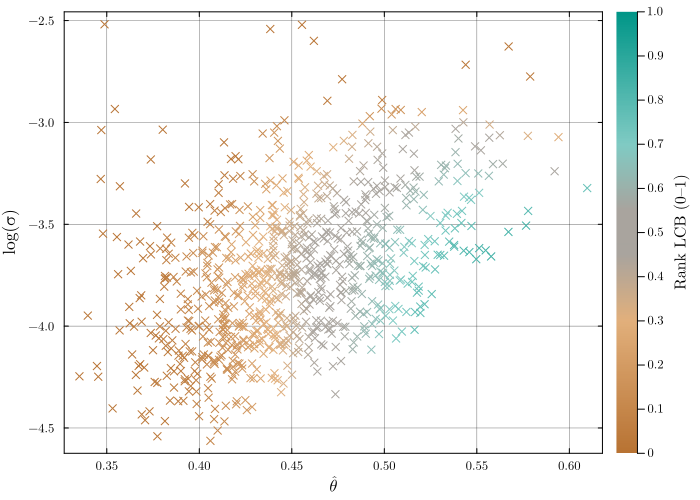}
        \caption{Step-Down}
    \end{subfigure}
    \caption{Lower confidence band for the ranks of Seattle census tracts.}
    \label{fig:rank_LCBs_scatter_KDE}
    \end{figure}

\begin{figure}[H]
\begin{subfigure}{0.45\textwidth}
        \centering
        \includegraphics[width=\linewidth]{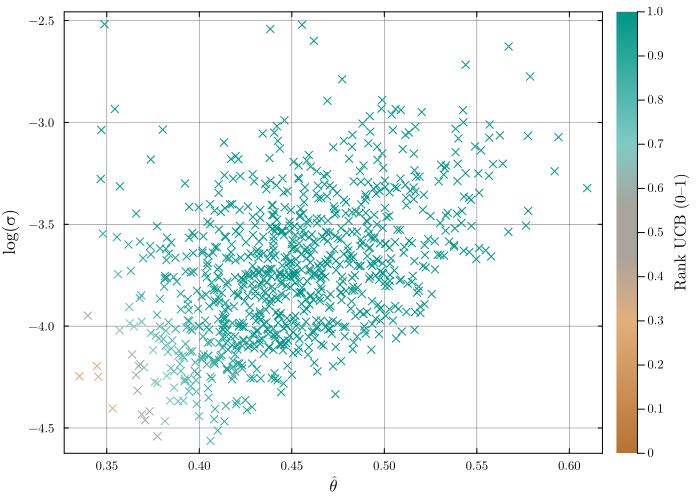}
        \caption{$F$-localization}
    \end{subfigure}
    \hfill
    \begin{subfigure}{0.45\textwidth}
        \centering
        \includegraphics[width=\linewidth]{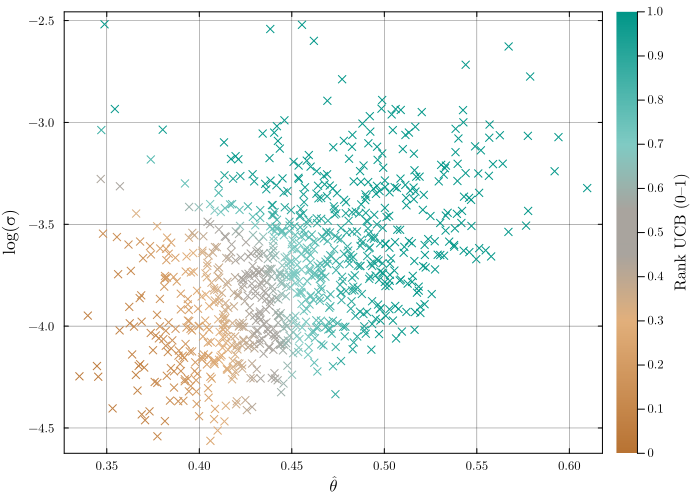}
        \caption{Step-Down}
    \end{subfigure}
    \caption{Upper confidence band for the ranks of Seattle census tracts.}
    \label{fig:rank_UCBs_scatter_KDE}
    \end{figure}

\subsection{Supplemental figures}
\label{app:additional_figs_mob}
This section presents supplemental figures referenced in the main text.

\begin{figure}[H]
    \begin{subfigure}{0.45\textwidth}
        \centering
        \includegraphics[width=\linewidth]{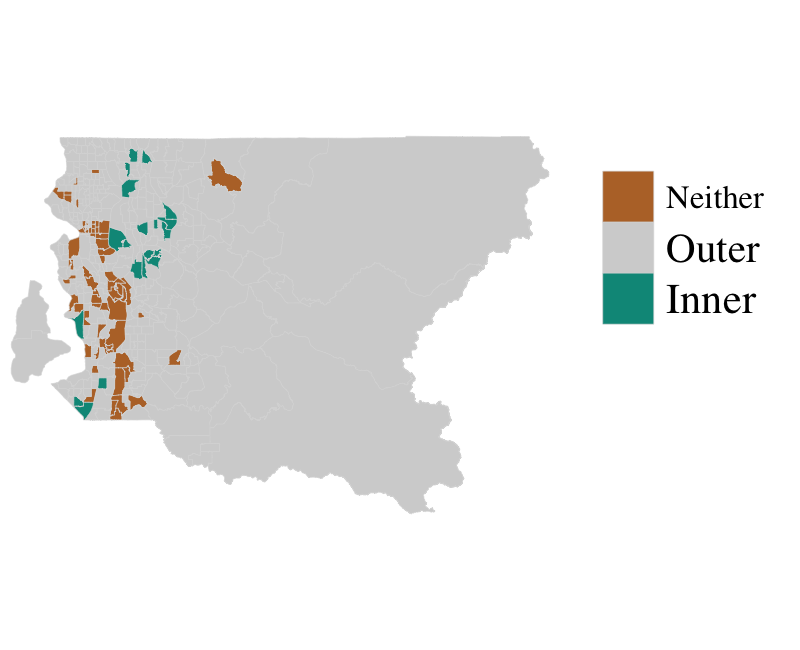}
        \caption{$F$-localization}
    \end{subfigure}
    \hfill
    \begin{subfigure}{0.45\textwidth}
        \centering
        \includegraphics[width=\linewidth]{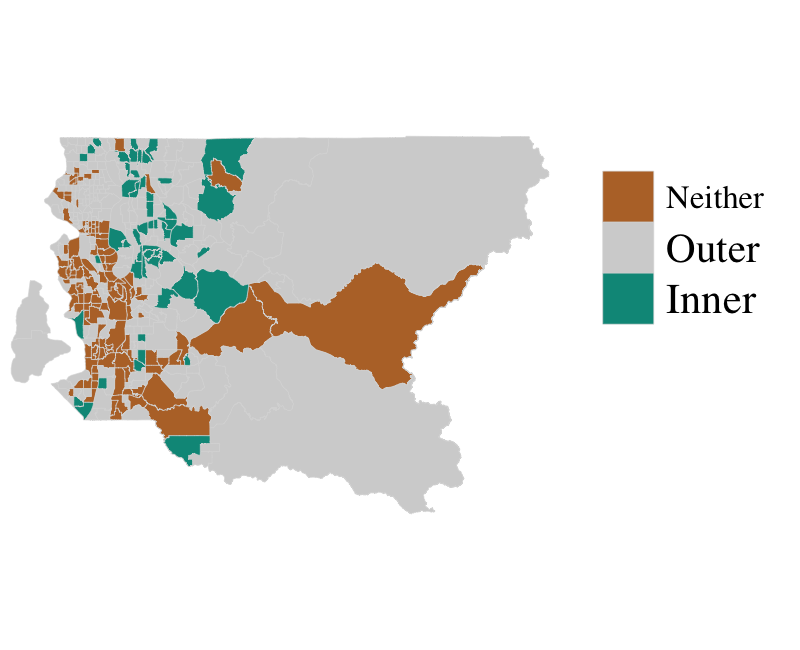}
        \caption{Step-Down}
    \end{subfigure}
    \caption{Inner and outer confidence sets for top-third Bayes-optimal selection, plotted on a map of Seattle.}
    \label{fig:sea_maps}
    \end{figure}

\begin{figure}[H]
\begin{subfigure}{0.45\textwidth}
        \centering
        \includegraphics[width=\linewidth]{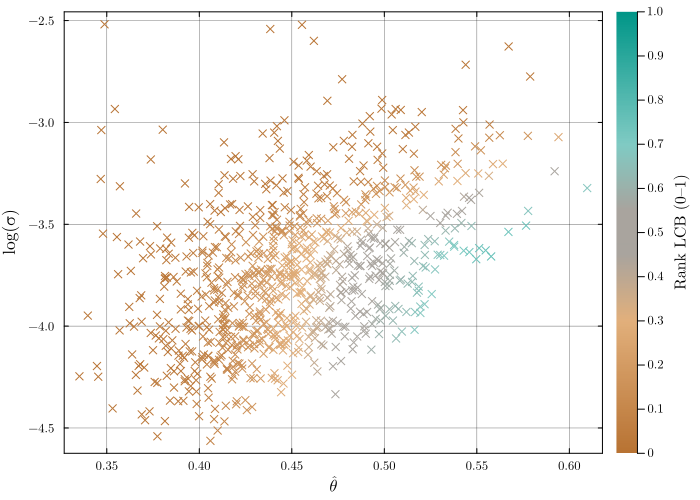}
        \caption{$F$-localization}
    \end{subfigure}
    \hfill
    \begin{subfigure}{0.45\textwidth}
        \centering
        \includegraphics[width=\linewidth]{figs/unif_boot_theta_logsigma_Rank_LCB_gradient.png}
        \caption{Step-Down}
    \end{subfigure}
    \caption{Lower confidence band for the ranks of Seattle census tracts.}
    \label{fig:rank_LCBs_scatter}
    \end{figure}

\begin{figure}[H]
\begin{subfigure}{0.45\textwidth}
        \centering
        \includegraphics[width=\linewidth]{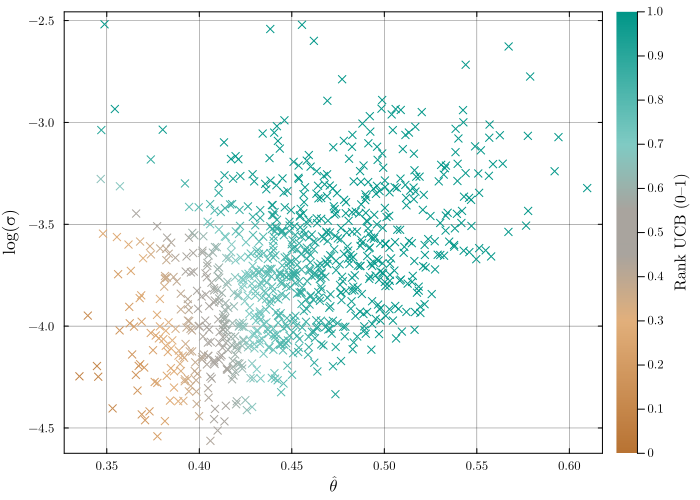}
        \caption{$F$-localization}
    \end{subfigure}
    \hfill
    \begin{subfigure}{0.45\textwidth}
        \centering
        \includegraphics[width=\linewidth]{figs/unif_boot_theta_logsigma_Rank_UCB_gradient.png}
        \caption{Step-Down}
    \end{subfigure}
    \caption{Upper confidence band for the ranks of Seattle census tracts.}
    \label{fig:rank_UCBs_scatter}
    \end{figure}

\newpage

\begin{figure}[H]
\begin{subfigure}{0.45\textwidth}
        \centering
        \includegraphics[width=\linewidth]{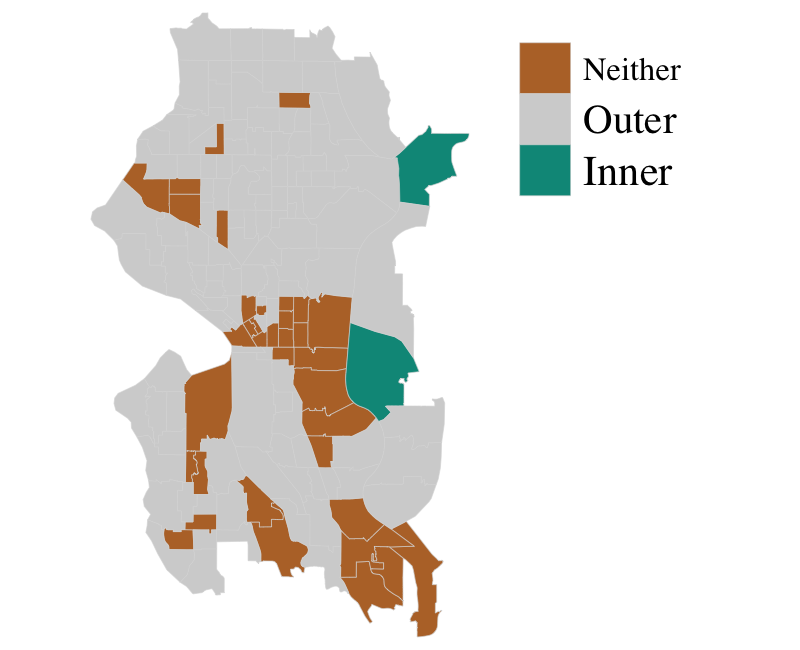}
        \caption{$F$-localization}
    \end{subfigure}
    \hfill
    \begin{subfigure}{0.45\textwidth}
        \centering
        \includegraphics[width=\linewidth]{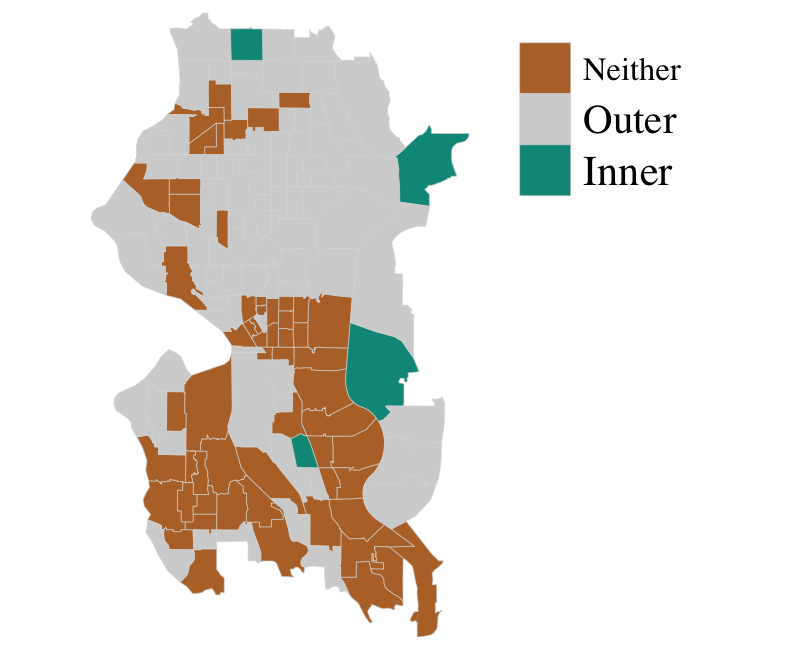}
        \caption{Step-Down}
    \end{subfigure}
    \caption{Inner and outer confidence sets for top-third Bayes-optimal selection, plotted on a map of urban Seattle.}
    \label{fig:sea_maps_urban}
    \end{figure}

    \newpage

    \begin{figure}[H]
\begin{subfigure}{0.45\textwidth}
        \centering
        \includegraphics[width=\linewidth]{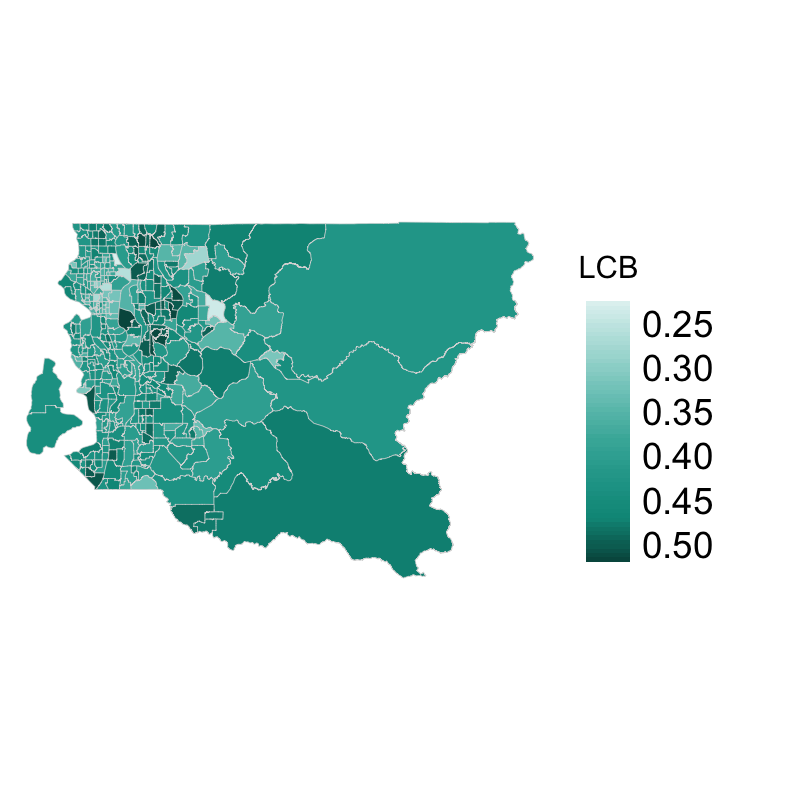}
        \caption{$F$-localization}
    \end{subfigure}
    \hfill
    \begin{subfigure}{0.45\textwidth}
        \centering
        \includegraphics[width=\linewidth]{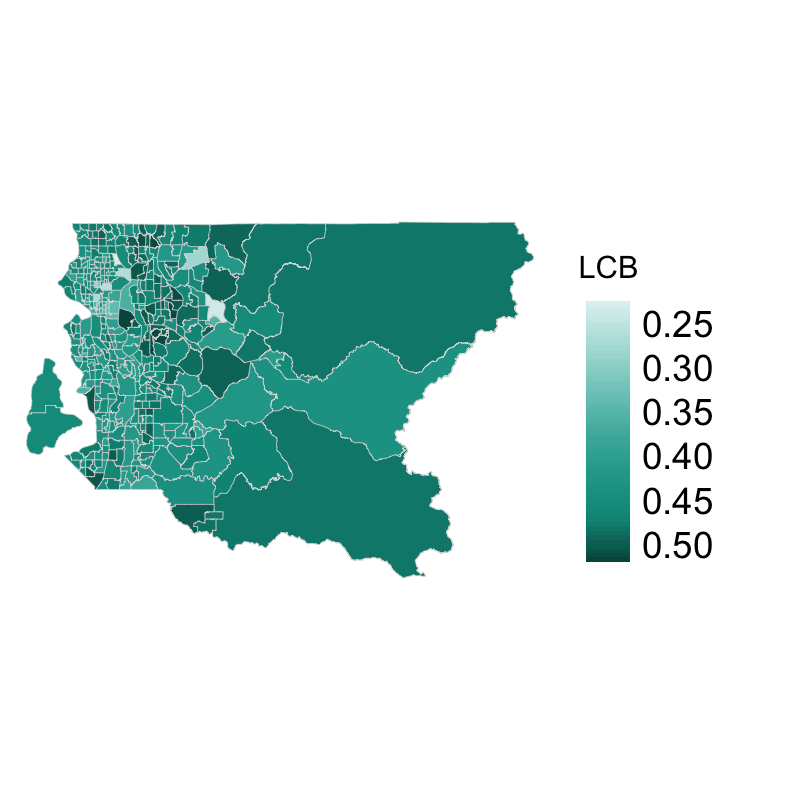}
        \caption{Step-Down}
    \end{subfigure}
    \caption{Lower confidence band for the ranks of Seattle census tracts.}
    \label{fig:rank_LCBs}
    \end{figure}

\begin{figure}[H]
\begin{subfigure}{0.45\textwidth}
        \centering
        \includegraphics[width=\linewidth]{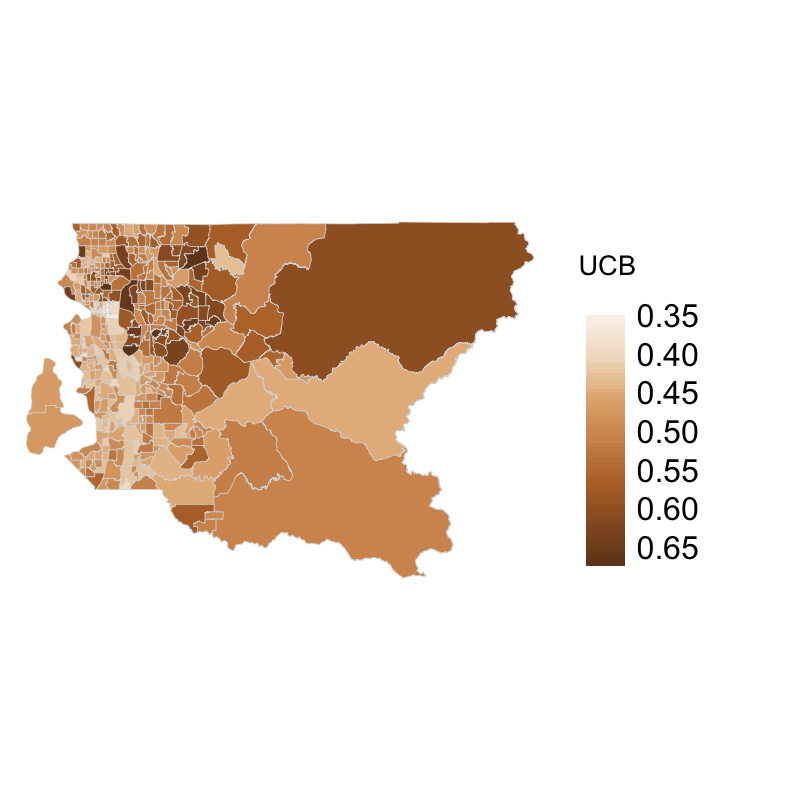}
        \caption{$F$-localization}
    \end{subfigure}
    \hfill
    \begin{subfigure}{0.45\textwidth}
        \centering
        \includegraphics[width=\linewidth]{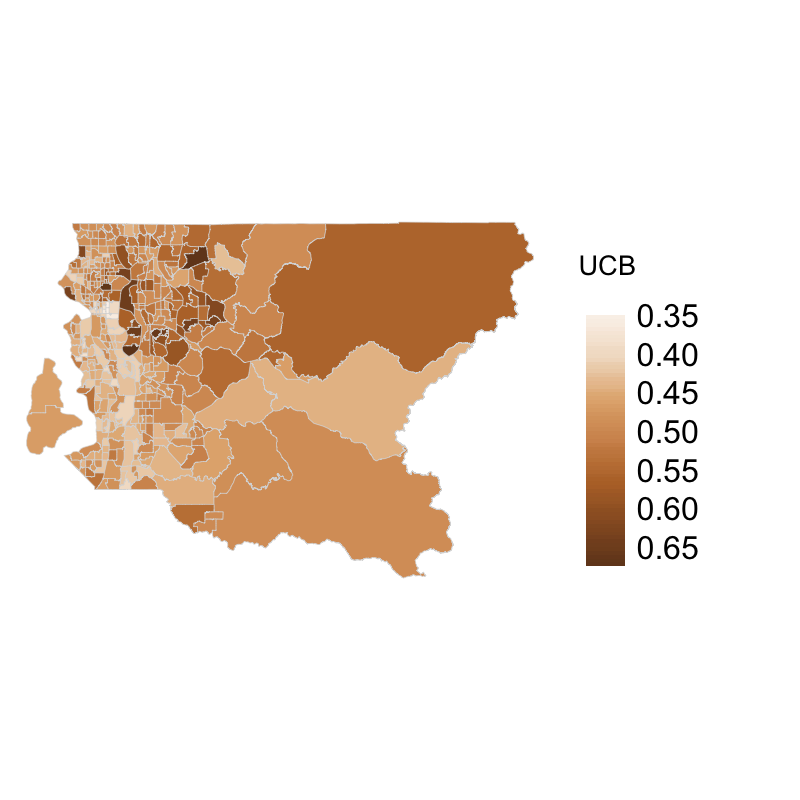}
        \caption{Step-Down}
    \end{subfigure}
    \caption{Upper confidence band for the ranks of Seattle census tracts.}
    \label{fig:rank_UCBs}
    \end{figure}

    \newpage

    \begin{figure}[H]
\begin{subfigure}{0.45\textwidth}
        \centering
        \includegraphics[width=\linewidth]{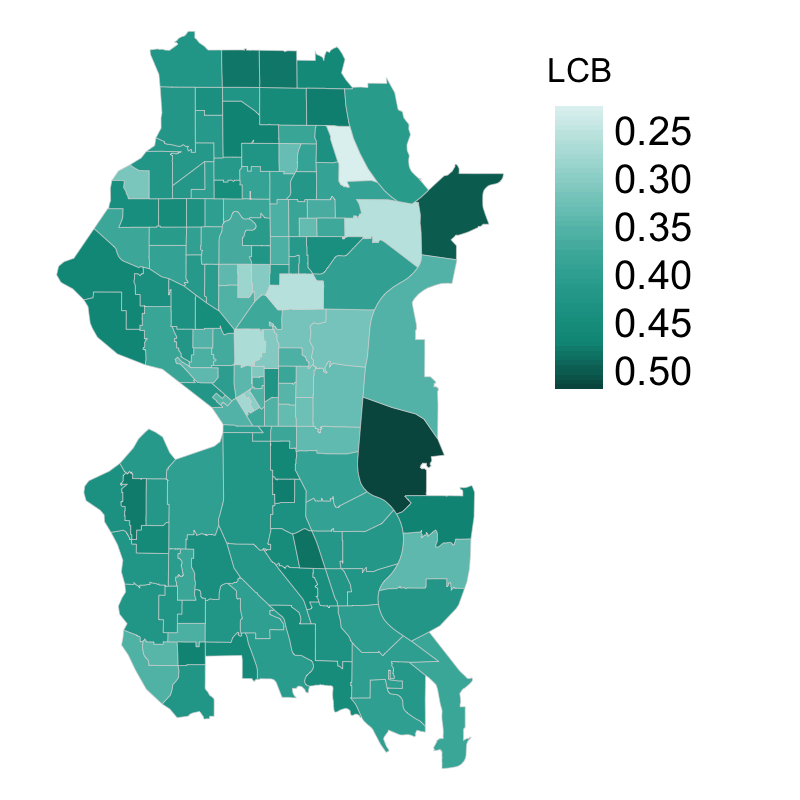}
        \caption{$F$-localization}
    \end{subfigure}
    \hfill
    \begin{subfigure}{0.45\textwidth}
        \centering
        \includegraphics[width=\linewidth]{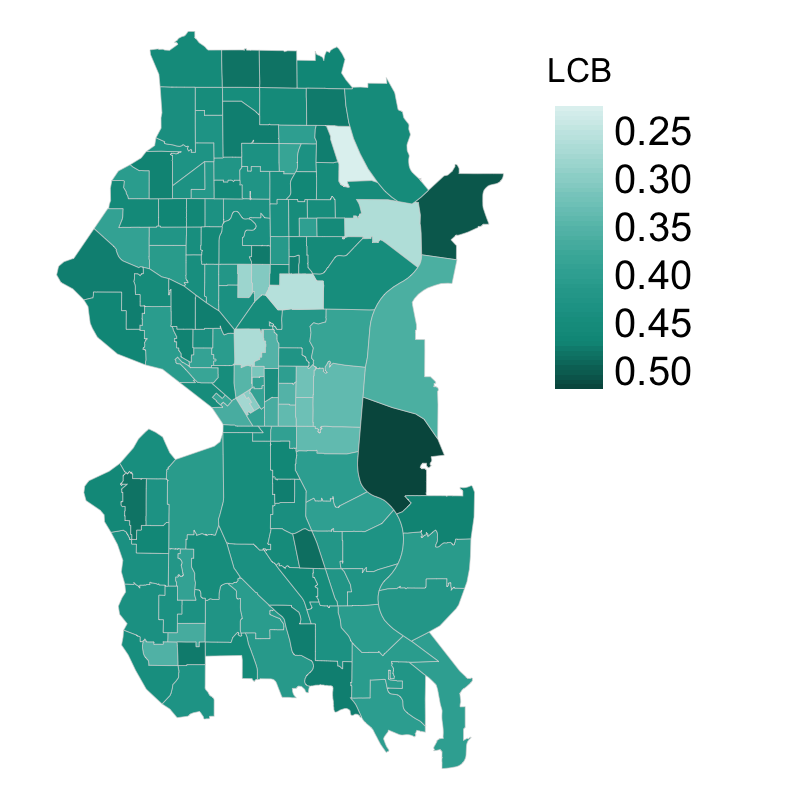}
        \caption{Step-Down}
    \end{subfigure}
    \caption{Lower confidence band for the ranks of Seattle census tracts.}
    \label{fig:urban_rank_LCBs}
    \end{figure}

\begin{figure}[H]
\begin{subfigure}{0.45\textwidth}
        \centering
        \includegraphics[width=\linewidth]{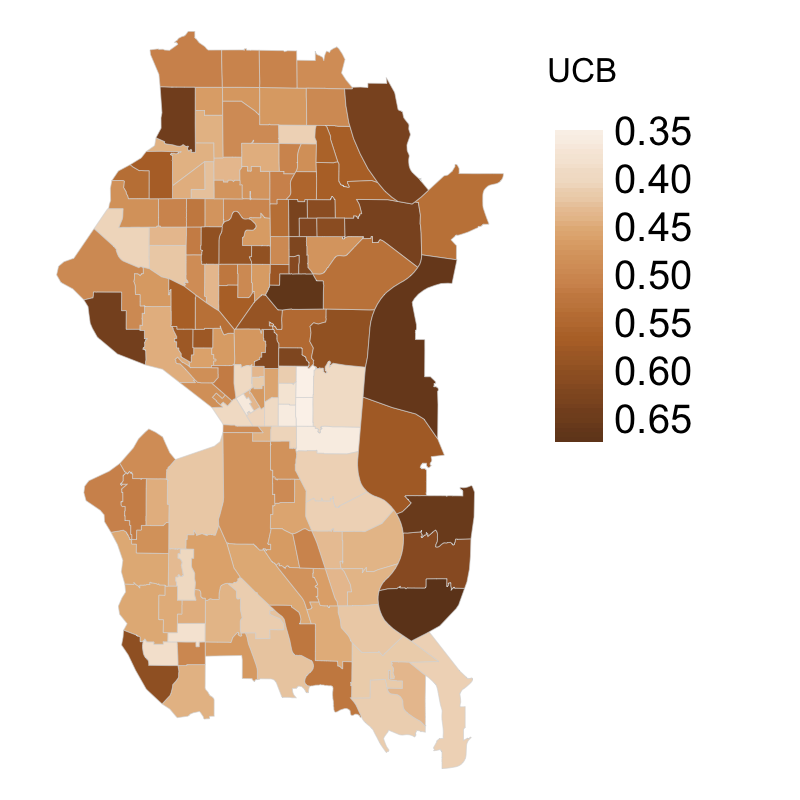}
        \caption{$F$-localization}
    \end{subfigure}
    \hfill
    \begin{subfigure}{0.45\textwidth}
        \centering
        \includegraphics[width=\linewidth]{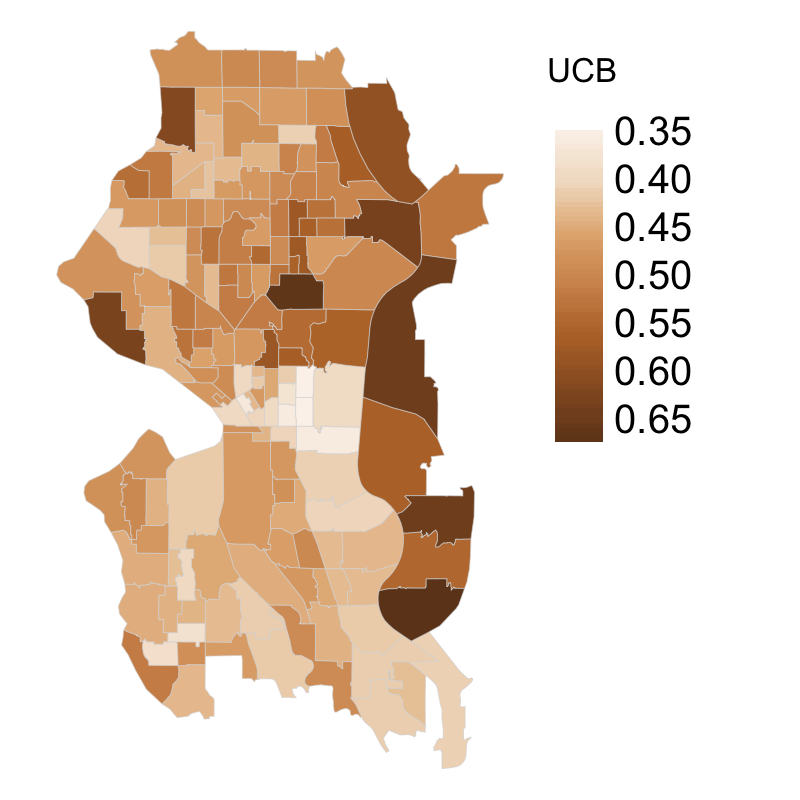}
        \caption{Step-Down}
    \end{subfigure}
    \caption{Upper confidence band for the ranks of Seattle census tracts.}
    \label{fig:urban_rank_UCBs}
    \end{figure}

    % Auto-generated by oa_post_prob_seattle_interpolate.jl
% Requires: \usepackage{booktabs}

\begin{table}[H]
\centering
\caption{Mean LCB of $\Pr(\theta_i > \theta_j)$ by selection threshold~$\tau$: $\hat\theta$ ranking (Seattle)}
\label{tab:tau_sweep_thetahat}
\begin{tabular}{rrrrrr}
\toprule
$\tau$ (\%) & $n_\tau$ & Pairs & Mean LCB & Median LCB & Frac.\ $>0.5$ \\
\midrule
5 & 45 & 2025 & 0.645 & 0.728 & 0.723 \\
10 & 90 & 8100 & 0.710 & 0.794 & 0.784 \\
15 & 136 & 18496 & 0.694 & 0.782 & 0.755 \\
20 & 181 & 32761 & 0.696 & 0.801 & 0.752 \\
25 & 226 & 51076 & 0.682 & 0.782 & 0.736 \\
30 & 272 & 73984 & 0.662 & 0.750 & 0.720 \\
33 & 299 & 89401 & 0.655 & 0.732 & 0.719 \\
\bottomrule
\end{tabular}
\end{table}

\begin{table}[H]
\centering
\caption{Mean LCB of $\Pr(\theta_i > \theta_j)$ by selection threshold~$\tau$: EB conditional mean ranking (Seattle)}
\label{tab:tau_sweep_eb}
\begin{tabular}{rrrrrr}
\toprule
$\tau$ (\%) & $n_\tau$ & Pairs & Mean LCB & Median LCB & Frac.\ $>0.5$ \\
\midrule
5 & 45 & 2025 & 0.796 & 0.900 & 0.878 \\
10 & 90 & 8100 & 0.770 & 0.873 & 0.846 \\
15 & 136 & 18496 & 0.766 & 0.890 & 0.840 \\
20 & 181 & 32761 & 0.730 & 0.855 & 0.793 \\
25 & 226 & 51076 & 0.701 & 0.811 & 0.763 \\
30 & 272 & 73984 & 0.679 & 0.771 & 0.747 \\
33 & 299 & 89401 & 0.660 & 0.737 & 0.729 \\
\bottomrule
\end{tabular}
\end{table}

\section{Supplementary Application to Optimal Auditing of Discriminatory Firms}
\label{app:disc_application}

    In this section, I consider the problem of optimal auditing in the setting of \citealt{Klineetal2022}, in the spirit of examples \ref{ex:util_max_1} and \ref{ex:high_dim_test_1}. In particular, I consider the decision problem of a regulator seeking to identify a set of discriminatory firms. This decision problem is common in practice; the Office of Federal Contract Compliance (OFCCP) annually investigates federal contractors for differential treatment in employment across protected characteristics (\citealt{DOL2013}), and the U.S. Equal Employment Opportunity Commission (EEOC) audits U.S. firms for systemic discrimination (\citealt{EEOC1996}). 

    A recent, large-scale correspondence experiment due to \citealt{Klineetal2022} derives firm-level estimates of the callback differential $\theta_j$ between applicants with distinctly white and black names at 108 U.S. firms, which I will index by $j$. \citealt{Klineetal2022} uncover substantial discrimination against black job applicants, concluding:
    \begin{quote}
        Applying large-scale multiple testing techniques introduced by
Storey (2002, 2003), we find that 23 of the firms in our study discriminate against Black
applicants with at least 95\% posterior certainty.
    \end{quote}
    Motivated by these conclusions, I consider a set of complementary exercises. 
    
    In particular, I consider the auditing problem of an ``intensive margin" policymaker, as in \citealt{Chenetal2025}. That is, I consider a decision maker making a binary classification $\delta_j\in\{0,1\}$ given estimates of firm-level callback differentials $\hat\theta_j$ measured with noise $\sigma_j$ for firms $j=1,\ldots,108$. The policymaker faces the intensive-margin welfare $W_n^{(util)}$. In contrast to the ``extensive margin" decision maker facing a loss which trades off false discoveries against false non-discoveries, the utility-maximizing decision maker's preferences over auditing firm $j$ depend on the magnitude of discrimination at firm $j$. In line with the analysis of \citealt{Chenetal2025}, I choose $K_j = K$ for all firms $j$, meaning that audit costs are identical across firms. I consider a two-sided analogue to the intensive margin loss of \citealt{Chenetal2025} to allow for discrimination against both black and white applicants. As noted in example \ref{ex:util_max_1}, the regulator's Bayes-optimal auditing decision takes the form of a thresholding rule $\delta_j = \mathbb{1}(|\bbE_G(\theta_j|\hat\theta_j,\sigma_j)|\geq K)$.

    I consider the problem of obtaining confidence sets for the Bayes-optimal selection of firms for auditing, and the conditional mean callback differential net auditing costs for firms selected for auditing by an EB decision-maker.\footnote{Given the monotonicity of conditional means and the structure of the Bayes-optimal thresholding rule, the coverage requirement \eqref{eq:uniform_coverage} can be weakened while still obtaining coverage of the Bayes-optimal auditing decision. In particular, it suffices to cover $\mu_G(\cdot,\cdot)$. However, the problem of obtaining such a confidence set is still a multiple testing problem, so the tools developed in this method extend easily.} 
    
    Table \ref{tab:krw_eb_decision_vs_K} speaks to this first exercise. The methods developed in this paper partition the set of firms threefold: a set of confident inclusions and exclusions from the Bayes-optimal audit decision, and a third set of firms for which neither inclusion nor exclusion can be accepted. The methods developed in this paper yield a 183\% larger set of included firms (17 firms) than $F$-localizations (6 firms). Table \ref{tab:krw_eb_decision_vs_K} summarizes these findings. The EB thresholding decision selects 35 firms to audit, 17 of which can be confidently included in the Bayes-optimal selection.

% Generated by krw_hfloc_interpolate_confidence_sets.jl — paste after \usepackage{booktabs,makecell}
% Two-sided EB decision counts vs threshold K on common valid subset (108 firms).
% D+: E[θ|θ̂] > K  (incl.: LCB≥K, excl.: UCB≤K).  D-: E[θ|θ̂] < -K  (incl.: UCB≤-K, excl.: LCB≥-K).
\begin{table}[htbp]
    \centering
    \caption{Confident inclusions / exclusions and EB inclusions vs threshold $K$ (108 firms).}
    \label{tab:krw_eb_decision_vs_K}
    \begin{tabular}{lccc|ccc}
    \toprule
    & \multicolumn{3}{c}{$E[\theta\mid\hat\theta] > K$} & \multicolumn{3}{c}{$E[\theta\mid\hat\theta] < -K$} \\
    \cmidrule(lr){2-4}\cmidrule(lr){5-7}
    $K$ & \makecell{conf.  incl.} & \makecell{conf. excl.} & \makecell{EB incl.} & \makecell{conf. incl.} & \makecell{conf. excl.} & \makecell{EB incl.} \\
    \midrule
    \multicolumn{7}{l}{\textit{DKW $F$-localization}} \\
    0.0125 & 25 & 14 & 63 & 0 & 84 & 2 \\
    0.025 & 6 & 41 & 34 & 0 & 100 & 1 \\
    0.0375 & 3 & 64 & 18 & 0 & 107 & 0 \\
    0.05 & 1 & 80 & 3 & 0 & 107 & 0 \\
    0.0625 & 0 & 94 & 1 & 0 & 108 & 0 \\
    0.075 & 0 & 102 & 1 & 0 & 108 & 0 \\
    \midrule
    \multicolumn{7}{l}{\textit{step-down}} \\
    0.0125 & 38 & 15 & 63 & 0 & 95 & 2 \\
    0.025 & 17 & 42 & 34 & 0 & 106 & 1 \\
    0.0375 & 3 & 72 & 18 & 0 & 107 & 0 \\
    0.05 & 1 & 85 & 3 & 0 & 107 & 0 \\
    0.0625 & 0 & 97 & 1 & 0 & 108 & 0 \\
    0.075 & 0 & 105 & 1 & 0 & 108 & 0 \\
    \bottomrule
    \end{tabular}
\end{table}
At $K=0.0125$ and $K=0.025$, the step-down procedure increases confident inclusions from 25 to 38 and from six to 17, respectively. Figure \ref{fig:krw_eb_decision_vs_K_vis} visualizes these findings.

\begin{figure}
    \centering
    \includegraphics[width=0.75\linewidth]{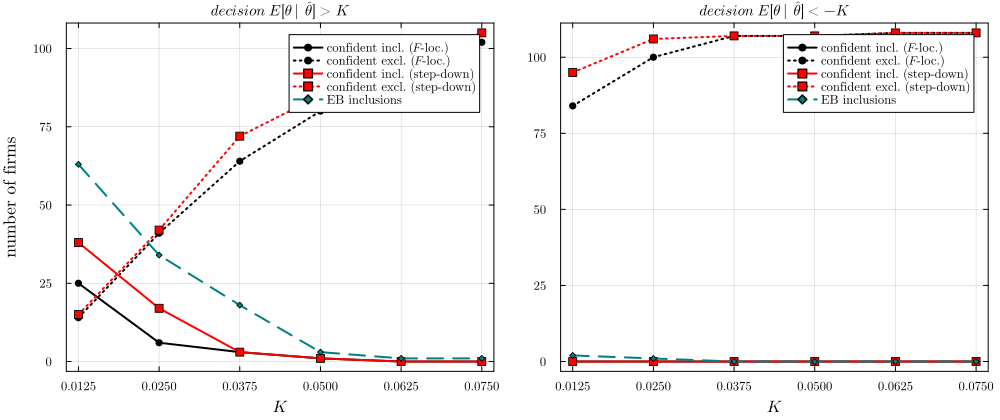}
    \caption{Number of confident inclusions and exclusions as identified by $F$-localizations and step-down methods, relative to the number of selections by an EB decision maker.}
    \label{fig:krw_eb_decision_vs_K_vis}
\end{figure}

To understand the welfare implications of the feasible EB auditing decision, I obtain confidence sets for the conditional mean callback differential between black and white applicants at all 108 firms. For many firms, this conditional mean callback differential is significantly larger than some of the choices of $K$ above, suggesting gains from auditing. Figure \ref{fig:caterpillar_eb_selected} plots these confidence sets using both $F$-localizations and the methods developed in this paper.
\begin{figure}[H]
    \centering
    \includegraphics[width=0.75\linewidth]{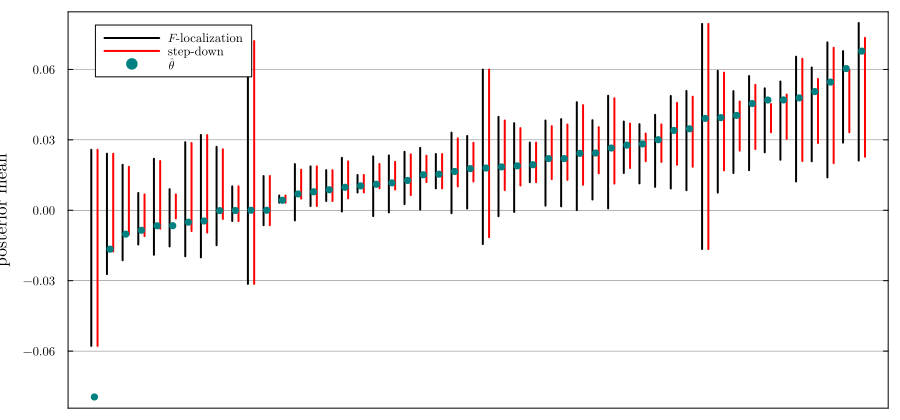}
    \caption{Confidence sets for conditional mean callback differentials at a random subset of 50 firms.}
    \label{fig:caterpillar_eb_selected}
\end{figure}

\end{appendix}

\end{document}